\definecolor{refColor}{HTML}{0376E9}
\definecolor{figColor}{HTML}{E90303}
\definecolor{urlColor}{HTML}{0bcb9a}
\newcommand{\bra}[1]{\mathinner{\langle{#1}|}}
\newcommand{\ket}[1]{\mathinner{|{#1}\rangle}}
\newcommand{\BraKet}[2]{\langle #1 | #2 \rangle}
\renewcommand{\vec}[1]{\mathbf{#1}}
\newcommand{\bigslant}[2]{{\raisebox{.2em}{$#1$}\left/\raisebox{-.2em}{$#2$}\right.}}
\renewcommand{\vec}[1]{\boldsymbol{#1}}
\newcommand{\Tr}[2]{\operatorname{Tr}_{#2}\left[#1\right]}
\newcommand{\E}{\mathbb{E}}
\newcommand{\com}[2]{\left[#1,#2\right]}
\renewcommand{\ol}{\overline}
\newcommand{\const}{\textrm{const}}
\renewcommand{\d}{\mathrm{d}}
\renewcommand{\O}{\mathcal{O}}
\newcommand{\A}{\mathcal{A}}
\renewcommand{\H}{\mathcal{H}}
\newcommand{\NP}{{\normalfont\textsf{NP}}}
\newcommand{\sub}[1]{{\text{\tiny\textnormal{#1}}}}
\newcommand{\aut}[1]{\operatorname{Aut}\left(#1\right)}
\newcommand{\Aut}[1]{\operatorname{Aut}\left(#1\right)}
\newcommand{\Rb}{{r_\sub{B}}}
\newcommand{\C}{\mathcal{C}}
\newcommand{\id}{\mathds{1}}
\newcommand{\PQDM}{\texttt{ZQDM}}
\newcommand{\VER}{\texttt{VER}}
\newcommand{\del}{\partial}
\newcommand{\R}{\mathbb{R}}
\newcommand{\Z}{\mathbb{Z}}
\newcommand{\ZZ}{{\mathbb{Z}_2}}
\newcommand{\dg}{\textup{d}}
\newcommand{\F}{\mathcal{F}}
\newcommand{\G}{\mathcal{G}}
\newcommand{\etal}{\emph{et\,al.}\xspace}
\newcommand{\Wlog}{\emph{w.l.o.g.}\xspace}
\newcommand{\Eig}{\operatorname{Eig}}
\newtheorem{theorem}{Theorem}
\newtheorem{definition}{Definition}
\newtheorem{lemma}{Lemma}
\newtheorem{proposition}{Proposition}
\newtheorem{corollary}{Corollary}
\definecolor{myred}{HTML}{df1b1b}
\definecolor{myblue}{HTML}{0f34c2}
\newcommand{\CR}[1]{{\color{myred}#1}}
\newcommand{\CB}[1]{{\color{myblue}#1}}
\newcommand{\rydberg}{blockade\xspace}
\begin{document}

\title{Topological order in symmetric blockade structures}

\author{Tobias F.\ Maier}
\author{Hans Peter Büchler}
\author{Nicolai Lang}
\email{nicolai.lang@itp3.uni-stuttgart.de}
\affiliation{%
    Institute for Theoretical Physics III 
    and Center for Integrated Quantum Science and Technology,\\
    University of Stuttgart, 70550 Stuttgart, Germany
}

\date{\today}


\begin{abstract}
    The bottom-up design of strongly interacting quantum materials with
    prescribed ground state properties is a highly nontrivial task, especially
    if only simple constituents with realistic two-body interactions are
    available on the microscopic level. Here we study two- and
    three-dimensional structures of two-level systems that interact via a
    simple blockade potential in the presence of a coherent coupling between
    the two states. For such strongly interacting quantum many-body systems, we
    introduce the concept of \emph{blockade graph automorphisms} to construct
    symmetric blockade structures with strong quantum fluctuations that lead to
    equal-weight superpositions of tailored states. Drawing from these results,
    we design a quasi-two-dimensional periodic quantum system that -- as we
    show rigorously -- features a topological $\mathbb{Z}_2$ spin liquid as its
    ground state. Our construction is based on the implementation of a local
    symmetry on the microscopic level in a system with only two-body
    interactions.
\end{abstract}

\pacs{}

\keywords{}

\maketitle

\section{Introduction}

Condensed matter physics is concerned with the explanation and prediction of
emergent phenomena on large scales from systems of many interacting,
microscopic degrees of freedom, using a sophisticated arsenal of experimental,
theoretical, and numerical techniques. Motivated by tremendous progress in the
preparation and control of many quantum degrees of freedom on the atomic
scale~\cite{Bloch2008,Browaeys_2020}, the ``inverse problem'' has recently come
into focus (\cref{fig:rationale}):
Given a ``toolbox'' of microscopic constituents (like atoms) that can be
controlled precisely and correlated via simple, tunable interactions, is it
possible -- and if so, how -- to robustly engineer a \emph{prescribed} quantum
many-body phase?
A better understanding of this bottom-up design of quantum materials opens an
alternative route to study emergent phenomena in quantum many-body physics.
This is emphasized by the pivotal role played by exactly solvable, though often
experimentally unrealistic models such has the toric code~\cite{Kitaev2003},
more general string net models~\cite{Levin2005}, and other parent
Hamiltonians~\cite{Fannes_1992, Schuch_2010} like the famous AKLT
construction~\cite{Affleck1988} or resonating valence bond
states~\cite{Anderson1973,Schuch2012}. The ``inverse problem'' is distinguished
(and complicated) by its restriction to a specific type of simple,
experimentally accessible interaction on the microscopic level that can be used
for the construction of a prescribed phase of matter. Solving this problem is
of particular interest for quantum phases with topological order~\cite{Wen2017}
due to potential applications as quantum memories~\cite{Dennis2002} and for
topological quantum computing~\cite{Nayak2008}. 

This paper is a contribution to tackle the ``inverse problem'' in a particular
setting, with focus on the construction of topologically ordered states of
matter. We consider a toolbox of elementary two-level systems that interact via
a simple blockade potential; such systems can be viewed as spin systems with a
strong Ising-type interaction in the presence of transversal and longitudinal
magnetic fields. We extend this framework by a versatile concept of symmetry,
and leverage this novel tool for the construction of a topological
$\mathbb{Z}_2$ spin liquid. To comply with the rationale of engineered quantum
matter, none of these results rely on numerical techniques or perturbative
arguments.

\begin{figure}[tb]
    \centering
    \includegraphics[width=0.95\linewidth]{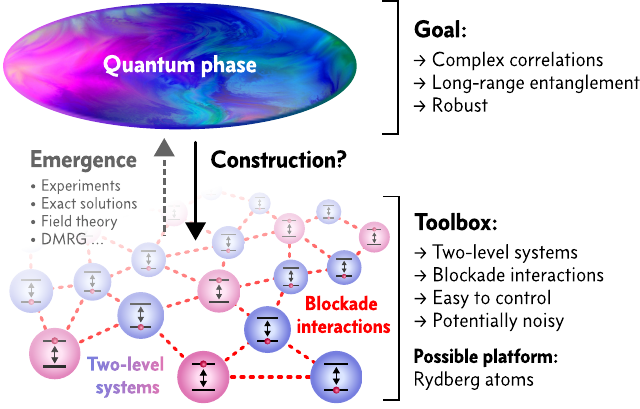}
    \caption{%
        \emph{Rationale.} The ``inverse problem'' of condensed matter physics
        studies the construction of prescribed, robust quantum phases from a
        given set of simple, microscopic constituents. Here we consider a
        toolbox motivated by (but not specific to) the Rydberg platform:
        two-level systems can be placed freely in two and three dimensions and
        interact via a simple blockade mechanism. Our contribution is an
        extension of the theoretical foundations of this toolbox to construct
        interesting quantum phases.
    }
    \label{fig:rationale}
\end{figure}

Our focus on this particular toolbox is motivated by the significant progress
that neutral atom platforms have seen in recent years~\cite{Browaeys_2020},
both on the experimental and the theoretical front: Control on the single-atom
level is facilitated by optical lattices and tweezers that allow for the
design of arbitrary spatial structures in two and three dimensions comprising
hundreds of atoms~\cite{Barredo2018, Bluvstein_2023} (though leveraging
three-dimensional structures for quantum simulation and processing remains
technically challenging). The internal states of atoms can be controlled by
lasers, microwaves, and electromagnetic fields with high fidelity, both
uniformly and with single-site addressability~\cite{Labuhn_2014,Omran2019},
while dipole-dipole and/or van der Waals interactions allow for controllable
interactions~\cite{Sibalic2018}. Van der Waals interactions give rise to the
famous \emph{Rydberg blockade mechanism}, where a single excited atom prevents
further excitations within a tunable blockade
radius~\cite{Jaksch2000,Tong2004,Singer2004,Gaetan2009,Urban2009}. The
simplicity of this strong interaction, combined with the high level of spatial
control over hundreds of atoms, makes the Rydberg platform a prime candidate
for quantum
simulation~\cite{Weimer2010,Georgescu2014,Schauss2015,Labuhn2016,Gross2017,Bernien_2017,Altman2021,Semeghini2021,Scholl2021}
and the design of artificial quantum matter. Recently, various two-dimensional
structures of Rydberg atoms were put forward and studied numerically, with the
goal to artificially design interesting quantum phases such as topological spin
liquids~\cite{Samajdar_2021,Verresen2021, Tarabunga_2022,  Slagle_2022,
Maity_2024, Wang_2025}, dimer models~\cite{Yan_2022, Zeybek_2023, Zeng2025},
fractonic phases~\cite{Verresen_2021b, Myerson_Jain_2022, Mac_do_2024}, lattice
gauge
theories~\cite{Surace_2020,Celi2020,Homeier_2023,Samajdar_2023,Feldmeier_2024,
Koeylueoglu2024,Cheng_2024,Shah_2025}, and glassy phases~\cite{Yan_2023}; these
studies were complemented by experimental results for systems of several
hundred atoms~\cite{Semeghini2021, Ebadi_2021, Manovitz_2025}.  However, the
experimental preparation of the true many-body ground state of such strongly
interacting systems has proven challenging, and (quasi)adiabatic preparation
schemes often fail to do so due to small gaps in some sectors of the
spectrum~\cite{Giudici2022,Sahay2022,Wang_2025}.
The same methods used for the design of artificial quantum matter can also be
applied to solve optimization problems by encoding them into the geometry of
tailored blockade structures, a procedure called ``geometric
programming''~\cite{Pichler2018b, Brady_2023, Vercellino_2023, Dalyac2024,
Bombieri2024, Naghmouchi_2024, Lanthaler2024, Farouk_2024, Dupont_2024,
Schiffer_2024, Byun2024, Leclerc_2024, Schuetz_2024}; experimental
implementations of small problem instances have been reported as
well~\cite{Byun2022, Kim2022, Ebadi2022, Dalyac2023, Jeong_2023,
Park2024, Cazals2025,de_Oliveira_2025}.

While the neutral atom platform certainly is the most advanced in regard to
theory development and experimental sophistication, there are other platforms
that allow for tailored blockade interactions in the quantum regime, like
superconducting qubit architectures~\cite{Menta_2025} and Rydberg excitons in
solid state systems~\cite{Hecktter_2021}.

The fact that the same type of microscopic interactions can be realized on
different experimental platforms suggests the development of a ``platform
agnostic'' toolbox for the design of artificial quantum matter. This toolbox
provides abstract two-level systems that interact via a simple blockade
potential; the pattern and layout of these interactions is then a central part of
the design process for a particular quantum phase. Recently, first steps
towards such a toolbox have been reported~\cite{Wurtz2022, Nguyen2022,
Stastny2023a, Lanthaler2023}. While these results explain how the frustration
of generic blockade structures can be leveraged to prepare nontrivial,
\emph{classical} ground state manifolds, a systematic understanding of the
effect of \emph{quantum fluctuations} on these strongly interacting systems is
lacking.

In this paper, we take a first step to incorporate quantum fluctuations
systematically into the design process of blockade structures to engineer
artificial quantum materials. While our focus on this particular toolbox is
motivated by the Rydberg platform, all our results are platform agnostic. Our
main contribution is the design of a quasi-two-dimensional periodic array of
two-level systems that -- by construction -- stabilizes an equal-weight
superposition of loop states when exposed to uniform quantum fluctuations. This
feature makes the ground state of this strongly interacting quantum many-body
system topologically ordered (in the universality class of the toric code),
while using only physically realistic, local, two-body interactions. It is
remarkable, and in line with the rationale of engineered quantum matter
advertised above, that these results can be established rigorously, without
relying on numerics -- and despite the fact that the constructed model is not a
renormalization fixed point (as compared to string net models~\cite{Levin2005}
like the toric code Hamiltonian~\cite{Kitaev2003}).

To achieve this feat, we extend the toolbox of \rydberg structures by the
concept of \emph{blockade graph automorphisms}, a -- not necessarily geometric
-- symmetry of blockade interactions. We show that particularly
symmetric (dubbed \emph{fully-symmetric}) \rydberg structures stabilize
equal-weight superpositions of their classical ground states when subject to
uniform quantum fluctuations. In combination with the known blueprints for the
design of \rydberg structures with prescribed classical ground state manifolds
(studied previously in Refs.~\cite{Wurtz2022, Nguyen2022, Stastny2023a,
Lanthaler2023}), this provides a powerful tool for the design of strongly
fluctuating quantum systems with nontrivial correlations and entanglement
patterns.

The remainder of this paper is structured as follows. In \cref{sec:setting}, we
introduce the systems under consideration, define the relevant notation, and
specify our goal. To make this paper self-contained, we provide in
\cref{sec:review} a brief review of prior results on the construction of
blockade structures. The discussion of new results starts in
\cref{sec:motivation} with two motivating examples that lead to the concept of
fully-symmetric structures defined in \cref{sec:sym}. In \cref{sec:fsu}, we
introduce and discuss a crucial example for this concept, the fully-symmetric
universal gate, which is the basis for the construction detailed in the second
half of the paper: First, in \cref{sec:firstview}, we discuss general features
and constraints that periodic (tessellated) fully-symmetric \rydberg structures
must satisfy. Our main result is then detailed in \cref{sec:local_aut}. We
start in \cref{subsec:toric} with a brief review of the toric code, followed by
a discussion of fundamental problems in \cref{subsec:nogo} that must be
overcome to realize its topological order with a fully-symmetric \rydberg
structure. These insights are used in \cref{sec:const} to engineer a periodic,
quasi-two-dimensional system that stabilizes an equal-weight superposition of
loop states by construction.  We study the symmetries responsible for this
feature in \cref{sec:localz2}, and comment on the geometric embedding of the
model in \cref{sec:embedding}. We complete our discussion of this strongly
interacting quantum many-body system in \cref{subsec:gs} with a study of its
many-body spectrum (\cref{subsubsec:general}), a rigorous proof of its ground
state topological order (\cref{subsubsec:topo}), and comments on its bulk gap
(\cref{subsubsec:gap}). We close in \cref{sec:outlook} with comments on open
questions and conclude in \cref{sec:summary}.

\begin{figure*}[tb]
    \centering
    \includegraphics[width=1.0\linewidth]{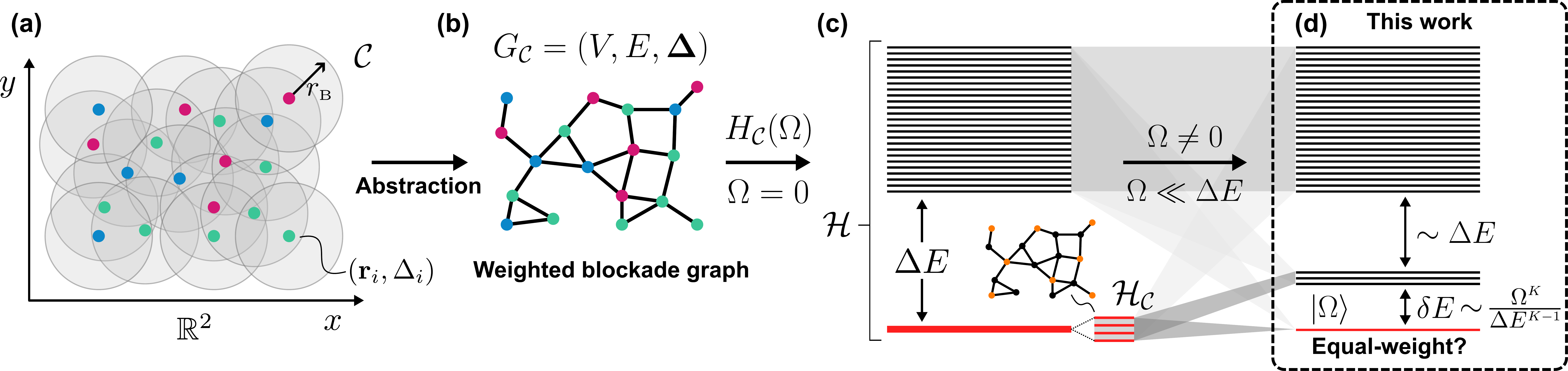}
    \caption{%
	\emph{Setting \& Objective.} 
    (a)~A two-dimensional arrangement $\C=(\vec r_i,\Delta_i)_{i\in V}$ of
    atoms $i\in V$ with position $\vec r_i$ and detuning $\Delta_i$ is governed
    by the Hamiltonian $H_\C$ that describes a blockade interaction with
    radius $\Rb$. The detunings are encoded by colors.
    (b)~Interactions and detunings are conveniently represented by a
    vertex-weighted blockade graph $G_\C$.
    (c)~For $\Omega=0$, the Hamiltonian gives rise to a degenerate low-energy
    eigenspace $\H_\C<\H$, separated from the excited states by a gap $\Delta
    E$. The ground states correspond to maximum-weight independent sets of the
    blockade graph.
    (d)~Our objective is to design blockade structures in which quantum
    fluctuations $\Omega\neq 0$ stabilize a ground state that is an
    equal-weight superposition of the states in~$\H_\C$.
    }
    \label{fig:setting}
\end{figure*}

\section{Setting \& Objective}
\label{sec:setting}

We start with a characterization of the systems we consider in this paper,
their formal description, and introduce the required mathematical notation.
With these foundations, we can formulate our objective precisely.

In this work, we focus on two- and three-dimensional arrangements of two-level
systems that interact via a blockade potential 
\begin{align}
    U(r):=
    \begin{cases}
        0&\text{for}\;r>\Rb\\
        \infty&\text{for}\;r\leq\Rb
    \end{cases}
    \label{eq:U}
\end{align}
with \emph{blockade radius} $\Rb$; the latter defines the only length scale of
the system and can be set to unity. The two-level systems can be realized by
electronic states of atoms, internal spin states, cavity modes, etc. For
simplicity, we refer to these systems as ``atoms'' throughout the paper,
although they are not necessarily realized as such. Every atom is assigned an
index $i\in V=\{1\dots N\}$, placed at position $\vec r_i\in\mathbb{R}^{3}$,
and described by the state $\ket{n}_i$ with $n\in\{0,1\}$, where we identify
$\ket{0}_i$ as the \emph{ground state} and $\ket{1}_i$ as the \emph{excited state}.

The quantum dynamics of atoms is achieved by coupling the ground state to the
excited state with strength $\Omega_i$, which we refer to as \emph{(quantum)
fluctuations}. In addition, each atom can be detuned by $\Delta_{i}\geq 0$.
The Hamiltonian that describes such systems is then
\begin{align}
    H_\C(\Omega)=
    \sum_{i<j}\,U(r_{ij})\,n_in_j
    +\sum_i\,\left(\Omega_i\,\sigma_i^x-\Delta_i\,n_i\right)
    \label{eq:H}
\end{align}
with $r_{ij}:=|\vec r_i-\vec r_j|$ the distance between atoms $i$ and $j$. This
Hamiltonian acts on the full Hilbert space $\H=(\mathbb{C}^2)^{\otimes N}$ with
representations $n_i=\ket{1}\bra{1}_i$ and
$\sigma_i^x=\ket{0}\bra{1}_i+\ket{1}\bra{0}_i$. Note that detunings $\Delta_i$
and fluctuations $\Omega_i$ can in principle be site-dependent; in the
following, we assume uniform fluctuations $\Omega_i\equiv\Omega$ (for
simplicity), but allow for non-uniform detunings $\Delta_i$. Throughout this
paper, we make use of only a few distinct values
$\Delta_i\in\{1\Delta,2\Delta,\ldots\}$, specified as integer multiples of a
unit detuning $\Delta$, in order to prepare degenerate ground state manifolds
for $\Omega=0$. The Hamiltonian~\eqref{eq:H}, together with the blockade
interaction~\eqref{eq:U}, is closely related to so called \emph{PXP models}
studied on the Rydberg platform~\cite{Lesanovsky2011,Verresen2021}.

Many-body quantum systems of this form are completely specified by the data
$\C\equiv(\vec r_i,\Delta_i)_{i\in V}$, which we refer to as a \emph{(blockade)
structure}, \cref{fig:setting}~(a), together with the uniform quantum
fluctuations $\Omega$. We refer to structures with $\Omega=0$ as
\emph{classical}, for then all operators in~\eqref{eq:H} commute and
eigenstates are given by products of the $\{\ket{0},\ket{1}\}$-basis. The
purpose of this paper is to deepen our understanding of the effects of quantum
fluctuations $\Omega\neq 0$ on such classical blockade structures.

Because of \cref{eq:U}, the interactions of blockade structures translate to
kinematic constraints that can be encoded by a \emph{vertex-weighted blockade
graph} $G_\C=(V,E,\bm\Delta)$, where an edge $e=(i,j)\in E$ between atoms
$i,j\in V$ indicates that they are in blockade, i.e., their distance $r_{ij}$
is smaller than the blockade radius $\Rb$.  In this case, the only accessible
states of the two atoms are $\ket{0}_i\ket{0}_j$, $\ket{0}_i\ket{1}_j$, and
$\ket{1}_i\ket{0}_j$; the doubly-excited state $\ket{1}_i\ket{1}_j$ gains
infinite energy from the interaction term in \eqref{eq:H} and is forbidden. The
detunings $\bm\Delta \equiv \{\Delta_i\}_{i\in V}$ are interpreted as weights of the
vertices and determine the energy of the accessible states.  An abstract graph
that can be realized in this way is called a \emph{unit disk (ball) graph} in
two (three) dimensions. Conversely, a layout of vertices that realizes a
prescribed graph as its blockade graph is a \emph{unit disk (ball) embedding}
of this graph (where the ``unit'' is the blockade radius $\Rb$).  Throughout
the paper, blockade graphs are drawn by solid edges connecting atoms that are
in blockade, while the colors of vertices encode their detuning,
\cref{fig:setting}~(b).

Provided a classical blockade structure $\C$ ($\Omega=0$), its ground state
manifold $\H_\C\equiv\operatorname{span}\{\ket{\vec n}\,|\,\vec n\in L_\C\}$ is
spanned by perfectly degenerate product states $\ket{\vec n}=\ket{n_1\dots
n_N}$ that are characterized by a subset of excitation patterns
$L_\C\subseteq\mathbb{Z}_2^N$. This manifold is separated by a gap $\Delta E$
from the excited states $\ket{\vec n}$ with patterns $\vec n\notin L_\C$
(typically $\Delta E\sim\min_i\Delta_i$), \cref{fig:setting}~(c).  Note that
these excited patterns still satisfy all blockade constraints.  Because of our
choice of detunings $\Delta_i\geq 0$, these states are typically characterized
by \emph{fewer} excited atoms, with the state $\ket{\vec 0}$ (all atoms in
their ground state) being the state with the \emph{highest} energy.  The ground
state patterns in $L_\C$ are the \emph{maximum-weight independent sets} (MWISs)
of the vertex-weighted blockade graph $G_\C$. An independent set of a graph is
a subset of vertices such that no two vertices are connected by an edge (this
accounts for the blockade). A \emph{maximum-weight} independent set is an
independent set that maximizes the total weight of vertices [which minimizes
the energy contribution of the $-\Delta_i n_i$ terms in~\eqref{eq:H}]. Clearly,
MWISs correspond to the ground states of $H_\C(\Omega=0)$. Finding the
maximum-weight independent sets $L_\C$ for a given blockade graph $G_\C$ is
known to be \NP-hard~\cite{Karp2009} (even for the subclass of unit disk
graphs~\cite{Clark1990}) and therefore essentially intractable for larger
structures.

Fortunately, this intractability is of no concern to us, since we are
\emph{not} interested in deriving the ground state pattern $L_\C$ for a given
structure $\C$. On the contrary, here (and in our previous
work~\cite{Stastny2023a}) we are interested in the \emph{inverse problem} of
engineering a suitable structure $\C$ for a given set of ground state patterns
$L_\C$.
Usually, these patterns are chosen such that they obey nontrivial logical
constraints to implement the solution of optimization
problems~\cite{Pichler2018b,Serret_2020,Dlaska2022,Nguyen2022,Ebadi2022,Lanthaler2023,Byun2024,Bombieri2024,Lanthaler2024},
or enforce local constraints necessary for topologically ordered quantum
phases~\cite{Samajdar_2021,Verresen2021,Tarabunga_2022,Slagle_2022,Stastny2023a,Maity_2024,Wang_2025}.
In this context, the set of bit strings $L_\C\subseteq\mathbb{Z}_2^N$ is
referred to as the \emph{language} of the structure~\footnote{%
    In theoretical computer science, a \emph{(formal) language} $L$ is any
    subset $L\subseteq\Sigma^*$ of finite strings constructed from some set of
    characters $\Sigma$ called an \emph{alphabet}; the elements of a formal
    language are called \emph{words}. $\Sigma^*$ is called the \emph{Kleene
    closure} and denotes the set of all finite strings of characters in
    $\Sigma$. In our case $\Sigma=\{0,1\}$ and $\Sigma^*$ denotes the set of
    all finite bit strings. Note that $\mathbb{Z}_2^N\subset\Sigma^*$ is the
    set of bit strings of uniform length $N$.
}, and the states $\ket{\vec n}$ with $\vec n\in L_\C$ as its \emph{logical
states} (we often use patterns $\vec n$ and states $\ket{\vec n}$
interchangeably). The purpose of a structure $\C$ is therefore to isolate
a prescribed subset of excitation patterns $L_\C$ (realized by the spectral
gap $\Delta E>0$), while remaining agnostic to the differences between
patterns within $L_\C$ (realized by the perfect degeneracy of $H_\C$). How
this inverse problem can be tackled systematically has been studied
previously~\cite{Nguyen2022,Stastny2023a,Lanthaler2023} and is not our main
focus here.  However, to make this paper self-contained, we provide a brief
review in \cref{sec:review} below.

So let us assume henceforth that we are given a structure $\C$ that realizes
some prescribed language $L_\C$ of logical states as its classical ground state
manifold $\H_\C$. In this paper, we are interested in the effect of weak
quantum fluctuations $|\Omega|\ll\Delta E\sim\Delta_i$ on the low-energy physics
of $H_\C(\Omega\neq 0)$. Most importantly, we are interested in the new ground
state(s) of this strongly interacting quantum many-body system,
\cref{fig:setting}~(d). Without loss of generality, we can expect these to be
of the form
\begin{align}%
    \label{eq:ground_state}
    \ket{\Omega}
    =\sum_{\vec n\in L_\C}\lambda_{\vec{n}}(\Omega) \ket{\vec n}
    +\sum_{d\geq 1}\left(\tfrac{\Omega}{\Delta E}\right)^d
    \sum_{\vec n\in L_\C^d}\eta_{\vec{n}}(\Omega)\ket{\vec n}
\end{align}%
with amplitudes $\lambda_{\vec{n}},\eta_{\vec{n}}\in\O(1)$. Here we introduced
the sets of excitation patterns $L_\C^d$ that have Hamming distance $d$ from
$L_\C^0\equiv L_\C$, i.e., $\vec n\in L_\C^d$ means that $d$ is the minimum
number of bits to flip such that $\vec n$ becomes a valid logical state in
$L_\C$.
For $\Omega=0$, the amplitudes $\lambda_{\vec{n}}$ are clearly arbitrary, but
once fluctuations set in, this is no longer the case in general. Presumably,
there is a unique ground state characterized by some particular superposition
$\{\lambda_{\vec n}\}$ of logical patterns from $L_\C$, dressed by patterns
$\vec n\in L_\C^{d}$ that violate the constraints of $L_\C$. To quantify this
dressing, we define $\Lambda^2:=\sum_{\vec n\in L_\C}|\lambda_{\vec{n}}|^2$ as
the \emph{logical weight} of $\ket{\Omega}$.

Since the classical structure $\C$ is designed to single out -- but otherwise
not distinguish -- the patterns $\vec{n}\in L_\C$, it is reasonable to demand
the same in the presence of quantum fluctuations. This means that we are
interested in structures where $\lambda_{\vec{n}}\equiv\lambda=\const$ for all
$\vec{n}\in L_\C$ in \cref{eq:ground_state} even for $\Omega\neq 0$. Thus we
seek blockade structures that stabilize \emph{equal-weight superpositions} of
logical states in the presence of quantum fluctuations.

This is a reasonable objective, as we would like the classical set of
configurations $L_\C$ to ``fluctuate as strongly as possible'' in the presence
of a uniform quantum fluctuations. For example, when blockade structures based
on Rydberg atoms are employed to solve optimization problems,
the hope is that quantum fluctuations between all allowed logical states in
$L_\C$ single out the classical solution -- a rationale similar to adiabatic
quantum computing~\cite{Albash2018}. Here we do not focus on optimization
problems, but are interested in the preparation of interesting quantum
many-body phases. For instance, Anderson's famous resonating valence bond
states \cite{Anderson1973} and the related dimer states \cite{Rokhsar1988} can
be idealized as equal-weight superpositions of an extensive set of constrained
product states \cite{Schuch2012}. Similarly, the topological quantum phase of
the paradigmatic toric code model \cite{Kitaev2003} is characterized by an
equal-weight superposition of ``loop states'' (see \cref{subsec:toric} below).

In the first part of this paper, we therefore develop methods to characterize
and construct generic blockade structures that stabilize equal-weight
superpositions (\cref{sec:motivation,sec:sym,sec:fsu}). Subsequently, we apply
these methods to construct a periodic blockade structure that actually prepares
an equal-weight loop condensate in its ground state (\cref{sec:const}), and
lastly, we show that the loss of logical weight $\Lambda^2<1$ due to ``non-loop
states'' does not destroy the topological order (\cref{subsec:gs}). All central
results that follow are constructive and/or analytically rigorous, without
resorting to numerical or approximate techniques.

\section{Review: Blockade structures}
\label{sec:review}

\begin{figure*}[tb]
    \centering
    \includegraphics[width=0.9\linewidth]{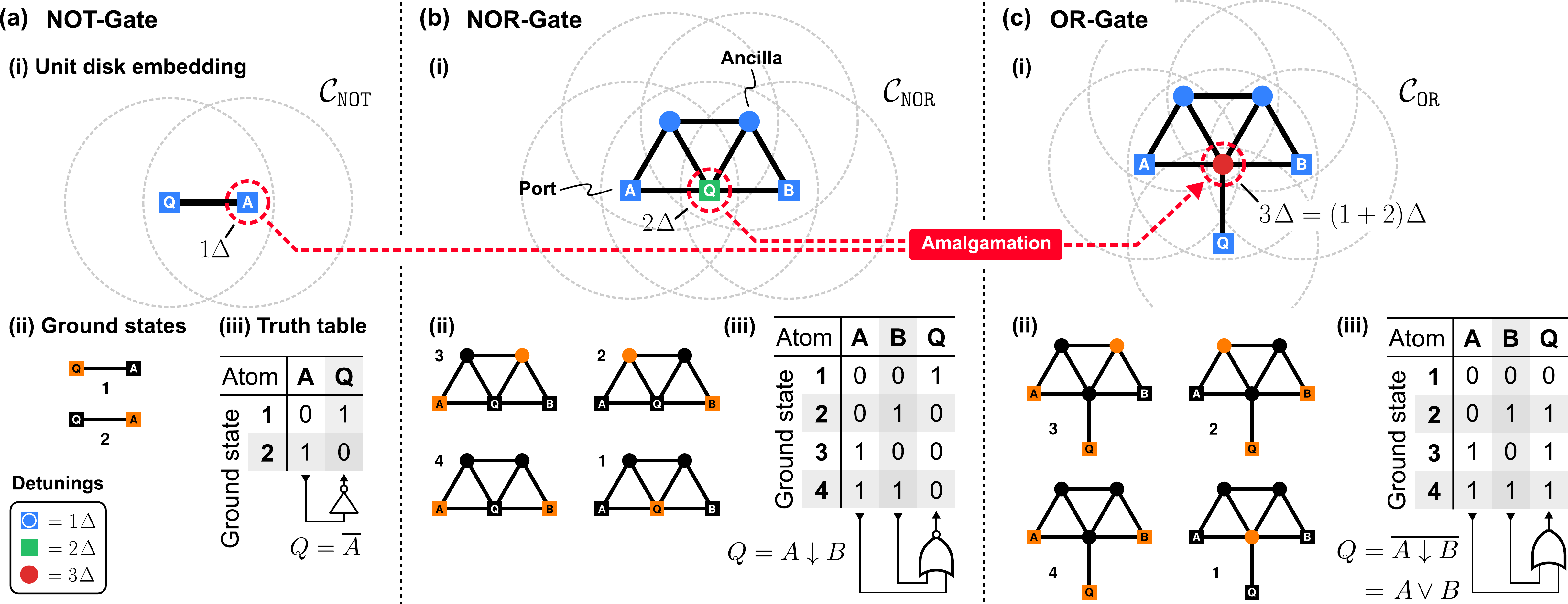}
    \caption{%
	\emph{Review of blockade structures.} 
    (a)~The simplest blockade structure $\C_\texttt{NOT}$ consists of two
    equally detuned atoms in blockade: $\Delta_Q=\Delta_A=1\Delta$ (blue
    squares)~(i). Its degenerate ground state manifold $\H_{\C_\texttt{NOT}}$
    is spanned by the two states $\ket{01}$ and $\ket{10}$ (ii). It therefore
    realizes the language $L_\texttt{NOT}=\{01,10\}$, which contains bit
    strings that correspond to the rows of the truth table of the
    \texttt{NOT}-gate (iii).
    (b)~The five-atom structure $\C_\texttt{NOR}$ consists of three ports
    (squares) and two ancillas (disks)~(i).  Two ports and both ancillas are
    detuned by $1\Delta$ (blue) while the central port is detuned by
    $\Delta_Q=2\Delta$ (green). As a consequence, the structure has a four-fold
    degenerate ground state manifold (ii). If one ignores the ancillas and
    labels the ground states by the excitation patterns of the ports, one finds
    the language $L_\texttt{NOR}=\{001,010,100,110\}$ which corresponds to the
    truth table of the universal \texttt{NOR}-gate (Not-\texttt{OR}) (iii).
    (c)~The \texttt{NOT}-structure $\C_\texttt{NOT}$ can be ``glued'' to the
    output port of the \texttt{NOR}-structure $\C_\texttt{NOR}$ by identifying
    this port atom with an input port atom of $\C_\texttt{NOT}$, thereby adding
    up their detunings to $3\Delta$ (red)~(i). This procedure is called
    \emph{amalgamation} and results in an \texttt{OR}-structure, as can bee
    seen from the ground state patterns (ii) that realize the truth table (iii)
    of an \texttt{OR}-gate. The existence of the \texttt{NOR}-structure and the
    possibility of amalgamation make the toolbox of blockade structures
    functionally complete.
    }
    \label{fig:review}
\end{figure*}

To make this paper self-contained, we continue with a brief review of previous
results on \emph{classical} blockade structures, i.e., without quantum
fluctuations ($\Omega=0$). Readers familiar with Ref.~\cite{Stastny2023a} can
skip this section and continue with \cref{sec:motivation}.

As emphasized above, the overarching goal is the \emph{construction} of a
blockade structure $\C$ for a given language $L$ (= set of bit strings) such
that the ground state patterns of $H_\C(\Omega=0)$ realize this language: $L_\C
= L$.  To this end, a versatile class of languages $L$ to consider are the
\emph{rows} of truth tables of Boolean functions. This is so, because any set
$L\subseteq\mathbb{Z}_2^n$ of bit strings $\vec x=(x_1,\ldots,x_n)$ of uniform
length $n$ can be encoded as a solution to a constraint of the form
$f(x_1,\ldots,x_n)=1$ for some Boolean function $f$. It is therefore convenient
to introduce the class of languages
\begin{align}
    L_f:=\left\{(\vec x,f(\vec x))\,|\,\vec x\in\mathbb{Z}_2^n\right\}
\end{align}
which contain all $2^n$ rows of the truth table of the Boolean function $f$.
If one finds a structure $\C_f$ such that $L_{\C_f}=L_f$, one can simply add an
additional detuning to the atom that encodes the value $f(\vec x)$ to produce a
new structure $\C_{f=1}$ that realizes the language
\begin{align}
    L_{f=1}:=\left\{\vec x\,|\,f(\vec x)=1,\;\vec x\in\mathbb{Z}_2^n\right\}\,.
\end{align}

It is therefore paramount to understand the construction of structures that
realize the language $L_f$ for a given Boolean function $f$. For example, the
simplest such function is the \texttt{NOT}-gate
\begin{align}
    \texttt{NOT}(x):=\bar x\,,
\end{align}
and the language derived from its truth table is simply
\begin{align}
    L_\texttt{NOT}=\{01,10\}\,.
\end{align}
Our task is to construct a blockade structure $\C_\texttt{NOT}$ such that
$L_{\C_\texttt{NOT}}=L_\texttt{NOT}$, i.e., $H_{\C_\texttt{NOT}}$ should have
the logical ground state manifold
$\H_\texttt{NOT}=\operatorname{span}\{\ket{01},\ket{10}\}$. This is of course
realized by a single blockade between two equally detuned atoms, see
\cref{fig:review}~(a). Throughout the paper, detunings are given as multiples
of the unit detuning $\Delta$ and color coded blue ($1\Delta$), green
($2\Delta$), red ($3\Delta$), and pink ($4\Delta$).

$\C_\texttt{NOT}$ is the most elementary, nontrivial blockade structure. A less
trivial Boolean function is the \texttt{NOR}-gate (Not-\texttt{OR})
\begin{align}
    \texttt{NOR}(x,y)\equiv x\downarrow y:=\overline{x\vee y}
\end{align}
with associated language
\begin{align}
    \label{eq:L_NOR}
    L_\texttt{NOR}=\{001,010,100,110\}\,,
\end{align}
where the last bit encodes the result of the \texttt{NOR}-gate applied to the
first two bits. One can show that no structure with three atoms can realize
this language~\cite{Stastny2023a}. However, if one allows for additional
\emph{ancilla} atoms, it becomes possible with the \emph{five}-atom structure
$\C_\texttt{NOR}$ depicted in \cref{fig:review}~(b).
The three atoms of $\C_\texttt{NOR}$ with states that map one-to-one to bit
patterns of $L_\texttt{NOR}$ are called \emph{ports} (henceforth depicted as
squares). The remaining two atoms do not contribute additional degrees of
freedom to the ground state manifold and are referred to as \emph{ancillas}
(henceforth depicted as disks). The excitation patterns of both ports and
ancillas in the ground state manifold of this structure realize the language
\begin{align}
    L_{\C_\texttt{NOR}}=\{001{\color{gray}00},010{\color{gray}01},100{\color{gray}10},110{\color{gray}00}\}\,,
\end{align}
where the states of ports are printed black and the states of the two ancillas
gray. If one ignores the ancilla states (which carry redundant information),
this is equivalent to the \texttt{NOR}-language~\eqref{eq:L_NOR}. When
referring to ground states, we often omit the ancilla states altogether and
simply identify $L_{\C_\texttt{NOR}}$ with $L_\texttt{NOR}$.

What makes the existence of $\C_\texttt{NOR}$ remarkable is the well-known fact
that \texttt{NOR}-gates are \emph{universal}: Any Boolean function
$f(x_1,\ldots,x_n)$ can be written as a Boolean circuit built only with
\texttt{NOR}-gates~\cite{Sheffer1913,Wernick1942}. This suggests that one can
construct a blockade structure for \emph{any} language $L_f$, if there is a
systematic way to implement such circuit decompositions by ``gluing'' several
\texttt{NOR}-structures together. This is made possible by a construction
called \emph{amalgamation}; the rules are very simple: Two structures that
realize Boolean gates can be joined by identifying the atoms that correspond to
the in- and output ports to be connected (in the circuit sense), and
\emph{adding up} the detunings that the two structures assign to the identified
atoms. One easily verifies that the new structure implements the truth table of
the Boolean function that the two connected gates define.

This procedure is exemplified in \cref{fig:review}~(c), where the
\texttt{NOT}-structure $\C_\texttt{NOT}$ from \cref{fig:review}~(a) is
``glued'' to the output port of the \texttt{NOR}-structure $\C_\texttt{NOR}$
from \cref{fig:review}~(b), producing an \texttt{OR}-structure
$\C_\texttt{OR}$. It is straightforward to check that the ground state manifold
of this amalgamation implements the truth table of an \texttt{OR}-gate: 
\begin{align}
    \label{eq:L_OR}
    L_\texttt{OR}=\{000,011,101,111\}\,.
\end{align}
One can even show that this particular six-atom structure is minimal in the
number of atoms~\cite{Stastny2023a}, although amalgamations typically produce
non-minimal structures.

In summary, the existence of a \texttt{NOR}-structure, and the possibility to
combine structures via amalgamation, essentially proves that any language $L$
of length-$n$ bit strings can be realized by some blockade structure $\C$ with
$N\geq n$ atoms in the sense that $L_\C=L$ if one ignores ancillas. A rigorous
derivation, including some technicalities, is presented in
Ref.~\cite{Stastny2023a}. We stress that this proof works in strictly two
dimensions and is constructive, so that $\C$ can be engineered efficiently in
principle. However, the structures derived in this way are typically very
convoluted, and numerical optimization techniques must be used to identify
functionally equivalent structures with fewer atoms (preferably as few as
possible). In Ref.~\cite{Stastny2023a} we compiled an exhaustive list of such
minimal structures that realize all elementary Boolean gates (\texttt{AND},
\texttt{OR}, \texttt{XOR},~\ldots).

This result makes the blockade toolbox functionally complete and serves as
foundation for the present paper: Given an arbitrary Language $L$, we can
always assume the existence of a classical blockade structure $\C$ with
degenerate ground state manifold $\H_\C$, spanned by states with excitation
patterns in $L$.

\section{Motivation: Logic primitives}
\label{sec:motivation}

We now turn to structures with quantum fluctuations ($\Omega\neq 0$). To
illustrate the concepts introduced in \cref{sec:setting}, we consider two
simple examples: one with and one without an equal-weight superposition of
logical states as ground state. These discussions motivate subsequent
generalizations, which then facilitate the main results of the paper.

\begin{figure*}[tb]
    \centering
    \includegraphics[width=0.90\linewidth]{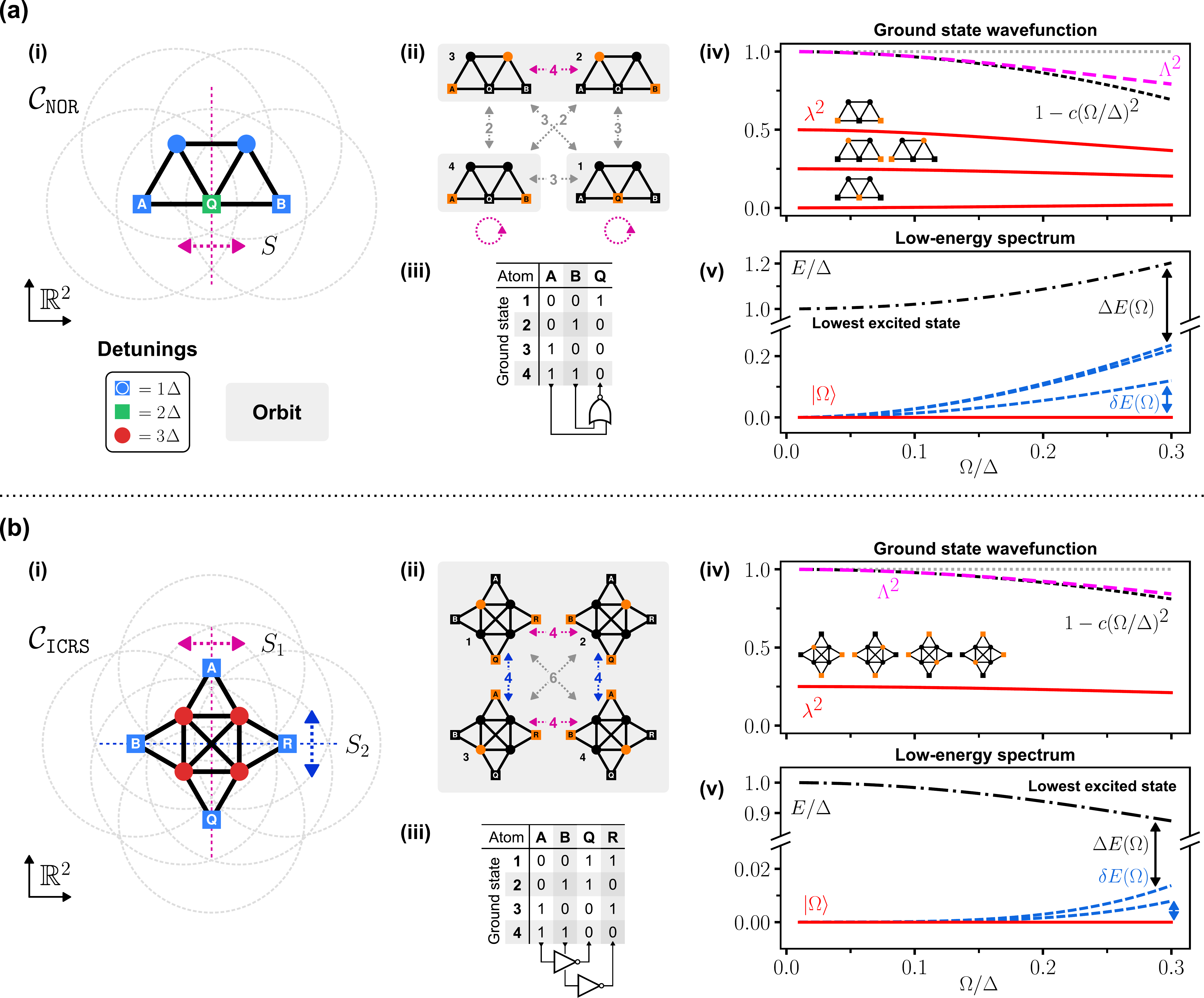}
    \caption{%
	\emph{Motivation: Logic gates.} 
    (a)~(i)~The \texttt{NOR}-gate introduced in Ref.~\cite{Stastny2023a}.  It
    is related to a triangle-based family of primitives that exhaust all
    fundamental logic gates. 
    (ii)~The structure has four degenerate ground states that realize the truth
    table of a \texttt{NOR}-gate (iii). The numbers at arrows in (ii) denote
    Hamming distances between excitation patterns. The gray boxes indicate the
    orbits (for a definition see main text) in which the excitation patterns
    transform under the symmetry $S$ highlighted in (i). 
    (iv-v)~With quantum fluctuations, the ground state of $\C_\texttt{NOR}$ is
    \emph{not} equal-weight, but favors one logic configuration above all
    others. 
    (b)~By contrast, the (inverted) crossing $\C_\texttt{ICRS}$ has an
    equal-weight ground state for $\Omega\neq 0$ (iv-v). It is also highly
    symmetric (i-ii): The symmetries $S_1$ and $S_2$ generate graph
    automorphisms that transform all ground state patterns into each other
    [gray box in (ii)].
    }
    \label{fig:motivation}
\end{figure*}

We start with the blockade structure $\C_\texttt{NOR}$ for a \texttt{NOR}-gate
introduced in Ref.~\cite{Stastny2023a} and reproduced in
\cref{fig:motivation}~(a-i). As reviewed in \cref{sec:review}, classically it
has four degenerate ground states
\begin{align}
    L_\texttt{NOR}=\{\;
    \underset{\substack{\uparrow\\\mathrm{A}}}{0}
    \underset{\substack{\uparrow\\\mathrm{B}}}{0}
    \underset{\substack{\uparrow\\\mathrm{Q}}}{1}
    {\color{gray}00}
    ,\,
    010{\color{gray}01},\,
    100{\color{gray}10},\,
    110{\color{gray}00}
    \;\}
    \label{eq:LNOR}
\end{align}
that are easily identified as the rows of the truth table of a Boolean
\texttt{NOR}-gate, \cref{fig:motivation}~(a-ii) and (a-iii). The black bits in
\cref{eq:LNOR} denote the states of the \emph{ports} $A$, $B$, and $Q$, and
identify the ground states uniquely. By contrast, the gray bits carry redundant
information and encode the states of \emph{ancillas} (which we can omit when
referring to ground states).

When switching on quantum fluctuations $\Omega$, the degeneracy of the ground
state manifold is lifted with a gap $\delta E\sim \Omega^2/\Delta E$ and
produces a unique ground state $\ket{\Omega}$, \cref{fig:motivation}~(a-v). The
exponent of $\Omega$ follows from the leading order perturbation, which
corresponds to the smallest Hamming distance between ground state
configurations in \cref{fig:motivation}~(a-ii). Exact diagonalization reveals
the relative weights
$\{\lambda_{001},\lambda_{010},\lambda_{100},\lambda_{110}\}$ of the four
logical states in $L_\texttt{NOR}$, and the logical weight
$\Lambda^2=\lambda_{001}^2+\lambda_{010}^2+\lambda_{100}^2+\lambda_{110}^2$ in
the logical space $\H_\texttt{NOR}$, see \cref{fig:motivation}~(a-iv). We find
that $\lambda_{100}=\lambda_{010}\neq\lambda_{001}\neq\lambda_{110}$, i.e., the
new ground state is \emph{not} equal-weight and the quantum fluctuations
discriminate between the four logic states. Straightforward numerics reveals
that this problem affects the complete family of Boolean gates introduced in
Ref.~\cite{Stastny2023a}. This suggests that, unsurprisingly, equal-weight
superpositions are not typical.

Next, we consider the 8-atom blockade structure $\C_\texttt{ICRS}$ introduced
in Refs.~\cite{Nguyen2022,Stastny2023a} and depicted in
\cref{fig:motivation}~(b). It has four degenerate ground states, labeled by the
configurations of four ports
\begin{align}
    L_\texttt{ICRS}=\{\;
    \underset{\substack{\uparrow\\\mathrm{A}}}{0}
    \underset{\substack{\uparrow\\\mathrm{B}}}{0}
    \underset{\substack{\uparrow\\\mathrm{Q}}}{1}
    \underset{\substack{\uparrow\\\mathrm{R}}}{1}
    ,\,
    0110,\,
    1001,\,
    1100\;
    \}\,,
\end{align}
which can be interpreted as copying two independent bits from ports A and B to
ports Q and R after inverting them. Geometrically, the structure realizes an
(inverted) crossing of two Boolean wires, a crucial primitive to realize
arbitrary Boolean circuits in the plane~\cite{Nguyen2022,Stastny2023a}. An
analogous analysis of $\C_\texttt{ICRS}$ for $\Omega\neq 0$ reveals that
$\lambda_{0011}=\lambda_{0110}=\lambda_{1001}=\lambda_{1100}$, i.e., the
inverted crossing already achieves our goal and stabilizes a ground state with
maximal fluctuations between the four logic states,
\cref{fig:motivation}~(b-iv).

A quick comparison of the geometries and degeneracies in \cref{fig:motivation}
suggests a relation between \emph{symmetries} of the structure and equal-weight
superpositions in the ground state. The \texttt{NOR}-structure
$\C_\texttt{NOR}$ features a single mirror symmetry about its central in-plane
axis. This symmetry acts on the excitation patterns in $L_\texttt{NOR}$ and
induces \emph{three} distinct sets of excitation patterns that are invariant
under the symmetry: $O_1=\{010,100\}$, $O_2=\{001\}$, and $O_3=\{110\}$,
matching the three values of ground state amplitudes in
\cref{fig:motivation}~(a-iv). By contrast, the \texttt{ICRS}-structure features
the full dihedral group $D_4$ as symmetry group. Its action on
$L_\texttt{ICRS}$ produces a single invariant set $O_1=\{0011,0110,1001,1100\}$
that (presumably) enforces the equal weights of logical amplitudes in the
ground state of $\C_\texttt{ICRS}$, \cref{fig:motivation}~(b-iv). In the
following, we refer to sets of excitation patterns that are invariant under a
symmetry (and do not contain proper subsets that are also invariant) as
\emph{orbits}.

\section{Symmetric blockade structures}
\label{sec:sym}

Our next goal is to formalize this notion of \emph{symmetry} of a blockade
structure, and make its relation to equal-weight superpositions rigorous. To
this end, we need some concepts from graph theory.
\begin{definition}[Graph automorphism]
    \label{df:graph_autom}
    Given a vertex-weighted graph $G=(V,E,\bm\Delta)$, an \emph{automorphism}
    $\phi$ of $G$ is a one-to-one map $\phi:V\to V$ such that
    $\Delta_i=\Delta_{\phi(i)}$, and $(\phi(i),\phi(j))\in E$ if and only if
    $(i,j)\in E$. The set of all automorphisms of $G$ forms the
    \emph{automorphism group} $\aut{G}$.
\end{definition}
Pictorially, an automorphism can be thought of as a permutation of vertices
(dragging edges along) after which the graph ``looks the same'' if one ignores
the vertex labels and allows edges to pass through each other; it is the
natural notion of a \emph{symmetry} of a graph.

The automorphisms $\A_\C:=\aut{G_\C}$ of blockade graphs act naturally on
excitation patterns $\vec n\in\mathbb{Z}_2^N$ via ${\phi\cdot(n_1\dots
n_N)}:=(n_{\phi(1)}\dots n_{\phi(N)})$; for a pattern $\vec n$, the set
$\A_\C\cdot\vec n:=\{\phi\cdot\vec n\,|\,\phi\in\A_\C\}$ is called the
\emph{orbit} of $\vec{n}$. This group action on excitation patterns then
induces a unitary representation $U_\phi$ on the atom Hilbert space
$\mathcal{H}$ via $U_\phi\ket{\vec n}:=\ket{\phi\cdot\vec n}$ (and linear
extension).

With these concepts at hand, we can make two immediate observations. First, the
unitary representation $U_\phi$ is a \emph{symmetry} of the blockade
Hamiltonian~\eqref{eq:H} for arbitrary uniform $\Omega$:
\begin{align}%
    U_\phi H_\C U_\phi^\dag = H_\C\qquad\forall\phi\in\A_\C\,.
\end{align}%
Second, this implies that the set of ground state patterns $L_\C$ is
\emph{invariant} under the group action of automorphisms: $\A_\C\cdot
L_\C=L_\C$ (for details, see \cref{app:symmetries}).

Clearly, every orbit of a group action is an invariant set, but the converse is
not true in general: invariant sets are the \emph{disjoint union} of orbits.
Hence, the set of logical patterns can be written as
\begin{align}
    L_\C=\textstyle\bigsqcup_k O_k\,,
\end{align}
where each orbit $O_k$ is invariant under the action of blockade graph
automorphisms separately. We saw this happen for the \texttt{NOR}-gate in
\cref{fig:motivation}~(a-ii). 

By contrast, the \texttt{ICRS}-structure has the peculiar property that $L_\C$
is \emph{identical} to an orbit (= is transitive), see
\cref{fig:motivation}~(b-ii). This motivates the following definition.
\begin{definition}[Fully-symmetric blockade structure]
    A blockade structure $\C$ is called \emph{fully-symmetric} if
    $\A_\C\cdot\vec n = L_\C$ for $\vec n\in L_\C$, i.e., $L_\C$ is an orbit
    under the action of~$\A_\C$.
\end{definition}
With this definition, we can finally generalize our observation that the
fully-symmetric \texttt{ICRS}-structure features an equal-weight ground state
as follows:
\begin{proposition}
    \label{prop:1} 
    Let $\C$ be a finite, fully-symmetric blockade structure. Then the ground
    state $\ket{\Omega}$ of $H_\C(\Omega)$ is \emph{unique} for $\Omega\neq 0$
    and has the form 
    \begin{align}%
        \label{eq:prop_omega}
        \ket{\Omega}
        =\lambda(\Omega)\sum_{\vec n\in L_\C} \ket{\vec n}
        +\sum_{d\geq 1}\left(\tfrac{\Omega}{\Delta E}\right)^d
        \sum_{\vec n\in L_\C^d}\eta_{\vec{n}}(\Omega)\ket{\vec n}.
    \end{align}%
    In particular, this state satisfies $U_\phi \ket{\Omega} = \ket{\Omega}$
    for every blockade graph automorphism $\phi \in \A_\C$.
\end{proposition}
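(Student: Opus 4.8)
The plan is to prove uniqueness, the equal-weight form, and $\A_\C$-invariance in one stroke by a Perron--Frobenius argument in the computational basis $\{\ket{\vec n}\}$. In this basis the diagonal of $H_\C(\Omega)$ collects the detuning and blockade energies, and the only off-diagonal entries come from $\Omega\sum_i\sigma_i^x$ and equal $\Omega$ between configurations at Hamming distance one. One sign of $\Omega$ makes $H_\C(\Omega)$ \emph{stoquastic} (all off-diagonal entries nonpositive); the other sign I would reach by conjugating with the diagonal unitary $V=\operatorname{diag}\bigl((-1)^{|\vec n|}\bigr)$, which sends $\Omega\mapsto-\Omega$ because the ``hopping graph'' generated by $\sum_i\sigma_i^x$ is bipartite with respect to the parity of $|\vec n|$. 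So I would assume $\Omega<0$ without loss of generality and work in the relevant sector --- either the constrained ($U\to\infty$) subspace spanned by the independent sets of $G_\C$, or a finite but large $U$; the argument is the same because the hopping graph stays connected in both cases.

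Next I would observe that $c\,\id-H_\C(\Omega)$ has nonnegative entries for $c$ large enough, and is \emph{irreducible} because this hopping graph is connected (in the constrained model, every independent set is joined to the empty configuration by deleting excitations one at a time). Perron--Frobenius then delivers exactly what is needed: the largest eigenvalue of $c\,\id-H_\C(\Omega)$, equivalently the ground-state energy of $H_\C(\Omega)$, is \emph{simple} --- this is the uniqueness claim --- and the ground state $\ket{\Omega}$ can be taken with strictly positive amplitudes $\braket{\vec n|\Omega}>0$ in this frame. (For $\Omega>0$ one undoes $V$, which multiplies these amplitudes by $(-1)^{|\vec n|}$; this causes no trouble below.)

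For the symmetry statement I would invoke that each $U_\phi$ ($\phi\in\A_\C$) is a permutation matrix in the computational basis commuting with $H_\C(\Omega)$ for uniform $\Omega$ (as noted above), and that it preserves $|\vec n|$ --- an automorphism merely permutes excitations among themselves --- hence commutes with $V$ as well. Then $U_\phi\ket{\Omega}$ is again a ground state with amplitudes of the same sign pattern, so uniqueness forces $U_\phi\ket{\Omega}=c_\phi\ket{\Omega}$ with $c_\phi>0$, and unitarity forces $c_\phi=1$, i.e.\ $U_\phi\ket{\Omega}=\ket{\Omega}$. Hence $\braket{\vec n|\Omega}$ is constant on $\A_\C$-orbits; since $\C$ is fully-symmetric, $L_\C$ is a single orbit --- in particular all its elements share the same cardinality, so the sign $(-1)^{|\vec n|}$ is constant on $L_\C$ --- and therefore $\braket{\vec n|\Omega}\equiv\lambda(\Omega)$ for $\vec n\in L_\C$, with $\lambda(\Omega)\neq0$ by strict positivity. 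This is the first sum of~\eqref{eq:prop_omega}; the remaining terms are the general Hamming-distance organization~\eqref{eq:ground_state}, where the power $(\Omega/\Delta E)^d$ records that $d$ applications of $\sum_i\sigma_i^x$, each over an energy denominator $\gtrsim\Delta E$, are needed to populate $L_\C^d$.

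The hard part is really just the setup for Perron--Frobenius: checking that the hopping graph is connected (irreducibility) and that the bipartite gauge $V$ fixing the sign of $\Omega$ commutes with all of $\A_\C$ --- which it does precisely because graph automorphisms conserve the excitation number. Once these two points are settled, uniqueness, positivity, $\A_\C$-invariance, and the equal weights on $L_\C$ all follow by soft arguments, with no perturbative input needed for the logical part of~\eqref{eq:prop_omega}.
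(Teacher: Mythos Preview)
Your proposal is correct and follows essentially the same route as the paper's proof in \cref{app:proof_1}: a Perron--Frobenius argument after gauging away the sign of $\Omega$ with the parity operator $V=\operatorname{diag}\bigl((-1)^{|\vec n|}\bigr)$ (which is the paper's $Z=\prod_i\sigma_i^z$ up to a global sign), irreducibility via connectivity of the hopping graph to $\vec 0$, and then the observation that automorphisms preserve $|\vec n|$ to force $U_\phi\ket{\Omega}=\ket{\Omega}$ rather than merely $\pm\ket{\Omega}$. The only cosmetic difference is that the paper treats $\Omega>0$ first (with the gauge) and $\Omega<0$ directly, while you do the reverse.
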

We note that, in contrast to the equal-weight property, the uniqueness of the
ground state is not a consequence of the symmetry, but rather a generic
property of the blockade Hamiltonian~\eqref{eq:H}. A detailed proof of
\cref{prop:1} is given in \cref{app:proof_1}. In a nutshell: The uniqueness
follows from the form of \cref{eq:H} via the Perron-Frobenius theorem.
Consequently, the ground state forms a one-dimensional (possibly nontrivial)
representation of $U_\phi$, which then requires $\lambda_{\vec
n}=\lambda=\const$ for all $\vec n\in L_\C$ due to $L_\C$ being an orbit.

Before we proceed, it is important to emphasize that, for blockade
potentials~\eqref{eq:U}, the symmetries $U_\phi$ are \emph{not} geometric
(Euclidean) symmetries of the structure $\C$ (despite the suggestive
\cref{fig:motivation}), but topological features of the blockade graph $G_\C$.
For example, the following geometrically distinct embeddings of the
\texttt{NOR}-structure from \cref{fig:motivation}~(a) have the same blockade
graphs $G_{\C_\texttt{NOR}^1}=G_{\C_\texttt{NOR}^2}$ and therefore the same
automorphism groups $\A_{\C_\texttt{NOR}^1}=\A_{\C_\texttt{NOR}^2}$:
\begin{center}
    \includegraphics[width=0.95\linewidth]{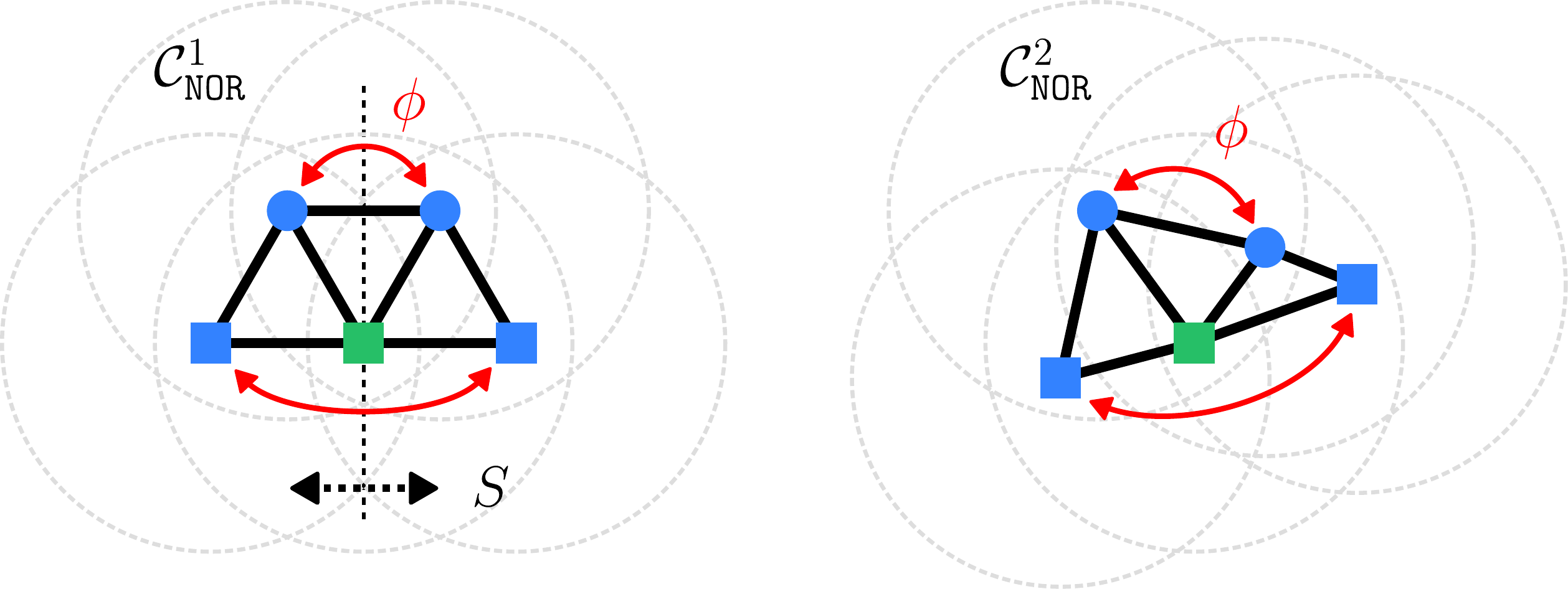}
\end{center}
This demonstrates that blockade graph automorphisms (and thereby symmetry
groups of blockade Hamiltonians) are more general than the \emph{geometric}
symmetries of a structure, as they need not be reflected in these symmetries!
For a generic structure, it is also not true that all blockade graph
automorphisms can be realized by the Euclidean symmetries of \emph{some
special} unit disk or sphere embedding (as is the case for the example above), see
\cref{app:counterexample} for an explicit counterexample.  Another
counterexample is the tessellated structure that we design in \cref{sec:const}
below: the size of its automorphism group grows exponentially with the number
of unit cells, whereas the space group of the underlying lattice is only
polynomial in size.

With that said, \cref{prop:1} translates our goal from finding blockade
structures with strong quantum fluctuations into finding fully-symmetric
structures. So far, we only know of the inverted crossing \texttt{ICRS} as an
example. The obvious follow-up question is whether there are fully-symmetric
realizations of \emph{logic gates} like \texttt{NOR}?

\section{Fully-symmetric universal gate}
\label{sec:fsu}

We start by noting that all triangle-based logic gates (studied in
Ref.~\cite{Stastny2023a}) suffer from similar asymmetries as the
\texttt{NOR}-gate discussed above, i.e., none of them are fully-symmetric. We
show now that -- quite surprisingly -- the \texttt{ICRS}-structure can be
extended such that (1) it remains fully-symmetric and (2) it realizes several
logic primitives at once; we call this the \emph{fully-symmetric universal
(\texttt{FSU}) gate} $\C_\texttt{FSU}$.

\begin{figure*}[tb]
    \centering
    \includegraphics[width=1.0\linewidth]{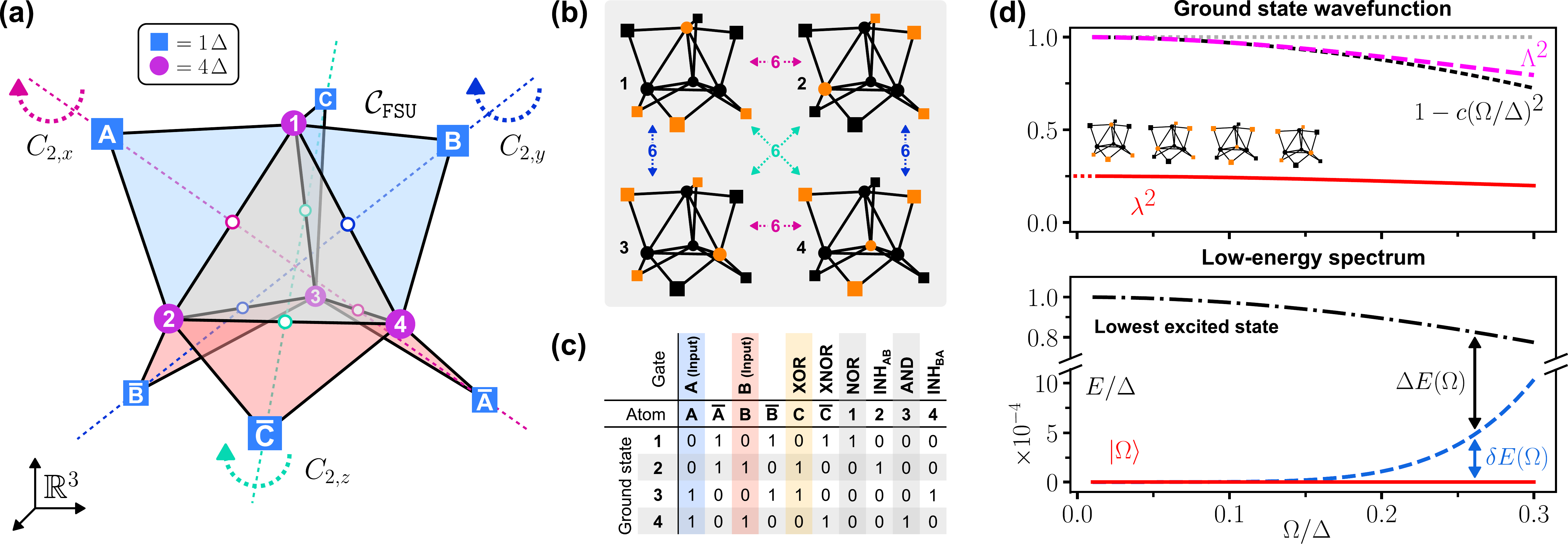}
    \caption{%
    \emph{Fully-symmetric universal (\texttt{FSU}) gate.} 
    (a)~The \texttt{FSU}-gate $\C_\texttt{FSU}$ is constructed from the
    inverted crossing by embedding it in 3D and attaching two additional atoms.
    The resulting structure has the full tetrahedral symmetry as automorphism
    group. Here we show the Klein four-group $V=\{1,C_{2,x},C_{2,y},C_{2,z}\}$,
    where $C_{2,\alpha}$ denotes the $\pi$-rotation about one of the three
    symmetry axes $\alpha=x,y,z$.
    (b)~The four degenerate ground states form a single orbit under the action
    of $V$, thereby ensuring their equal-weight superposition in the ground
    state for $\Omega\neq 0$. Furthermore, the Hamming distances (numbers on
    arrows) of all transitions are equal [cf.\ \cref{fig:motivation}~(b-ii)].
    (c)~The gate implements most Boolean primitive gates by choosing different
    atoms as ports. The six atoms on the ``wings'' form logically inverted
    antipodal pairs. We highlight the gate choice for the \texttt{XOR}-gate
    which becomes important below. $\texttt{INH}_{xy}=\neg x\wedge y$ denotes
    the ``inhibition gate'' where $x$ inhibits $y$.
    (d)~Weight of the logical states for $\Omega\neq 0$ and logical overlap in
    the ground state (upper panel) and low-energy spectrum (lower panel).
    }
    \label{fig:fsu}
\end{figure*}

The construction is illustrated in \cref{fig:fsu}~(a): We first embed the
\texttt{ICRS}-structure in three dimensions, putting the central four atoms on
the corners of a tetrahedron. We then augment the structure by two additional
atoms to produce a structure with full tetrahedral symmetry
$\A_\texttt{FSU}\simeq S_4$ ($S_4$ denotes the permutation group of four
elements). The latter has many subgroups; here we focus on the Klein four-group
$V=\{1,C_{2,x},C_{2,y},C_{2,z}\}$ which contains the $\pi$-rotations
$C_{2,\alpha}$ about the three symmetry axes through opposite \emph{edges} of
the tetrahedron. The ground state patterns $L_\texttt{FSU}$ are depicted in
\cref{fig:fsu}~(b) and transform as a single orbit under $V$, thereby producing
an equal-weight superposition for $\Omega\neq 0$ according to \cref{prop:1},
see \cref{fig:fsu}~(d). What makes this structure ``universal'' is the fact
that depending on which atoms are interpreted as ports, the structure
implements several fundamental Boolean gates at once, \cref{fig:fsu}~(c),
including a \texttt{NOR}-gate (which is a universal gate in Boolean algebra).

An exhaustive search \footnote{To generate all possible blockade graphs, we
used the tool \texttt{geng} which is a part of the package \texttt{nauty}. To
compute the automorphism groups, we also utilized the package
\texttt{nauty}~\cite{MCKAY201494}.} shows that there is no fully-symmetric
realization of \texttt{AND}- and \texttt{NOR}-gates with less than 8 atoms.
That this bound is tight follows from the \texttt{ICRS}-structure in
\cref{fig:motivation}~(b) which has 8 atoms and realizes these two gates by
reinterpreting certain ancillas as output ports. We also found that
fully-symmetric \texttt{XOR}- and \texttt{XNOR}-gates cannot be realized with
less than 10 atoms; this makes the \texttt{FSU}-structure in \cref{fig:fsu} a
minimal fully-symmetric realization of these two gates (we did not show that
this realization is unique, though). Lastly, we showed that there are no
fully-symmetric \texttt{NAND}- and \texttt{OR}-gates with 10 atoms or less
(these two gates are not realized by the \texttt{FSU}-structure).

We conclude with two observations: First, using the three atoms on the upper
``wings'' of \cref{fig:fsu}~(a) as ports yields a \texttt{XOR}-gate $\oplus$,
characterized by configurations with an \emph{even} number of excited atoms.
And second, the three atoms on the lower ``wings'' are the Boolean inverse of
their opposite (upper) counterparts, and therefore realize an
\texttt{XNOR}-gate $\odot$. The \texttt{FSU}-structure thus implements the
Boolean identities
\begin{align}
    \label{eq:fsu_feature}
    \underbrace{A\oplus B = C}_{\text{Upper ``wings''}}
    \;\Leftrightarrow\;
    \overline{A\oplus B}=\overline{C}
    \;\Leftrightarrow\;
    \underbrace{\overline{A}\odot\overline{B}=\overline{C}}_{\text{Lower ``wings''}}
\end{align}
in a fully-symmetric fashion. This feature is crucial for the last part of the
paper where we discuss our main result.

\section{A first view on tessellations}
\label{sec:firstview}

One of the most interesting features of blockade structures is that they can
enforce local constraints on translationally invariant systems to prepare
non-factorizable low-energy Hilbert spaces (see
Refs.~\cite{Verresen2021,Samajdar_2021,Stastny2023a,Zeng2025} for details and
examples). Quantum fluctuations within these nontrivial ground state manifolds
might then stabilize topologically ordered quantum phases, though this is by no
means guaranteed~\cite{Giudici2022,Sahay2022,Rourke2023}. So far, we only
considered small structures that realize Boolean logic gates. It is therefore a
reasonable next step to ask whether \emph{tessellated} (i.e.\ scalable and
periodic) structures can be fully-symmetric, as this would potentially provide
a robust mechanism to stabilize interesting quantum phases.

To understand the necessary requirements, we start by studying a generic
condition on the automorphism group of such structures. In group theory,
Burnside's lemma constrains the orbit structure of arbitrary group actions, and
thereby imposes a lower bound on the size of the automorphism group of any
fully-symmetric (FS) blockade structure (\cref{app:burnside}):
\begin{align}
    \label{eq:burnside}
    \underbrace{\left|\bigslant{L_\C}{\A_\C}\right|}_{\text{\#Orbits}}
    \stackrel{\text{FS}}{=}1
    \quad\stackrel{\text{Burnside}}{\Rightarrow}\quad
    |\A_\C| \geq |L_\C|\,.
\end{align}
That is, fully-symmetric structures must have at least as many automorphisms as
ground state configurations. This is very restrictive in the context of quantum
phases, were we want to implement nontrivial, \emph{local} degrees of freedom
in a scalable system, i.e., asymptotically $|L_\C|=\dim\H_\C\sim d^N$ for some
``quantum dimension'' $1<d<2$. This means that the automorphism group of
fully-symmetric tessellations must grow \emph{exponentially} with the system
size. In particular, the automorphisms inherited from the group of lattice
symmetries (point group and translations) cannot be sufficient, as this
subgroup is only polynomial in size.

With this knowledge, we can now revisit some of the recently proposed
tessellated blockade structures on the Rydberg platform to check whether they
are fully-symmetric. Unfortunately, it turns out that neither the structures
proposed by Verresen \etal~\cite{Verresen2021} (targeting toric code
topological order) nor by Zeng \etal~\cite{Zeng2025} (targeting quantum dimer
models) have automorphism groups large enough to stabilize equal-weight
superpositions (\cref{app:verresen,app:pichler}). The same is true for the
structures proposed by some of us in Ref.~\cite{Stastny2023a} to realize the
$\ZZ$-loop space that underlies the toric code~\cite{Kitaev2003,Kitaev2006}
(\cref{app:stastny}). Just as we saw for the logic gates above, fully-symmetric
blockade structures seem to be special and hard to come by.

Clearly, the most straightforward path to an exponentially large automorphism
group is to construct systems with \emph{local} automorphisms (``gauge
automorphisms''). Unfortunately, \cref{eq:burnside} is an \emph{implication}
and not an equivalence: Even if $|\A_\C|\geq |L_\C|$ is satisfied, the ground
state excitation patterns $L_\C$ can still split in different orbits, and the
system fails to be fully-symmetric. This happens, for example, when the local
automorphisms act only on ancillas and leave the ports invariant.

In summary, the construction of fully-symmetric tessellated blockade structures
has not been achieved yet, and it is not obvious how to proceed without more
specific knowledge about the extensive ground state configurations $L_\C$ to be
realized. This is why we focus in the remainder of the paper on a particular
problem: the construction of a tessellated blockade structure with $\ZZ$ toric
code topological order.

\section{$\ZZ$ topological order and local automorphisms}
\label{sec:local_aut}

\subsection{Reminder: Toric code topological order}
\label{subsec:toric}

We start with a brief review of the toric code to set the stage and specify
our goal. Readers familiar with the toric code~\cite{Kitaev2003} and the
results from Ref.~\cite{Stastny2023a} can safely skip to \cref{subsec:nogo}.

We consider a honeycomb lattice with one qubit (spin-$\tfrac{1}{2}$) per
\emph{edge} $e$, represented by the local Pauli matrices $\sigma_e^\alpha$
with $\alpha=x,y,z$. The toric code Hamiltonian is defined as
\begin{align}%
    \label{eq:tc_h}
    H_\sub{TC}=-J_A\sum_{\text{Vertices $s$}}A_s-J_B\sum_{\text{Faces $p$}}B_p
\end{align}%
with $J_A,J_B > 0$, \emph{star operators} $A_s=\prod_{e\in s}\sigma^z_e$,
and \emph{plaquette operators} $B_p=\prod_{e\in p}\sigma^x_e$. Here $e\in s$
denotes edges emanating from vertex $s$ and $e\in p$ denotes edges bounding
face $p$:
\begin{center}
    \includegraphics[width=0.55\linewidth]{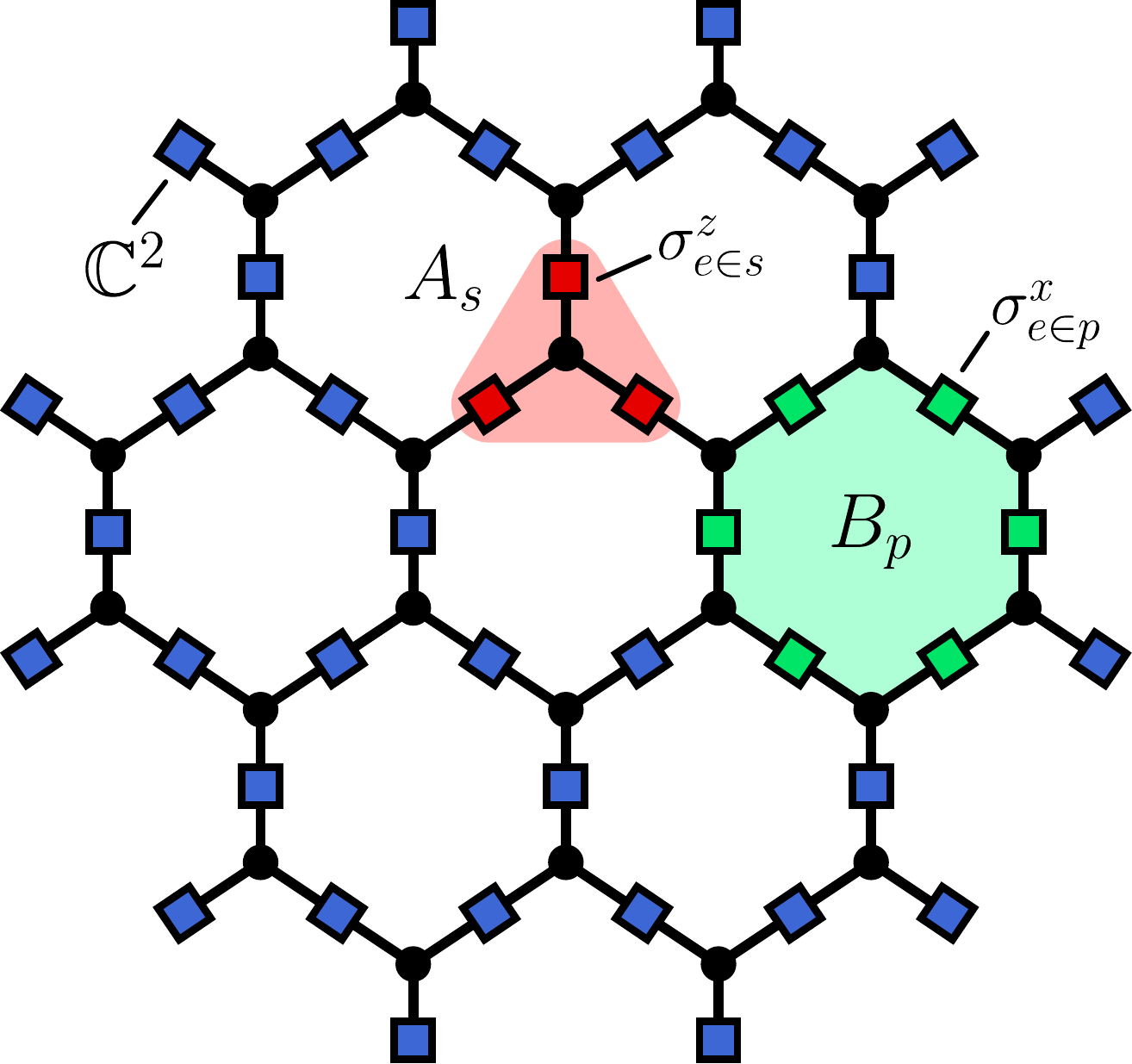}
\end{center}
Note that typically the toric code is defined on a square lattice. Here we
focus on the honeycomb lattice because its trivalent vertices make the
constructions below more natural. (By contracting the vertical edges of the
honeycomb lattice one can easily transition to a square lattice afterwards.)

To construct the ground state(s) of~\eqref{eq:tc_h}, one can exploit the fact that all
local terms $A_s$ and $B_p$ commute pairwise. We can therefore start by setting
$J_B=0$ and first construct the ground state manifold of the star operators
$A_s$ alone, and subsequently diagonalize the plaquette operators $B_p$ in this
submanifold:

\begin{itemize}

\item\textbf{Step 1:} The ground state manifold for $J_B=0$ is characterized by
    states with $+1$ eigenvalues of all star operators $A_s$~\footnote{%
        To see this, note that $A_s^2=\id$ implies eigenvalues $\pm 1$. Since
        all operators $A_s$ commute, they can be diagonalized simultaneously.
        The coefficient $-J_A$ in \eqref{eq:tc_h} then selects states with
        eigenvalue $+1$ on all sites as ground state (for $J_A>0$).
}.  
This subspace is spanned by product states with an \emph{even} number of
excited states $\ket{1}_e$ adjacent to each vertex and suggests a ``string
picture'': we interpret edges with qubits in state $\ket{1}_e$ as being
occupied by a string ($\sigma_e^z\ket{1}_e=-\ket{1}_e$). Then the constraint
imposed by star operators is simply that strings cannot terminate on vertices.
The ground state manifold for $J_B=0$ is therefore spanned by (exponentially
many) \emph{loop} configurations like this:
\begin{center}
    \includegraphics[width=0.55\linewidth]{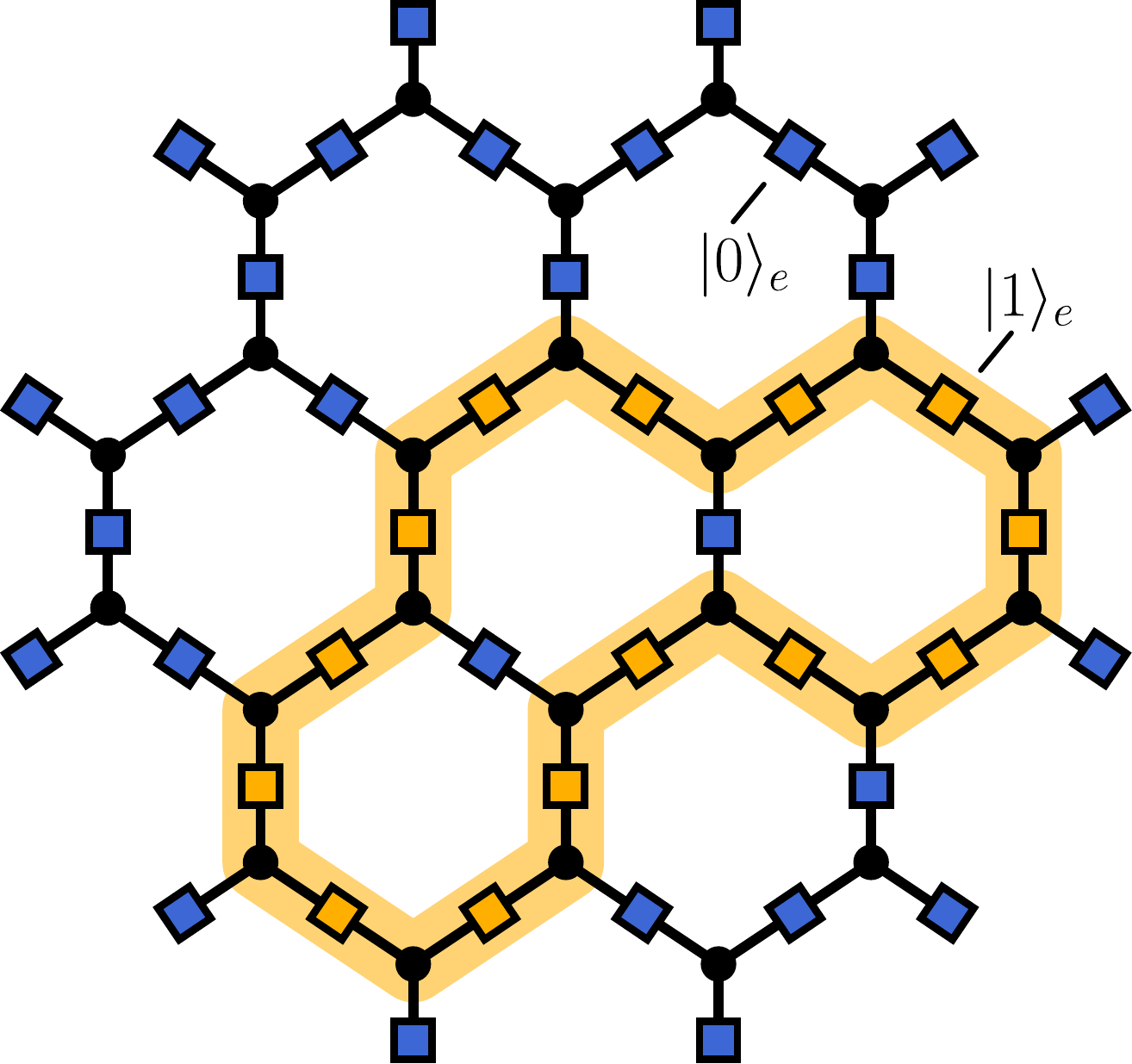}
\end{center}
Note that the boundary conditions determine whether strings are allowed to
terminate on the boundary. Above we show a finite patch with ``rough''
boundaries on which strings can end. By contrast, ``smooth'' boundaries are
obtained by cutting off the dangling edges; on such boundaries strings cannot
terminate. For periodic boundaries, strings can also not terminate anywhere.
The following discussion is independent of boundary conditions unless noted
otherwise.

Let us denote the product states of all qubits by $\ket{\vec n}$ and define
$L_\sub{Loop}$ as the set of all patterns that correspond to valid loop
configurations (respecting the chosen boundary conditions); this defines the
subspace $\H_\sub{Loop}=\operatorname{span}\{\ket{\vec n}\,|\,\vec n\in
L_\sub{Loop}\}$ spanned by these ``loop states''.

\item\textbf{Step 2:} To construct the ground state $\ket{\Omega_\sub{TC}}$
    of~\eqref{eq:tc_h}, we must also satisfy $B_p\ket{\Omega_\sub{TC}}=
    +1\,\ket{\Omega_\sub{TC}}$
    for all faces $p$ (since $B_p^2=\id$). It is easy to see that the action of
    $B_p$ on a loop state $\ket{\vec n}\in \H_\sub{Loop}$ is to add the
    elementary loop around face $p$ (modulo-2) to $\vec n\in L_\sub{Loop}$,
    thereby creating another loop configuration: $B_p\ket{\vec n}=\ket{\vec
    n'}\in \H_\sub{Loop}$ with $\vec n'\in L_\sub{Loop}$. This mapping is a
    bijection from $L_\sub{Loop}$ to itself since $B_p$ is invertible. It is
    also not hard to see that, for open boundaries (all rough or smooth), one can
    transform any loop configuration into any other by a sequence of $B_p$
    operators. 

    It follows that the only state invariant under \emph{all} plaquette
    operators is the \emph{equal-weight} superposition of all loop states
    (we omit the normalization):
    \begin{align}%
        \label{eq:tc_gs}
        \ket{\Omega_\sub{TC}}\propto\sum_{\vec n\in L_\sub{Loop}}\ket{\vec n}\,.
    \end{align}%
\end{itemize}
\cref{eq:tc_gs} is the (unique) ground state of \cref{eq:tc_h} and the
renormalization fixed point (characterized by vanishing correlation length) of
the $\mathbb{Z}_2$ toric code topological order~\cite{Levin2005}.  For other
boundary conditions (alternating smooth and rough boundaries or periodic
boundaries) there is one independent ground state for each element of the
(relative) homology group of the spatial manifold on which the lattice is
embedded~\cite{Bravyi1998a}. This topological degeneracy is not important for
the following arguments since (gapped) quantum phases are bulk features and
independent of boundary conditions.

As a condensate of extended objects ($\mathbb{Z}_2$-loops), the state
$\ket{\Omega_\sub{TC}}$ features a particular pattern of \emph{long-range
entanglement}~\cite{Chen2010}; this makes~\eqref{eq:tc_gs} the simplest example
of a \emph{string net condensate}, a family of many-body quantum states that
describe two-dimensional topological orders~\cite{Levin2005}. The particular
(abelian) toric code order in~\eqref{eq:tc_gs} makes it a potential substrate
for topological quantum memories~\cite{Dennis2002,Acharya2024}. The presence of
long-range entanglement can be certified and characterized by a non-vanishing
topological entanglement entropy~\cite{Kitaev2006a,Levin2006} (which is not
necessarily universal within a topological
phase~\cite{Williamson2019,Kim2023}).

Our goal in the remainder of this paper is to construct a tessellated blockade
structure with a many-body ground state $\ket{\Omega}$ in the same topological
quantum phase as $\ket{\Omega_\sub{TC}}$. The return of this endeavor would be
the realization of the toric code topological order using only the physically
accessible two-body interactions of the blockade Hamiltonian~\eqref{eq:H},
instead of the unrealistic four-body interactions in the toric code
Hamiltonian~\eqref{eq:tc_h}.

Before we proceed, it is important to remember that two ground states belong to
the same quantum phase if they can be adiabatically connected by a gapped
Hamiltonian~\cite{Chen2010}, i.e., the ground state $\ket{\Omega}$ of the
blockade structure (to be constructed) and $\ket{\Omega_\sub{TC}}$ are not
required to be the \emph{same} state. Nonetheless, the aforementioned criterion
ensures that they are characterized by the same pattern of long-range
entanglement~\cite{Chen2010} (recall that phases generally correspond to stable
renormalization fixed points~\cite{Wen2010,Sachdev2015}).

\subsection{Warm-up: An instructive failure}
\label{subsec:nogo}

To motivate our rather involved construction below, it is advisable to consider
a more straightforward construction first. This approach was already suggested
by some of us in Ref.~\cite{Stastny2023a} and, as already mentioned in
\cref{sec:firstview}, \emph{fails} to prepare the equal-weight
superposition~\eqref{eq:tc_gs}. The point of the following discussion is to
identify a potential \emph{reason} for this failure, which then inspires our
(successful) construction in \cref{sec:const} below.

In order to stabilize a state that is (locally) unitarily equivalent to
\cref{eq:tc_gs} using the toolbox of blockade structures, one must solve two
problems:
\begin{itemize}

    \item[\textbf{P1}] Construct a structure $\C_\sub{Loop}$ such that, for
        $\Omega=0$, the ground state manifold $\H_\sub{Loop}$ is spanned by
        degenerate loop configurations $\vec n\in
        L_{\C_\sub{Loop}}\stackrel{!}{=}L_\sub{Loop}$. Note that we can use
        additional ancilla atoms to achieve this, as long as these atoms do not
        contribute degrees of freedom to the ground state manifold. Solving
        this problem corresponds to the constraint $A_s=+1$ in the toric code
        (\textbf{Step 1}).

    \item[\textbf{P2}] Ensure that, by switching on uniform quantum
        fluctuations $\Omega\neq 0$, a state $\ket{\Omega}$ within the same
        quantum phase as $\ket{\Omega_\sub{TC}}$ is stabilized as the (unique)
        ground state of the blockade structure. A promising feature would
        certainly be an equal-weight superposition of states $\ket{\vec n}$ for
        $\vec n\in L_{\C_\sub{Loop}}$ (potentially with admixtures from other
        states). Solving this problem corresponds to the constraint $B_p=+1$ in
        the toric code (\textbf{Step 2}).

\end{itemize}
Both of these are complicated ``inverse problems'' that seem hard -- if not
impossible -- to tackle. Fortunately, the solution of \textbf{P1} is
straightforward using the toolbox from Ref.~\cite{Stastny2023a} reviewed in
\cref{sec:review}:

It is reasonable to start by placing an atom on each link of the honeycomb
lattice. If we set the detunings of all atoms to zero (white boxes), the ground
state manifold contains all $2^N$ excitation patterns and matches the full
state space of the conventional toric code (with one spin-$\tfrac{1}{2}$ per
edge):
\begin{center}
    \includegraphics[width=0.55\linewidth]{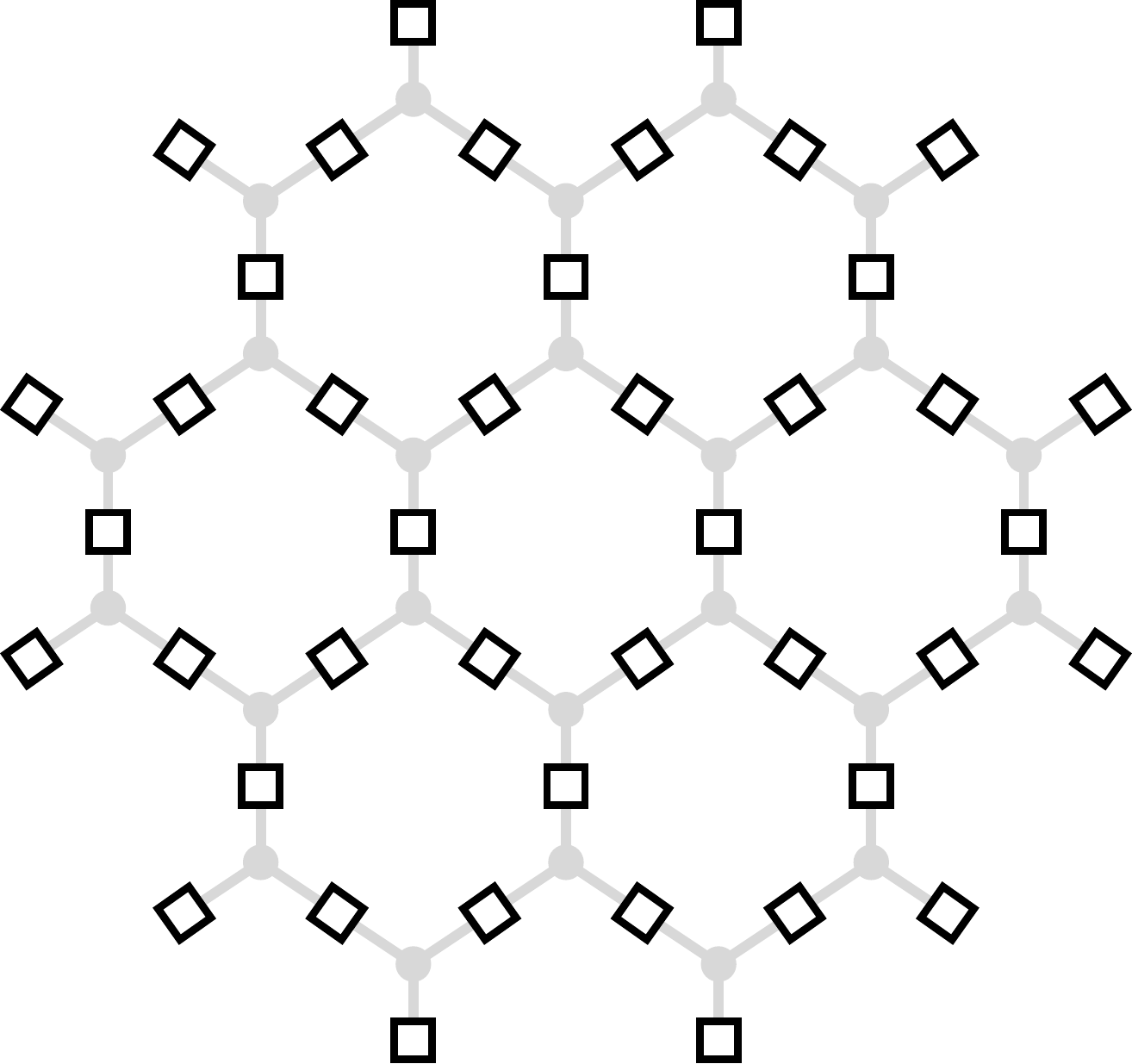}
\end{center}

Next, we must introduce detunings and blockades (and potentially ancillas) such
that only loop patterns remain in the ground state manifold. A key insight is
that the loop constraint $A_s=+1$ on a \emph{trivalent} vertex with state
$\ket{x,y,z}$ on its three emanating edges is satisfied if and only if $x\oplus
y=z$, where ``$\oplus$" denotes a \texttt{XOR}-gate (modulo-2 addition). We can
therefore pick any blockade structure $\C_\texttt{XOR}$ that realizes a
\texttt{XOR}-gate, put a copy on each vertex of the honeycomb lattice, and then
amalgamate the three ports of the gates with the adjacent atoms on the edges:
\begin{center}
    \includegraphics[width=0.55\linewidth]{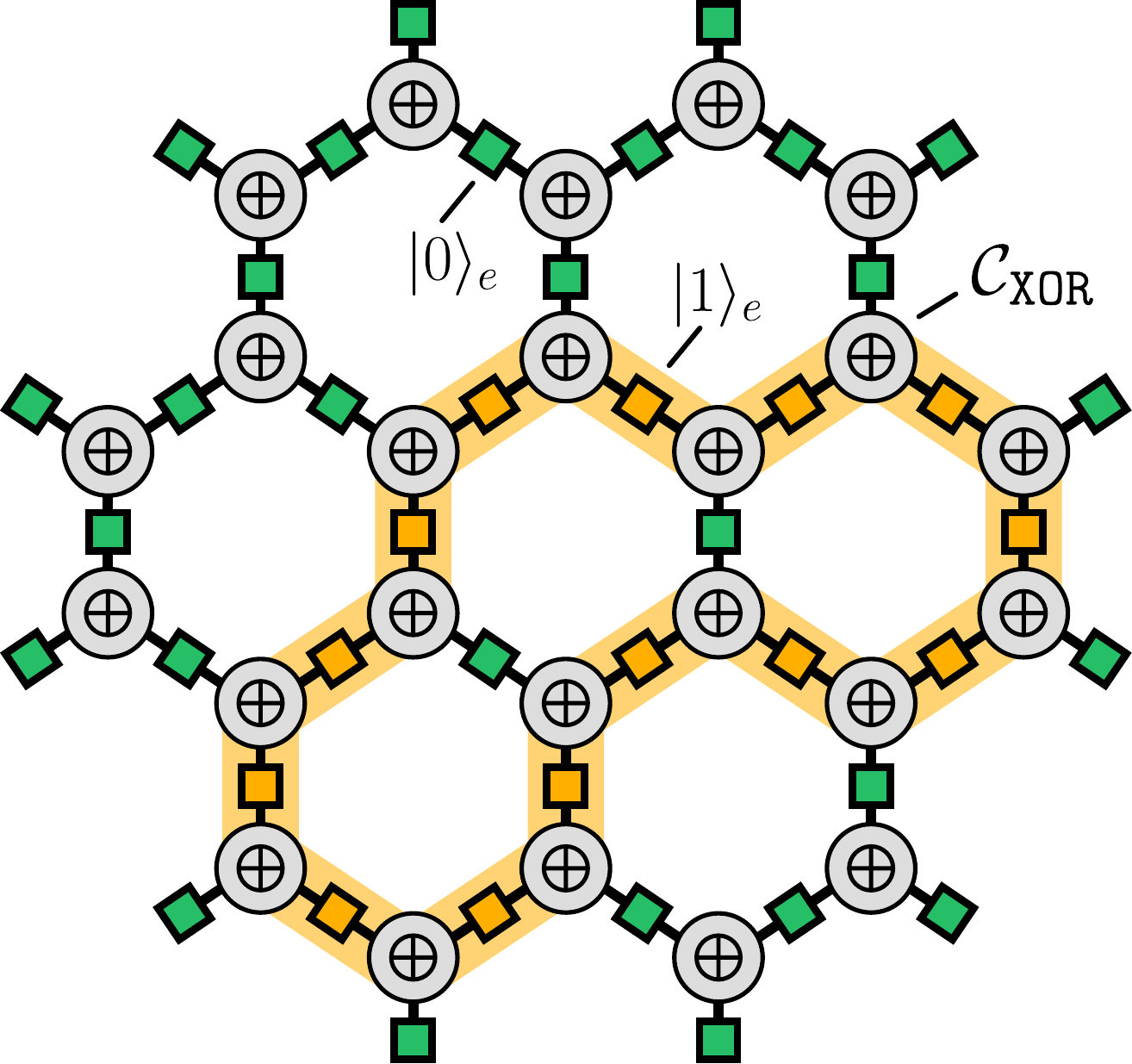}
\end{center}
Note that the constraint $x\oplus y=z$ is completely symmetric in the three
variables; hence it is not important which edges play the role of input and
output ports of the logic gates. The exemplary loop pattern shown clearly
satisfies the $\texttt{XOR}$-constraints on all vertices.

This is the approach to tackle \textbf{P1} that has been proposed in
Ref.~\cite{Stastny2023a}, where a specific (minimal) realization for the vertex
structure $\C_\texttt{XOR}$ was given. (We do not repeat this realization here
as we do not need it in the following.)

We now turn to \textbf{P2}. Since the classical part of the
Hamiltonian~\eqref{eq:H} (blockade interactions and detunings) is fixed by the
construction above (and the choice of a particular $\C_\texttt{XOR}$), all we
can do is to switch on weak, uniform fluctuations $\Omega\neq 0$ and hope for
the best. Given the general structure of Hamiltonian~\eqref{eq:H}, there is simply no
freedom left to tailor the effects of quantum fluctuations to serve our goal.
Note that this is the approach of most previous works that proposed
realizations of quantum phases by leveraging the Rydberg blockade
mechanism~\cite{Verresen2021,Samajdar_2021,Stastny2023a,Zeng2025}.

The only tool at our disposal is the concept of \emph{full symmetry} introduced
in \cref{sec:sym}. Unfortunately, as we rigorously show in \cref{app:stastny},
the particular realization of $\C_\texttt{XOR}$ proposed in
Ref.~\cite{Stastny2023a} leads to a tessellation that is \emph{not}
fully-symmetric due to a lack of automorphisms. While this does not preclude an
equal-weight superposition of states in $\H_\sub{Loop}$, it certainly makes it
unlikely (as equal-weight superpositions are not typical). To support this
claim, we derived a perturbative, low-energy effective Hamiltonian of this
particular model~\cite{Maier2023} and performed numerical simulations using
iDMRG~\cite{tenpy2024} to study its ground state properties. While there are
quantum fluctuations that induce a certain amount of local entanglement, the
results suggest that there is no long-range entanglement (certified by a
vanishing topological entanglement entropy). Together, these findings support
our suspicion that the many-body ground state of this particular blockade
structure is \emph{not} topologically ordered. Furthermore, we take the stance
that the only workable path towards solving \textbf{P2} is the construction of
a fully-symmetric tessellation.

To do so, it is important to understand whether the failure to be
fully-symmetric is caused by the particular choice of the vertex structure
$\C_\texttt{XOR}$, or whether it is rooted in the general architecture
suggested above (the proof in \cref{app:stastny} exploits the internal
structure of $\C_\texttt{XOR}$). As we show in \cref{app:proof_2}, the
architecture is to blame:
\begin{proposition}
    \label{prop:2}
    Consider a periodic, tessellated blockade structure on an arbitrary,
    regular lattice with \emph{one} port per edge that is shared between
    adjacent vertex structures; let $L$ be the set of ground state patterns.
    Then \emph{local} automorphisms must leave the ports on edges invariant,
    and therefore act as the identity on $L$.
\end{proposition}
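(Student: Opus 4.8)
The plan is to exploit the finiteness of the support of a local automorphism together with the rigid ``gluing'' pattern of the one-port-per-edge architecture. A \emph{local} automorphism $\phi\in\A_\C$ has, by definition, finite support $\Sigma:=\{a\in V\mid\phi(a)\neq a\}$ (small compared with the period on a periodic lattice; automatically finite on the infinite lattice, which is the setting relevant for a bulk property such as topological order). In this architecture the atoms fall into two classes: \emph{ports} $p_e$, one per lattice edge $e=(u,v)$, each lying in exactly the two adjacent vertex structures $\C_u$ and $\C_v$, and \emph{ancillas}, each contained in a single vertex structure $\C_v$ and having all of its blockade neighbours inside that same $\C_v$; moreover the overlap $\C_u\cap\C_v$ is the single port $p_e$ when $e=(u,v)$ is a lattice edge and is empty otherwise. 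Two elementary consequences of $\phi$ being a graph automorphism will be used throughout: $\Sigma$ is $\phi$-invariant, and every atom has the same blockade neighbours \emph{outside} $\Sigma$ as its image under $\phi$.

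The argument then proceeds from the outside in. Let $R:=\{v\mid\C_v\cap\Sigma\neq\emptyset\}$ be the (finite) set of lattice sites whose vertex structure is touched by $\phi$; in the bulk $R\neq\Lambda$, so one may define the depth $d(v)$ of a site as its lattice distance to $\Lambda\setminus R$ (with $d(v)=0$ for $v\notin R$). The base case is immediate: if an edge $e=(u,v)$ has an endpoint outside $R$, then $p_e$ lies in a vertex structure disjoint from $\Sigma$, hence $\phi(p_e)=p_e$; in particular every port in $\Sigma$ sits on an edge both of whose endpoints belong to $R$, and every ancilla in $\Sigma$ sits in a vertex structure of depth $\geq 1$. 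The remaining task is to push fixedness of ports inward. Here I would argue by minimal counterexample: if $\Sigma$ still contained a port, choose one, $p_e$ with $e=(u,v)$, minimising $\min(d(u),d(v))$, say $d(u)=m\geq 1$; a neighbour $u^\ast$ of $u$ with $d(u^\ast)=m-1$ then furnishes a port $p_{(u,u^\ast)}\in\C_u$ which is \emph{not} in $\Sigma$. From this ``anchor'' inside $\C_u$ one transports fixedness through $\C_u$: using connectivity of the vertex structure, follow a path inside $\C_u$ from $p_e$ to $p_{(u,u^\ast)}$ and push it forward under $\phi$, checking that the image cannot leave $\C_u$ — whenever the pushed path reaches a port $p_{(u,u')}$ it can only continue inside $\C_u$ or inside $\C_{u'}$, and leaking into $\C_{u'}$ is excluded either because $d(u')<m$ (so, by a suitable induction on depth, $\C_{u'}$ is already mapped onto itself and the overlap point is pinned) or because $d(u')\geq m$ (so the escaped atom would be a second port in $\Sigma$, contradicting minimality of $m$). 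One concludes $\phi(\C_u)=\C_u$, so $\phi$ only permutes the ports of $\C_u$ among themselves, whence minimality forces $\phi(p_e)=p_e$ — a contradiction. Hence $\Sigma$ contains no port: $\phi$ fixes every port $p_e$.

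The passage to $L$ is then short. Since ancilla atoms contribute no degree of freedom to the ground-state manifold, a pattern $\vec n\in L$ is uniquely determined by its restriction to the port atoms. As $\phi\cdot L=L$ and $(\phi\cdot\vec n)_{p_e}=n_{\phi(p_e)}=n_{p_e}$ for every port, the pattern $\phi\cdot\vec n\in L$ has exactly the same port restriction as $\vec n$ and must therefore equal $\vec n$. Thus $\phi$ acts as the identity on $L$.

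I expect the genuine difficulty to lie in the inward-propagation step, i.e.\ in controlling how a finite-support automorphism can act \emph{near} a port, where two vertex structures overlap in a single atom. Making the ``no leaking'' claim rigorous requires an induction on the depth $d(v)$ whose hypothesis is strong enough to keep track of where the atoms of each vertex structure are sent (not merely which ports are fixed), and it is here that the one-port-per-edge hypothesis is indispensable — it reduces the overlap of adjacent structures to a single atom — together with the connectivity of the vertex structures and the two-body locality of ancilla blockades (an ancilla has no neighbour outside its own structure). Dropping any of these ingredients would in principle allow $\phi$ to genuinely reshuffle atoms across the overlap region of several glued structures. A minor but necessary caveat is the meaning of ``local'' on a finite periodic lattice: the support must be small enough that $R\neq\Lambda$ so that a boundary exists and the peeling terminates, which is harmless since topological order is a bulk property best analysed in the thermodynamic limit.
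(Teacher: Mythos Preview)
Your high-level plan --- peel inward from the fixed boundary, establishing port-fixedness layer by layer --- matches the paper's architecture. But the mechanism you propose for the peeling step (path-pushing inside a single vertex structure $\C_u$) does not work as stated, and the paper needs substantially different machinery at exactly this point. Concretely, your ``no leaking'' case analysis is broken. In the case $d(u')\geq m$: if the image path leaks through the port $p_{(u,u')}$, this port may well lie in $\Sigma$, but its depth is $\min(d(u),d(u'))=m$, which does \emph{not} contradict minimality of $m$. In the case $d(u')<m$: you invoke ``$\C_{u'}$ is already mapped onto itself'', but your minimal-counterexample is over ports, not over invariance of whole vertex structures, so that hypothesis is not available. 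And even granting $\phi(\C_u)=\C_u$, a single fixed anchor port does not prevent $\phi$ from nontrivially permuting the remaining ports of $\C_u$; the leap ``minimality forces $\phi(p_e)=p_e$'' is unjustified.

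What the paper actually does at this step is quite different. First, a \emph{graph-metric} argument: two already-fixed boundary link-vertices $f_1,f_2$ on adjacent site-structures single out the inner link-vertex $a$ between them as the \emph{unique} element of $B_{d_1}(f_1)\cap B_{d_2}(f_2)$ (this requires a nontrivial shortest-path lemma on the tessellation, showing that any path from $f_i$ to the ``wrong side'' of the line through $a$ is strictly longer than $d_i$); invariance of the metric then gives $\phi(a)=a$ directly. Repeating this along the boundary layer fixes enough link-vertices that each outermost site-structure has \emph{three} of its four ports fixed. Second, a \emph{cut-vertex} argument: after removing the three fixed link-vertices, the fourth becomes a cut-vertex separating the site-structure $U$ from its complement; since automorphisms preserve cut-vertices and component cardinalities, both the fourth port and $\phi(U)=U$ follow. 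Only then is $\phi$ factored into per-site pieces and the boundary advanced by one layer. Neither ingredient is supplied by your path-pushing idea, and the second one is precisely what is needed to upgrade ``one anchor fixed'' to ``all ports of $\C_u$ fixed and $\phi(\C_u)=\C_u$''.
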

The proof is rather technical; the gist, however, is quite simple: The global
topology of the blockade graph is severely restricted by the fact that
vertex structures interact only via a single atom (or single link) on the edges.
One can then show that automorphisms must leave ports invariant if they act
trivially on ports in the vicinity. In particular, the assumption that an
automorphism acts trivially on ports that surround a finite region implies that
it acts trivially everywhere. Note that \cref{prop:2} does not imply that
tessellated structures with single ports on edges \emph{cannot} be
fully-symmetric; however, it shows that \emph{local} automorphisms cannot be
used to establish full symmetry.

In summary, we identified two problems to solve: The $\ZZ$-loop constraint
(\textbf{P1}, $A_s=+1$) can be satisfied on the classical level using the
previously developed, versatile toolbox for blockade structures. However, the
loop condensation (\textbf{P2}, $B_p=+1$) due to uniform quantum fluctuations
must be enforced by -- preferably local -- blockade graph automorphisms that
render the structure fully-symmetric. Unfortunately, the most straightforward
architecture that solves \textbf{P1} necessarily fails to solve \textbf{P2}.
This insight inspires the construction below and leads to the main result of
the paper.

\subsection{Construction of a fully-symmetric $\ZZ$-loop blockade structure}
\label{sec:const}

In the toric code (\cref{subsec:toric}), it is the plaquette operators $B_p$
that act transitively on the space of loop configurations $L_\sub{Loop}$ and
enforce the equal-weight superposition~\eqref{eq:tc_gs}. We therefore should
aim for a blockade structure with local ``plaquette automorphisms'' that do the
same.  However, \cref{prop:2} forbids exactly that if we construct the
tessellation from vertex structures connected by single links. This makes
sense, since automorphisms can only \emph{permute} atoms; but, if all atoms on
the edges surrounding a face are in state $\ket{0}_e$, there is no permutation
that can excite them to $\ket{1}_e$, i.e., create a fundamental loop. Hence we
must encode the fundamental two-dimensional degrees of freedom on edges not in
single atoms (as above) but in \emph{pairs} of atoms that are in blockade:
\begin{center}
    \includegraphics[width=0.55\linewidth]{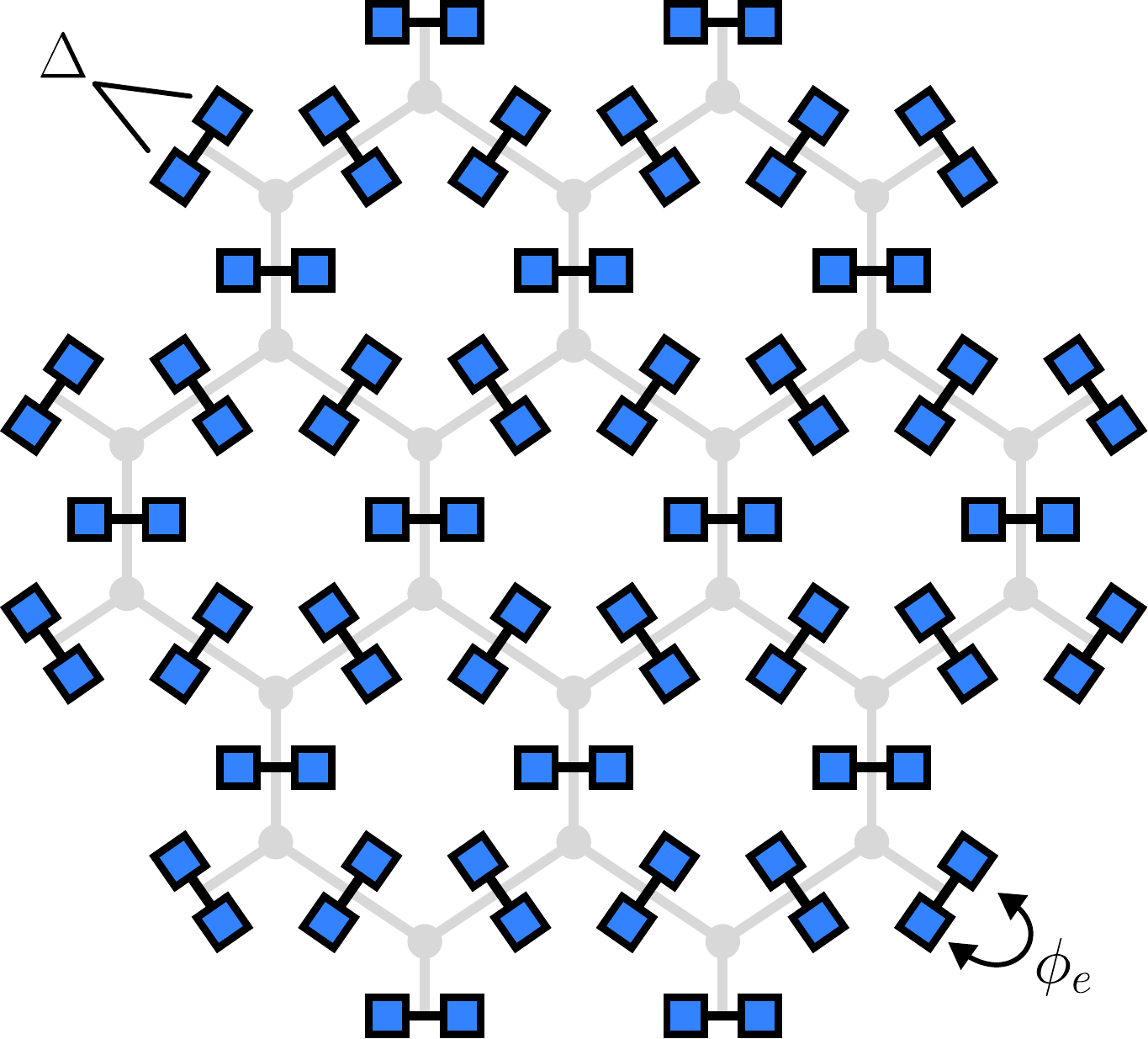}
\end{center}
If we detune all atoms by the same amount $\Delta > 0$ (blue filled boxes),
this produces again a degenerate ground state manifold of dimension $2^N$
(i.e., one qubit per edge), only that now these qubits are encoded in the two
states $\ket{01}_e$ and $\ket{10}_e$ of the two atoms in blockade. Notably, the
structure has a huge automorphism group; among these are \emph{local}
``edge-automorphisms'' $\phi_e$ that permute the two atoms on edge $e$. Under
these automorphisms, the set of all ground state patterns $L_N:=\{0,1\}^N$
transforms as a single orbit. Hence, this structure is fully-symmetric and its
ground state for $\Omega\neq 0$ is the equal-weight superposition of all $2^N$
bit patterns in $L_N$, i.e.,
$\ket{\Omega}\propto\bigotimes_e\ket{+}_e+(\Omega/\Delta E)\sum\ldots$ with
$\ket{+}_e=\tfrac{1}{\sqrt{2}}(\ket{01}_e+\ket{10}_e)$ and where $\sum\ldots$
includes configurations with completely unexcited edges $\ket{00}_e$.

This is not yet what we want, but now that we have enough local automorphisms,
we can try to add suitably chosen vertex structures that \ldots
\begin{enumerate}
  \item[(1)] enforce the $\ZZ$-loop constraint, and 
  \item[(2)] preserve a local subgroup of ``edge-automorphisms'' (namely: plaquette automorphisms). 
\end{enumerate}

To this end, it is convenient to separate the $2N$ atoms into two sublattices,
with one atom of each per edge (which atom belongs to which sublattice is
conventional). In the following, we label the two sublattices by red and blue
bounded boxes and blockades. Because of the blockade between the atoms on each
edge, it is sufficient to know the ground state pattern of, say, the red
sublattice to identify the pattern uniquely. Henceforth we say that an edge is
occupied by a string if the red atom is excited:
\begin{align}
    \text{String}\; \leftrightarrow\;\ket{\CR{1}\CB{0}}_e
    \quad\text{and}\quad
    \text{Empty}\; \leftrightarrow\;\ket{\CR{0}\CB{1}}_e\,.
    \label{eq:convention}
\end{align}
In the following, we omit the fill color that indicates the detuning of atoms
whenever it is not relevant for the discussion. An exemplary state in the
ground state manifold looks like this [orange (white) atoms are (un)excited]:
\begin{center}
    \includegraphics[width=0.55\linewidth]{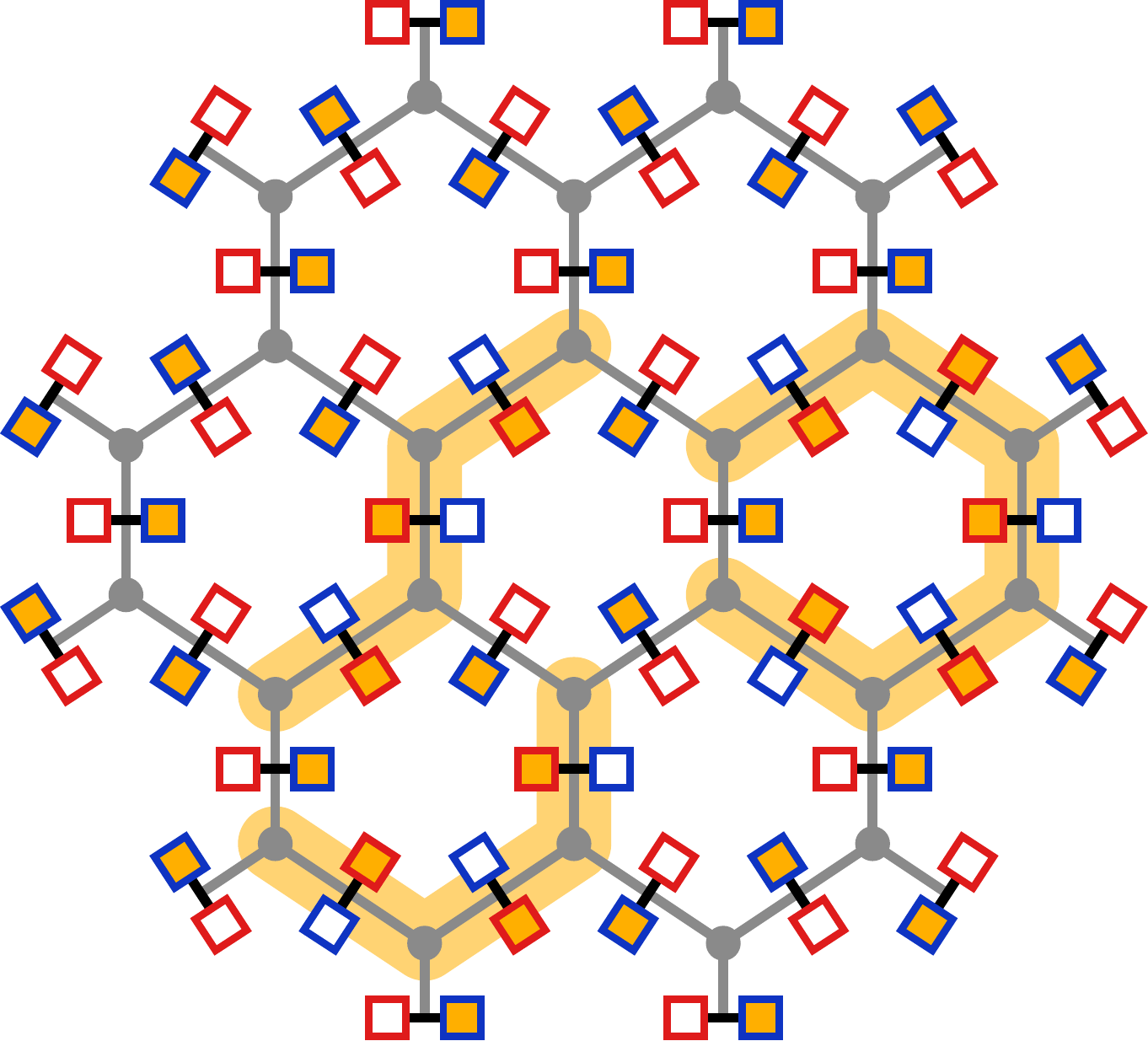}
\end{center}
Note that, so far, open strings are still part of the ground state manifold.  We
now turn to the vertex structure that lifts these patterns to excited states
and enforces a loop constraint.

Let us focus on a single vertex with adjacent variables $(\CR x,\CB{\bar x},\CR
y, \CB{\bar y},\CR z,\CB{\bar z})$, where $\CR x,\CR y,\CR z$ denote the string
occupations of the three adjacent edges. The $\ZZ$-loop constraint, according
to our convention~\eqref{eq:convention}, is then equivalent to a Boolean
\texttt{XOR}-constraint on the red sublattice,
\begin{align}
    \CR{x}\oplus\CR{y}\stackrel{!}{=}\CR{z}
    \quad\Leftrightarrow\quad\text{All permutations of $\CR{x},\CR{y},\CR{z}$}\,.
\end{align}
The red atoms must therefore be ``glued together'' by a \texttt{XOR}-structure,
just as before. However, consistency then requires that, for the atoms on the blue
sublattice,
\begin{align}
    \CR{x}\oplus\CR{y}\stackrel{!}{=}\CR{z}
    \quad\Leftrightarrow\quad
    \overline{\CR{x}\oplus\CR{y}}\stackrel{!}{=}\bar{\CR{z}}
    \quad\Leftrightarrow\quad
    \CB{\bar x}\odot\CB{\bar y}\stackrel{!}{=}\CB{\bar z},
    \label{eq:logic_const}
\end{align}
where ``$\odot$" denotes a \texttt{XNOR}-gate (exclusive-\texttt{NOR}). To enforce
the loop constraint, we therefore need a vertex structure with three inverted
\emph{pairs} of ports that realizes both a \texttt{XOR}- and a
\texttt{XNOR}-gate in a unified way. This is achieved by the
\texttt{FSU}-structure introduced in~\cref{sec:fsu}; hence we can finalize our
construction by placing a copy of $\C_\texttt{FSU}$ on every vertex and
amalgamating their port pairs on the edge atoms:
\begin{center}
    \includegraphics[width=1.0\linewidth]{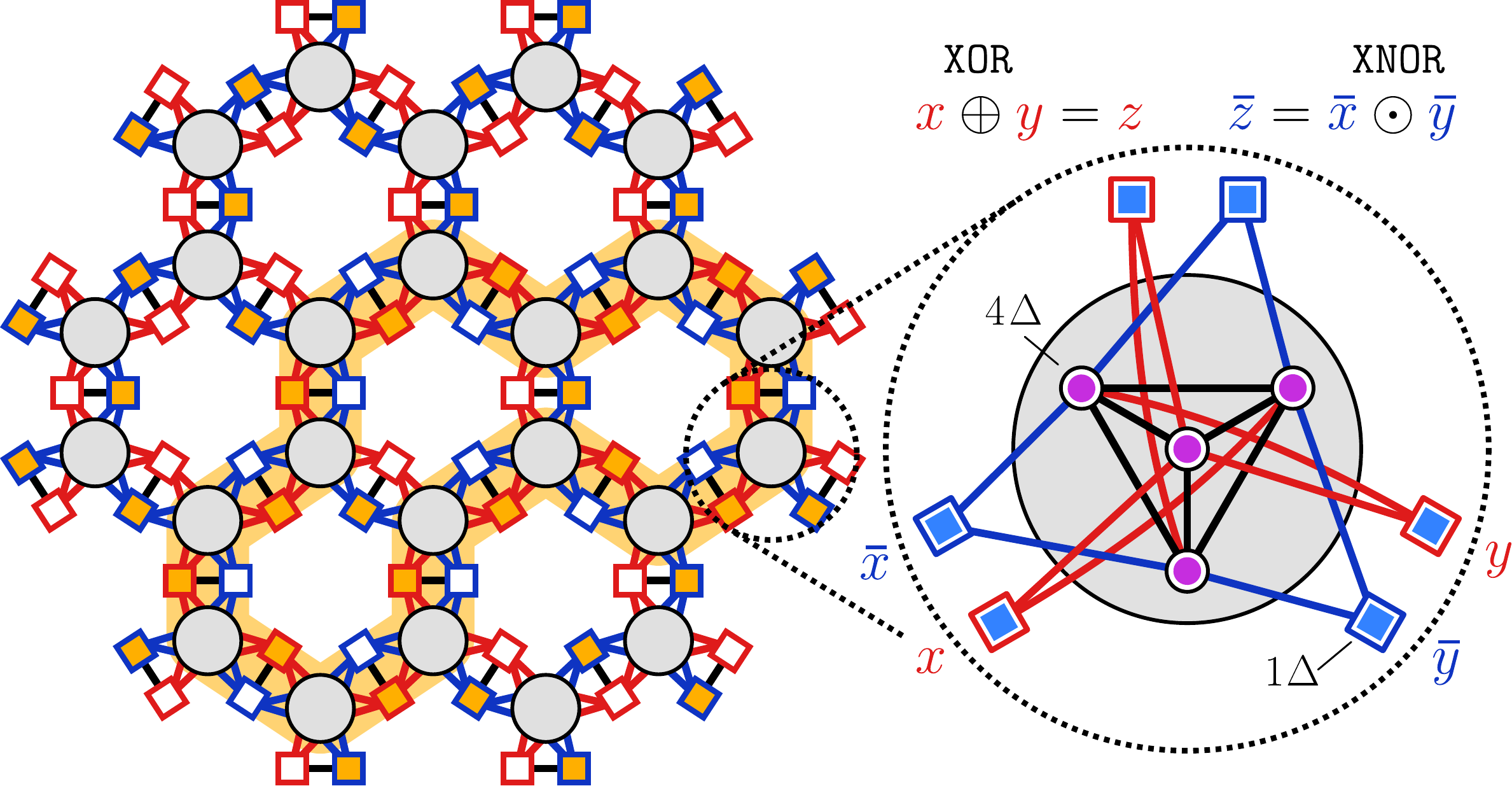}
\end{center}
[Recall \cref{fig:fsu}~(a) and \cref{eq:fsu_feature} and identify red (blue)
blockades with the upper (lower) ``wings''.]

Note that one needs mirror images of the \texttt{FSU}-structure for one of the
two vertices in each unit cell of the honeycomb lattice to ensure that only
ports of the same sublattice (= color) are amalgamated. 
Note also that the \texttt{FSU}-structure automatically enforces the constraint 
that states of
atoms of port pairs are inverted. Thus, the blockades between pairs of edge
atoms (black edges) are actually no longer needed and we can delete them from
the tessellation without affecting the loop constraint.
For now, we ignore whether this construction can be realized as unit disk or
unit ball embedding in two- or three-dimensional space, and simply project
the necessary blockades into the plane (the sketch above is not a unit disk
embedding!); we return to the question of embeddability in
\cref{sec:embedding}.

By construction, this tessellation only supports closed loop configurations in
its ground state manifold; an extended patch, including all ancillas,
blockades, and detunings, is depicted in~\cref{fig:fsu_tessellation}. We refer
to this tessellated blockade structure as $\C_\sub{Loop}$ in the following. 

\begin{figure*}[t]
    \centering
    \includegraphics[width=1.0\linewidth]{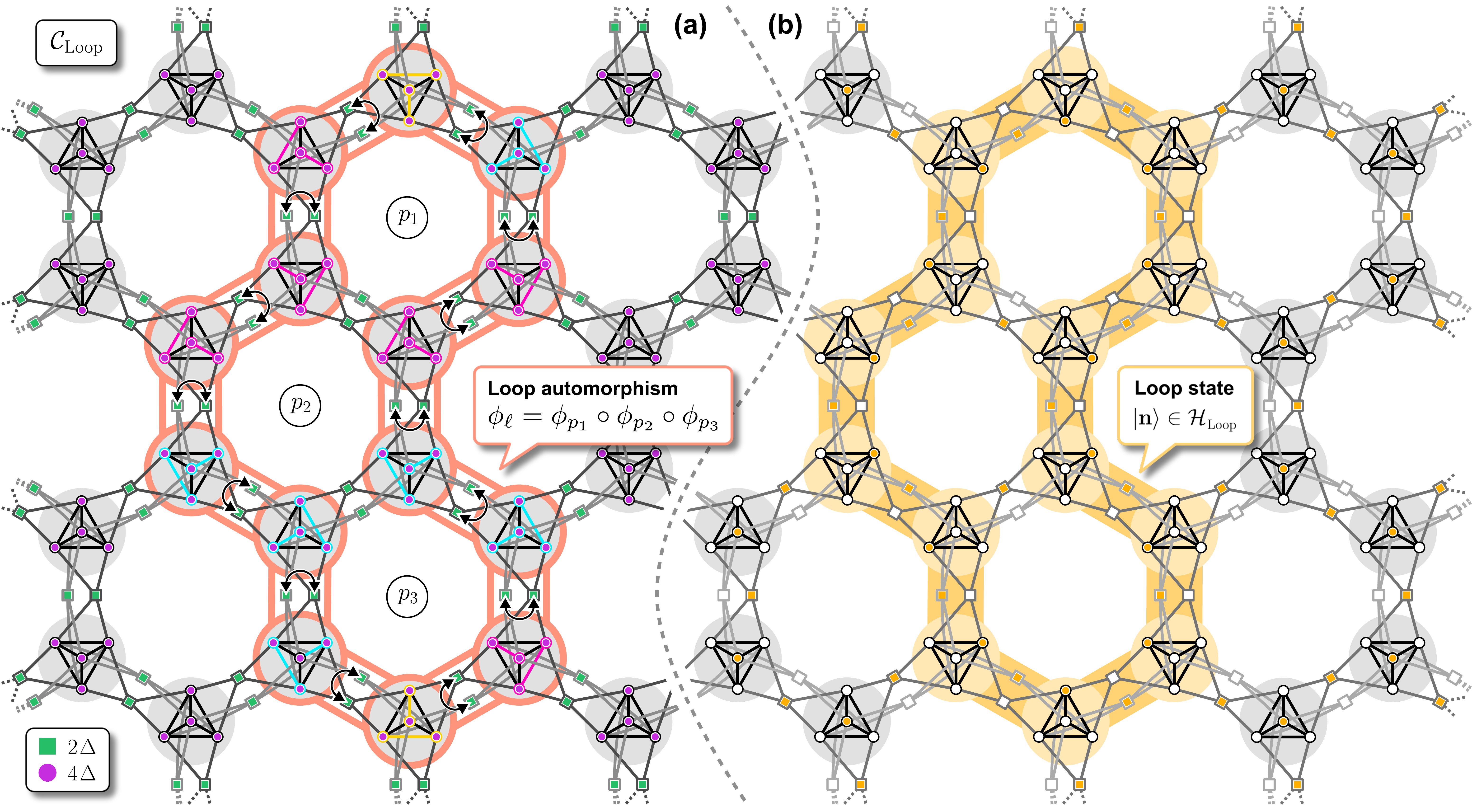}
    \caption{%
    \emph{Tessellated fully-symmetric loop structure $\C_\sub{Loop}$.} 
    Result of the construction discussed in \cref{sec:const}. The tessellation
    $\C_\sub{Loop}$ of \texttt{FSU}-vertices is shown with light gray (formerly
    red), dark gray (formerly blue), and black blockades. In (a) the fill color
    of atoms denotes the \emph{detuning} (green: $2\Delta$, pink: $4\Delta$),
    in (b) the \emph{state} of the atom (white: unexcited, orange: excited).
    (a)~Exemplary loop automorphism $\phi_\ell$ constructed as product of three
    adjacent plaquette automorphisms $\phi_{p_1},\phi_{p_2},\phi_{p_3}$. The
    automorphism decomposes into a chain of vertex- and edge permutations,
    cf.~\cref{fig:fsu_symmetry}. Permutations $\varphi_e$ of edge atoms
    (squares = ports) are denoted by arrows, permutations $\varphi_s^\alpha$ of
    atoms on vertices (circles = ancillas) by colored blockades of the
    tetrahedra. Note that only atoms with equal detunings are mapped onto each
    other. It is straightforward to check that the application of $\phi_\ell$
    leaves the complete blockade graph invariant; it therefore corresponds to a
    loop symmetry $U_\ell$ of the corresponding blockade Hamiltonian
    $H_\sub{Loop}\equiv H_{\C_\sub{Loop}}$. 
    (b)~Exemplary excitation pattern $\vec n\in L_\sub{Loop}=L_{\C_\sub{Loop}}$
    that corresponds to a ground state $\ket{\vec
    n}\in\mathcal{H}_\sub{Loop}\equiv\mathcal{H}_{\C_\sub{Loop}}$ for
    $\Omega=0$. Along the yellow loop, the light gray edge atoms are excited
    (which satisfy \texttt{XOR}-constraints on vertices), whereas everywhere
    else the dark gray edge atoms are excited (which satisfy
    \texttt{XNOR}-constraints on vertices). Note that each of the four possible
    vertex states correlates with one of the four vertex ancillas being
    excited. It is straightforward to see that the permutations of $\phi_\ell$
    in (a) would act on the pattern in (b) by removing the loop, thereby
    producing another valid ground state in $L_\sub{Loop}$. This property makes
    $\C_\sub{Loop}$ fully-symmetric.
    }
    \label{fig:fsu_tessellation}
\end{figure*}

\subsection{Automorphisms of the tessellated blockade structure}
\label{sec:localz2}

\begin{figure}[tb]
    \centering
    \includegraphics[width=1.0\linewidth]{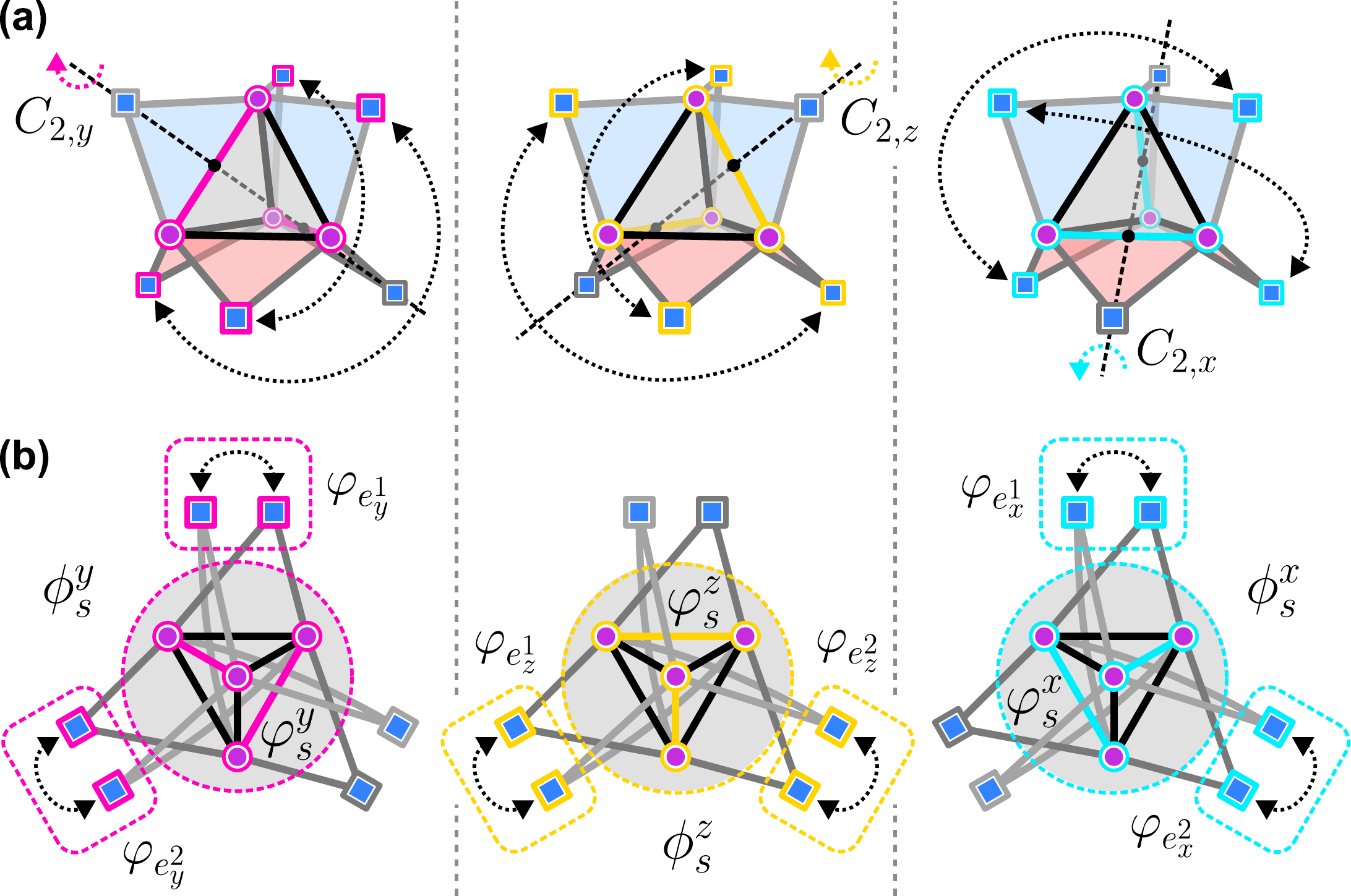}
    \caption{%
    \emph{Automorphisms of an independent \texttt{FSU}-vertex.} 
    (a)~The automorphisms $\phi^\alpha_s$ ($\alpha=x,y,z$) of an independent
    \texttt{FSU}-structure on vertex $s$ can be visualized as $\pi$-rotations
    $C_{2,\alpha}$ of the central tetrahedron about the symmetry axis through
    opposite edges [recall \cref{fig:fsu}~(a)]. These symmetries form the Klein
    four-group $V$ in which $(C_{2,\alpha})^2=1$ and $C_{2,x}\cdot C_{2,y}\cdot
    C_{2,z}=1$.
    (b)~In a flattened (not unit disk) representation, each automorphism
    corresponds to a permutation of ports (squares), indicated by dotted
    arrows, and of two pairs of ancillas (circles) along the colored blockades
    of the central tetrahedron. Note that only atoms with the same detunings
    (fill color) are mapped onto each other, and that one pair of ports always
    remains invariant (gray ports). Here the blockades to ports are colored
    light and dark gray instead of red and blue (as in the text). It is
    convenient to decompose the automorphisms $\phi^\alpha_s$ into edge and
    vertex permutations $\varphi_e$ and $\varphi^\alpha_s$, respectively (see
    text).
    }
    \label{fig:fsu_symmetry}
\end{figure}

We continue with the automorphisms of $\C_\sub{Loop}$ to establish its full
symmetry. As a preliminary step, we focus on an arbitrary
\texttt{FSU}-structure on vertex $s$, including its three adjacent edges (but
ignoring its three neighboring vertex structures for now). As discussed in
\cref{sec:fsu}, this structure has a large automorphism group, and we are
interested in the subgroup that -- in a three-dimensional embedding -- can be
identified with the three rotations about the symmetry axes through opposing
edges of the central tetrahedron [recall \cref{fig:fsu}~(a) and (b)].  We show
these automorphisms and their permutations of vertex and edge atoms explicitly
in \cref{fig:fsu_symmetry}. There we also introduce the decomposition of the
three automorphisms
\begin{align}
    \label{eq:fsu_aut}
    \phi^\alpha_s\equiv\varphi_{e_\alpha^1}\circ\varphi^\alpha_s\circ\varphi_{e_\alpha^2}\,,
    \quad\alpha\in\{x,y,z\}\,,
\end{align}
into permutations $\varphi_e$ that act only on atoms on edge $e$ (ports), and
permutations $\varphi^\alpha_s$ that act only on the tetrahedral atoms
(ancillas) on vertex $s$. Here $e_\alpha^1$ and $e_\alpha^2$ denote the two
edges that are singled out by the choice $\alpha=x,y,z$ (see
\cref{fig:fsu_symmetry}).
Henceforth we denote general \emph{permutations} of atoms by $\varphi$ to
distinguish them from \emph{automorphisms} $\phi$. For example, none of the
three factors in \cref{eq:fsu_aut} is an automorphism of a single
\texttt{FSU}-vertex, only their product is. 

Before we continue with the full tessellation $\C_\sub{Loop}$, let us point out
that the three automorphisms~\eqref{eq:fsu_aut} are reflected in the structure
of the truth table of the loop constraints~\eqref{eq:logic_const}. Indeed, from
Boolean algebra, we know the identities
{\tikzcdset{every label/.append style = {font = \small}}%
    \begin{center}
        \begin{tikzcd}[column sep=1.5cm, row sep=1.2cm, inner sep=20pt,
            every cell/.style={inner sep=8pt}]
            \CR{x}\oplus\CR{y}=\CR{z} 
            \arrow[r, "\phi^x" description, Leftrightarrow] 
            \arrow[d, "\phi^y" description, Leftrightarrow] 
            \arrow[rd, Leftrightarrow] 
        & \CR{x}\oplus\CR{\bar y}=\CR{\bar z} 
        \arrow[d, "\phi^y" description, Leftrightarrow] 
        \arrow[ld, "\phi^z=\phi^x\circ\phi^y" description, Leftrightarrow] 
        \\
        \CR{\bar x}\oplus\CR{y}=\CR{\bar z} 
        \arrow[r, "\phi^x" description, Leftrightarrow] 
        & \CR{\bar x}\oplus\CR{\bar y}=\CR{z}
        \end{tikzcd}
\end{center}}%
\noindent and analogous relations for the \texttt{XNOR}-constraint. This
demonstrates that the four consistent states on the red sublattice
(\texttt{XOR}-constraint) are mapped to each other by the three automorphisms
of the \texttt{FSU}-structure -- which is the hallmark of its full symmetry.

The decomposition~\eqref{eq:fsu_aut} shows that, for each independent
\texttt{FSU}-structure on vertex $s$, there are three nontrivial automorphisms
$\phi^\alpha_s$ ($\alpha=x,y,z$), each of which affects \emph{two} adjacent
edges. Since these edges are \emph{shared} between adjacent vertices in
$\C_\sub{Loop}$, applying a vertex automorphism on one vertex necessitates
another automorphism on two of its neighbors. Therefore the subgroup
$\A_\sub{Loop}\subset\A_{\C_\sub{Loop}}$ of automorphisms of the complete
tessellation that is generated by such transformations also obeys a $\Z_2$-loop
constraint (in addition to the loop constraint on \emph{excitation patterns} in
the ground state!). That is, for every closed loop
\begin{align}
    \ell\equiv s_1\xrightarrow{e_1}s_2\xrightarrow{e_2}s_3\ldots\,,
\end{align}
given as a sequence of visited vertices $s_1,s_2,\ldots$ and traversed edges
$e_1,e_2,\ldots$, there is a corresponding automorphism
$\phi_\ell\in\A_\sub{Loop}$ of the form
\begin{align}
    \label{eq:loopaut}
    \phi_\ell
    &=\varphi^{\alpha_1}_{s_1}\circ\varphi_{e_1}\circ\varphi^{\alpha_2}_{s_2}\circ\varphi_{e_2}\circ \ldots
\end{align}
where $\alpha_i\in\{x,y,z\}$ denotes one of the three possibilities how the
loop visits vertex $s_i$ (recall \cref{fig:fsu_symmetry}). 

Note that all edge permutations commute trivially:
$\varphi_e\circ\varphi_{\tilde e}=\varphi_{\tilde e}\circ\varphi_{e}$; that
also all vertex permutations commute $\varphi_s^\alpha\circ\varphi_{\tilde
s}^\beta=\varphi_{\tilde s}^\beta\circ\varphi_{s}^\alpha$ (even on the same
vertex $s=\tilde s$) follows because the Klein four-group is abelian
(\cref{fig:fsu_symmetry}). Furthermore, both permutations $\varphi_e$ and
$\varphi_s^\alpha$ decay into 2-cycles (= transpositions of disjoint pairs of
atoms) so that $(\varphi_e)^2=\id$ and $(\varphi_s^\alpha)^2=\id$ and thereby
$(\phi_\ell)^2=\id$. In conclusion, $\A_\sub{Loop}$ is an abelian subgroup of
the full automorphism group $\A_{\C_\sub{Loop}}$, and can be interpreted as a
$\Z_2$ vector space with the symmetric difference of loops (= composition of
automorphisms) as addition (also known as \emph{cycle space} in graph
theory~\cite{Diestel2025}).

Pictorially, $\phi_\ell$ swaps all pairs of inverted ports along $\ell$ via
$\varphi_e$, and compensates for these swaps by corresponding vertex
permutations $\varphi^\alpha_s$. The blockade structure of the
\texttt{FSU}-vertex ensures that the result is an isomorphic blockade graph.
An example of such a loop automorphism is depicted in
\cref{fig:fsu_tessellation}~(a). It is important to appreciate that only the
\emph{combined} permutations of all tetrahedral vertex atoms and the inverted
pairs on edges along the closed loop makes this transformation an automorphism
of the blockade graph. In particular, the construction fails if the path of
edges does not close!

The smallest of these loop automorphisms are the \emph{plaquette automorphisms}
on a hexagon $p\equiv(abcde\!f)$, which can be decomposed into a product of
six vertex and six edge permutations:
\begin{align}
    \phi_p &=\includegraphics[width=4.3cm,valign=c]{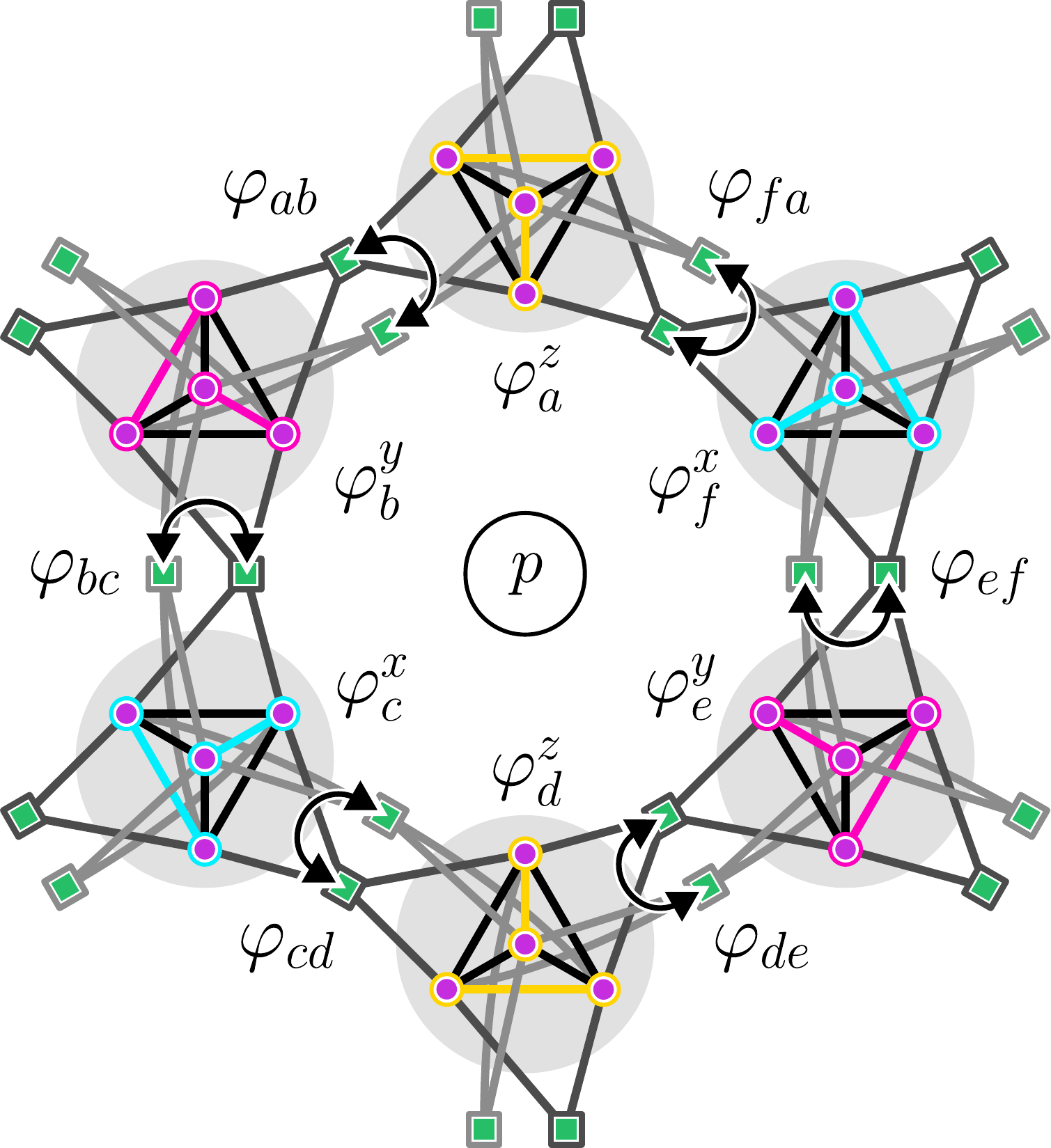}
\end{align}

Multiplication of several such plaquette automorphisms yields loop
automorphisms $\phi_\ell=\prod_{p\in P}\phi_p$, where $P$ denotes a set of
plaquettes that produce the loop $\ell=\del P$ as their boundary $\del P$.
Loops $\ell$ that can be constructed in this way are called \emph{homologically
trivial}. For example, the loop automorphism in \cref{fig:fsu_tessellation}~(a)
can be constructed as product of three elementary plaquette automorphisms.

Note that it is not necessary that $\ell$ is homologically trivial for
$\phi_\ell$ to be an automorphism. For example, with periodic boundaries,
single non-contractible loops around the torus \emph{cannot} be constructed as
boundaries of plaquettes, but still give rise to loop automorphisms that
satisfy the $\Z_2$-loop constraint on every vertex.

The group $\A_\sub{Loop}$ of loop automorphisms is therefore generated by
plaquette automorphisms and homologically nontrivial loop automorphisms (if any
exist), formally:
\begin{align}
    \A_\sub{Loop}\simeq
    \langle\{\phi_p\,|\,\text{Plaquettes $p$}\}\rangle
    \oplus H_1(\Sigma)\,.
\end{align}
Here, $H_1(\Sigma)$ denotes the first homology group (over $\Z_2$) of the
2-manifold $\Sigma$ on which the honeycomb lattice is embedded. For example,
with periodic boundaries, it is $\Sigma=\mathbb{T}$ a torus, and
$H_1(\mathbb{T})\simeq\Z_2\times
\Z_2\simeq\{1,\ell_x,\ell_y,\ell_x\oplus\ell_y\}$ where $\ell_x$ and $\ell_y$
denote two non-contractible loops around the torus. On lattices with
boundaries, $H_1(\Sigma)$ must be replaced by relative homology
groups~\cite{Bravyi1998a}. Here we are not interested in such details and
consider a planar patch of the tessellation with the same type of boundary
everywhere; it is then $H_1(\Sigma)=\{1\}$ as all loops are generated by
elementary plaquettes.

With these insights -- and our convention~\eqref{eq:convention} to identify
strings -- it is now evident that $\varphi_e$ acts on ground state patterns
like $\sigma^x_e$ acts on spins in the toric code: it creates or annihilates a
string on edge $e$. Loop automorphisms therefore create or annihilate extended
loops; in particular, plaquette automorphisms $\phi_p$ act on ground state
patterns by modulo-2 addition of elementary loops; see
\cref{fig:fsu_tessellation}~(b) for an example.
As a consequence, every loop pattern $\vec n\in L_\sub{Loop}\equiv
L_{\C_\sub{Loop}}$ can be constructed from the ``no-loop-configuration'' $\vec
0$ by application of some loop automorphism $\phi\in \A_\sub{Loop}$, i.e.,
$L_\sub{Loop}=\A_\sub{Loop}\cdot\vec 0$; this makes $\C_\sub{Loop}$
fully-symmetric, as intended.

As discussed in \cref{sec:sym}, every automorphism translates to a unitary
symmetry of the blockade Hamiltonian $H_\sub{Loop}\equiv H_{\C_\sub{Loop}}$. We
denote the unitary representation of plaquette automorphisms $\phi_p$ as $U_p$.
Since $\A_\sub{Loop}$ is abelian, these operators form a local, abelian
subgroup of the full symmetry group of the Hamiltonian $H_\sub{Loop}$,
\begin{align}
    \com{U_p}{U_q}=0
    \quad\text{and}\quad
    \com{H_\sub{Loop}}{U_p}=0\,,
    \label{eq:UH}
\end{align}
for all plaquettes $p$ and $q$. Note that this remains true for $\Omega\neq 0$
due to the uniformity of the quantum fluctuations. Since $\phi_p^2=\id$, it
immediately follows that $U_p^2=\id$ and therefore $U_p^\dag=U_p$. It is appropriate
to think of these operators as the analogues of the plaquette operators $B_p$ of
the toric code (at least on the ground state manifold). \cref{eq:UH} will be
crucial for our study of the many-body ground state in \cref{subsec:gs} below.

\subsection{Spatial embedding}
\label{sec:embedding}

\begin{figure*}[tb]
    \centering
    \includegraphics[width=0.9\linewidth]{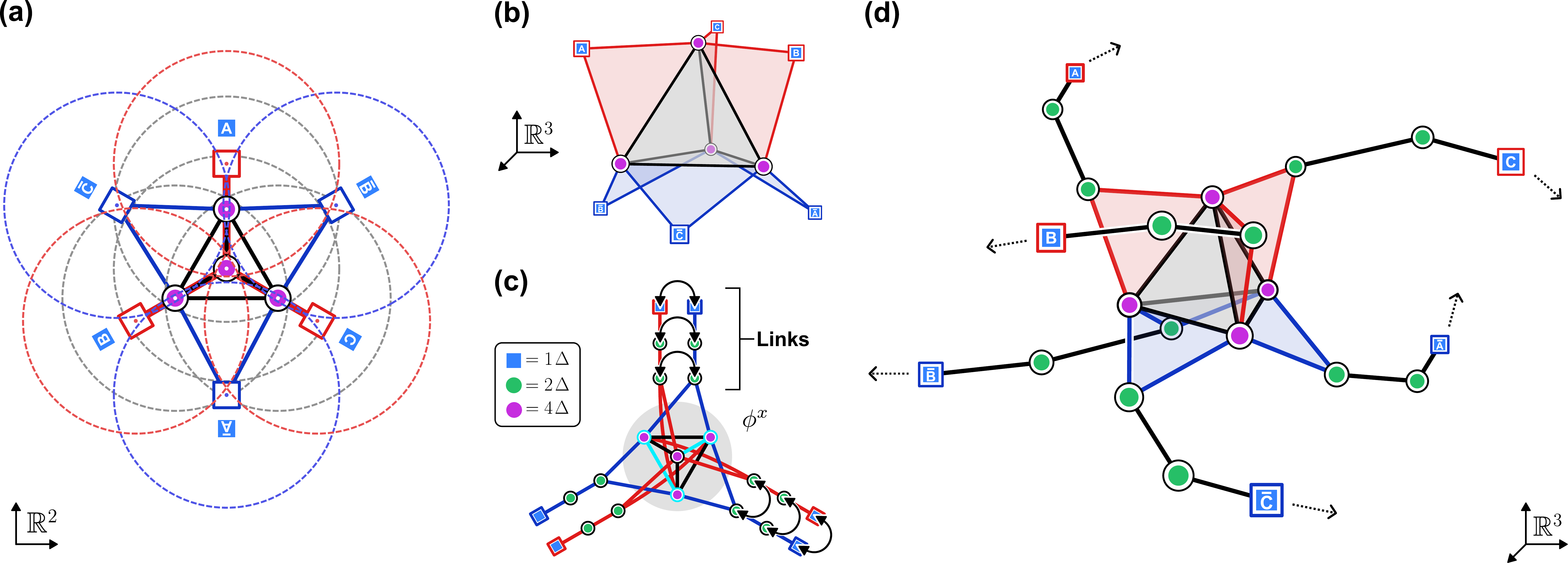}
    \caption{%
    \emph{Embedding of $\C_\sub{Loop}$.} 
    (a)~The \texttt{FSU}-vertex alone is a (tight) unit disk graph in
    $\mathbb{R}^2$; the downside is that the inverted pairs of ports are
    geometrically opposite to each other.
    (b)~A unit ball embedding in $\mathbb{R}^3$ is more natural and more
    robust; it suffers from the same problem, though. (We omit the balls for
    the sake of clarity, the exact coordinates of atoms are listed in
    \cref{app:embedding}.)
    (c)~One can extend the inverted ports of the \texttt{FSU}-vertex by links
    of \emph{equal length} without breaking its blockade graph automorphisms,
    as illustrated exemplarily by the (extended) vertex automorphism $\phi^x$
    (cf.~\cref{fig:fsu_symmetry}).
    (d)~This is useful to ``bend'' the inverted pairs of ports to one side of
    the structure, as needed for the tessellated structure $\C_\sub{Loop}$.
    Indeed, with this trick, one can embed an extended, functionally
    equivalent, tessellated structure $\C_\sub{Loop}^\sub{ext}$ as a unit ball
    graph in $\mathbb{R}^3$. Note that this embedding is unoptimized (in its
    number of atoms and robustness of blockades) and therefore quite
    impractical.
    }
    \label{fig:embedding}
\end{figure*}

Before we turn to the ground state properties of $H_\sub{Loop}$, we briefly
discuss potential embeddings of the loop structure $\C_\sub{Loop}$. We stress
that an optimized embedding that would bring the structure closer to
realization is not (yet) our priority. Here we simply demonstrate that (a
slight modification of) $\C_\sub{Loop}$ can \emph{in principle} be realized as
a quasi-two-dimensional blockade structure of atoms, and leave further
improvements to future studies (like reducing the number of atoms and/or
increasing the robustness of the blockades).

First, note that a single \texttt{FSU}-vertex can -- perhaps surprisingly -- be
embedded as a \emph{unit disk} graph in two dimensions, see
\cref{fig:embedding}~(a). This embedding, however, is rather tight and not very
useful from a practical point of view (where geometrically robust structures
are preferable). Another, more problematic downside is that the inverted pairs
of port atoms that belong to \emph{one} edge of $\C_\sub{Loop}$ are located
\emph{opposite} of each other. It is therefore impossible to tessellate the
plane with this embedding (cf.~\cref{fig:fsu_tessellation}). A \emph{unit ball}
embedding in three dimensions is clearly more natural (and more robust), see
\cref{fig:embedding}~(b). The problem of opposite inverted ports remains,
though.

A crucial insight to solve this problem is illustrated in
\cref{fig:embedding}~(c): One can ``extend'' the ports of the
\texttt{FSU}-vertex by attaching arbitrarily long \emph{link structures} (see
also Ref.~\cite{Stastny2023a}) -- as long as the length of links that carry
inverted signals is equal. It is easy to see that this modification preserves
the relevant automorphisms of the structure, so that an extended tessellation
$\C_\sub{Loop}^\sub{ext}$ with more atoms (but not more degrees of freedom) per
edge can be constructed. While the extension of a single vertex also works with
strictly two-dimensional unit disk embeddings, it is unclear whether many
vertices can be tessellated and amalgamated without breaking the local
automorphisms. The problem that arises in such a planar setup is that the
opposing, inverted links must somehow end up on the same edge of the lattice.
This can only be achieved by crossing structures [recall
\cref{fig:motivation}~(b)]. However, these crossings ``entangle'' the blockade
graphs of links from \emph{different} edges, and therefore break the plaquette
automorphisms. We suspect that a purely two-dimensional construction might be
impossible, though we were not able to prove this.

Fortunately, these problems can be circumvented in three dimensions, where
links that belong to different edges can simply pass each other without
blockades, see \cref{fig:embedding}~(d). Thus, it is indeed possible to realize
a suitably extended structure $\C_\sub{Loop}^\sub{ext}$ (which is functionally
equivalent to $\C_\sub{Loop}$) as a unit ball graph in three dimensions. Note
that the extension of this structure in the third dimension is bounded and does
not scale with the lattice size, hence we refer to this embedding as
\emph{quasi-two-dimensional}. It seems reasonable to expect that such an
extended structure $\C_\sub{Loop}^\sub{ext}$ is not a unit disk graph (and
therefore cannot be realized in strictly two dimensions); however, we were not
able to prove this rigorously. (Note that deciding whether a given graph is a
unit disk graph is \NP-hard~\cite{Breu1998,Kuhn2004,McDiarmid2013}. That is, an
efficient proof must exploit the particular structure of the graph.)

In summary, the blockade structure in \cref{fig:fsu_tessellation} (or a slight
generalization thereof) \emph{can} be realized as a blockade structure in three
dimensions. However, finding the simplest, experimentally most accessible
realization remains an open question. In the remainder of the paper, we focus
again on the unmodified $\C_\sub{Loop}$ and ignore its embedding (all results
are also valid for embeddable extensions $\C_\sub{Loop}^\sub{ext}$).

\subsection{Ground state properties}
\label{subsec:gs}

In this last section, we study the ground state properties of the blockade
Hamiltonian $H_\sub{Loop}=H_\sub{Loop}(\Omega)$ for nonzero quantum
fluctuations $\Omega\neq 0$. Here we prove our main result, namely that its
many-body ground state $\ket{\Omega}$ is in the same quantum phase as the loop
condensate $\ket{\Omega_\sub{TC}}$ of the toric code. This result is rigorous
and does not rely on numerical evidence.

\subsubsection{General features}
\label{subsubsec:general}

Let us consider a finite patch of the tessellation $\C_\sub{Loop}$ in
\cref{fig:fsu_tessellation} with or without periodic boundaries. As detailed in
\cref{sec:localz2}, $\C_\sub{Loop}$ is fully-symmetric so that \cref{prop:1}
applies. Namely, it guarantees that the many-body ground state $\ket{\Omega}$
of $H_\sub{Loop}$ for $\Omega\neq 0$ is unique and given by an equal-weight
superposition of loop states from $\H_\sub{Loop}\equiv\H_{\C_\sub{Loop}}$,
dressed by excited states that are suppressed in powers of $\Omega/\Delta E$,
recall \cref{eq:prop_omega}:
\begin{align}
    \label{eq:toric_code_state}
    \ket{\Omega}
    =
    \underbrace{\lambda(\Omega)\mkern-10mu\sum_{\vec n\in L_\sub{Loop}}\mkern-10mu\ket{\vec n}}_{=:\Lambda\ket{\Omega_0}}
    +
    \underbrace{%
        \sum_{d\geq 1}\left(\tfrac{\Omega}{\Delta E}\right)^d\mkern-10mu
        \sum_{\vec n\in L_{\sub{Loop}}^d}\mkern-10mu
        \eta_{\vec{n}}(\Omega)\ket{\vec n}
    }_{=:\sqrt{1-\Lambda^2}\ket{\Omega_\eta}}
\end{align}
with $\BraKet{\Omega_0}{\Omega_\eta}=0$.

Note that the form~\eqref{eq:toric_code_state} is true for arbitrary
$\Omega\neq 0$, and not only for small perturbations. This already indicates
that one \emph{cannot} deduce from \cref{eq:toric_code_state} the presence of
topological order -- despite the suggestive equal-weight superposition of loop
states in $\ket{\Omega_0}$! Furthermore, the \emph{uniqueness} of
$\ket{\Omega}$ is a generic feature of finite-size blockade Hamiltonians like
$H_\sub{Loop}$ for $\Omega\neq 0$; in particular, it is independent of their
boundary conditions. Since we expect $H_\sub{Loop}$ to feature topological
ground state degeneracies for periodic boundaries, we must conclude that this
uniqueness can be due to finite-size gaps that vanish exponentially fast in the
thermodynamic limit. An immediate corollary of \cref{prop:1} is then that no
blockade Hamiltonian of the form~\eqref{eq:H} can realize the renormalization
fixed point~\eqref{eq:tc_h} of the toric code (as the latter has zero
correlation length and therefore perfect topological degeneracy on finite-size
systems).

Next, because of \cref{eq:UH} and $U_p^2=\id$ and $U_p^\dag=U_p$, we can
label the many-body eigenstates of $H_\sub{Loop}=H_\sub{Loop}(\Omega)$,
\begin{align}
    \label{eq:eigenstates_E}
    H_\sub{Loop}\ket{E^{\vec\xi}_\Omega,\vec\xi}&=E^{\vec\xi}_\Omega\ket{E^{\vec\xi}_\Omega,\vec\xi}\,,
\end{align}
by a sign $(-1)^{\xi_p}=\pm 1$ per plaquette:
\begin{align}
    \label{eq:eigenstates_U}
    U_p\ket{E^{\vec\xi}_\Omega,\vec\xi}&=(-1)^{\xi_p}\ket{E^{\vec\xi}_\Omega,\vec\xi}\,.
\end{align}
We refer to $\xi_p\in\{0,1\}$ as the ($\Z_2$-)\emph{flux} through plaquette
$p$, in analogy to the toric code. If we set the ground state energy to
$E^{\vec\xi^0}_\Omega=0$, we have in particular
$\ket{\Omega}=\ket{0,\vec\xi^0}$. The flux sector $\vec{\xi}^0$ follows from
\cref{prop:1}, which states that $U_p \ket{\Omega} = \ket{\Omega}$ for all
plaquette operators $U_p$; together with \cref{eq:eigenstates_U} this implies
that $\vec\xi^0=\vec 0$.

In summary, the ground state $\ket{\Omega}=\ket{0,\vec 0}$ of $H_\sub{Loop}$ is
in the \emph{flux-free sector}, again in analogy to the toric code where
$B_p\ket{\Omega_\sub{TC}}=\ket{\Omega_\sub{TC}}$. Moreover, due to
\cref{eq:eigenstates_U}, the full many-body spectrum of $H_\sub{Loop}$ splits
into \emph{flux sectors} labeled by flux patterns $\vec\xi$. With this
knowledge, we turn now to the question of topological order in $\ket{\Omega}$.

\subsubsection{Topological order}
\label{subsubsec:topo}

It is obvious that $\ket{\Omega_0}$ is in the toric code phase, as it is
locally unitary equivalent to the loop condensate~\eqref{eq:tc_gs}. (Remember
that the additional ancillas on the vertices do not contribute degrees of
freedom to the states in $\H_\sub{Loop}$.) What remains to be shown is that
this order is stable under the addition of $\ket{\Omega_\eta}$ for small but
finite fluctuations $\Omega\neq 0$.

\begin{figure}[tb]
    \centering
    \includegraphics[width=1.0\linewidth]{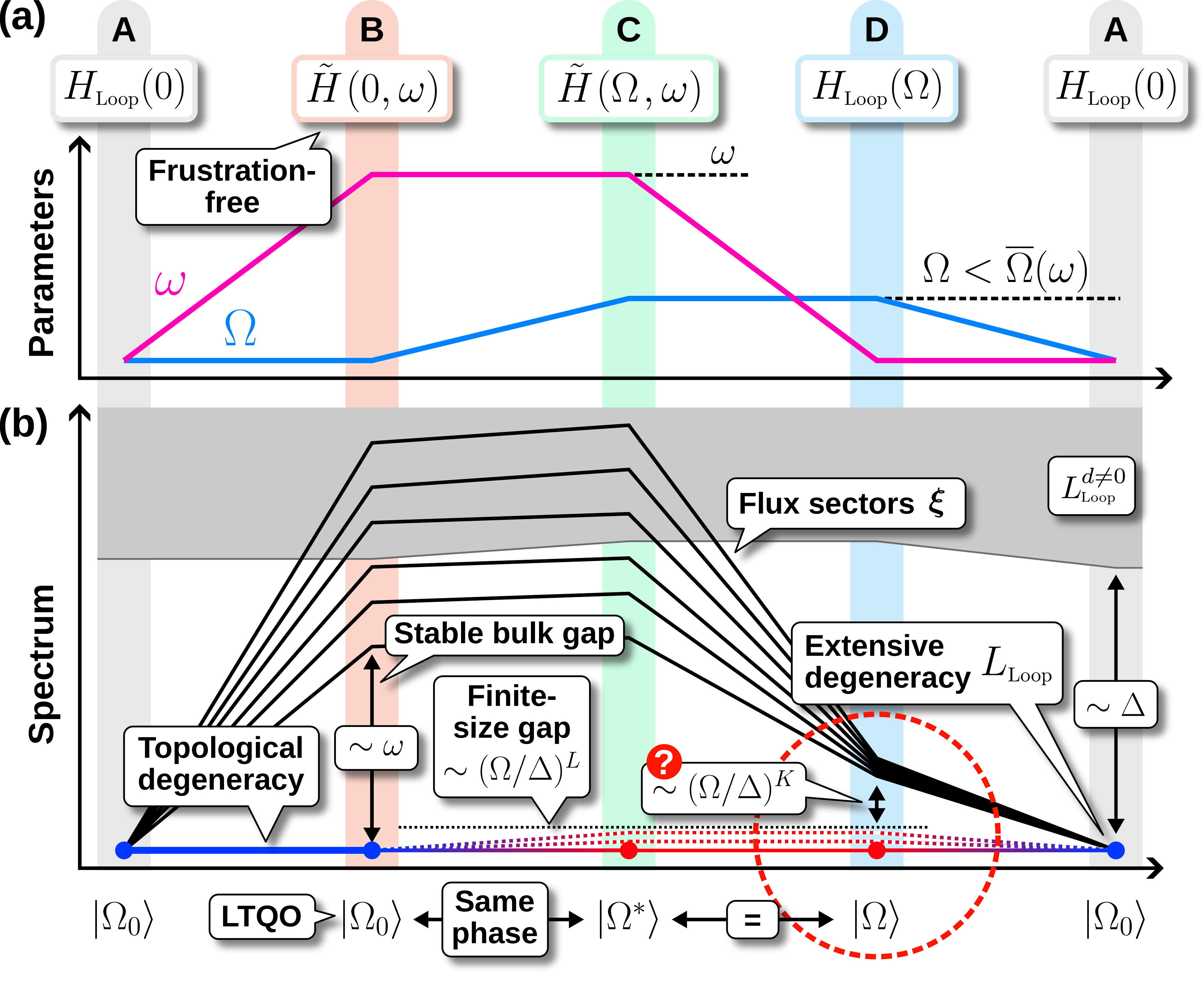}
    \caption{%
    \emph{Spectral properties of $\tilde H(\Omega,\omega)$.} 
    (a)~Family of Hamiltonians $\tilde H(\Omega,\omega)$ parametrized by
    $\Omega$ and $\omega$. Notably, $\tilde H(0,\omega)$ is frustration-free,
    an important ingredient for gap stability. 
    (b)~Schematic spectrum of $\tilde H(\Omega,\omega)$ along this path. For
    $\Omega=0=\omega$ (\textbf{A}), $\tilde H(0,0)=H_\sub{Loop}(0)$ is purely
    classical with exponentially degenerate ground state space $\H_\sub{Loop}$.
    For $\Omega=0<\omega$ (\textbf{B}), $\tilde H(0,\omega)$ is
    frustration-free and has a fixed point ground state $\ket{\Omega_0}$ with
    local topological quantum order (LTQO). Depending on boundary conditions,
    there can be a (non-extensive) topological ground state degeneracy. The
    bulk gap of this Hamiltonian is stable against arbitrary, weak, local
    perturbations, so that for $0<|\Omega|<\ol\Omega(\omega)$ (\textbf{C}),
    $\tilde H(\Omega,\omega)$ is still gapped, with a new ground state
    $\ket{\Omega^*}$ in the same quantum phase. Switching off $\omega\searrow
    0$ (\textbf{D}) leads to $H_\sub{Loop}(\Omega)$ with the ground state
    $\ket{\Omega}$. Due to symmetry, it is $\ket{\Omega^*}=\ket{\Omega}$, which
    allows us to characterize the quantum phase of $\ket{\Omega}$. We are
    ultimately interested in the physics marked by the dashed circle. The
    question mark indicates that the bulk gap at \textbf{D} remains unproven.
    }
    \label{fig:spectrum}
\end{figure}

Since we do not know how to solve $H_\sub{Loop}$ exactly, we must take a detour
to solve this problem. To this end, we introduce the auxiliary Hamiltonian
\begin{align}%
    \tilde H(\Omega,\omega):=H_\sub{Loop}(\Omega)+\frac{\omega}{2}\sum_p (\id-U_p)\,,
    \label{eq:Hext}
\end{align}%
where we artificially add (experimentally unrealistic) plaquette fluctuations
of loops with strength $\omega$. We stress that such fluctuations are already
\emph{perturbatively} present in $H_\sub{Loop}(\Omega)$ alone, although of much
weaker, unknown strength. The new term in \cref{eq:Hext} penalizes each flux
($\xi_p=1$) by an additional energy $\omega$. Our goal is to use $\tilde
H(\Omega,\omega)$ to find a parametric path from a system with an exactly known,
topologically ordered ground state to $H_\sub{Loop}(\Omega)=\tilde H(\Omega,0)$
with ground state
$\ket{\Omega}=\Lambda\ket{\Omega_0}+\sqrt{1-\Lambda^2}\ket{\Omega_\eta}$, such
that the gap remains open along the path (for finite but sufficiently small
$\Omega$); this would establish the topological order of $\ket{\Omega}$ and
demonstrate that $\ket{\Omega_\eta}$ does not modify the pattern of long-range
entanglement~\cite{Chen2010}.

To do so, we proceed along the three steps sketched in \cref{fig:spectrum} (for
mathematical details, see \cref{app:spectrum}):
\begin{itemize}

    \item\textbf{Step 1: A~\textrightarrow~B}
    
    We start with the classical Hamiltonian $H_\sub{Loop}(0)=\tilde H(0,0)$
    (\textbf{A}). By construction, this Hamiltonian has an extensively
    degenerate ground state manifold $\H_\sub{Loop}$ spanned by the loop
    configurations $L_\sub{Loop}$; its gap $\sim\Delta$ is given by the
    detunings.

    We now ramp up the artificial plaquette fluctuations $0<\omega <\Delta$
    (with $\omega\sim\Delta$) while keeping $\Omega =0$. Since $[{U_p},{\tilde
    H(0,\omega)}]=0$, the spectrum of $\tilde H(0,\omega)$ also decomposes into
    flux sectors. In particular, ground state(s) belong to the zero-flux sector
    within $\H_\sub{Loop}$. Their construction is analogous to the toric code
    (\cref{subsec:toric}), and for open, uniform boundaries, one obtains the
    equal-weight loop condensate
    \begin{align}
        \label{eq:toric_code_state_exact} 
        \ket{\Omega_0}\propto\sum_{\vec n\in L_\sub{Loop}}\ket{\vec n}\,.
    \end{align}
    For homologically nontrivial boundaries (like periodic boundaries), there
    is one independent equal-weight superposition for each element of
    $H_1(\Sigma)$ and these are perfectly degenerate. (On the torus, for
    example, there would be four perfectly degenerate ground states.)

    The ground state manifold of $\tilde H(0,\omega)$ is therefore no longer
    extensively degenerate and within the topologically ordered toric code
    phase. The gap of $\tilde H(0,\omega)$ is of order $\omega$ and determines
    the energy of elementary flux excitations (\textbf{B}).

    \item\textbf{Step 2: B~\textrightarrow~C}

    Next, we would like to switch on $\Omega\neq 0$ to reach $\tilde
    H(\Omega,\omega)$. In general, this will modify the spectrum, eigenstates,
    and degeneracy of the ground state manifold. In particular, the ground
    state itself changes,
    \begin{align}
        \text{\textbf{B}}\quad
        \ket{\Omega_0}\xrightarrow[\omega>0]{\Omega=0\;\to\;\Omega\neq 0}\ket{\Omega^\ast}
        \quad\text{\textbf{C}}
    \end{align}
    in an unknown way [as we can no longer diagonalize $\tilde
    H(\Omega,\omega)$]. The crucial insight is that, as long as we can
    guarantee that the gap of $\tilde H(0,\omega)$ does not close when going to
    $\tilde H(\Omega,\omega)$, we can conclude that the new ground state
    $\ket{\Omega^\ast}$ belongs to the same quantum phase as
    $\ket{\Omega_0}$~\cite{Chen2010}, i.e., it is still in the toric code phase.

    Thus, the problem reduces to the stability of the gap of $\tilde
    H(0,\omega)$ under weak local fluctuations $0<|\Omega|\ll\omega$. We stress
    that in general, bulk gaps are \emph{not} stable under arbitrary local
    perturbations -- even if the latter are small compared to the
    gap~\cite{Bravyi2010}.

    Fortunately, it can be shown rigorously~\cite{Michalakis2013} that a gap
    \emph{is} stable under arbitrary local, weak perturbations, given the
    unperturbed Hamiltonian [here: $\tilde H(0,\omega)$] is
    \emph{frustration-free}, and its ground state(s) [here: $\ket{\Omega_0}$]
    are \emph{topologically ordered}. (More precisely, $\tilde{H}(0,\omega)$ 
    must satisfy a
    condition called \emph{local topological quantum order}
    [LTQO]~\cite{Bravyi2010,Bravyi2011,Michalakis2013}.) In \cref{app:spectrum}
    we explicitly show that $\tilde H(0,\omega)$ and $\ket{\Omega_0}$ indeed
    satisfy these conditions. Note that in the perturbed system $\tilde
    H(\Omega,\omega)$, ground state degeneracies can be lifted, but these are
    provably finite-size and vanish exponentially with system
    size~\cite{Michalakis2013}.

    In conclusion, we can switch on weak quantum fluctuations $0<|\Omega| <
    \bar\Omega(\omega)$ without closing the bulk gap. Here,
    $\bar\Omega(\omega)$ denotes some $\omega$-dependent upper bound that
    guarantees the stability of the gap~\cite{Michalakis2013}. This leads us to
    the new Hamiltonian $\tilde H(\Omega,\omega)$ that is provably in the same
    gapped quantum phase as the toric code -- though with a new, no longer
    exactly known ground state $\ket{\Omega^*}$.

    \item\textbf{Step 3: C~\textrightarrow~D}

    In the last step, we would like to switch off the auxiliary fluctuations to
    transition from $\tilde H(\Omega,\omega)$ to $\tilde
    H(\Omega,0)=H_\sub{Loop}(\Omega)$. Generally, we would again expect a
    modification of the ground state:
    \begin{align}
        \text{\textbf{C}}\quad
        \ket{\Omega^\ast}\xrightarrow[\Omega\neq 0]{\omega>0\;\to\;\omega=0}\ket{\Omega}
        \quad\text{\textbf{D}}
    \end{align}
    The problem is that we cannot consider $\omega\sim\Delta \gg |\Omega|$ as a
    small perturbation. Moreover, $\tilde H(\Omega,\omega)$ is \emph{not}
    frustration-free, so that no gap stability can be inferred. In particular,
    we cannot conclude that $H_\sub{Loop}(\Omega)$ has a bulk gap (see
    \cref{subsubsec:gap} below).

    This is where the local $\Z_2$ symmetries of $H_\sub{Loop}(\Omega)$ pay
    off: The eigenbasis $\ket{E^{\vec\xi}_\Omega,\vec \xi}$ of
    $H_\sub{Loop}(\Omega)$ is also an eigenbasis of $\tilde H(\Omega,\omega)$.
    Indeed, \cref{eq:eigenstates_E,eq:eigenstates_U,eq:Hext} imply
    \begin{align}
        \label{eq:Ht_eig}
        \tilde H(\Omega,\omega)\ket{E^{\vec\xi}_\Omega,\vec \xi}
        =
        (E^{\vec\xi}_\Omega+\omega|\vec\xi|)\ket{E^{\vec\xi}_\Omega,\vec \xi}
    \end{align}
    with $|\vec\xi|=\sum_p\xi_p$ the total number of plaquettes with nonzero
    flux. The only effect of the $\omega$-term in \cref{eq:Hext} is to shift
    the eigenenergies of eigenstates in different flux sectors, without
    modifying the eigenstates themselves.

    In \cref{subsubsec:general} we showed that the ground state
    $\ket{\Omega}=\ket{0,\vec 0}$ is unique and in the flux-free sector of
    $H_\sub{Loop}(\Omega)$. Hence, \cref{eq:Ht_eig} implies that $\tilde
    H(\Omega,\omega)\ket{\Omega}=0$, but also that
    $\min{(E^{\vec\xi}_\Omega+\omega|\vec\xi|)}\stackrel{\omega>0}{=}\min{(E^{\vec\xi}_\Omega)}=0$,
    i.e., $\ket{\Omega}$ must be in the ground state manifold of $\tilde
    H(\Omega,\omega)$. Since the ground state manifold of
    $H_\sub{Loop}(\Omega)$ is non-degenerate, and eigenenergies can only remain
    constant or decrease for $\omega\searrow 0$, the ground state manifold of
    $\tilde H(\Omega,\omega)$ must also be non-degenerate. Hence we have shown
    that actually
    \begin{align}
        \text{\textbf{C}}\quad
        \ket{\Omega^\ast} = \ket{\Omega}
        \quad\text{\textbf{D}}\,,
    \end{align}
    which finally proves that $\ket{\Omega}$ is in the topologically ordered
    quantum phase of the toric code.

\end{itemize}

In conclusion, we managed to transfer our knowledge about the (fixed point)
topological order of $\ket{\Omega_0}$ to the unknown state $\ket{\Omega}$,
\begin{align}
    \lefteqn{\overbrace{\phantom{\ket{\Omega_0}\;\sim\;\ket{\Omega^\ast}}}^{\text{Same phase}}}
    \ket{\Omega_0}\;\sim\;
    \underbrace{\ket{\Omega^\ast}\;=\;\ket{\Omega}}_\text{Same state}
    \;=\;
    \Lambda\ket{\Omega_0}+\sqrt{1-\Lambda^2}\ket{\Omega_\eta}\,,
\end{align}
where the addition of $\ket{\Omega_\eta}$ perturbs $\ket{\Omega_0}$ away from
the fixed point to $\ket{\Omega}$, without leaving the toric codes phase for
small but finite, uniform quantum fluctuations
$0<|\Omega|<\bar\Omega(\omega\sim\Delta)$. We achieved this by exploiting the
gap stability of frustration-free, topologically ordered systems, in
conjunction with the local $\Z_2$ symmetry of the Hamiltonian $H_\sub{Loop}$ of
the tessellated blockade structure~$\C_\sub{Loop}$.

\subsubsection{Bulk gap}
\label{subsubsec:gap}

We have now rigorously established that the tessellated blockade structure
$\C_\sub{Loop}$ has a (potentially degenerate) topologically ordered ground
state manifold for finite but small fluctuations $\Omega\neq 0$. However, we
have \emph{not} shown that the system has a finite spectral gap in the
thermodynamic limit. Proving the existence or absence of spectral gaps of
interacting quantum many-body systems is notoriously
difficult~\cite{Lieb1961,Cubitt2015,Kastoryano2018}. Here we conclude our
discussion with arguments that might \emph{suggest} (or help to establish) the
existence of a bulk gap.

Firstly, one can show (\cref{app:Inv_Schriffer_Wolff}) that the large symmetry
group $\A_{\sub{Loop}}$ severely restricts the form of the \emph{effective}
Hamiltonian on the low-energy subspace $\H_\sub{Loop}$. A local
Schrieffer-Wolff transformation~\cite{Bravyi_2011,Datta1996} of
$H_\sub{Loop}(\Omega)$ yields 
\begin{align}%
    \label{eq:eff}
    H_\sub{Loop}^\sub{eff}=
    -J_\sub{eff}\tfrac{\Omega^K}{\Delta E^{K-1}}\sum_{\text{Faces $p$}} U_p
    +\O\left(\tfrac{\Omega^{K+2}}{\Delta E^{K+1}}\right)\,,
\end{align}%
up to a constant shift in energy. The order $K$ depends on the specific
implementation of the structure (e.g., the length of links connecting
vertices). If the effective coupling does not vanish, it is expected to be
positive $J_\sub{eff}>0$, so that the Hamiltonian~\eqref{eq:eff} is consistent
with the zero-flux ground state~\eqref{eq:toric_code_state}. In this case,
$H_\sub{Loop}^\sub{eff}$ has a finite gap of order
$J_\sub{eff}{\Omega^K}/{\Delta E}^{K-1}$. However, this result does not
\emph{prove} a finite gap of $H_\sub{Loop}(\Omega)$ as there is no guarantee
that the perturbation series converges. Furthermore, for a local
Schrieffer-Wolff transformation, we were not able to exclude the possibility 
that cancellations
in the perturbation series pin the effective coupling $J_\sub{eff}$ to zero. It
is worth mentioning, though, that for a global Schrieffer-Wolff transformation
on a finite system (say, a single plaquette), one can show rigorously that no
cancelations occur and $J_\sub{eff}$ is finite.

Secondly, it is well-known that ground states of local, gapped Hamiltonians
have only short-range correlations~\cite{Hastings2006}. Conversely, common
wisdom associates gapless systems with algebraically decaying correlations. Our
arguments in \cref{subsubsec:topo} prove that $\ket{\Omega}$ is indeed
short-range correlated, as it is the ground state of the gapped Hamiltonian
$\tilde H(\Omega,\omega)$. Unfortunately, without additional input, one cannot
go the other way and infer the existence of a gap from a short-range correlated
ground state, because gapless excitations may not couple to local
operators~\cite{FernandezGonzalez2012,FernandezGonzalez2014}.

Lastly, the (presumed) gap induced by the quantum fluctuations $\Omega$
separates the zero-flux sector of the ground state from higher flux sectors
(\cref{fig:spectrum} \textbf{D}); it therefore corresponds to the ``mass'' of
flux excitations -- which are non-dynamical due to the local $\Z_2$ symmetry.
As such, it is analogous to the mass of static charge pairs (mesons) in a pure
$\ZZ$ lattice gauge theory in two
dimensions~\cite{Wegner1971,Fradkin1978,Fradkin1979}. Via well-known duality
transformations, the energy gap can then be reinterpreted as the difference
between ground state energies of the two-dimensional transverse-field Ising
model with and without antiferromagnetic defect lines. This suggests that
methods and results from these related fields might provide further insight
into the spectral properties of the blockade structure~$\C_\sub{Loop}$.

In conclusion, there is anecdotal evidence for a finite bulk gap
$\sim\Omega^K/\Delta^{K-1}$ that separates the zero-flux sector of
$\ket{\Omega}$ from states with flux excitations. However, a rigorous proof of
its existence (or absence) is missing.

\section{Outlook \& Comments}
\label{sec:outlook}

There are several loose ends throughout the paper, and a few obvious follow-up
questions that need more work:

\textbf{Spectral gap.} The question whether our proposed blockade Hamiltonian
has a bulk gap remains open. It would be interesting to determine whether any 
of the (few)
known methods to rigorously establish the existence of a bulk gap applies to
our system. To strengthen (or debunk) our anecdotal evidence for a gap, one
could also resort to numerical techniques like DMRG. However, this approach
is complicated by the 36 atoms required to implement a single plaquette of the
honeycomb lattice. While the blockade interaction reduces the number of states
in the Hilbert space significantly, it hardly compensates for the exponential
resource requirements due to the rather large number of atoms in a unit cell.

\textbf{Planar tessellations.} It remains unclear whether our proposed
tessellated blockade graph necessarily requires three spatial dimensions to be
realized as a blockade structure. While it seems unlikely that a functionally
equivalent planar embedding exists, we are not aware of a rigorous argument to
support this claim. We also showed that none of the previously known
tessellations we checked feature local blockade graph automorphisms. Since all
these models are realizable as unit disk graphs, one might conjecture that the
third dimension is indeed \emph{needed} for a fully-symmetric tessellated
structure (at least if they implement non-factorizable ground state manifolds).
A counterexample to this conjecture would be a strictly planar, fully-symmetric
tessellation that is functionally equivalent to our quasi-two-dimensional
model.

\textbf{Other \texttt{XOR}/\texttt{FSU}-based tessellations.} In this work, we
employed the \texttt{XOR}-constraint -- realized by the \texttt{FSU}-structure
-- to implement a periodic system with local $\mathbb{Z}_2$-loop constraints.
However, \texttt{XOR}-constraints are more versatile than that, as they simply
realize a local $\sigma^z_i\sigma^z_j\sigma^z_k$ interaction. Using a
fully-symmetric version of a copy-structure~\cite{Stastny2023a}, it is possible
to connect \texttt{FSU}-structures by double-links (``ladders'') in
functionally different ways, without breaking the full symmetry of the
tessellation. For instance, it is straightforward to construct a blockade
version of the \emph{triangular-plaquette model} (also known as
\emph{Newman-Moore model}~\cite{Newman1999}), where the full symmetry is
realized by extensive ``fractal'' automorphisms on a triangular
lattice~\cite{Devakul2019}; the quantum fluctuations then realize (a low-energy
version of) the \emph{quantum} triangular-plaquette model, see
e.g.~Ref.~\cite{Sfairopoulos2024} and the references therein.

\textbf{Fully-symmetric circuits.} It would be interesting to explore how
generic the paradigm of fully-symmetric blockade structures actually is. For
example, is there a toolbox that allows for the amalgamation of fully-symmetric
elementary building blocks and can it be used to design fully-symmetric structures
for any prescribed set of ground state patterns? This would be potentially
useful for applications like geometric programming~\cite{Nguyen2022,Wurtz2022}
where optimization problems are encoded in blockade structures and the solution
relies on strong fluctuations over a vast search space. For example, one could
start with the \texttt{FSU}-gate and use double-links (``ladders'') to relay
Boolean variables, and then try to bootstrap a framework akin to
\emph{differential signaling} and \emph{balanced circuits} known from
electrical engineering.

\textbf{Alternative platforms.} As stressed previously, our results apply to
any framework with controllable two-level systems that interact via a simple
blockade potential. While neutral atom platforms that leverage Rydberg states
are currently the most promising candidate for the implementation of such
systems, it would be interesting to explore alternative platforms as
well~\cite{Hecktter_2021, Menta_2025}. For example, if one could implement
blockade graphs \emph{directly} (e.g.~using techniques from superconducting
circuit QED~\cite{Carusotto2020}), one could evade embeddability issues that
arise from the geometric nature of the van der Waals interaction between atoms
(see the next paragraph). 


\textbf{Van der Waals interactions.} In this work, we focused on ideal blockade
Hamiltonians that describe abstract two-level systems. Depending on the
experimental platform used to implement these models, platform-specific
modifications of the Hamiltonian must be taken into account. For instance, the
Rydberg blockade only \emph{approximates} a sharply decaying van der Waals
interaction. It is nontrivial but crucial for experiments on this platform to
study the effects of the long-range tails of these interactions on the proposed
blockade structures~\cite{Rourke2023}. For example, such interactions typically
break the symmetries provided by blockade graph automorphisms. It is reasonable
to expect that these symmetries can survive if they can be realized as
\emph{Euclidean symmetries} of the spatial structure. While this is often
possible for small, finite structures like a single \texttt{FSU}-gate, it
cannot be achieved for every structure (see \cref{app:counterexample}); this
includes in particular tessellations like $\C_\sub{Loop}$ with extensively
generated automorphism groups.  Generally speaking, the design of symmetric
blockade structures with long-range interactions is more involved (and more
restrictive) because geometry enters the stage. However, \emph{if} blockade
graph automorphism can be realized as isometries of Euclidean space, then
\cref{prop:1} can be readily adapted to van der Waals interactions (see
\cref{app:symmetries,app:proof_1}).

\section{Summary}
\label{sec:summary}

In this paper, we studied the effect of uniform quantum fluctuations on
tailored structures of two-level systems (atoms, spins, cavity modes etc.) that
interact via a blockade potential (so called \emph{blockade structures}). There
are three main results to highlight:

First, we introduced the concept of \emph{blockade graph automorphisms} to
describe (not necessarily geometric) symmetries of strongly interacting
blockade Hamiltonians with quantum fluctuations. We then defined
\emph{fully-symmetric} blockade structures as those where these symmetries act
transitively on the set of excitation patterns in the classical ground state
manifold. Our first main contribution is the rigorous proof that uniform
quantum fluctuations of fully-symmetric structures stabilize a \emph{unique}
ground state with \emph{equal-weight} contributions from each ground state of
the classical system. This result provides a versatile tool to engineer
blockade structures that experience strong quantum fluctuations by design.

Second, to illustrate these concepts, we constructed the fully-symmetric
universal (\texttt{FSU}) gate, a three-dimensional structure of 10 two-level
systems that is fully-symmetric and realizes various Boolean gates at once.
Among these are \texttt{XOR}- and \texttt{XNOR}-constraints between pairs of
two-level systems that are forced into inverted states within the ground state
manifold. We then showed step-by-step how this \texttt{FSU}-structure can be
used to construct a quasi-two-dimensional periodic tessellation of two-level
systems on the honeycomb lattice that enforces a $\Z_2$ Gauss law on its
vertices and features local $\Z_2$ symmetries on its plaquettes. The action of
these symmetries on the ``loop states'' (which satisfy the Gauss law) makes
this blockade structure fully-symmetric, and therefore stabilizes an
equal-weight superposition of loop states as its many-body ground state. This
loop condensate is reminiscent of the toric code, and suggests that the proposed
tessellation stabilizes topological order. This construction is our second main
contribution: It both showcases a highly nontrivial application of the concept
of full symmetry, and thereby results in the first known blockade structure
that provably stabilizes an equal-weight loop condensate.

Finally, we showed how prior results on the gap stability of topologically
ordered systems, in conjunction with the local $\Z_2$ symmetry of our system,
can be exploited to draw conclusions about the topological order of its
many-body ground state. In particular, we proved that the ground state of our
tessellated blockade structure belongs to the quantum phase of the toric code.
This proof is our third main contribution: It showcases a technique that allows
access to rigorous results on strongly interacting, tessellated blockade
structures -- without full knowledge of the eigenstates -- to establish the
topological order of the ground state. To the best of our knowledge, this is
the first system based on two-body blockade interactions where this is known
rigorously, without relying on numerical evidence or perturbative techniques.

Combined, these results extend the toolbox for the bottom-up design of tailored
quantum matter on platforms that feature simple two-body blockade interactions.


\begin{acknowledgments}
    We acknowledge funding from the Federal Ministry of Education and Research
    (BMBF) under the grants QRyd\-Demo, MUNIQC-Atoms, and the Horizon Europe
    programme \texttt{HORIZON-CL4-2021-DIGITAL-EMERGING-01-30} via the project
    \texttt{101070144} (EuRyQa). We thank Jos\'{e} Garre Rubio for discussions
    on potential applications of our toolbox to the quantum Newman-Moore model.
    T.M. thanks Malena Bauer for stimulating discussions on graph theory.
\end{acknowledgments}


\section*{Data availability}

The data that support the findings of this article are openly available \cite{data}.



\bibliographystyle{./bib/bibstyle.bst}
\bibliography{./bib/bibliography.bib}

\clearpage

\appendix

\section{Miscellaneous proofs \& derivations}

\subsection{Automorphisms as symmetries of blockade Hamiltonians}
\label{app:symmetries}

As already defined in \cref{df:graph_autom}, an automorphism $\phi \in
A_\C:=\aut{G_\C}$ acts on excitation patterns $\vec n\in\mathbb{Z}_2^N$ via
$(\phi\cdot\vec{n})_i = n_{\phi(i)}$.  This induces a unitary representation on
the full Hilbert space $\mathcal{H}$ of the blockade structure,
\begin{align}
    U_\phi\ket{\vec{n}} := \ket{\phi\cdot\vec{n}}.
\end{align}
As $\phi$ is bijective, it acts as a permutation of the configurations $\vec
n\in\mathbb{Z}_2^N$. Thus the matrix representation of $U_\phi$ in the basis
$\ket{\vec{n}}$ is a permutation matrix. In particular, this implies that
$U_\phi$ is unitary.

In the remainder of this section, we show that the operators $U_\phi$ are
symmetries of the blockade Hamiltonian~\eqref{eq:H}. To this end, we first
reformulate the problem in terms of the blockade graph
$G_{\C}=(V,E,\boldsymbol{\Delta})$, as introduced in \cref{sec:setting}.

The natural measure of distance on a graph is the graph metric
(cf.~\cref{app:graphs}). In the following, we denote the distance of vertices
$i,j \in V$ with respect to the graph metric as $d_{ij}$. By construction, two
vertices $i,j \in V$ are in blockade iff $(i,j) \in E$, which is equivalent to
$d_{ij} = 1$. Thus we can write the blockade potential in terms of the graph
metric as
\begin{align}
    \label{eq:U_app}
    U(d) := \begin{cases}
        0, & d > 1\\
        U_0, & d = 1.
    \end{cases}
\end{align}
In the limit $U_0 \to \infty$ we recover the blockade potential as given in
\cref{eq:U}. However, it is mathematically more convenient to consider a finite
but arbitrary large blockade strength $U_0$. Note that this does not change the
classical degeneracy of the considered blockade structures if $U_0$ is large
enough. Thus, in the following we consider the generalized blockade Hamiltonian
\begin{align}
    \label{eq:PXP_gen}
    H_\C := \sum_{i<j}\,U(d_{ij})\,n_in_j
    +\sum_i\,\left(\Omega_i\,\sigma_i^x-\Delta_i\,n_i\right),
\end{align}
with $U$ given by \cref{eq:U_app}. The remainder of this section is dedicated
to the proof of the following proposition:
\begin{proposition}
    \label{prop:PXP_inv}

    For any $\phi \in \A_\C$, the generalized blockade
    Hamiltonian~\eqref{eq:PXP_gen} is invariant under $U_\phi$:
    $U_\phi^\dagger H_\C U_\phi = H_\C$.

\end{proposition}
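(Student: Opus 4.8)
The plan is to reduce the claimed invariance to two elementary conjugation identities for the permutation unitary $U_\phi$, and then to invoke the three defining properties of a graph automorphism $\phi\in\A_\C$ (cf.\ \cref{df:graph_autom}): $\phi$ is a bijection of $V$; it preserves vertex weights, $\Delta_i=\Delta_{\phi(i)}$; and it preserves edges, $(i,j)\in E\Leftrightarrow(\phi(i),\phi(j))\in E$. In addition one uses the standing assumption of uniform fluctuations, $\Omega_i\equiv\Omega$, which is not encoded in $G_\C$.

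First I would establish, on the product basis $\ket{\vec n}$, the operator identities $U_\phi^\dagger n_i U_\phi=n_{\phi(i)}$ and $U_\phi^\dagger\sigma_i^x U_\phi=\sigma_{\phi(i)}^x$. The first is immediate: $n_i$ is diagonal and $(\phi\cdot\vec n)_i=n_{\phi(i)}$, so $n_i U_\phi\ket{\vec n}=n_{\phi(i)}\,U_\phi\ket{\vec n}$. For the second one chases the definitions: $\sigma_i^x$ flips the $i$-th component of $\phi\cdot\vec n$, i.e.\ the value $n_{\phi(i)}$, and pulling the result back with $U_\phi^\dagger$ yields $\vec n$ with its $\phi(i)$-th component flipped, which is exactly $\sigma_{\phi(i)}^x\ket{\vec n}$. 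The only care needed here is to track the direction of the relabelling so that it is $\phi$ and not $\phi^{-1}$ that appears; once the convention $(\phi\cdot\vec n)_i=n_{\phi(i)}$ is fixed, this is routine.

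Second, I would apply these identities term by term to the generalized Hamiltonian~\eqref{eq:PXP_gen}:
\begin{align*}
U_\phi^\dagger H_\C U_\phi
&=\sum_{i<j} U(d_{ij})\,n_{\phi(i)}n_{\phi(j)}\\
&\quad+\sum_i\bigl(\Omega_i\,\sigma_{\phi(i)}^x-\Delta_i\,n_{\phi(i)}\bigr)\,.
\end{align*}
In the one-body part I reindex $k=\phi(i)$; since $\phi$ is a bijection the sum still runs over all of $V$, while $\Omega_{\phi^{-1}(k)}=\Omega$ by uniformity and $\Delta_{\phi^{-1}(k)}=\Delta_k$ by weight preservation, so this reproduces $\sum_k(\Omega\,\sigma_k^x-\Delta_k\,n_k)$. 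In the two-body part, $n_{\phi(i)}n_{\phi(j)}$ is symmetric in $i,j$, so the sum is effectively over unordered pairs $\{i,j\}$; the map $\{i,j\}\mapsto\{\phi(i),\phi(j)\}$ permutes these pairs bijectively, and $U(d_{ij})=U(d_{\phi(i)\phi(j)})$ because $U$ in \eqref{eq:U_app} depends only on whether the pair forms an edge, which $\phi$ preserves. Reindexing yields $\sum_{k<l}U(d_{kl})\,n_k n_l$, and adding the two parts returns $H_\C$. There is no serious obstacle; the only genuinely delicate point is the index bookkeeping in the first step, so that the automorphism relations $\Delta_i=\Delta_{\phi(i)}$ and edge-preservation are used in the correct direction, and it is worth stating explicitly why uniformity of $\Omega$ is essential: since $\Omega$ is not part of the graph data $G_\C=(V,E,\bm\Delta)$, an automorphism need not preserve site-dependent $\Omega_i$, and the $\sigma^x$ terms are invariant only when $\Omega_i\equiv\Omega$.
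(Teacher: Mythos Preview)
Your proposal is correct and follows essentially the same route as the paper: establish the conjugation identities $U_\phi^\dagger n_i U_\phi=n_{\phi(i)}$ and $U_\phi^\dagger\sigma_i^x U_\phi=\sigma_{\phi(i)}^x$ on basis states, then reindex each term of \eqref{eq:PXP_gen} using bijectivity of $\phi$, weight preservation, and edge preservation. The only cosmetic difference is that the paper invokes the invariance of the full graph metric $d_{\phi^{-1}(k)\phi^{-1}(l)}=d_{kl}$ rather than just edge preservation, but since $U(d)$ in \eqref{eq:U_app} only distinguishes $d=1$ from $d>1$, your edge-preservation argument is sufficient.
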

\begin{proof}
    For this proof, we denote the operator $\hat{n}_i$ with a hat to avoid
    confusion with its eigenvalues $n_i$. We compute the action of
    $U_\phi^\dagger \hat{n}_i U_\phi$ and $U_\phi^\dagger \sigma_i^x U_\phi$ on
    basis states $\ket{\vec{n}}$:

    For $\hat{n}_i$, we obtain
    \begin{subequations}
        \begin{align}
            U_\phi^\dagger \hat{n}_i U_\phi \ket{\vec{n}} &= U_\phi^\dagger \hat{n}_i  \ket{\phi\cdot \vec{n}}\\
            &= U_\phi^\dagger (\phi\cdot\vec{n})_i  \ket{\phi\cdot\vec{n}}\\
            &= n_{\phi(i)}  \ket{\vec{n}}\\
            &= \hat{n}_{\phi(i)}  \ket{\vec{n}}.
        \end{align}
    \end{subequations}
    
    The operators $\sigma_i^x$ flip the state of one atom in an excitation
    pattern $\ket{\vec{n}}$.  Thus $\sigma_i^x$ also has a natural action on
    the excitation patterns
    \begin{align}
        (\sigma_i^x \cdot \vec{n})_k := \begin{cases}n_k & k \neq i\\ 
        1-n_i & k = i\end{cases}.
    \end{align}
    This is consistent with its usual operator definition, in the sense that
    $\sigma_i^x\ket{\vec{n}} = \ket{\sigma_i^x \cdot \vec{n}}$. We now show
    that $\sigma_i^x\cdot(\phi\cdot\vec{n}) = \phi\cdot(\sigma_{\phi(i)}\cdot
    \vec{n})$.  To this end, consider the components of the left expression
    \begin{subequations}
    \begin{align}
        (\sigma_i^x\cdot(\phi\cdot\vec{n}))_k 
        &= \begin{cases}(\phi\cdot\vec{n})_k & k \neq i\\ 1-(\phi\cdot\vec{n})_i & k = i \end{cases}
        \\
        &= \begin{cases} n_{\phi(k)} & k \neq i\\ 1-n_{\phi(i)} & k = i \end{cases}.
    \end{align}
    \end{subequations}
    On the other hand, we obtain for the components of the right expression
    \begin{subequations}
        \begin{align}
            (\phi\cdot(\sigma_{\phi(i)}\cdot \vec{n}))_k 
            &= (\sigma_{\phi(i)}\cdot \vec{n})_{\phi(k)}\\
            &= \begin{cases} n_{\phi(k)} & \phi(k) \neq \phi(i)\\ 1 - n_{\phi(i)} & \phi(k) = \phi(i)\end{cases}\\
            &= \begin{cases} n_{\phi(k)} & k \neq i\\ 1 - n_{\phi(i)} & k = i\end{cases}\\
            &= (\sigma_i^x\cdot(\phi\cdot\vec{n}))_k.
        \end{align}
    \end{subequations}
    Here we used that $\phi$ is a bijection, hence $\phi(i) = \phi(k)$ is
    equivalent to $i = k$. 

    With this preparation, we obtain
    \begin{align}
        \sigma_i^x U_\phi \ket{\vec{n}}
        &= \ket{\sigma_i^x\cdot(\phi\cdot\vec{n})}
        \nonumber\\
        &= \ket{\phi\cdot(\sigma_{\phi(i)}\cdot \vec{n})}
        = U_\phi \sigma_{\phi(i)}^x \ket{\vec{n}}.
    \end{align}
    Since the states $\ket{\vec{n}}$ constitute a basis of $\H$, it follows
    that $U_\phi^\dagger \sigma_i^x U_\phi = \sigma_{\phi(i)}^x$. 

    With these preliminary results, we calculate the action of $U_\phi$ on the
    Hamiltonian $H_\C$. We obtain:
    \begin{widetext}

        \begin{subequations}
            \begin{align}%
                U_\phi^\dag H_\C U_\phi
                &=\sum_{i<j}\,U(d_{ij})\,n_{\phi(i)}n_{\phi(j)}
                +\sum_i\,\left(\Omega\,\sigma_{\phi(i)}^x-\Delta_i\,n_{\phi(i)}\right)
                \\
                &=\sum_{k<l}\,U(d_{\phi^{-1}(k)\phi^{-1}(l)})\,n_{k}n_{l}
                +\sum_k\,\left(\Omega\,\sigma_{k}^x-\Delta_{\phi^{-1}(k)}\,n_{k}\right)
                \\
                &=\sum_{k<l}\,U(d_{kl})\,n_{k}n_{l}
                +\sum_k\,\left(\Omega\,\sigma_{k}^x-\Delta_{k}\,n_{k}\right)
                \\
                &=H_\C 
            \end{align}%
        \end{subequations}
    \end{widetext}
    For the second equality, we reordered the summation, which is possible
    since $\phi$ is a bijection. For the third equality, we used that $\phi$ is
    an automorphism of the vertex-weighted blockade graph $G_\C$, which implies
    $\Delta_{\phi^{-1}(k)} = \Delta_{k}$ and $d_{\phi^{-1}(k)\phi^{-1}(l)} =
    d_{kl}$. 
\end{proof}

In most physical realizations of blockade structures, the blockade
potential~\eqref{eq:U} is only an approximation of the true interaction. On the
Rydberg platform, the true potential is a van der Waals interaction $U(r) =
C_6/r^6$~\cite{Saffman_2010}. In this case, the Hamiltonian~\eqref{eq:H} cannot
be solely encoded by a blockade graph. However, \cref{prop:PXP_inv} can be
readily generalized: As the van der Waals potential depends on the actual
positions of the atoms, the automorphisms must be replaced by bijective maps on
$V$ that induce \emph{isometries} on the set of atom positions, i.e.,
$|\vec{r}_i - \vec{r}_j| \overset{!}{=} |\vec{r}_{\phi(i)} -
\vec{r}_{\phi(i)}|$. It is then straightforward to check that with these
modifications the proof for \cref{prop:PXP_inv} still holds. In this case, the
$U_\phi$ are still symmetries of the Rydberg Hamiltonian, including van der
Waals interactions.

\subsection{Fully-symmetric blockade structures}
\label{app:proof_1}

Here we provide a detailed proof of \cref{prop:1}. We prove the two parts of
the proposition separately: First, we show that the ground state of the
Hamiltonian $H_\C$ is unique, as this is independent of the symmetry. Second,
we use the symmetries introduced in \cref{app:symmetries} to show the equality
of the coefficients $\lambda_{\vec{n}}$.

\begin{lemma}
    \label{lm:uniquenes}

    For $\Omega \neq 0$, the Hamiltonian $H_\C$ [\cref{eq:PXP_gen}] has a
    unique ground state $\ket{\Omega}$. This ground state can be expanded in
    the basis of excitation patterns as
    \begin{align}
        \ket{\Omega} = \sum_{\vec{n}} C_{\vec{n}}(\Omega) \ket{\vec{n}}
    \end{align}
    with coefficients that satisfy $C_{\vec{n}}(\Omega) \neq 0$ for all $\vec{n}$.

\end{lemma}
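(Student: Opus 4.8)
The plan is to recognize \eqref{eq:PXP_gen} as a \emph{stoquastic} Hamiltonian and apply the Perron--Frobenius theorem. First I would fix the excitation-pattern basis $\{\ket{\vec n}\}_{\vec n\in\Z_2^N}$ and write $H_\C = D + \Omega\,X$, where $D$ is the diagonal matrix with entries $D_{\vec n\vec n}=\sum_{i<j}U(d_{ij})n_in_j-\sum_i\Delta_i n_i$ and $X=\sum_i\sigma_i^x$ has matrix elements $\bra{\vec m}X\ket{\vec n}=1$ if $\vec m$ and $\vec n$ differ in exactly one bit and $0$ otherwise; i.e.\ $X$ is the adjacency matrix of the $N$-dimensional hypercube graph on $\Z_2^N$. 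Conjugation by the unitary $W:=\bigotimes_i\sigma_i^z$ maps each $\sigma_i^x\mapsto-\sigma_i^x$ and leaves every $n_i$ (hence $D$) invariant, so $W H_\C(\Omega) W^\dagger = H_\C(-\Omega)$. Since $W$ only flips the signs of certain coefficients $C_{\vec n}$, it preserves both the ground-state degeneracy and the property that all $C_{\vec n}$ are nonzero. Hence I may assume without loss of generality that $\Omega<0$.

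Next I would consider $M := c\,\id - H_\C = (c\,\id - D) - \Omega X$ with a real constant $c$ chosen so that $c>\max_{\vec n}D_{\vec n\vec n}$ (possible because $U_0$ is finite in \eqref{eq:PXP_gen}). Then $M$ is real symmetric, its off-diagonal entries are $-\Omega\,\bra{\vec m}X\ket{\vec n}\ge 0$ (using $\Omega<0$), and its diagonal entries $c-D_{\vec n\vec n}$ are strictly positive; thus $M$ is entrywise non-negative. It is moreover irreducible: since $\Omega\neq 0$, the nonzero off-diagonal pattern of $M$ is exactly the edge set of the hypercube graph on $\Z_2^N$, which is connected, so for every pair $\vec m,\vec n$ some power $M^k$ has a strictly positive $(\vec m,\vec n)$ entry. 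Finally, the ground states of $H_\C$ correspond precisely to the eigenvectors of $M$ with maximal eigenvalue $c-E_0$.

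By the Perron--Frobenius theorem for irreducible non-negative matrices, the largest eigenvalue of $M$ is simple and admits an eigenvector with all entries strictly positive (and this is, up to scaling, the only eigenvector that does not change sign). Translating back, $H_\C$ has a \emph{unique} ground state $\ket{\Omega}=\sum_{\vec n}C_{\vec n}(\Omega)\ket{\vec n}$ with all $C_{\vec n}(\Omega)>0$ in the gauge-transformed basis; undoing the conjugation by $W$ gives the claim $C_{\vec n}(\Omega)\neq 0$ for all $\vec n$. This $\ket{\Omega}$ is then the unperturbed input for the subsequent symmetry argument (\cref{prop:1}).

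I do not expect a genuine obstacle — this is the textbook argument for stoquastic Hamiltonians — so the ``hard part'' is merely bookkeeping: (i) using the gauge transformation $W$ to arrange $-\Omega>0$ so that the off-diagonal entries of $c\,\id-H_\C$ are non-negative, and (ii) noting that irreducibility is supplied entirely by the transverse-field term $X$ (the diagonal blockade and detuning terms neither help nor hurt connectivity), so the conclusion is insensitive to coincidences or vanishings among the $D_{\vec n\vec n}$. The finiteness of $U_0$ enters only to keep $M$ a finite, entrywise non-negative matrix; for $U_0$ large enough the classical ground-state manifold $\H_\C$ is unchanged, so the lemma remains the relevant statement in the $U_0\to\infty$ limit used elsewhere in the paper.
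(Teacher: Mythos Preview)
Your proposal is correct and is essentially the same argument as the paper's: both recognize the stoquastic structure, use the gauge transformation $W=\prod_i\sigma_i^z$ (the paper calls it $Z$) to fix the sign of $\Omega$, establish irreducibility via the hypercube connectivity induced by $\sum_i\sigma_i^x$, and apply Perron--Frobenius. The only cosmetic difference is that you shift by $c\,\id$ to get a strictly nonnegative matrix, whereas the paper invokes the ``essentially nonnegative'' variant of Perron--Frobenius directly on $-ZH_\C Z$; these are equivalent formulations.
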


For the proof of \cref{lm:uniquenes} we make use of a classical result from
linear algebra, the \emph{Perron-Frobenius theorem}; for reference
see~\cite[chapter I.4]{Kato_1982}. We shortly summarize the contents of this
theorem and its prerequisites:

To any matrix $M \in \mathbb{\R}^{n \times n}$ we can associate a
\emph{directed} graph $G_M^\text{d}$. Its vertex set is given by $V_M :=
\{1,\ldots,n\}$; the edges of this graph are defined via $(i,j) \in E_M$ iff
$M_{i,j} \neq 0$. The matrix $M$ is called \emph{irreducible} if $G_M^\text{d}$
is \emph{strongly connected} (one can reach every vertex from every other
vertex by following the \emph{directed} edges of the graph). This directed
graph can be turned into an undirected graph $G_M^\text{u}$ by ``forgetting''
the directions of the edges. If $M$ is symmetric, then $G_M^\text{d}$ is
strongly connected iff $G_M^\text{u}$ is connected.

Next, a matrix $M \in \mathbb{\R}^{n \times n}$ is called \emph{essentially
nonnegative}, if it has nonnegative off-diagonal elements. For a matrix $M \in
\mathbb{\R}^{n \times n}$, its \emph{spectral radius} $\varrho(M)$ is defined
as $\varrho(M):= \max\{|\lambda|\, |\,\lambda \in \sigma(M)\} > 0$. We denote the
\emph{spectrum} of $M$ as $\sigma(M)$ and the \emph{eigenspace} of $M$
associated to the eigenvalue $\lambda \in \sigma(M)$ as
$\Eig(M,\lambda)$.

With these concepts, we can formulate the following theorem:
\begin{theorem}[Perron-Frobenius]
    \label{thm:PF}

    Let $M \in \R^{n\times n}$ be a irreducible and essentially nonnegative
    matrix. Then \ldots
    \begin{enumerate}

        \item The spectral radius $\varrho(M) > 0$ is an eigenvalue of $M$:
            $\varrho(M) \in \sigma(M)$.

        \item The eigenspace associated to $\varrho(M)$ is one-dimensional:
            $\dim_{\mathbb{C}} \Eig(M,\varrho(M)) = 1$.

        \item $M$ has an eigenvector $\vec{v} = (v_i)_{i = 1}^n$ with
            eigenvalue $\varrho(M)$ with only positive components: $v_i > 0$. 

    \end{enumerate}

\end{theorem}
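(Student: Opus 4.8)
The plan is to obtain \cref{thm:PF} from the classical Perron--Frobenius theorem for entrywise \emph{nonnegative} irreducible matrices, which I take as the external input (see e.g.\ \cite{Kato_1982}), by means of a diagonal shift that clears the (possibly negative) diagonal of $M$. Since $M$ is essentially nonnegative, I would pick a constant $c\geq\max_i\max\{0,-M_{ii}\}$ so that $A:=M+c\,\id$ has only nonnegative entries --- and in every place this theorem is actually invoked one already shifts a Hamiltonian by a large constant, so there $A$ may be taken entrywise nonnegative from the outset. Adding $c\,\id$ only attaches a self-loop at each vertex of the directed graph associated with $M$, which leaves strong connectivity untouched, so $A$ is again irreducible.

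Next I would apply the classical theorem to the nonnegative irreducible matrix $A$: it yields $r:=\varrho(A)>0$, that $r\in\sigma(A)$ with an algebraically (hence geometrically) simple eigenspace so $\dim_{\mathbb{C}}\Eig(A,r)=1$, an eigenvector $v$ of $A$ for the eigenvalue $r$ with all components $v_i>0$, and --- as a standard supplement --- that $r$ is the rightmost point of the spectrum, $r=\max\{\operatorname{Re}\mu:\mu\in\sigma(A)\}$. Transferring back is immediate, since $\sigma(M)=\{\mu-c:\mu\in\sigma(A)\}$ and $\Eig(M,\mu-c)=\Eig(A,\mu)$: the real number $\lambda_0:=r-c$ is an eigenvalue of $M$ with one-dimensional eigenspace $\Eig(A,r)=\mathbb{C}v$ spanned by the strictly positive vector $v$, and it is the rightmost eigenvalue of $M$. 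This establishes parts (2) and (3) at once; in the situation $M\geq 0$ (which is how the theorem is used in this paper) one takes $c=0$, so that $\lambda_0=\varrho(M)=r>0$ lies in $\sigma(M)$ and part (1) follows as well.

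The step I expect to carry the real weight is exactly the identification of $\lambda_0$ with $\varrho(M)$ in the general case: for an arbitrary essentially nonnegative $M$ the rightmost eigenvalue need not be the one of largest modulus --- a matrix with a sufficiently negative diagonal entry is a counterexample, in which $\varrho(M)$ is not even in $\sigma(M)$ --- so \cref{thm:PF} as worded is to be read with the (here always satisfied) proviso that the spectral radius of $M$ is attained at its rightmost eigenvalue, equivalently $\varrho(M)=\max\{\operatorname{Re}\lambda:\lambda\in\sigma(M)\}$. This holds automatically whenever $M$ is obtained from a blockade Hamiltonian by a sufficiently large uniform diagonal shift, as in all applications below; granting it, the remaining content is a routine transcription of the nonnegative case through $A=M+c\,\id$, with no further obstacle.
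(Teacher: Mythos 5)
Your reduction is correct, but note that the paper does not actually prove this theorem: it is quoted as a classical external result with a pointer to Kato~\cite{Kato_1982}, so there is no internal argument to compare against. Your derivation --- shift the diagonal by $c\,\id$ to land in the entrywise nonnegative case, apply the classical Perron--Frobenius theorem to $A=M+c\,\id$, and translate the spectrum and eigenspaces back via $\sigma(M)=\sigma(A)-c$ and $\Eig(M,\mu-c)=\Eig(A,\mu)$ --- is the standard one and is sound; adding self-loops indeed leaves strong connectivity, hence irreducibility, untouched.

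Your caveat about $\varrho(M)$ is the substantive point, and it is well taken: for a general essentially nonnegative irreducible $M$ the distinguished simple eigenvalue with positive eigenvector is the spectral abscissa $\max\{\operatorname{Re}\lambda\,|\,\lambda\in\sigma(M)\}$, not the spectral radius, and the two can differ. You are slightly off, however, in asserting that the paper only ever invokes the theorem for $M\geq 0$: it is applied to $h$, the matrix of $\tilde H_\C=-ZH_\C Z$, whose diagonal contains entries of order $-U_0$ on blockade-violating configurations. For large $U_0$ the spectral radius of $h$ is then attained at the \emph{bottom} of its spectrum, so $\varrho(h)\notin\sigma(h)$ and your ``proviso'' fails there too; the paper's remark that for real symmetric matrices it is ``well-defined to say that $\varrho(M)$ is the largest eigenvalue'' is precisely the slip your restatement repairs. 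Since everything downstream in \cref{lm:uniquenes} only ever uses $\max(\sigma(\tilde H_\C))$, replacing $\varrho(M)$ by the largest (rightmost) eigenvalue throughout the theorem --- which is what your proof actually delivers --- is both necessary for correctness and sufficient for all of the paper's applications.
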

We emphasize that the first claim is not trivial, even for finite-dimensional
systems: It is obvious that there exists an eigenvalue $\lambda$ with
$|\lambda| = \varrho(M)$, but \cref{thm:PF} also fixes its phase. In the
following, we only apply this theorem to real and symmetric matrices. In this
case, all eigenvalues of $M$ are real, thus it is well-defined to say that
$\varrho(M)$ is the largest eigenvalue of $M$.

With these tools at hand, we prove \cref{lm:uniquenes}:
\begin{proof}
    First, we consider the case $\Omega > 0$. Note that \cref{lm:uniquenes}
    makes statements about the \emph{ground state} of $H_\C$, but \cref{thm:PF}
    only makes claims about the eigenstate with the \emph{largest} eigenvalue.
    Thus we want to apply \cref{thm:PF} to the operator $-H_\C$ instead.
    However, for $\Omega >0$, the off-diagonal elements $\bra{\vec
    n}(-H_\C)\ket{\vec n'}$ are negative, so that the matrix is \emph{not}
    essentially nonnegative. To fix this, we must first unitarily transform the
    operator $-H_\C$:

    To this end, define the transformation
    \begin{align}
        Z := \prod_{i} \sigma_i^z,
    \end{align}
    where $\sigma_i^z = \ket{0}\bra{0}_i - \ket{1}\bra{1}_i$ denotes the
    Pauli-$z$ matrix. As $(\sigma_i^z)^2 = 1$ and $\sigma_i^z\sigma_j^z =
    \sigma_j^z\sigma_i^z$ for $i \neq j$, $Z$ is an involution, i.e., $Z^2 =
    \mathds{1}$. The operators $\sigma_i^z$ anticommute with $\sigma_i^x$ and
    commute with $\sigma_j^x$ for $i \neq j$. So we immediately obtain the
    identities
    \begin{align}
        Z \sigma_i^x Z = -\sigma_i^x
        \quad\text{and}\quad
        Z n_i Z = n_i\,.
    \end{align}
    Now we can define the auxiliary operator 
    \begin{align}
    \tilde{H}_\C := Z(-H_\C)Z=-ZH_\C Z
    \end{align}
    with the explicit form
    \begin{align}
        \label{eq:sim_trans}
        \tilde{H}_\C &= -Z \left(\sum_{k<l}\,U(d_{kl})\,n_{k}n_{l} 
        + \sum_k\,\left(\Omega\,\sigma_{k}^x - \Delta_{k}\,n_{k}\right)\right) Z 
        \nonumber\\
        &= -\sum_{k<l}\,U(d_{kl})\,n_{k}n_{l} 
        + \sum_k\,\left(\Omega\,\sigma_{k}^x + \Delta_{k}\,n_{k}\right).
    \end{align}
    Let $h$ denote the matrix representation of this Hamiltonian in the basis
    of excitation patterns, i.e., the matrix elements are defined by
    $h_{\vec{n},\vec{n}'} := \bra{\vec{n}} \tilde{H}_\C \ket{\vec{n}'}$. The
    off-diagonal elements ($\vec{n} \neq \vec{n}'$) of this matrix are given by 
    \begin{align}
        \label{eq:h_od}
        h_{\vec{n},\vec{n}'} = 
        \begin{cases} 
            \Omega & \text{if}\,\bra{\vec{n}}\sum_k \sigma_k^x \ket{\vec{n}'} \neq 0\\ 
            0 & \text{otherwise}
        \end{cases}\,.
    \end{align}
    Therefore $h$ is \emph{essentially nonnegative}.

    To apply \cref{thm:PF}, we have to show in addition that $h$ is
    \emph{irreducible}. To this end, we show that $\vec{n}$ is connected to
    $\vec{0}$ in $G_h^\text{u}$ for all $\vec{n} \in \Z_2^N$. Here, $N$ denotes
    the number of atoms in $\C$ and $\vec{0} = (0,\ldots, 0)$ is the
    configuration where no atom is excited. \cref{eq:h_od} shows that
    $h_{\vec{n},\vec{n}'} > 0$ iff $\vec{n}$ and $\vec{n}'$ differ in exactly
    one component. Hence a path from $\vec{n}$ to $\vec{0}$ in $G_h^\text{u}$
    can be constructed by successively flipping all components in $\vec{n}$ to
    $0$. This implies that $G_h^\text{u}$ is connected and establishes the
    irreducibility of $h$.

    Finally, we can apply \cref{thm:PF} to the matrix $h$. As spectral
    properties are independent of the chosen basis, the conclusions also hold
    for the auxiliary operator $\tilde{H}_\C$. Furthermore, from $\tilde{H}_\C
    = -Z H_\C Z$ and the invertibility of $Z$, if follows that
    $\sigma(\tilde{H}_\C) = -\sigma(H_\C)$ and thus $\max(\sigma(\tilde{H}_\C))
    = -\min(\sigma(H_\C))$. Via well-known facts about eigenspaces, we find
    that 
    \begin{subequations}
        \begin{align}
            &\Eig(\tilde{H}_\C,\max(\sigma(\tilde{H}_\C)))\nonumber\\
            =\,&\Eig(-ZH_\C Z,-\min(\sigma(H_\C)))\\
            =\,&\Eig(ZH_\C Z,\min(\sigma(H_\C)))\\
            =\,& Z\, \Eig(H_\C,\min(\sigma(H_\C)))\,.
            \label{eq:Eigenspace_transform}
        \end{align}
    \end{subequations}
    As $Z$ is invertible, this implies in particular that 
    \begin{align}
        &\dim_\mathbb{C} \Eig(H_\C,\min(\sigma(H_\C)))\nonumber\\
        =\,&\dim_\mathbb{C} \Eig(\tilde{H}_\C,\max(\sigma(\tilde{H}_\C))) 
        \stackrel{\substack{\text{Thm.~\ref{thm:PF}}\\\text{1.\,\&\,2.}}}{=} 1\,,
    \end{align}
    i.e., the ground state of $H_\C$ is unique. 

    The third statement of \cref{thm:PF} implies that any eigenvector
    $\ket{\tilde\Omega} \in \Eig(\tilde{H}_\C,\max(\sigma(\tilde{H}_\C)))$ has the
    form 
    \begin{align}
        \ket{\tilde\Omega} = e^{i\alpha} \sum_{\vec{n}}c_{\vec{n}} \ket{\vec{n}}\,,
    \end{align}
    where $e^{i\alpha}$ is global phase and $c_{\vec{n}} > 0$; in the
    following, we set $e^{i\alpha} = 1$ without loss of generality. 

    By \cref{eq:Eigenspace_transform}, the ground state $\ket{\Omega}$ of
    $H_\C$ has then the form
    \begin{subequations}
        \begin{align}
            \ket{\Omega} = Z \ket{\tilde\Omega}
            &= \sum_{\vec{n}} c_{\vec{n}} Z\ket{\vec{n}}\\
            &= \sum_{\vec{n}} (-1)^{\delta(\vec{n})}c_{\vec{n}}\ket{\vec{n}}\,,
            \label{eq:sign}
        \end{align}
    \end{subequations}
    where we defined
    \begin{align}
        \delta(\vec{n}) := \sum_{i} n_i
    \end{align}
    as the number of atoms in $\vec{n}$ that are excited. This finally
    proves the claim of \cref{lm:uniquenes} for the case $\Omega >0$.

    For completeness, we also sketch the proof for $\Omega < 0$. In this case,
    the off-diagonal elements of $\bra{\vec n}(-H_\C)\ket{\vec n'}$ are already
    positive; thus we can simply define $\tilde{H}_\C :=-H_\C$ as the auxiliary
    operator. The matrix representation of this operator is then irreducible
    and essentially nonnegative and we can apply \cref{thm:PF}. The remainder
    of the proof follows analogously with the substitution $Z\mapsto 1$.
    
    We stress that this proof remains valid in the strict ``PXP-limit'' with
    $U_0=\infty$, where the Hilbert space is restricted to the subspace of
    excitation patterns that do not violate blockades.
\end{proof}

Now we prove the second part of \cref{prop:1}, namely that all configurations
in $L_\C$ enter with equal weight into the ground state:
\begin{proof}
    From \cref{lm:uniquenes} we know that the ground state has the form 
    \begin{align}%
        \ket{\Omega} = \sum_{\vec n} C_{\vec{n}}(\Omega) \ket{\vec n},
    \end{align}%
    where $C_{\vec{n}}(\Omega) := (-1)^{\delta(\vec{n})}c_{\vec{n}}\neq 0$ (in
    the case $\Omega>0$, for $\Omega <0$ one can replace
    $\delta(\vec{n})\mapsto 0$). By \cref{prop:PXP_inv}, the operators $U_\phi$
    commute with $H_\C$ and therefore leave the eigenspaces invariant; in
    particular the ground state manifold is invariant. As the ground state
    manifold is one-dimensional (again by \cref{lm:uniquenes}), and the
    representation of $U_\phi$ is unitary, the action of $U_\phi$ on the ground
    state $\ket{\Omega}$ must have the form
    \begin{align}
        U_\phi \ket{\Omega} = e^{i\varphi(\phi)} \ket{\Omega}
    \end{align}
    and it follows that
    \begin{align}
        \label{eq:U1}
        U_\phi \ket{\Omega} 
        = \sum_{\vec n} e^{i\varphi(\phi)} C_{\vec{n}}(\Omega) \ket{\vec n}\,.
    \end{align}
    On the other hand, we obtain 
    \begin{subequations}
        \begin{align}
            U_\phi \ket{\Omega} 
            &= \sum_{\vec n} C_{\vec{n}}(\Omega) U_\phi \ket{\vec n}\\
            &= \sum_{\vec n} C_{\vec{n}}(\Omega) \ket{\phi\cdot\vec n}\\
            &= \sum_{\vec n} C_{\phi^{-1}\cdot\vec{n}}(\Omega) \ket{\vec n} \label{eq:U2}\,.
        \end{align}
    \end{subequations}
    By comparing \cref{eq:U1} and \cref{eq:U2} it follows that
    \begin{align}
        e^{i\varphi(\phi)} C_{\vec{n}}(\Omega) 
        = C_{\phi^{-1}\cdot\vec{n}}(\Omega)\,.
        \label{eq:phase}
    \end{align}
    Next we show that the phase must be equal to $1$. As our Hamiltonian is
    real, all eigenvectors can be chosen to have real components, and thus the only
    possibilities for the phases are $e^{i\varphi(\phi)} = \pm1$. By the
    definition of the coefficients $C_{\vec{n}}(\Omega)$, \cref{eq:phase}
    becomes
    \begin{align}
        e^{i\varphi(\phi)} (-1)^{\delta(\vec{n})}c_{\vec{n}} 
        = (-1)^{\delta(\phi^{-1}\cdot\vec{n})}c_{\phi^{-1}\cdot\vec{n}}.
    \end{align}
    As $\phi$ acts as a permutation of atoms, it preserves the number of exited
    atoms, hence $\delta(\vec{n}) = \delta(\phi^{-1}\cdot\vec{n})$. Thus we
    obtain
    \begin{align}
        e^{i\varphi(\phi)} c_{\vec{n}} = c_{\phi^{-1}\cdot\vec{n}}.
    \end{align}
    (The same relation follows for $\Omega <0$ where $\delta(\vec{n})\equiv
    0$.) As the coefficients $c_{\vec{n}}$ are real and positive, the only
    phase that satisfies this equation is $e^{i\varphi(\phi)} = 1$ for all
    $\phi$.
    This shows that if two configurations  $\vec{n},\vec{m}$ are part of the
    same orbit, i.e., $\vec{m} = \phi \cdot \vec{n}$ for some $\phi \in \A_\C$,
    then it must be $C_{\vec{n}}(\Omega) = C_{\vec{m}}(\Omega)$.

    So far this result holds for arbitrary blockade structures. Now we focus on
    the fully-symmetric case. For $\vec{n} \in L_{\C}$, we introduced the
    nomenclature $C_{\vec{n}}(\Omega) =: \lambda_{\vec{n}}(\Omega)$ in
    \cref{eq:ground_state}. If $L_\C$ is an orbit under the action of $\A_\C$,
    then our results above imply that $\lambda_{\vec{n}}(\Omega) =
    \lambda_{\vec{m}}(\Omega)$ for any $\vec{n},\vec{m} \in L_\C$; that is,
    $\lambda_{\vec{n}}(\Omega) \equiv \lambda(\Omega)$, as claimed.
\end{proof}

We want to note that these methods can also be applied to more general
Hamiltonians. Here we briefly sketch some results in that regard:

First, consider the generalized fluctuation term $\sum_{i} \Omega_i \sigma_i^x$
with site-dependent $\Omega_i$. If $\Omega_{\phi(i)} = \Omega_{i}$, then
$U_\phi$ remains a symmetry for this generalized Hamiltonian. Furthermore, the
proof for the uniqueness of the ground state still goes through by choosing the
modified transformation $Z = \prod_ {i: \Omega_{i} > 0} \sigma_i^z$. It is then
straightforward to check that \cref{prop:1} still holds.

Next, consider the generalized Hamiltonian $H'_\C$ with modified fluctuation
term $\sum_{i} \Omega_i \sigma_i^x$, where $\Omega_i = \pm \Omega$ (with
$\Omega > 0$). Define $W := \prod_{i: \Omega_{i} < 0} \sigma_i^z$, then $W
H'_\C W$ is in the form of \cref{eq:PXP_gen} with $\Omega_i\equiv\Omega >0$. If
$U_\phi$ is a symmetry of $W H'_\C W$ then $W U_\phi W$ is a symmetry of
$H'_\C$. As discussed in the last paragraph, the uniqueness of the ground state
still holds. Straightforward calculations show that the ground state of $H_\C'$
satisfies $W U_\phi W \ket{\Omega} = \ket{\Omega}$ which leads to the presence
of additional signs in the coefficients $\lambda_{\vec{n}}$. 

It is also possible to add additional terms to the
Hamiltonian~\eqref{eq:PXP_gen}, as long as they do not break the symmetries and
the uniqueness of the ground state. Since the latter is based on the
construction of an essentially nonnegative auxiliary matrix (using
transformations like $Z$), we are not aware of any general criterion that
characterized perturbations where this procedure works. An example for which
the uniqueness remains intact is a hopping term with $t>0$:
\begin{align}
    H' := -t \sum_{\{i,j\} \in E} (\sigma_i^+\sigma_j^- + \sigma_i^-\sigma_j^+)\,.
\end{align}
It is clearly invariant under $U_\phi$. It also does not break the uniqueness
proof, as irreducibility still holds and the auxiliary operator $-ZH'Z$ has
nonnegative matrix elements.

Finally, we want to mention that our proof for the uniqueness of the ground
state never used any properties of the interaction potential $U$. As a
consequence (with the modifications described in \cref{app:symmetries}),
\cref{prop:1} also applies to structures of Rydberg atoms where $U$ is given by
the van der Waals interaction.

\subsection{Example: Blockade graph automorphisms not realized by geometric symmetries}
\label{app:counterexample}

Let $G = (V,E)$ be a finite (simple) graph and $\rho: V \rightarrow \R^d$ be a
unit ball embedding in $d$ dimensions, i.e, it is injective and $\{i,j\} \in E$
iff $\d_2(\rho_i, \rho_j) \leq 1$; here $\d_2(\cdot,\cdot)$ denotes the
Euclidean metric. Any automorphism $\phi \in \aut{G}$ induces a map on
$\rho(V)$ by $f_\phi: \rho(V) \rightarrow \rho(V),\,\rho_i \mapsto
\rho_{\phi(i)}$. The map $f_\phi$ is a geometric symmetry of the (embedded)
graph, if it is an isometry.

Now consider the complete graph with $n$ vertices $G = K_n$, we set $V =
\{1,\ldots,n\}$. The automorphism group of this graph is $\aut{K_n} = S_n$
where $S_n$ denotes the symmetric group. For any two pairs of vertices
$\{i,j\},\{k,l\} \in E$, there exists an automorphism $\sigma \in S_n$ such
that $\{i,j\} = \{\sigma(k),\sigma(l)\}$. Suppose all graph automorphisms of
$K_n$ induce geometric symmetries, then it follows that
\begin{subequations}
    \begin{align}
        d_2(\rho_i,\rho_j) 
        &= d_2(\rho_{\sigma(k)},\rho_{\sigma(l)})\\
        &= d_2(f_\sigma(\rho_{k}),f_\sigma(\rho_l))
        = d_2(\rho_k,\rho_l)\,.
    \end{align}
\end{subequations}
Thus, all points in $\rho(V)$ have equal distance.

The maximum number of points that are pairwise equidistant is known as the
\emph{equilateral dimension}. It is a well-known fact~\cite{Guy_1983} that the
equilateral dimension of the Euclidean space $\R^d$ is $d+1$. Hence we must
have $n = |V| = |\rho(V)| \leq d+1$ for a unit ball embedding of $K_n$ that
implements all automorphisms as geometric symmetries.

For example, the complete graph $K_n$ can only be symmetrically unit disk
embedded in $d = 2$ if $n \leq 3$. In particular $K_4$ cannot be embedded in
$\R^2$ such that all its automorphisms are realized by geometric symmetries.
This could be achieved in $\R^3$ though, where all automorphisms can be
realized as symmetries of the tetrahedron.

\subsection{Lower bound on the automorphism group size}
\label{app:burnside}

Burnside's lemma yields a lower bound for the number of orbits in $L_\C$:
\begin{align}%
    \underbrace{\left|\bigslant{L_\C}{\A_\C}\right|}_{\text{Number of orbits}}
   \stackrel{\text{Burnside}}{=}
   \frac{1}{|\A_C|}\sum_{\phi\in\A_\C}|L_\C^\phi|
   \geq \frac{|L_\C|}{|\A_C|}
\end{align}%
where $L_\C^\phi$ denotes the set of elements in $L_\C$ that are invariant
under $\phi$ and we used that $\id\in\A_\C$ and $L_\C^\id=L_\C$. This implies
in particular for a fully-symmetric structure:
\begin{align}
   1\stackrel{!}{=}\left|\bigslant{L_\C}{\A_\C}\right|
   \quad\Rightarrow\quad
   |\A_C| \geq |L_\C|\,.
\end{align}

\section{No-Go theorem for fully-symmetric tessellated blockade structures}
\label{app:proof_2}

Here we provide details on the proof of \cref{prop:2}.

\subsection{Basic results from graph theory}
\label{app:graphs}

In this section we introduce some terminology and nomenclature from graph
theory, that we will need in the following sections, for reference, see,
e.g., Ref.~\cite{Diestel2025}.  In the following we only consider finite simple
graphs $G = (V, E)$, with vertex set $V$ and edge set $E$. We denote edges as
$\{u,v\} \in E$ for vertices $u,v \in V$. If the edge set is clear then we also
write $u \sim v$ iff $\{u,v\} \in E$, in this case we say $u$ and $v$ are
adjacent. The \emph{degree} of a vertex $v$, denoted $\deg(v)$, is defined as the number of
vertices adjacent to $v$. For any subset $W \subseteq
V$, its \emph{induced subgraph} is defined as the graph with vertex set $W$ and
edge set $E_W := \{\{u,v\} \in E\, |\, u,v \in W\}$.

A \emph{path} in $G$ is an ordered list of vertices $\gamma = (v_0,\ldots,v_n)$,
where $v_0,\ldots, v_n \in V$ and $\{v_i,v_{j}\} \in E$ if $j = i \pm
1$. We denote the length of this list as $|\gamma|$. The \emph{graph metric}
$\d_G(u,v)$ on $G$ is defined as the length of the shortest path between the
vertices $u$ and $v$ in $G$ minus one, i.e,
\begin{align}
    \d_G(u,v) := \min\{|\gamma|-1\, |\, \text{paths } \gamma \text{ from } u \text { to } v\}.
\end{align}
The graph metric allows one to define the $k$-\emph{neighborhood} of a vertex $v
\in V$ as
\begin{align}
    B_{k}(v) = \{v' \in V\, |\, \d_G(v,v') \leq k\}
\end{align}
and the $k$-\emph{sphere} as
\begin{align}
    \del B_{k}(v) = \{v' \in V\, |\, \d_G(v,v') = k\}.
\end{align}
A graph $G$ is called \emph{connected} if there is a path between any two
vertices in $V$.  A maximal subset of $V$ that is connected is called a
\emph{connected component}. A vertex $v \in V$ is called a \emph{cut-vertex} if
the induced subgraph of $V\setminus\{v\}$ has more connected components than
$G$.

The graph metric is invariant under automorphisms of $G$, i.e,
\begin{align}
    \dg_G(v,w) = \dg_G(\phi(v),\phi(w)),
\end{align}
for $u,v \in V$ and $\phi \in \aut{G}$. Thus, the $k$-neighborhood and the
$k$-sphere are also compatible with automorphisms in the sense that
\begin{align}
    \phi(B_{k}(v)) = B_{k}(\phi(v)).
\end{align}
and
\begin{align}
    \label{eq:spere_inv}
    \phi(\del B_{k}(v)) = \del B_{k}(\phi(v)).
\end{align}
As a special case of \cref{eq:spere_inv}, the degree of a vertex is invariant
under automorphisms, i.e., $\deg(v) = \deg(\phi(v))$ for $v \in V$.

Furthermore, let $\gamma$ be a path from $u$ to $v$. Then $\phi(\gamma)$ is a
path from $\phi(u)$ to $\phi(v)$. This implies that if $S$ is a connected
induced subgraph of $G$ then $\phi(S)$ is also a connected induced subgraph of
$G$. Thus, if $v$ is a cut-vertex then $\phi(v)$ must also be a cut-vertex.

\subsection{Proof of the No-Go theorem}

The setup of our No-Go theorem is the following. The building block of the
tessellated graph $G$ is an unknown \emph{site-structure} that is connected to
four \emph{link-vertices}.  These link-vertices can be connected to an
arbitrary number of vertices in the site-structures. For the sake of
comprehensibility in the illustrations, we indicate these multiple edges by a
single (thick) line; however, it is important to keep in mind that a link-vertex
can have multiple connections into the site-structure.
\begin{center}
    \includegraphics[width=0.6\linewidth]{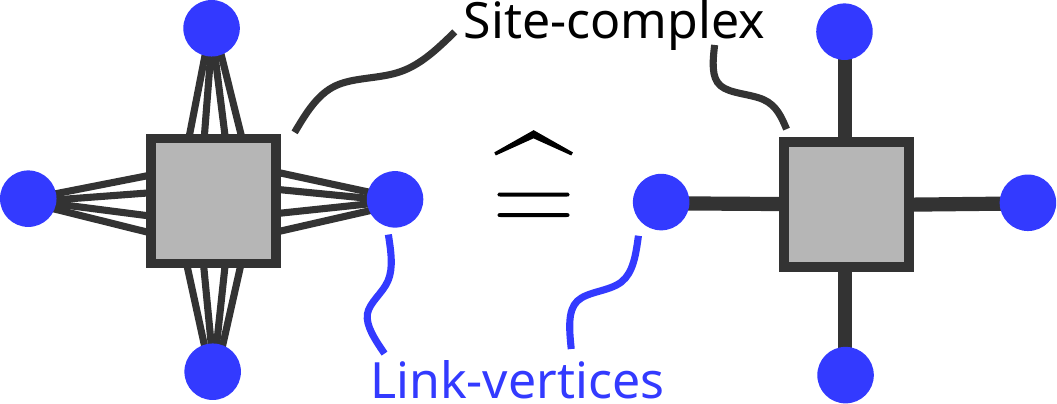}
\end{center}
We refer to this graph as $G_\text{sing}$; furthermore, we assume that it is
connected.  We refer to the graph metric on $G_\text{sing}$ as
$\d_\text{sing}(\cdot,\cdot)$. The illustration may suggest that the problem
is highly symmetric. However, we do \emph{not} assume any symmetries
(rotation or reflection) of the site-structure, and thus the illustrations have to be
treated carefully.

The tessellated graph is then built by copying and translating $G_\text{sing}$
and identifying the overlapping link-vertices. \\
\begin{center}
    \includegraphics[width=0.6\linewidth]{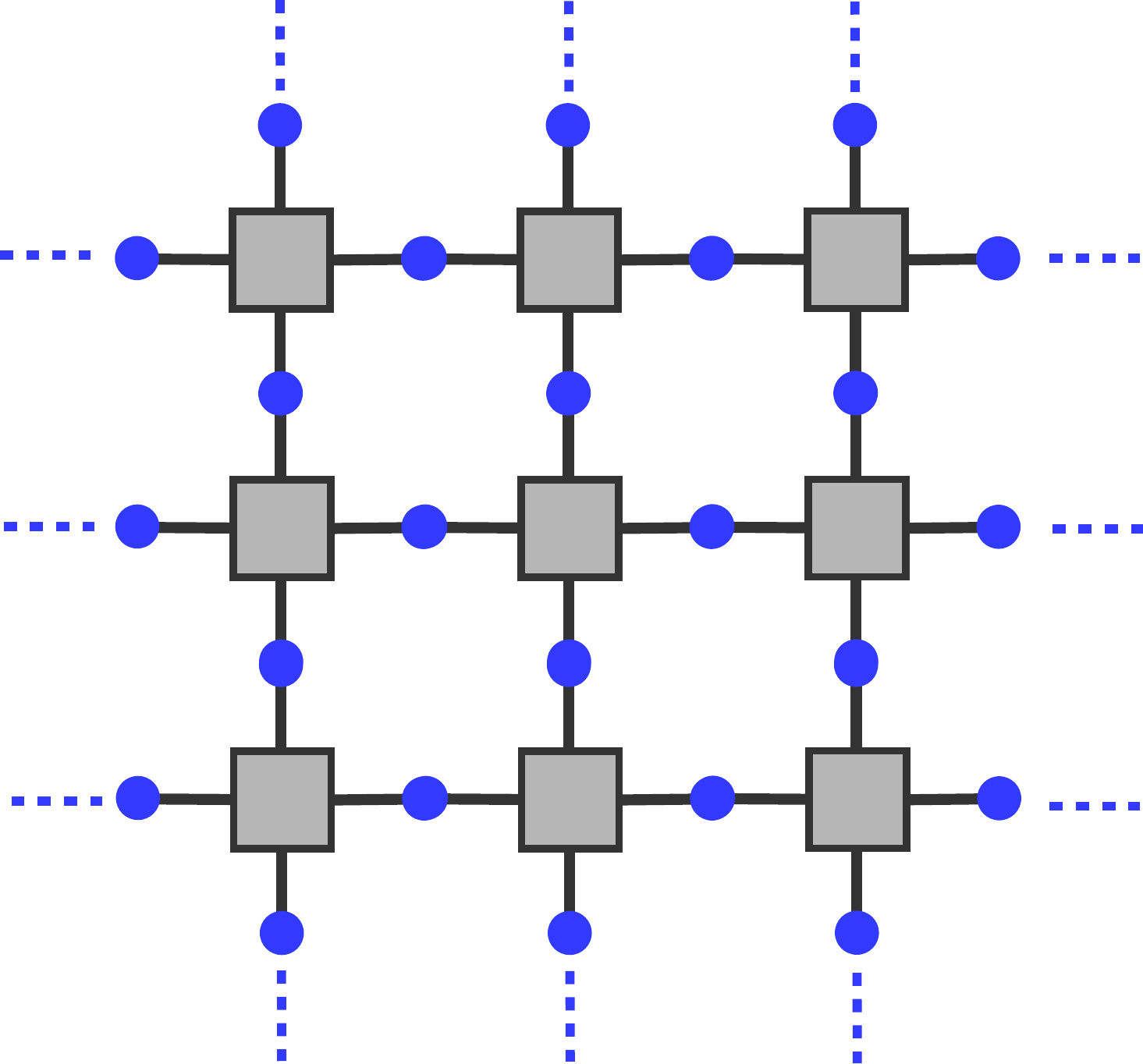}
\end{center}
We call a graph $G$ that can be build in this way a \emph{single-link
tessellated graph}.  We also consider periodic boundary conditions; here the
link-vertices on opposing boundaries of the graph are identified.  The main
result of this section is the proof of the following theorem, which essentially
states that a single-link tessellated graph does not admit nontrivial local
automorphisms.

\begin{theorem}
    \label{thm:main_res}
    
    Let $G$ be a single-link tessellated graph together with a weight function
    $\Delta$ such that the link-vertices are interpreted as ports in the sense
    of \cref{sec:review}.  Let $\phi \in \aut{G}$ such that any vertex outside
    of a rectangular region is fixed, e.g.,
    \begin{center}
        \includegraphics[width=0.5\linewidth]{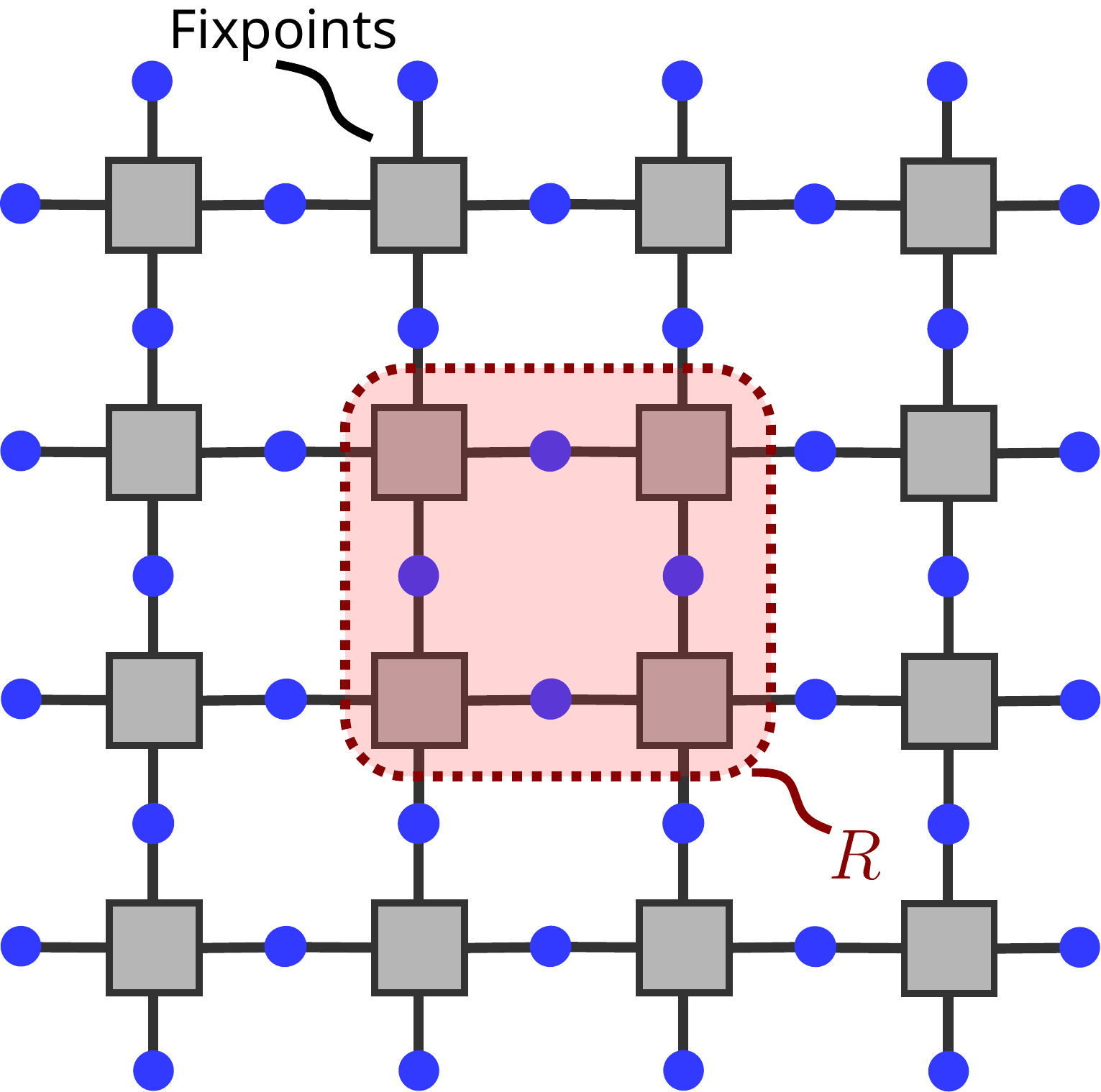}
    \end{center}
    Then $\phi$ acts trivially on the maximum-weight independent sets of $G$.

\end{theorem}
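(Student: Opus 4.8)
The plan is to prove the sharper statement that $\phi$ fixes every link-vertex pointwise, and then to read off the claim about maximum-weight independent sets (MWIS) with essentially no extra work. For the latter reduction, recall from \cref{sec:review} that the link-vertices, being the ports of $G$, index the MWIS bijectively: two MWIS with the same intersection with the port set coincide, since the remaining ``ancilla'' vertices contribute no further degrees of freedom to the ground-state manifold. Any vertex-weighted graph automorphism preserves independence and total weight, hence maps MWIS to MWIS; and if $\phi$ fixes every port $v$ pointwise then, for a MWIS $I$, one has $v\in\phi(I)\Leftrightarrow\phi^{-1}(v)=v\in I$ for every port $v$, so $\phi(I)$ and $I$ agree on all ports and therefore $\phi(I)=I$. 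Thus the entire content of the theorem is the claim that $\phi$ fixes all link-vertices pointwise. (Note this does not force $\phi$ to be the identity map: harmless permutations of ancillas inside individual site-structures are allowed, and do not affect MWIS.)

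The heart of the argument is structural. First one shows that $\phi$ respects the decomposition of $G$ into site-structures and link-vertices, i.e.\ $\phi$ maps the set $L$ of link-vertices onto itself, and hence permutes the connected components of $G-L$, which are precisely the interiors of the site-structures. This is exactly where the single-link hypothesis is used: a link-vertex is a local ``bottleneck'' joining two site-structures, and this is detectable by an automorphism-invariant local feature of the vertex (for instance via its cut-vertex behaviour in suitable induced subgraphs, or via the isomorphism type of a bounded neighborhood). Granting $\phi(L)=L$, pass to the ``contracted'' graph $\bar G$ whose vertices are the site-structures together with the link-vertices (each link-vertex joined to the site-structures of which it is a port); $\phi$ induces an automorphism $\bar\phi$ of $\bar G$, which is nothing but the underlying regular lattice with its edges subdivided. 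Since $\phi$ fixes every vertex outside the rectangle $R$, the induced $\bar\phi$ fixes an entire ``frame'' of $\bar G$; but a subdivided regular lattice has a rigid automorphism group (generated by translations and point-group symmetries), so any automorphism fixing such a frame is the identity. Hence $\phi$ fixes every site-structure setwise, and -- since in a single-link tessellation two adjacent site-structures share exactly one link-vertex -- it fixes every link-vertex pointwise: each $v$ is the unique link-vertex adjacent to the two (now setwise-fixed) site-structures of which it is a port.

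Equivalently, one can argue hands-on by peeling inward from the fixed frame: if every interior vertex of a site-structure $H$ is fixed, then each port $v$ of $H$ is fixed, because $\phi(v)$ must be adjacent to the fixed port-connection set $P\subseteq V(H)$ of $v$; $\phi(v)$ cannot lie in $H$ (injectivity, since $H$ is fixed pointwise and $v\notin H$); and the only vertex outside $H$ adjacent to all of $P$ is $v$ itself, using that distinct ports of $H$ have distinct connection sets of equal size. One then uses the fixed link-vertices around a deeper site-structure to fix that structure setwise, and iterates. The main obstacle is precisely the bookkeeping behind these two moves: (a) proving that an automorphism cannot interchange interior vertices of site-structures with link-vertices -- a genuine structural statement about single-link tessellations, not a mere counting argument; and (b) verifying that the induction really reaches every link-vertex inside $R$, correctly handling boundary site-structures whose interiors need not be fixed, corner cases where distinct ports of a site-structure happen to share the same connection set, and -- for periodic boundary conditions -- the lack of an ``outside'' region (there the frame must instead be taken as a separating annulus of fixed vertices). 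All of these rest on the single-link topology, and none of the unavoidable internal symmetries matter for the conclusion, since only port values enter the maximum-weight independent sets.
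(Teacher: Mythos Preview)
Your reduction---show that $\phi$ fixes every link-vertex pointwise, then conclude $\phi(I)=I$ for every MWIS $I$ because ports index MWIS uniquely---is exactly the paper's strategy, and your last paragraph correctly identifies where the real work lies. But the two routes you offer to fix the link-vertices both have genuine gaps.

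Your primary route hinges on the claim $\phi(L)=L$, argued via ``an automorphism-invariant local feature'' such as cut-vertex behaviour or the isomorphism type of a bounded neighborhood. This is precisely the step the paper does \emph{not} take, and for good reason: the site-structure is completely arbitrary, so there is no a~priori local invariant that distinguishes link-vertices from interior vertices. A link-vertex is not a cut-vertex of $G$ (you can go around through the lattice), and an adversarial site-structure can contain interior vertices whose bounded neighborhoods are isomorphic to those of link-vertices. You concede this is ``the main obstacle'' and ``a genuine structural statement, not a mere counting argument,'' but you never supply the argument. Your second route (peeling inward) starts from ``if every interior vertex of a site-structure $H$ is fixed, then each port of $H$ is fixed''---but no site-structure inside $R$ has its interior fixed, so the induction never begins. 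What is actually needed is the reverse implication: from a few fixed ports of $H$, deduce that the remaining ports are fixed. You also invoke ``distinct ports of $H$ have distinct connection sets of equal size,'' an extra hypothesis absent from the theorem.

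The paper's proof circumvents the question of whether $\phi(L)=L$ entirely. It works outward-in using the graph \emph{metric} from the fixed frame: a delicate analysis of shortest paths in the tessellation (\cref{lm:greater_distance}) shows that for two fixed frame link-vertices $f_1,f_2$ on adjacent site-structures, the intersection $B_{d_1}(f_1)\cap B_{d_2}(f_2)$ is the singleton containing the link-vertex $a$ between them, forcing $\phi(a)=a$ (\cref{lm:fixpoints_1}). This gives each boundary site-structure three fixed ports; a cut-vertex argument in the graph with those three ports deleted (\cref{thm:three_ports}) then fixes the fourth and shows the site-structure is setwise fixed. A factorization lemma (\cref{lm:factor}) peels off this outer layer, and one iterates. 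The metric argument is where the single-link hypothesis actually does its work, and it is this piece that your sketch is missing.
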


As a first step, we argue that we only have to consider the rectangular region
and the link-vertices that are directly connected to it.

\begin{lemma}
    \label{lm:reduction_to_R}

    Let $\widetilde{V} \subseteq V$ be the vertex set that consists of the
    vertices in the rectangular region and the link-vertices connected to them.
    Let $\widetilde{G}$ be the induced subgraph of $\widetilde{V}$. If all
    vertices outside of the rectangular region are fixed by an automorphism
    $\phi \in \aut{G}$ then we can restrict it to an automorphism on
    $\widetilde{G}$, for which we use the same symbol. This automorphism fixes
    all link-vertices on the border of $\widetilde{G}$.

\end{lemma}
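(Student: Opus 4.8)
The plan is to establish the claim in three short moves: \emph{(i)} that $\phi$ stabilizes $\widetilde V$ setwise, so that $\phi|_{\widetilde V}$ is at least a well-defined bijection of $\widetilde V$; \emph{(ii)} that the restriction of a graph automorphism to an induced subgraph it stabilizes is again an automorphism; and \emph{(iii)} that the border link-vertices of $\widetilde G$ are among the vertices already fixed by $\phi$.

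For \emph{(i)} I would first note that the rectangular region is contained in $\widetilde V$ by construction, so the hypothesis that every vertex outside the rectangular region is fixed implies in particular that $\phi$ fixes every vertex of $V\setminus\widetilde V$. Now suppose $v\in\widetilde V$ but $\phi(v)\notin\widetilde V$. Then $\phi(v)$ is itself a fixed point, i.e.\ $\phi(\phi(v))=\phi(v)$, and injectivity of $\phi$ gives $v=\phi(v)\notin\widetilde V$, contradicting $v\in\widetilde V$. Hence $\phi(\widetilde V)\subseteq\widetilde V$, and since $\widetilde V$ is finite and $\phi$ injective, $\phi|_{\widetilde V}$ is a bijection of $\widetilde V$.

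For \emph{(ii)} I would use that $\widetilde G$ is by definition the induced subgraph on $\widetilde V$: for $u,v\in\widetilde V$ one has $\{u,v\}\in E(\widetilde G)$ iff $\{u,v\}\in E(G)$ iff $\{\phi(u),\phi(v)\}\in E(G)$ iff $\{\phi(u),\phi(v)\}\in E(\widetilde G)$, where the last equivalence uses $\phi(u),\phi(v)\in\widetilde V$ together with inducedness; since moreover $\Delta_{\phi(v)}=\Delta_v$, the map $\phi|_{\widetilde V}$ lies in $\aut{\widetilde G}$. For \emph{(iii)}, the border link-vertices of $\widetilde G$ are precisely the ports shared with the site-structures that get truncated in passing from $G$ to $\widetilde G$; under the convention that the rectangular region is a block of complete site-structures, these ports lie outside the rectangular region and are therefore fixed by $\phi$ by hypothesis.

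The argument is essentially bookkeeping; the one place that needs care is move \emph{(i)} — the setwise invariance $\phi(\widetilde V)=\widetilde V$ — where the combination of injectivity of $\phi$ and "everything outside $\widetilde V$ is a fixed point" does the work, and move \emph{(iii)}, where one must make sure the shared link-vertices on the boundary of $\widetilde G$ genuinely count as lying outside the rectangular region so that the theorem's hypothesis applies to them. Everything else (that restriction respects edges, non-edges and weights) is immediate from $\phi$ being an automorphism of $G$ and $\widetilde G$ being an induced subgraph.
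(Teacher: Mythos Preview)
Your proof is correct. The paper actually states this lemma without proof, treating it as an elementary bookkeeping fact, so there is no approach to compare against; your three moves \emph{(i)}--\emph{(iii)} supply exactly the verification the paper omits, and your caveat in \emph{(iii)} about the boundary link-vertices lying outside the rectangular region matches the paper's own phrasing (``the vertices in the rectangular region \emph{and} the link-vertices connected to them'').
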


With this in mind, the idea for the proof of \cref{thm:main_res} is
to show that the topology 
of the graph implies that the link-vertices in $\widetilde{G}$ that are 
closest to the boundary are also fixed. To this end, we use the invariance 
of the graph metric under graph automorphisms. However, the relation between 
the graph metrics $d_\text{sing}$ and $d_{\widetilde{G}}$ are a priori 
nontrivial. So the first step is to establish a connection.

The graph metric $d_\text{sing}$ is the length of the shortest path between two
link-vertices, sharing the same site-structure $U$ that is contained in this
site-structure. Thus, for a fixed sequence of link-vertices $(v_1,\ldots,v_n)$ for
$n \in \mathbb{N}$, the length of the shortest path in $\widetilde{G}$ that
crosses between two site-structures in exactly the vertices
$v_1,\ldots,v_n$ is given by
\begin{align}
    \label{eq:len_linkvert}
    L((v_1,\ldots,v_n)) := \sum_{i = 1}^{n-1} d_\text{sing}(v_{i},v_{i+1}).
\end{align}
This precise phrasing is actually crucial here, as the path may contain other
link-vertices than $v_1,\ldots,v_n$, if the path returns to the same
site-structure it originated from. This case cannot be excluded; for example in
the site-structure
\begin{center}
    \includegraphics[width=0.3\linewidth]{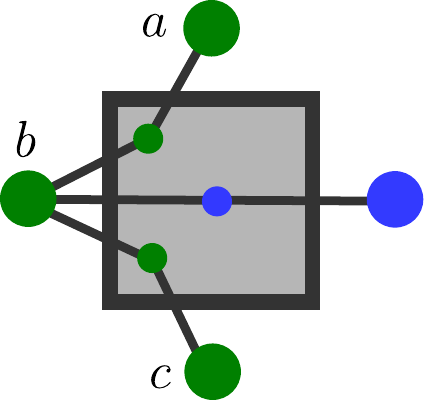}
\end{center}
the shortest path between the link-vertices $a$ and $c$ contains the
link-vertex $b$. However, this is fully incorporated into the graph metric
$d_\text{sing}$, so we do not have to keep track of link-vertices where the
path does not cross to an adjacent site-structures. This helps a lot with the notation.
This makes the notation much more convenient.

For the expression~\eqref{eq:len_linkvert} to be well-defined we can set
$d_\text{sing}(v_i,v_{i+1}) = \infty$, if $v_i$ and $v_{i+1}$ do not share a
site-structure. Consequently, for link-vertices $v,w$,
$d_{\widetilde{G}}(v,w)$ is given by the minimum of $L((v,v_1,\ldots,v_n,w))$
over all sequences of link-vertices $(v,v_1,\ldots,v_n,w)$.

We can exclude certain sequences that cannot be minimal. If $v_{i-1},v_i$ and
$v_{i+1}$ are all part of one graph $G_\text{sing}$ then, by the triangle
inequality, we find that
\begin{align}
    \d_\text{sing}(v_{i-1},v_i) + \d_\text{sing}(v_i,v_{i+1}) \geq \d_\text{sing}(v_{i-1},v_{i+1}).
\end{align}
Thus, to find the path of minimal length, it is sufficient to consider paths
with the property that three successive vertices cannot belong to the same
graph $G_\text{sing}$. Furthermore, only paths where any two successive
link-vertices are part of one graph $G_\text{sing}$ have finite length. We
refer to such paths as \emph{candidate-paths}.

To formalize arguments about the path of minimal length, we introduce a
labeling of the link-vertices by tuples $(n_x,n_y) \in \Z^2$, where $n_x$ and
$n_y$ have different parities.
\begin{center}
    \includegraphics[width=0.4\linewidth]{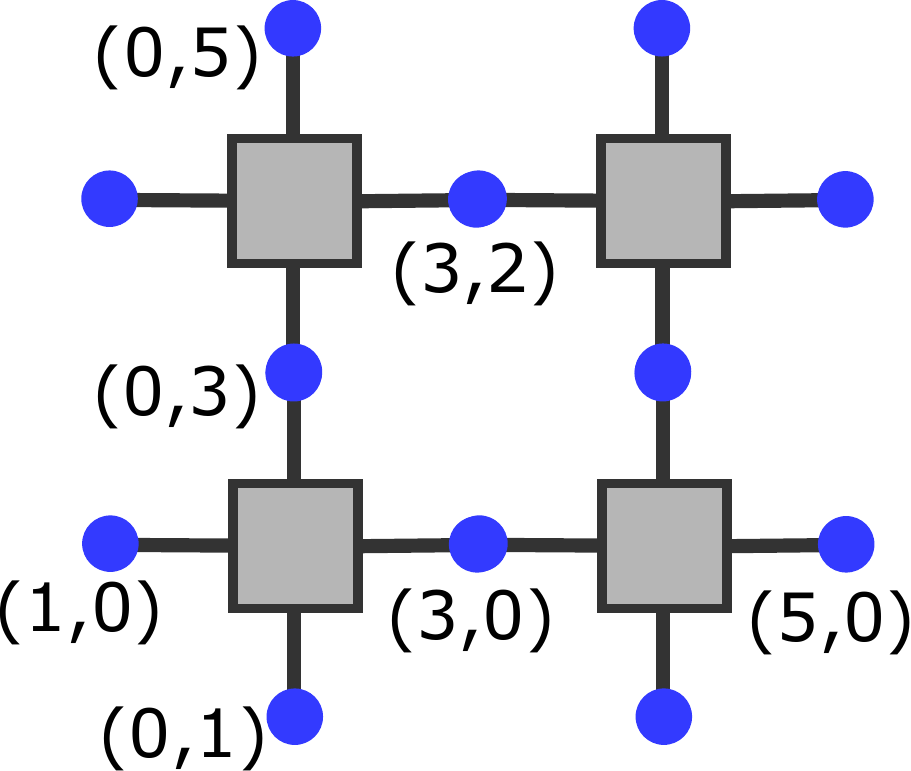}
\end{center}
A sequence of link-vertices is now represented as a sequence of tuples
$((n_x^{(i)},n_y^{(i)}))_{i = 1}^n$. We denote the difference of two successive
tuples as $(\delta_x^{(i)},\delta_y^{(i)}) := (n_x^{(i+1)},n_y^{(i+1)}) -
(n_x^{(i)},n_y^{(i)})$; we refer to such a tuple as a \emph{move}. The sequence
of labels can be reconstructed from the sequence of differences and the start
point via
\begin{align}
    \label{eq:reconstr}
    (n_x^{(n)},n_y^{(n)}) = (n_x^{(1)},n_y^{(1)}) + \sum_{i = 1}^{n-1} (\delta_x^{(i)},\delta_y^{(i)}).
\end{align}
Conversely, to construct a sequence from $(n_x^{(1)},n_y^{(1)})$ to
$(n_x^{(n)},n_y^{(n)})$, we have to find a sequence
$(\delta_x^{(i)},\delta_y^{(i)})$ such that \cref{eq:reconstr} holds.

However not all such sequence can be realized by an actual sequence of
link-vertices.  As discussed above, two successive link-vertices must share a
site-structure directly connected to them.  Thus, the set of moves is restricted
to $(\pm1,\pm1)$ and depending on the parity of the vertex, $(0,\pm2)$ or
$(\pm2,0)$. Also, note that the same move can represent different lengths,
depending on the parity of the start vertex of the move.
\begin{center}
    \includegraphics[width=0.7\linewidth]{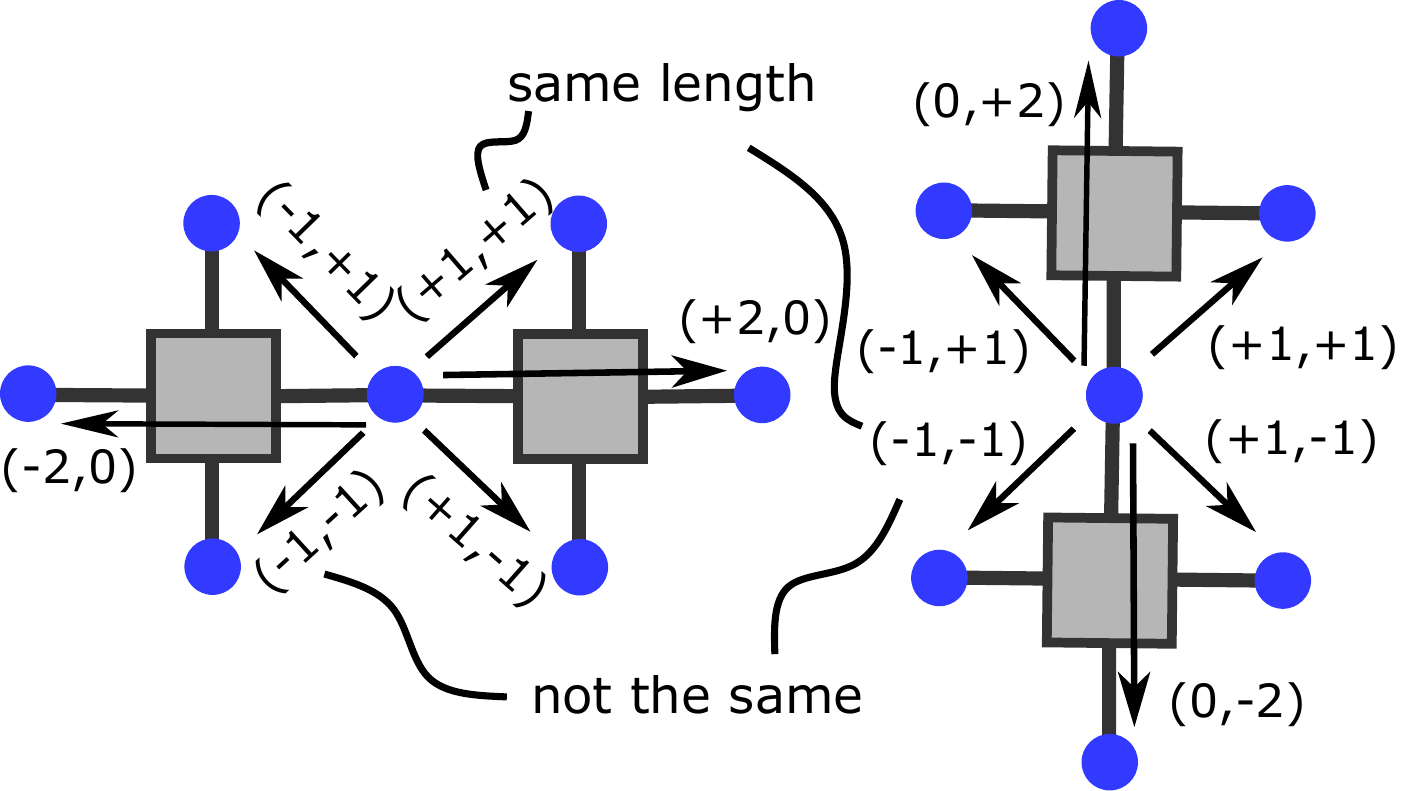}
\end{center}
For candidate-paths, the restrictions translate directly to rules which moves
can come in immediate succession.
\begin{center}
    \begin{tabular}{ c | c | c }
     previous move & parity & possible next move \\ 
     \hline
     $(+1,-1)$ & (even, odd) & $(+1,+1),(+1,-1),(+2,0)$ \\  
     $(+1,-1)$ & (odd, even) & $(+1,-1),(-1,-1),(0,-2)$ \\  
     \hline
     $(+1,+1)$ & (even, odd) & $(+1,+1),(+1,-1),(+2,0)$\\
     $(+1,+1)$ & (odd, even) & $(-1,+1),(+1,+1),(0,+2)$\\
     \hline
     $(-1,+1)$ & (even, odd) & $(-1,+1),(-1,-1),(-2,0)$\\
     $(-1,+1)$ & (odd, even) & $(-1,+1),(+1,+1),(0,+2)$\\
     \hline
     $(-1,-1)$ & (even, odd) & $(-1,+1),(-1,-1),(-2,0)$\\
     $(-1,-1)$ & (odd, even) & $(+1,-1),(-1,-1),(0,-2)$\\
     \hline
     $(+2,0)$ & - & $(+1,+1),(+1,-1),(+2,0)$\\
     $(-2,0)$ & - & $(-1,+1),(-1,-1),(-2,0)$\\
     $(0,+2)$ & - & $(-1,+1),(+1,+1),(0,+2)$\\
     $(0,-2)$ & - & $(+1,-1),(-1,-1),(0,-2)$\\
    \end{tabular}
\end{center}
Suppose we are given a sequence of moves that contains a move, that changes
only one coordinate, e.g., $(0,+2)$. Removing this move from the sequence
creates another valid path, as this type of move does not change the possible
next moves. By \cref{eq:reconstr} the endpoint of this reduced sequences is
$(n_x^{(n)},n_y^{(n)}) - (\delta_x^{(i)},\delta_y^{(i)})$, where
$(n_x^{(n)},n_y^{(n)})$ was the endpoint of the original sequence. As each move
is associated with a strictly positive distance, the length path to the new
endpoint is strictly less than the original path.

Equipped with these tools, we can prove the following lemma.

\begin{figure}[tb]
    \centering
    \includegraphics[width=0.8\linewidth]{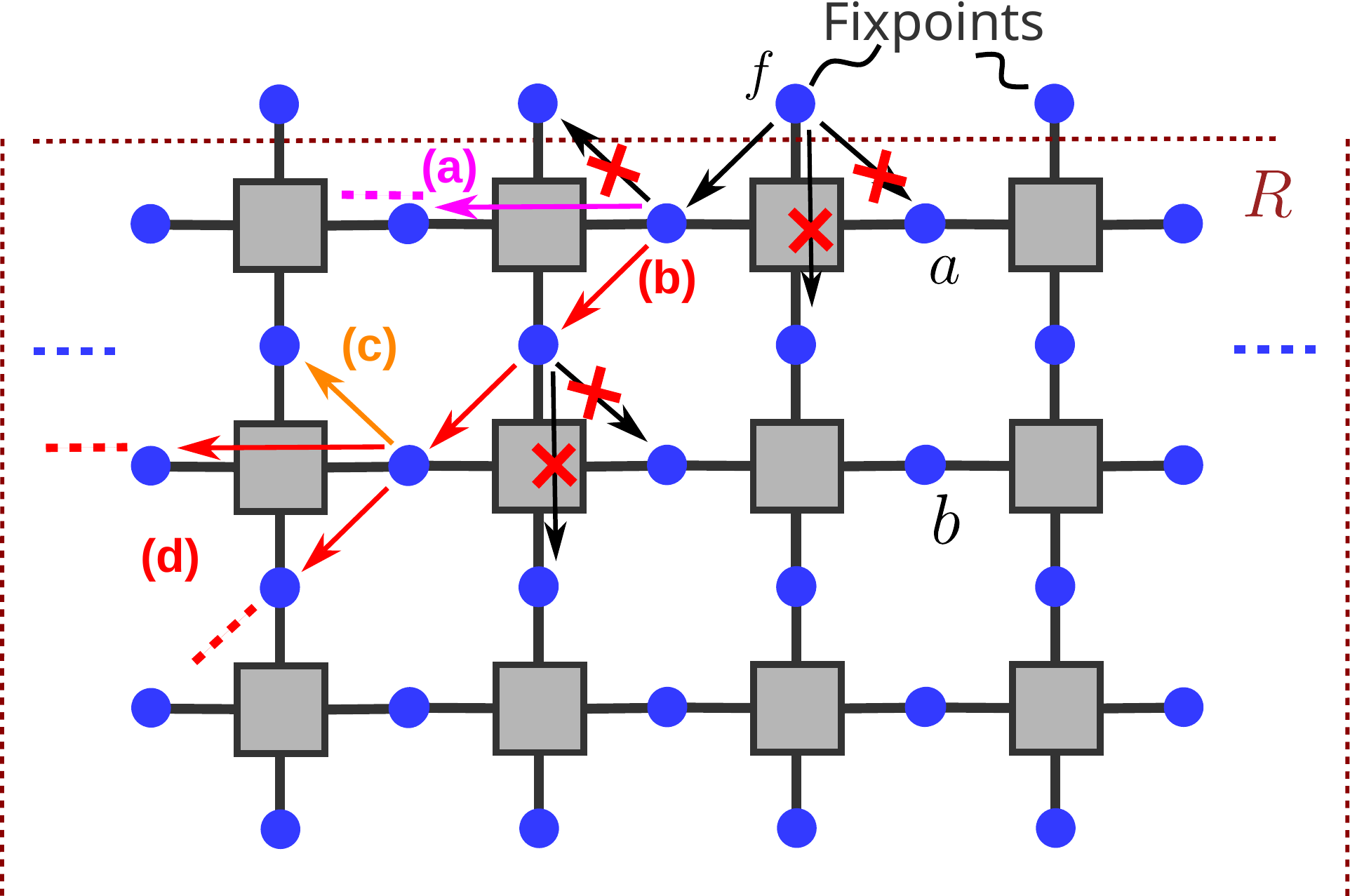}
    \caption{%
    \emph{Visualizations for \cref{lm:greater_distance}.} 
    Shown is the positioning of the vertices used in
    \cref{lm:greater_distance}. The arrows indicate possible candidate-paths
    from $f$ to $b$, we search for a path that is shorter than all
    possible paths from $f$ to $a$. Moves that can be immediately excluded are
    crossed out. 
    (a)~A path containing only moves $(-2,0)$ cannot reach the vertex $b$. 
    (b)~To reach $b$, the path must contain $(-1,-1)$, the only
    allowed follow-up is $(-1,-1)$.
    (c)~This path would contain three distinct diagonal move and can thus be
    excluded via the triangle inequality.
    (d)~The only remaining path is a descent to the bottom left boundary. Thus,
    this path can never reach $b$.
    }
    \label{fig:lm_greater_diatance_visual}
 \end{figure}

\begin{lemma}
    \label{lm:greater_distance}

    The distances of vertices $f,a,b \in \widetilde{G}$, positioned as shown in
    \cref{fig:lm_greater_diatance_visual}, satisfy $\d_{\widetilde{G}}(f,b) >
    \d_{\widetilde{G}}(f,a)$.

\end{lemma}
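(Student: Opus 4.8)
The plan is to sandwich the two distances: exhibit an explicit short candidate-path from $f$ to $a$ to upper-bound $\d_{\widetilde{G}}(f,a)$, and then show that \emph{every} candidate-path from $f$ to $b$ is strictly longer than this bound. Since, by the reduction established just before the lemma, $\d_{\widetilde{G}}(f,b)$ is precisely the minimum of $L((f,v_1,\ldots,v_n,b))$ over such candidate-paths [recall \eqref{eq:len_linkvert}], this yields $\d_{\widetilde{G}}(f,b) > \d_{\widetilde{G}}(f,a)$. For the upper bound I would read off from \cref{fig:lm_greater_diatance_visual} a minimal candidate-path $\gamma_a$ realizing $\d_{\widetilde{G}}(f,a)$ and record its length as a sum of a few intra-site distances $\d_\text{sing}$. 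Throughout, I would use the observation from the surrounding text that deleting an axial move (of type $(0,\pm 2)$ or $(\pm 2,0)$) from a candidate-path produces a strictly shorter valid path to a shifted endpoint; hence a minimal candidate-path to $b$ contains no ``wasted'' axial move, and it suffices to analyse the diagonal-move content of such paths.

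The core is the finite case analysis depicted in the four panels of \cref{fig:lm_greater_diatance_visual}, organised by the moves a candidate-path $\gamma_b$ from $f$ to $b$ may use. First, a candidate-path using only axial moves stays on a single lattice line through $f$ (each axial move fixes one label coordinate, and by the move-succession rules two axial moves cannot be consecutive unless of the same type), so by the reconstruction formula \eqref{eq:reconstr} it cannot reach the label of $b$ [panel (a)]. Hence $\gamma_b$ must contain a diagonal move; the parities of the link-vertices force the relevant first diagonal to be of one specific type [$(-1,-1)$ in the displayed orientation, panel (b)], and the table of admissible successive moves then shows that the only follow-up consistent with being a candidate-path is again the same diagonal, so $\gamma_b$ is pushed into a monotone diagonal descent. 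If instead $\gamma_b$ mixed in a second, distinct diagonal type, it would place three pairwise distinct diagonal moves among link-vertices sharing a single site-structure, and the intra-site triangle inequality then shows $\gamma_b$ is not minimal [panel (c)], so such paths may be discarded. Finally, the forced monotone descent of panel (b) carries $\gamma_b$ down to the lower-left boundary of the bounded rectangular patch $\widetilde{G}$ before its label can ever coincide with that of $b$; since $b$ lies strictly inside $\widetilde{G}$, any such path must detour and hence has length strictly greater than $L(\gamma_a)$ [panel (d)]. Combining the cases, $L(\gamma_b) > L(\gamma_a) \ge \d_{\widetilde{G}}(f,a)$ for every candidate-path from $f$ to $b$, which is the claim.

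I expect panel (d) to be the main obstacle: turning ``the path is forced toward and then exits the boundary of $\widetilde{G}$'' into a length inequality that holds \emph{uniformly} over all admissible --- and a priori unknown --- site-structure metrics $\d_\text{sing}$. The only structural facts available are that each move has strictly positive length, that a move and its reverse have equal length, that the length of a move depends on the parity of its starting vertex, and the triangle inequalities among the intra-site link-vertex distances; the argument must make do with these, together with the precise parity-dependent move-succession rules and the finiteness of the rectangular patch. The bookkeeping needed to guarantee that no exotic candidate-path slips past the four panels of \cref{fig:lm_greater_diatance_visual} is where the real care lies; the remaining steps are routine applications of the triangle inequality and of the axial-move-deletion observation.
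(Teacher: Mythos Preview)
Your overall plan matches the paper's, but two of the mechanisms are misidentified, and one of them is exactly the step you flag as the main obstacle.

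First, the axial-deletion observation is sharper than you use it. Deleting a $(0,-2)$ from a candidate-path $\gamma_b$ to $b$ does not produce a shorter path to $b$ (so it says nothing about which paths to $b$ are minimal); it produces a path to $a$, because $a$ and $b$ differ by exactly $(0,-2)$. Hence any $\gamma_b$ containing $(0,-2)$ is finished immediately --- you have exhibited a strictly shorter path to $a$. Similarly, any $\gamma_b$ containing the move $(+1,-1)$ is finished by comparison with the single-move path $f\to a$. It is these two exclusions, not parity, that force the first move in panel~(b) to be $(-1,-1)$: from $f$ all three of $(-1,-1)$, $(+1,-1)$, $(0,-2)$ are a priori available.

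Second, panel~(d) is not a length estimate at all. After excluding $(0,-2)$, $(+1,-1)$, and --- via the triangle inequality of panel~(c) --- also $(-1,+1)$, the only moves the succession table ever offers along the trace are $(-1,-1)$ and $(-2,0)$. Both strictly decrease the $x$-coordinate, while $b$ sits at $x$-offset $+1$ from $f$; so no such sequence reaches $b$. There is no detour to bound and nothing to make uniform over the unknown $\d_\text{sing}$: these paths are simply not paths from $f$ to $b$ and do not enter the minimum defining $\d_{\widetilde{G}}(f,b)$. The argument therefore closes as follows: every candidate-path to $b$ contains $(0,-2)$, or contains $(+1,-1)$, or (after sufficiently many $(-1,-1)$ moves) uses $(-1,+1)$ --- and in each of these three cases one exhibits a strictly shorter path to $a$.
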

This may look trivial, but the possible asymmetry of the site-structures makes
the proof rather technical. However, the gist is quiet simple to summarize: To
go to the right one either has to go to the right or three times to the left.
Both paths make the path from $f$ to $b$ longer than the path from $f$ to $a$.
\begin{proof}

    We show that, for any path from $f$ to $b$, there is a path from $f$ to $a$
    with strictly smaller length.

    For a path from $f$ to $a$, the sum of the moves must equal $(+1,-1)$; for a
    path from $f$ to $b$, the sum of the moves must equal $(+1,-3)$ [cf.~\cref{eq:reconstr}].
    The difference between the vertices $a$ and $b$ is
    exactly $(0,-2)$. Consequently, if a path from $f$ to $b$ contains the move
    $(0,-2)$, it can be removed, which creates a path from $f$ to $a$ with
    strictly smaller length (as discussed above).

    In a similar fashion, there is a path from $f$ to $a$ consisting of this
    singular move $(+1,-1)$. Thus, if a path from $f$ to $b$ contains this move
    then there is a path from $f$ to $a$ with strictly smaller length [i.e.,
    the path consisting of only the move $(+1,-1)$].
    
    Thus if there is a path from $f$ to $b$ that is shorter than every path
    from $f$ to $a$, it cannot contain the moves $(0,-2)$ and $(+1,-1)$. So in
    the following we only have to consider paths that do not contain these
    moves.
    
    For any candidate-path from $f$ to $b$, there are three possible moves from
    the start point, i.e, $(-1,-1)$, $(+1,-1)$, and $(0,-2)$. As we already
    excluded the latter two moves, the only possibility is the move $(-1,-1)$
    (cf.~\cref{fig:lm_greater_diatance_visual}). The vertex $f$ is in a
    (even,odd) position; thus, as the first move is $(-1,-1)$, the next vertex
    in the candidate-path is in an (odd, even) position. Thus, the next possible
    moves are $(-1,+1),(-1,-1),(-2,0)$.
    
    The move $(-1,+1)$ leads to a link-vertex, that is only connected to one
    site-structure; hence there are no possible further moves (as we are interested in
    candidate-paths, cf.~\cref{fig:lm_greater_diatance_visual}).
     
    The move $(-2,0)$ does not change the next possible moves. Moreover,
    repeating the move $(-2,0)$ over and over, only decreases the
    $x$-component of the label and the path eventually hits the boundary on the
    left [cf.~\cref{fig:lm_greater_diatance_visual}~(a)]. Thus, to reach the
    vertex $b$, the path must eventually include the move $(-1,-1)$ 
    [cf.~\cref{fig:lm_greater_diatance_visual}~(b)]. 

    After the move $(-1,-1)$ the path is in an (even, odd) position, as was the
    vertex $f$. As discussed before, we only have to further examine paths
    where the next move is $(-1,-1)$. Then the new endpoint is in an (odd,
    even) position and the next possible moves are $(-1,+1),(-1,-1),(-2,0)$.

    Now the move $(-1,+1)$ is available; however, then the path contains three
    different diagonal moves, i.e., $(-1,-1)$ starting from an (even, odd) and
    an (odd, even) position and $(-1,+1)$. Thus, by the triangle inequality,
    this path has length strictly larger than the length associated with
    $(+1, -1)$, a path from $f$ to $a$ [cf.~\cref{fig:lm_greater_diatance_visual}~(c)].
    \begin{center}
        \includegraphics[width=0.8\linewidth]{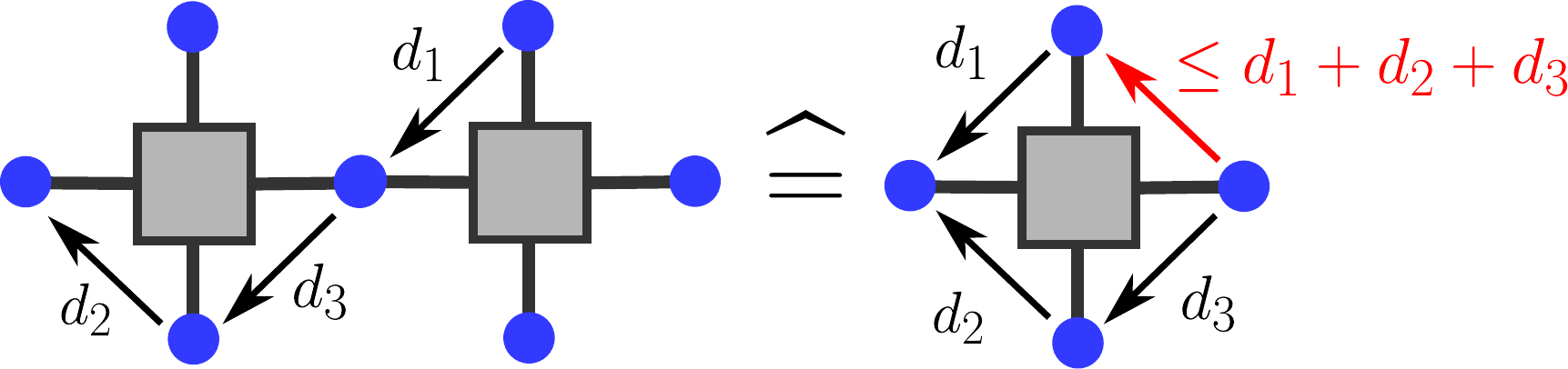}
    \end{center}
    However, excluding the move $(-1, +1)$, the path can only contain the moves
    $(-1,-1)$ and $(0,-2)$ (if it is available); hence, the $x$-component is
    only ever decreased along the path. Hence, the path cannot reach $b$ 
    [cf.~\cref{fig:lm_greater_diatance_visual}~(d)].

\end{proof}

We remark that this proof can be easily extended to all vertices in
$\widetilde{G}$, such that their labels have a difference of $(0,-2k)$, $k \in
\mathbb{N}$, from $a$. Furthermore, \cref{lm:greater_distance} also holds in the
mirrored or rotated situation, as the argument is symmetric under rotations
and reflections.

We proceed to prove the first main ingredient for \cref{thm:main_res}. 

\begin{figure}[tb]
    \centering
    \includegraphics[width=0.7\linewidth]{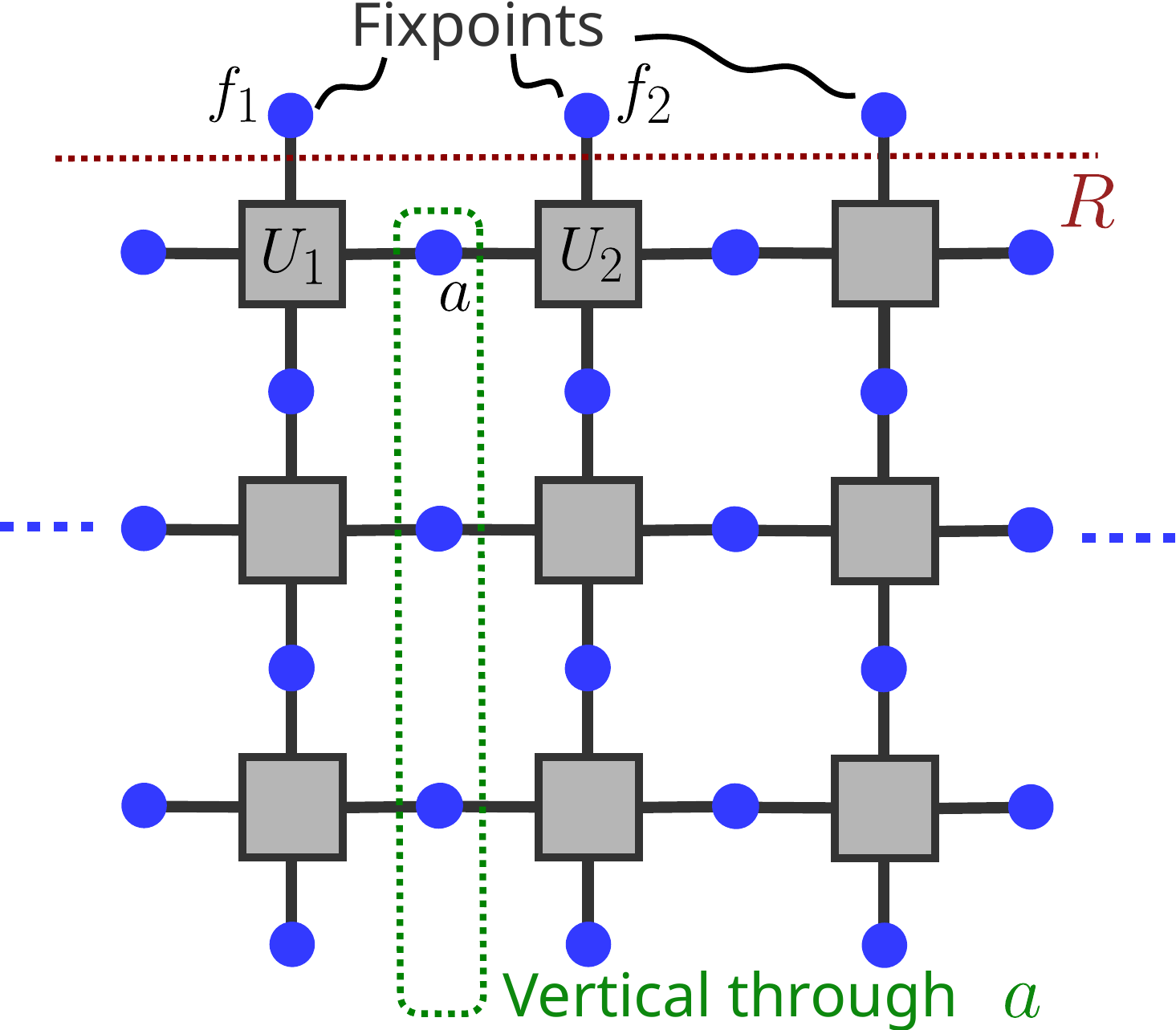}
    \caption{%
    \emph{Visualizations for \cref{lm:fixpoints_1}.} 
    Shown is the positioning of the vertices used in \cref{lm:fixpoints_1}.  In
    addition, the vertical of the link-vertices through $a$ is shown. For $d_1 :=
    d_{\widetilde{G}}(f_1,a)$, the set $B_{d_1}(f_1)$ cannot contain vertices
    right of this vertical and, analogously, for $d_2 := d_{\widetilde{G}}(f_2,a)$,
    the set $B_{d_2}(f_2)$ cannot contain vertices left of this vertical.
    }
    \label{fig:lm_greater_diatance_visual2}
 \end{figure}

\begin{lemma}
    \label{lm:fixpoints_1}
     
    Let $f_1,f_2\in \widetilde{G}$ be two link-vertices at the boundary,
    attached to neighboring site-structures $U_1,U_2$ and $a \in \widetilde{G}$
    the link-vertex between $U_1$ and $U_2$ (cf.~\cref{fig:lm_greater_diatance_visual2}),
    then $\phi(a) = a$.

\end{lemma}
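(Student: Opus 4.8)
The plan is to combine the invariance of the graph metric under automorphisms with the distance estimates of \cref{lm:greater_distance}. By \cref{lm:reduction_to_R} we may work entirely inside $\widetilde{G}$, where the restricted automorphism $\phi$ fixes every link-vertex on the boundary of $\widetilde{G}$; in particular $\phi(f_1)=f_1$ and $\phi(f_2)=f_2$. Since $\phi$ preserves $d_{\widetilde{G}}$, setting $d_i:=d_{\widetilde{G}}(f_i,a)$ gives
\begin{align}
  d_{\widetilde{G}}(f_1,\phi(a))=d_{\widetilde{G}}(\phi(f_1),\phi(a))=d_{\widetilde{G}}(f_1,a)=d_1,
\end{align}
and likewise $d_{\widetilde{G}}(f_2,\phi(a))=d_2$. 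Hence both $a$ and $\phi(a)$ lie on the spheres $\del B_{d_1}(f_1)$ and $\del B_{d_2}(f_2)$, and it suffices to prove that $a$ is the \emph{only} vertex lying on both of them.

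First I would locate the two balls relative to the vertical column $C$ of link-vertices through $a$ drawn in \cref{fig:lm_greater_diatance_visual2}. Applying \cref{lm:greater_distance} (in the rotated/reflected form indicated in the remark following its proof) to $f_1$ and any link-vertex strictly to the right of $C$ shows that such a vertex is strictly farther from $f_1$ than $a$ is; hence $B_{d_1}(f_1)$ contains no vertex to the right of $C$. The mirrored statement for $f_2$ shows $B_{d_2}(f_2)$ contains no vertex to the left of $C$. Therefore $\del B_{d_1}(f_1)\cap\del B_{d_2}(f_2)\subseteq B_{d_1}(f_1)\cap B_{d_2}(f_2)\subseteq C$, i.e.\ $\phi(a)\in C$.

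It then remains to pin $a$ down inside $C$. Here I would use the extension of \cref{lm:greater_distance} mentioned in the remark (vertices whose label differs from $a$'s by $(0,-2k)$, $k\ge 1$) together with its vertical mirror: moving along $C$ away from $a$ in one direction strictly increases the distance to $f_1$, while moving in the other direction strictly increases the distance to $f_2$ (the precise assignment depending on where $f_1,f_2$ sit in \cref{fig:lm_greater_diatance_visual2}). Consequently no vertex of $C$ other than $a$ can lie on both spheres, so $\phi(a)=a$.

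I expect the main obstacle to be the orientation bookkeeping in the last two paragraphs: \cref{lm:greater_distance} is proved for one fixed arrangement of $f$, $a$, $b$, and one has to verify that all the configurations actually invoked here — the column $C$ being vertical, $f_1$ and $f_2$ lying on the boundary on opposite sides of $C$, and vertical displacements of $a$ in both directions — really are obtained from it by the symmetries and the extension noted in the remark. One should also record the (easy) point that passing from $G$ to $\widetilde{G}$ does not shorten the geodesics realizing $d_1$ and $d_2$, because $f_1$ and $f_2$ lie on $\del\widetilde{G}$ and the relevant shortest paths run through the site-structures inside the rectangular region.
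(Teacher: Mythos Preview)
Your overall strategy is the same as the paper's: use metric invariance together with \cref{lm:greater_distance} to show that $a$ is the unique vertex in $B_{d_1}(f_1)\cap B_{d_2}(f_2)$, hence fixed by $\phi$. The third paragraph (pinning $a$ down on the column $C$) is correct and matches the paper.

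The gap is in your second paragraph. You write that \cref{lm:greater_distance}, in rotated/reflected form, applies ``to $f_1$ and any link-vertex strictly to the right of $C$''. It does not. The lemma and its remarked extension compare $d_{\widetilde G}(f,\cdot)$ only for link-vertices \emph{on} the same vertical as $a$ (those at label offset $(0,\pm 2k)$ from $a$); no rotation yields a statement about vertices off $C$ relative to the same $f_1$, because rotating the configuration moves $f_1$ off the top boundary. So from the lemma alone you cannot conclude that $B_{d_1}(f_1)$ avoids the right half-plane, and your inclusion $\partial B_{d_1}(f_1)\cap\partial B_{d_2}(f_2)\subseteq C$ is not yet justified.

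The paper closes this with a one-line separation argument you are missing: the column $C$ of link-vertices disconnects $\widetilde G$ into a left and a right part, so any path from $f_1$ to a vertex $w$ strictly right of $C$ must visit some $l\in C$. By \cref{lm:greater_distance} applied to $l$ (which \emph{is} on $C$), one has $d_{\widetilde G}(f_1,l)\ge d_1$, and the remaining segment to $w$ has length at least one, whence $d_{\widetilde G}(f_1,w)>d_1$. This gives $B_{d_1}(f_1)$ entirely on or left of $C$, with $a$ the only vertex of $C$ inside it; the mirror statement for $f_2$ then yields $B_{d_1}(f_1)\cap B_{d_2}(f_2)=\{a\}$ directly. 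Insert this crossing step and your argument goes through.
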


\begin{proof}
    
    We focus on one spacial orientation, as the argument is the same in all
    cases.

    Let $d_1 := d_{\widetilde{G}}(f_1,a)$; then $B_{d_1}(f_1)$ cannot contain
    vertices right of the vertical through $a$, as shown in
    \cref{fig:lm_greater_diatance_visual2}.  Any path $\gamma$ from $f_1$ to
    the right of the vertical must contain one of the link- vertices on this
    vertical. Thus, $\gamma = (f_1,\ldots,l,\ldots, w)$, where $l$ is a link-
    vertex on the vertical and $w$ is the final vertex. However, by
    \cref{lm:greater_distance},
    \begin{align}
        |(f_1,\ldots,l)|-1 \geq d_{\widetilde{G}}(f_1,l) \geq d_{\widetilde{G}}(f_1,a).
    \end{align}
    As the remaining part of the path contains at least one vertex $|(\ldots,
    w)| \geq 1$, we find that $|\gamma|-1 > d_{\widetilde{G}}(f_1,a)$. As this
    is true for any path from $a$ to $w$, we have $w \notin B_{d_1}(f_1)$. 

    In addition, \cref{lm:greater_distance} shows that the only link-vertex on
    the vertical that can be part of $B_{d_1}(f_1)$ is $a$. 

    As the argument is symmetric, we also know that $B_{d_2}(f_2)$ cannot
    contain vertices left of the vertical through $a$, where $d_2 =
    d_{\widetilde{G}}(f_2,a)$. Furthermore, the only vertex on the vertical
    that can be part of $B_{d_2}(f_2)$ is $a$.

    This shows that $B_{d_1}(f_1) \cap B_{d_2}(f_2)  =\{a\}$. For the image
    under $\phi$, we find that
    \begin{subequations}
        \begin{align}
            \{\phi(a)\} &= \phi(B_{d_1}(f_1)) \cap \phi(B_{d_2}(f_2))\\
            &= B_{d_1}(\phi(f_1)) \cap B_{d_2}(\phi(f_2))\\
            &= B_{d_1}(f_1) \cap B_{d_2}(f_2) = \{a\}.
        \end{align}
    \end{subequations}
    Hence, $\phi(a) = a$.

\end{proof}

Note that, for the setup shown in \cref{fig:lm_greater_diatance_visual2},
despite the suggestive appearance, it is in principle impossible to find a
similar bound for $B_{d_1}(f_1)$ to the left side. The reason is that the
site-structure can be arbitrarily asymmetric.

Repeated application of this lemma leads to site-structures with three fixed
link-vertices.  To proceed, we prove another lemma.

\begin{lemma}
    \label{thm:three_ports}

    Let $\phi \in \aut{G}$ and $f_1,f_2,f_3,u$ be the link-vertices attached to
    some site-structure $U \subseteq G$.
    Suppose that three of the four link-vertices $f_1,f_2,f_3$ are fixed by
    $\phi$. Furthermore, assume that the graph does not end with $u$ (i.e., there
    is a site-structure different from $U$ adjacent to $u$) and that $|U| <
    |U^C|$.
    Then the fourth link-vertex $u$ is also fixed and $\phi(U) = U$.
    \begin{center}
        \includegraphics[width=0.5\linewidth]{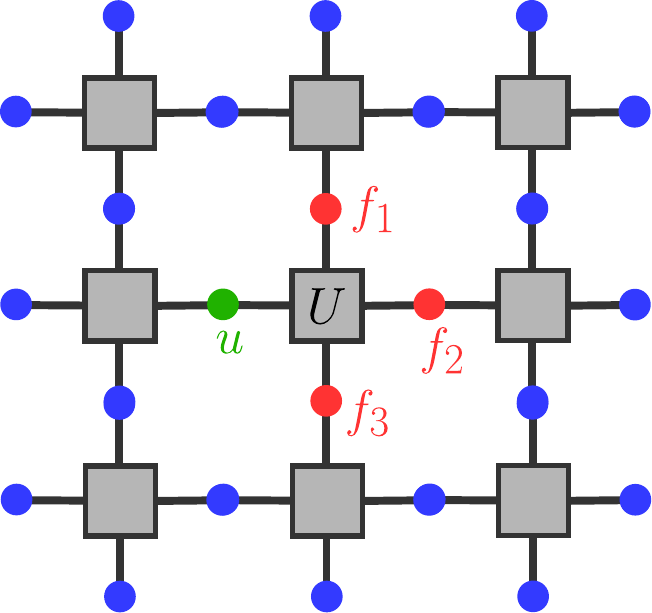}
    \end{center}

\end{lemma}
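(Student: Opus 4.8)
The plan is to leverage two invariants preserved by any graph automorphism: the graph metric $\d_{\widetilde G}$ (hence neighborhoods $B_k$), and the set of cut-vertices. Since three of the four link-vertices $f_1,f_2,f_3$ attached to the site-structure $U$ are already fixed, the strategy is to pin down $u$ as the unique vertex in $\widetilde G$ that is simultaneously ``close'' to all of $f_1,f_2,f_3$ in a precise, metric sense that no other vertex can satisfy — in direct analogy with the intersection argument used in \cref{lm:fixpoints_1}. Concretely, I would set $d_i:=\d_{\widetilde G}(f_i,u)$ for $i=1,2,3$ and argue that $B_{d_1}(f_1)\cap B_{d_2}(f_2)\cap B_{d_3}(f_3)=\{u\}$; then, applying $\phi$ and using $\phi(f_i)=f_i$ together with $\phi(B_{d_i}(f_i))=B_{d_i}(\phi(f_i))=B_{d_i}(f_i)$, one concludes $\{\phi(u)\}=\{u\}$, i.e.\ $\phi(u)=u$.

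The key step is therefore establishing that triple intersection. The point is that any geodesic from $f_i$ to a vertex ``beyond'' $u$ — meaning a vertex whose only route back toward $f_i$ passes through the site-structure $U$ — must traverse $u$, because $u$ is a cut-vertex separating (the rest of) $\widetilde G$ from the region on the far side of $U$; this is exactly where the hypotheses ``the graph does not end with $u$'' and ``$|U|<|U^C|$'' enter, ensuring that $U$ together with its link-vertices genuinely separates the graph and that the ``small side'' is $U$ itself. Using the candidate-path machinery from \cref{app:graphs} — in particular the move-bookkeeping and the distance estimate of \cref{lm:greater_distance} in its rotated/reflected forms — one shows that for each $i$ the ball $B_{d_i}(f_i)$ cannot reach past $u$ into the remaining three ``sectors'' around $U$ (the sectors containing the other two fixed link-vertices and the sector beyond $u$), except for $u$ itself on the separating boundary. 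Intersecting the three balls then leaves only $u$, since any candidate for the intersection would have to lie simultaneously in incompatible sectors relative to the three spokes. Once $\phi(u)=u$ is in hand, $\phi(U)=U$ follows because $U$ is the induced subgraph on the vertices that are separated from the bulk by $\{f_1,f_2,f_3,u\}$ on the ``small side'' ($|U|<|U^C|$), a purely graph-theoretic characterization that $\phi$ must respect.

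The main obstacle I anticipate is the same one flagged after \cref{lm:fixpoints_1}: the site-structure $U$ is allowed to be arbitrarily asymmetric, so one cannot argue ``by symmetry'' that the three balls behave the same way in all directions. Each of the three metric bounds (one per spoke $f_i$) has to be obtained by a separate application of the candidate-path analysis, taking care that a geodesic from $f_i$ might dip back into $U$ and out again through a different link-vertex — a possibility already acknowledged in the $d_\text{sing}$ discussion. Handling that cleanly requires phrasing everything in terms of which link-vertices a path crosses between site-structures, and then invoking \cref{lm:greater_distance} (and its extension to labels differing by $(0,-2k)$, plus its rotated variants) to rule out the unwanted sectors. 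The bookkeeping is routine once set up, but getting the sector decomposition and the cut-vertex separation stated precisely enough to make the triple-intersection identity airtight is the delicate part; after that, the automorphism-invariance argument closing the lemma is immediate.
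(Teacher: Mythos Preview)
Your approach has a genuine gap. The triple intersection $B_{d_1}(f_1)\cap B_{d_2}(f_2)\cap B_{d_3}(f_3)$ with $d_i=\d_{\widetilde G}(f_i,u)$ is in general \emph{not} equal to $\{u\}$: internal vertices of the site-structure $U$ will typically lie in it. For a concrete failure, take a site-structure with a central vertex $c$ adjacent to all four link-vertices $f_1,f_2,f_3,u$; then $d_i=\d_{\widetilde G}(f_i,u)=2$ for each $i$, while $\d_{\widetilde G}(f_i,c)=1\le d_i$, so $c$ sits in all three balls. The candidate-path machinery and \cref{lm:greater_distance} only control how far a ball extends \emph{across} the tessellation into neighboring site-structures; they say nothing about distances \emph{inside} a single, arbitrarily asymmetric site-structure. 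Since you have no invariant that singles $u$ out from these internal vertices (you never establish that $\phi$ maps link-vertices to link-vertices, and the hypothesis $|U|<|U^C|$ plays no role in your metric argument), the conclusion $\phi(u)=u$ does not follow.

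The paper avoids this entirely by switching to a purely topological argument: after removing the three fixed link-vertices $f_1,f_2,f_3$, the vertex $u$ becomes a cut-vertex of the remaining graph, separating $U$ from $U^C$. Automorphisms preserve cut-vertices, so $\phi(u)$ is also a cut-vertex; a case analysis on whether $\phi(u)\in U$ or $\phi(u)\in U^C$ (and, in the latter case, whether $\phi(u)$ is a link-vertex) then yields cardinality contradictions using $|U|<|U^C|$ and the assumption that the graph continues past $u$. Once $\phi(u)=u$, your final paragraph's argument for $\phi(U)=U$ is the same as the paper's.
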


\begin{proof}

    We remove $f_1,f_2,f_3$ from the graph to create the vertex set $V'$ that
    induces the subgraph $G'$. Note that this graph $G'$ is still connected.
    Furthermore, by construction, the vertex $u$ is a cut-vertex in $G'$. Under
    removal of $u$, $G'$ decomposes in the connected components $U$ and $U^C :=
    V\setminus U$.

    The restriction of $\phi$ onto $V'$ is a well-defined automorphism that we
    denote with the same symbol. The vertex $u$ being a cut-vertex in $G'$
    implies that $\phi(u)$ is also a cut-vertex in $G'$
    (cf.~\cref{app:graphs}). Thus, the removal of $\phi(u)$ splits $G'$ into the
    components $\phi(U)$ and $\phi(U^C)$.

    Now we assume that $\phi(u) \neq u$ and show that this leads to a
    contradiction.

    \begin{itemize}

    \item\textbf{Case 1:} $\phi(u) \in U$.
    We know that $U^C \cup \{u\}$ is connected in $G'$. As $\phi(u)$ does not lie
    in this component, its connectivity cannot be influenced by the removal of
    $\phi(u)$. So $U^C \cup \{u\}$ must be part of \emph{one} connected
    component, i.e., either $U^C \cup \{u\} \subseteq \phi(U^C)$ or $U^C \cup
    \{u\} \subseteq \phi(U)$.  As the set on the left is in both cases larger
    than the set on the right, this is a contradiction.

    \item\textbf{Case 2:} $\phi(u) \in U^C$.
    We know that $U \cup \{u\}$ is connected in $G'$. As $\phi(u)$ does not lie in
    this component, its connectivity cannot be influenced by the removal of
    $\phi(u)$. So $U^C \cup \{u\}$ must be part of \emph{one} connected
    component, i.e., either $U\cup \{u\} \subseteq \phi(U^C)$ or $U \cup \{u\}
    \subseteq \phi(U)$.  As $|\phi(U)| = |U| < |U \cup \{u\}|$, the only
    possibility is $U\cup \{u\} \subseteq \phi(U^C)$. Furthermore, this implies
    that
    \begin{align}
        \phi(U) = V'\setminus\phi(U^C) \subseteq U^C \setminus \{u\}.
    \end{align}

    \begin{itemize}

    \item\textbf{Case 2.1:} $\phi(u)$ is a link-vertex.
    As $\phi(u) \neq u$, none of the adjacent site-structures can be $U$, we
    denote these adjacent site-structures as $U_1,U_2$. Both of these
    site-structures remain connected if $\phi(u)$ is removed, so both
    are a subset of $\phi(U)$ or $\phi(U^C)$, respectively. They cannot both lie
    in the same component, as $\phi(u)$ must be connected to both components
    and is only connected to $U_1$ and $U_2$. Let \Wlog $U_1 \subseteq
    \phi(U)$.

    The site-structure $U_1$ must have at least one link-vertex in $G'$ other
    then $\phi(u)$. We denote this link-vertex as $c$.  The reason for this is
    that in $G$ each site-structure has four link-vertices. We removed three
    link-vertices, which only share the site-structure $U$. Therefore, all other
    site-structures have at least three link-vertices left in $G'$.

    By construction, $U_1 \cup \{c\}$ is connected, and thus so is $U_1 \cup \{c\}
    \subseteq \phi(U)$. However, the left side is larger then the right side as
    $|U_1| = |U|$. So we have reached a contradiction.

    \item\textbf{Case 2.2:} $\phi(u)$ is not a link-vertex.
    Suppose that there is a link-vertex $x \in \phi(U)$; let $U_1, U_2$ be its
    adjacent unit cells. Let \Wlog $\phi(u) \in U_1$, then as $U_2 \cup \{x\}$
    is connected; $U_2  \cup \{x\}\subseteq \phi(U)$, which is a
    contradiction.

    \end{itemize}
    The only remaining possibility is that no link-vertex is in $\phi(U)$. Let
    $U'$ be the unit cell with $\phi(u) \in U'$. Then we have $\phi(U)
    \subseteq U' \setminus \{\phi(u)\}$, as connections to other unit cells
    would contain a link-vertex. However, this is a contradiction, as  the right
    side is smaller than the left side.

    \end{itemize}

    Therefore we have proven that $\phi(u) = u$. This creates the connected
    components $U$ and $U^C$. As we assume that $|U| < |U^C|$ and $\phi$ is
    bijective, the only possibility is that $\phi(U) = U$.
    
\end{proof}

By combining \cref{lm:fixpoints_1} and \cref{thm:three_ports} we can "grow" the
boundary by one site-structure to the inside. However, we cannot immediately
repeat this procedure, as we only know that the link-vertices are fixed and not
the site-structures. Moreover, we cannot prove this from our assumptions. If
every site-structure would contain a disconnected \texttt{NOT}-gate, then there
is a graph automorphism that exchanges the two atoms of the \texttt{NOT}-gate
and leaves all other vertices invariant. To obtain a connected counterexample,
connect the two vertices of the \texttt{NOT}-gate to one arbitrary vertex.

However, in the end, we care about the maximum-weight independent sets of the
graph.  If we interpret the link-vertices as ports then such an automorphism
inside the site-structures cannot change the ports and thus the
maximum-weight independent sets.  So these single site-structure automorphisms
``do not matter.''

To formalize this idea, we prove that we can factorize the automorphism into a
trivial part and a nontrivial part, where we can apply \cref{lm:fixpoints_1}
and \cref{thm:three_ports} again.

In the following we denote
the composition by juxtaposition $\phi_1\phi_2 := \phi_1 \circ \phi_2$.
Likewise we use ``$\prod$" to denote the composition of multiple automorphisms.

\begin{lemma}
    \label{lm:factor}

    If all link-vertices of a site-structure $U$ are fixed by an automorphism
    $\phi \in \aut{G}$, then $\phi$ factors as
    \begin{align}
        \phi = \phi_A \phi_B
    \end{align}
    with $\phi_\textup{A},\phi_\textup{B} \in \aut{G}$ and
    \begin{align}
        \phi_\textup{A} \big|_U &= \textup{id}_U \quad \phi_\textup{B} \big|_{U^C} = \textup{id}_{U^C}.
    \end{align}
    In particular, as $\phi_\textup{A}$ and $\phi_\textup{B}$ have disjoint
    support, they commute.

    In addition, if $\Delta$ is a weight function for $G$ that is invariant
    under $\phi$, then it is also invariant under $\phi_\textup{A}$ and
    $\phi_\textup{B}$.

\end{lemma}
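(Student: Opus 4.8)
The plan is to reduce the whole statement to the single claim $\phi(U)=U$, and then to obtain the factors by ``splitting'' $\phi$ along the vertex partition $V=U\sqcup U^C$: set $\phi_\textup{A}$ equal to $\phi$ on $U^C$ and to $\textup{id}$ on $U$, and $\phi_\textup{B}$ equal to $\phi$ on $U$ and to $\textup{id}$ on $U^C$. Granting $\phi(U)=U$ (hence $\phi(U^C)=U^C$), both are well-defined bijections of $V$ whose supports are the disjoint sets $U^C$ and $U$, so they commute, and one checks on $U$ and on $U^C$ separately that $\phi_\textup{A}\phi_\textup{B}=\phi_\textup{B}\phi_\textup{A}=\phi$. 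The claim about $\Delta$ is then immediate: for $v\in U^C$ one has $\Delta_{\phi_\textup{A}(v)}=\Delta_{\phi(v)}=\Delta_v$ while on $U$ the map $\phi_\textup{A}$ is trivial, so $\phi$-invariance of $\Delta$ forces $\phi_\textup{A}$-invariance, and symmetrically for $\phi_\textup{B}$.

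First I would prove $\phi(U)=U$. Let $f_1,f_2,f_3,u$ be the four link-vertices attached to $U$; by hypothesis $\phi$ fixes each of them. Delete them from $G$ to form the induced subgraph $G''$. By the single-link property, every edge leaving an internal vertex of $U$ ends inside $U$ or at one of $f_1,f_2,f_3,u$, so $U$ is a union of connected components of $G''$, while the remaining vertices $R:=V\setminus(U\cup\{f_1,f_2,f_3,u\})$ make up the other components. Since $\phi$ fixes $f_1,f_2,f_3,u$, it restricts to an automorphism of $G''$, and automorphisms permute connected components while preserving their size and isomorphism type. For a sufficiently large patch, and using $|U|<|U^C|$ as in \cref{thm:three_ports}, $R$ is connected and strictly larger than $U$, so no component of $U$ can be sent into $R$; hence $\phi(U)=U$, and then $\phi_\textup{A},\phi_\textup{B}$ are constructed as above.

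I expect Step~1 to be the only real obstacle. The delicate point is that an asymmetric---possibly internally disconnected---site-structure (recall the disconnected-\texttt{NOT} example) can make $U$ a union of several small components of $G''$, and one must still exclude a swap between a small component of $U$ and some stray component of $R$. This runs along the same lines as \cref{lm:fixpoints_1} and \cref{thm:three_ports}: a component $C\subseteq U$ of $G''$ is adjacent in $G$ only to a subset of $\{f_1,f_2,f_3,u\}$, hence so is $\phi(C)$, but any component of $R$ with that property would have to be an entire piece of a site-structure neighbouring $U$, cut off from the rest of that structure by deleting the single link-vertex it shares with $U$---which the global topology of a single-link tessellation (each link-vertex shared by exactly two site-structures, and the graph not ``ending'' at the relevant link-vertices) rules out. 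Once $\phi(U)=U$ holds, verifying $\phi_\textup{A},\phi_\textup{B}\in\aut{G}$ is routine bookkeeping: every edge of $G$ lies inside $U$, lies inside $U^C$, or joins some $v\in U$ to some $f_i\in U^C$, and on each type $\phi_\textup{A}$ (and $\phi_\textup{B}$, using that $\phi$ fixes the $f_i$) sends edges to edges.
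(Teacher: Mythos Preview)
Your approach matches the paper's: split $\phi$ along $V = U \sqcup U^C$ to define $\phi_\textup{A}, \phi_\textup{B}$, reduce everything to the claim $\phi(U) = U$, and then verify bijectivity and edge-preservation on the three edge types (inside $U$, inside $U^C$, and crossing via a fixed link-vertex). The only difference is that the paper obtains $\phi(U) = U$ in one line by invoking \cref{thm:three_ports}---whose hypotheses (three fixed link-vertices, $|U|<|U^C|$) hold a fortiori when all four are fixed---so your connected-component re-derivation and the disconnected-site-structure worry can be replaced by that citation.
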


\begin{proof}
    We define the functions
    \begin{subequations}
        \begin{align}
            \phi_\textup{A}(v) &:= \begin{cases}  \phi(v), & v\in U \\ v & v\notin U\end{cases}\\
            \phi_\textup{B}(v) &:= \begin{cases}  \phi(v), & v\notin U \\ v, & v\in U \end{cases}.
        \end{align}
    \end{subequations}
    By construction, these fulfill the desired requirements. Now we show that
    $\phi_\textup{A}$ and $\phi_\textup{B}$ are indeed automorphisms.  First,
    note that, by \cref{thm:three_ports}, we have $\phi(U) = U$.
    \begin{enumerate}

        \item For injectivity, the cases $u,v \in U$ and $u,v \notin U$ are
            clear. So let $u \in U$, $v\in U^C$. Therefore $\phi_\textup{A}(u)
            = \phi(u) \in U$ and $\phi_\textup{A}(v) = v \in U^C$. Therefore
            $\phi_\textup{A}(u) \neq \phi_\textup{A}(v)$.

        \item For surjectivity, let $u \in V$. If $u \in U^C$; then
            $\phi_\textup{A}(u) = u$. If $u \in U$ then the required preimage
            is $u = \phi^{-1}(w) \in U$.

        \item For edge transitivity, let $v,w \in V$, such that $\{v,w\} \in
            E$.  The cases $u,v \in U$ and $u,v \in U^C$ are clear. So let $u
            \in U$, $v \in U^C$.  This implies that $v$ is one of the
            link-vertices of $U$ and, according to requirement, $\phi(v) =
            v \phi_\textup{A}(v)$. Therefore, we have
            \begin{align}
                \{\phi_\textup{A}(u),\phi_\textup{A}(v)\} = \{\phi(u),\phi(v)\} \in E.
            \end{align}

    \end{enumerate}

    This shows that $\phi_\textup{A} \in \aut{G}$. The proof is analogous
    for $\phi_\textup{B}$.

    The additional claim follows directly from the fact that $\Delta$ is invariant
    under $\phi$ and the identity.
\end{proof}

Now we formalize the idea of growing the boundary to the inside.
\begin{lemma}
    \label{thm:local_autom}

    Let $G$ be a single-link tessellated graph and $\phi \in \aut{G}$ an
    automorphism that fixes every vertex outside of a rectangular region $R$.
    Then $\phi$ factors as
    \begin{align}
        \phi = \prod_{i} \phi_{U_i},
    \end{align}
    where the index $i$ labels the site-structures in $R$ such that, for the
    $i$-th site-structure $U_i$,
    \begin{align}
        \phi_i \big|_{U_i^C} = \text{id}_{U_i^C}.
    \end{align}

\end{lemma}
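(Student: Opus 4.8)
The goal is to establish \cref{thm:local_autom}, which says that an automorphism $\phi$ fixing everything outside a rectangular region $R$ decomposes as a product of single-site-structure automorphisms. The strategy is an inductive ``peeling'' of the boundary of $R$, alternating two moves at each step: (1) use \cref{lm:fixpoints_1} and \cref{thm:three_ports} to show that the link-vertices and then the site-structures one layer further in are fixed/invariant under $\phi$; (2) use \cref{lm:factor} to split off the part of $\phi$ supported on that newly-stabilized layer, leaving a remainder automorphism $\phi'$ that fixes everything outside a \emph{smaller} rectangular region $R'$, to which the inductive hypothesis applies.

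\medskip

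First I would set up the induction on the number of site-structures inside $R$ (or equivalently on the ``radius'' of $R$). The base case is $R$ containing a single site-structure $U$ with all four of its link-vertices lying on the boundary of $R$, hence fixed by $\phi$ by hypothesis; then \cref{lm:factor} (whose hypothesis is exactly ``all link-vertices of $U$ are fixed'') gives $\phi = \phi_A\phi_B$ with $\phi_B = \mathrm{id}_{U^C}$, and since everything outside $U$ is already fixed, $\phi_A$ is itself a single-site-structure automorphism and $\phi = \phi_U$ with $\phi|_{U^C}=\mathrm{id}$; done. For the inductive step, consider the outermost layer of site-structures in $R$. Their ``outer'' link-vertices sit outside $R$ (or on $\partial R$) and are fixed; applying \cref{lm:fixpoints_1} to each adjacent pair of such boundary site-structures shows the shared link-vertex between them is fixed, so \emph{all} link-vertices of each outer site-structure are fixed. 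Then \cref{thm:three_ports} gives $\phi(U_i)=U_i$ for each outer site-structure $U_i$, and \cref{lm:factor} lets me split $\phi = \phi_{U_i}\phi^{(i)}$ where $\phi_{U_i}$ is supported on $U_i$ and $\phi^{(i)}$ fixes $U_i$ pointwise. Peeling off all outer site-structures simultaneously (their supports are essentially disjoint, so the $\phi_{U_i}$ commute and commute with the remainder), I get $\phi = \big(\prod_{i \in \partial} \phi_{U_i}\big)\cdot \phi'$, where $\phi'$ fixes every vertex of every outer site-structure of $R$, hence fixes everything outside the strictly smaller rectangle $R'$ obtained by stripping the outer layer. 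Applying the inductive hypothesis to $\phi'$ finishes the argument.

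\medskip

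There are two points requiring care, and I expect the second to be the genuine obstacle. The minor point: the hypotheses of \cref{thm:three_ports} require that the fourth (inner) link-vertex $u$ of each outer site-structure is not an endpoint of the graph and that $|U| < |U^C|$; both hold here because $R$ sits inside the bulk (the graph extends beyond $R$ on all sides by assumption, and each site-structure is finite while $U^C$ contains the rest of the tessellation), so I just need to note this. The main obstacle is the bookkeeping around \emph{commuting} and \emph{simultaneously} peeling the whole outer layer: \cref{lm:factor} as stated factors off \emph{one} site-structure at a time, and after factoring off $U_1$ the remainder $\phi^{(1)}$ no longer fixes the boundary vertices that $\phi_{U_1}$ moved — but those are interior to $U_1$ and irrelevant to the next site-structure $U_2$, since distinct site-structures overlap only in link-vertices, which $\phi_{U_1}$ fixes. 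So I must argue that the supports are disjoint \emph{as sets of non-link-vertices}, that each $\phi_{U_i}$ fixes all link-vertices (true by \cref{lm:factor}'s construction, since $\phi_{U_i}(v)=\phi(v)$ only for $v\in U_i$ and the link-vertices of $U_i$ are fixed), and hence the factors pairwise commute and can be collected. Once this commutation bookkeeping is in place, the induction goes through cleanly; writing out the index set of the outer layer and verifying $R'$ is again rectangular is routine.
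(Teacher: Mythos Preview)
Your approach is essentially the same as the paper's: induction on the size of $R$, peeling the outer layer by first applying \cref{lm:fixpoints_1}, then \cref{thm:three_ports}, then \cref{lm:factor} successively to each outer site-structure, leaving a remainder automorphism supported on a strictly smaller rectangle.

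One minor slip in your write-up: after applying \cref{lm:fixpoints_1} to adjacent pairs of boundary site-structures, only \emph{three} of the four link-vertices of each outer site-structure are fixed (the one on $\partial R$ plus the two shared with neighbouring outer site-structures); the fourth, inward-pointing link-vertex is \emph{not} yet fixed. That is precisely what \cref{thm:three_ports} supplies (and it also gives $\phi(U_i)=U_i$ as a byproduct). You seem to have the right picture since you invoke \cref{thm:three_ports} immediately afterwards, but your sentence ``so \emph{all} link-vertices of each outer site-structure are fixed'' placed \emph{before} that invocation is premature. Also note that \cref{lm:factor} needs all four link-vertices fixed as its hypothesis, so the correct order is \cref{lm:fixpoints_1} $\to$ \cref{thm:three_ports} $\to$ \cref{lm:factor}, exactly as in the paper. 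Your commutation/bookkeeping concern is handled there just as you sketch: after splitting off $\phi_{U_1}$, the remainder $\phi_1$ still fixes all link-vertices of $U_2$ (since $\phi_{U_1}$ is the identity outside $U_1$, and $U_1\cap U_2$ contains only fixed link-vertices), so \cref{lm:factor} applies again.
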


\begin{proof}
    By \cref{lm:reduction_to_R} we can restrict $\phi$ to an automorphism on
    the induced subgraph $\widetilde{G}$, which has all vertices in $R$ and the
    link-vertices directly connected to them as its vertex set. We denote this
    automorphism again by $\phi$. The link-vertices that form the boundary of
    $\widetilde{G}$ (i.e, link-vertices that are only connected to one
    site-structure) are, by construction fixpoints of $\phi$. Let $U_1, \ldots,
    U_n$ be the site-structures that are directly connected to the link-vertices
    on the boundary of $\widetilde{G}$.

    We proceed to prove \cref{thm:local_autom} by mathematical induction. If $R$ contains no
    site-structure then there is nothing to show. If $R$ contains exactly one
    site-structure $U_1$ then $\phi\big|_{U_1^C} = \text{id}_{U_1^C}$; thus, $\phi$ is
    of the desired form.

    For larger $R$, we proceed as follows.
    \begin{enumerate}

        \item We apply \cref{lm:fixpoints_1} repeatedly to the site-structures
            $U_i$; we find that the link-vertices between the site-structures
            $U_i$, are fixpoints of $\phi$.
            \begin{center}
                \includegraphics[width=0.9\linewidth]{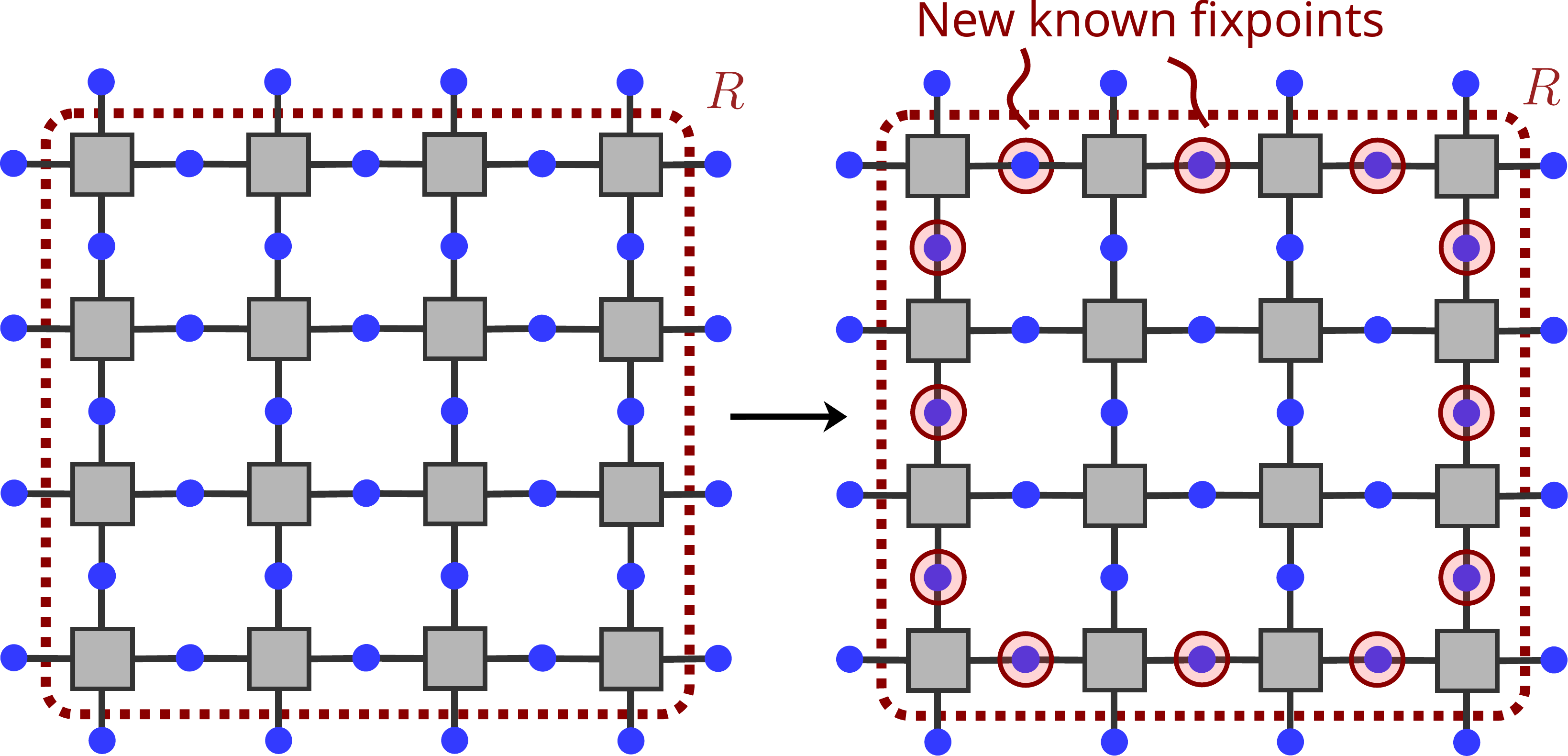}
            \end{center}

        \item Now, for the site-structures $U_1,\ldots,U_n$, three link-vertices
            are known to be fixpoints. Then \cref{thm:three_ports} implies that
            also the fourth link-vertex of these site-structures is also a fixpoint. 
            \begin{center}
                \includegraphics[width=0.9\linewidth]{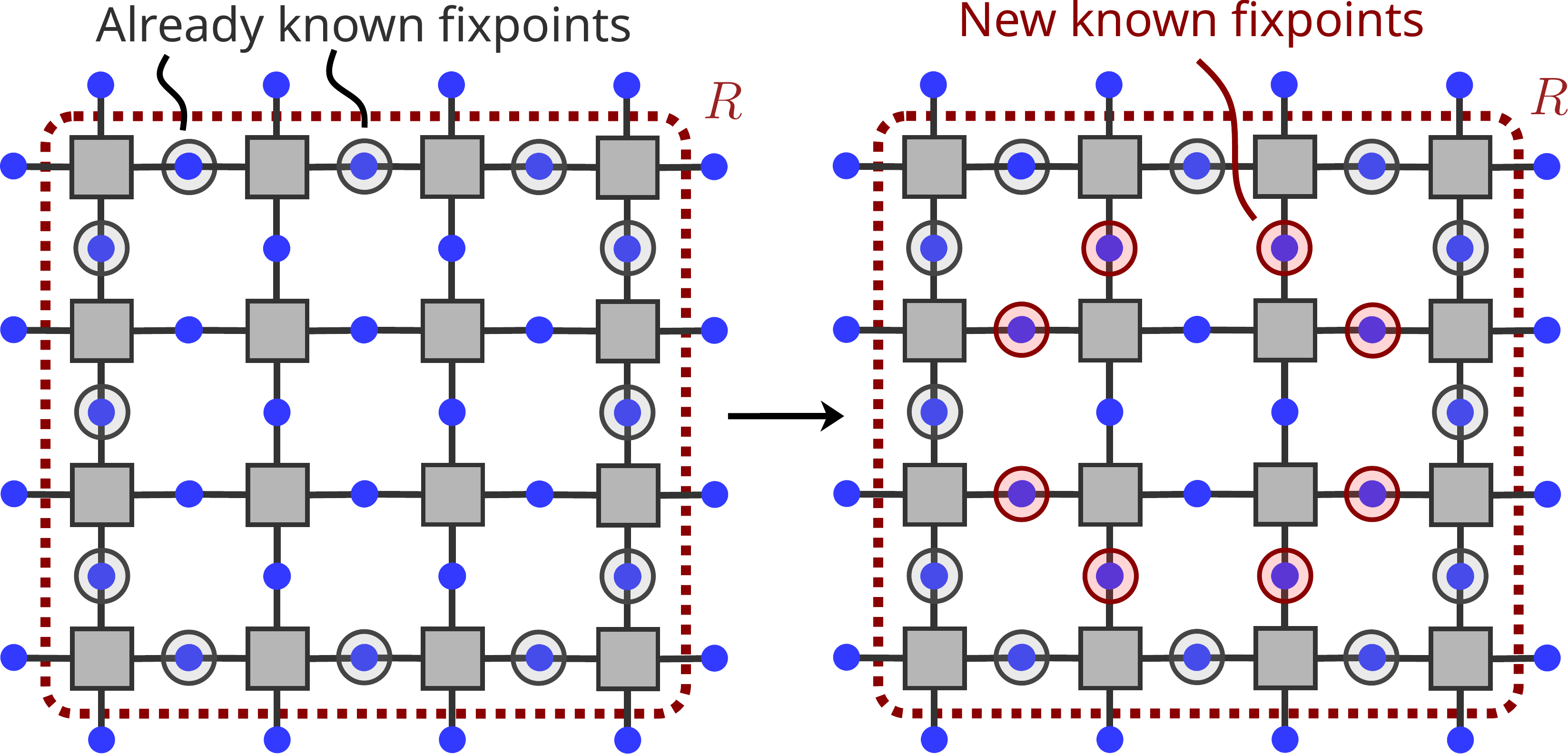}
            \end{center}

        \item We apply \cref{lm:factor} successively to the site-structures
            $U_1,\ldots,U_n$.  For the first site-structure $U_1$, we obtain the
            factorization
            \begin{align}
                \phi = \phi_{U_1}\phi_1,
            \end{align}
            where $\phi_{U_1} \big|_{U_1^C} = \text{id}_{U_1^C}$ and
            $\phi_1\big|_{U_1} = \text{id}_{U_1}$. As every vertex that is a
            fixpoint of $\phi$ is also a fixpoint of $\phi_1$, we can apply
            \cref{lm:factor} to $\phi_1$ and the next site-structure $U_2$ to
            obtain 
            \begin{align}
                \phi = \phi_{U_1} \phi_{U_2} \phi_2,
            \end{align}
            with the properties
            \begin{align}
                \phi_{U_i} \big|_{U_i^C} 
                &= \text{id}_{U_i^C},\ \phi_2\big|_{U_i} 
                = \text{id}_{U_i}
            \end{align}
            for $i \in \{1,2\}$. Repeating this argument for all site-structures
            $U_1,\ldots, U_n$, we obtain
            \begin{align}
                \phi = \phi_\text{rem} \prod_{i = 1}^n \phi_{U_i} 
            \end{align}
            with 
            \begin{align}
                \phi_i \big|_{U_i^C} 
                &= \text{id}_{U_i^C}, \ \phi_\text{rem}\big|_{U_i} 
                = \text{id}_{U_i}.
            \end{align}

        \item All site-structures $U_1,\ldots,U_n$ and their link-vertices are
            fixpoints of the automorphism $\phi_\text{rem}$. Thus, the boundary
            of the rectangular region $R$ can be advanced inwards by one
            site-structure. 
            \begin{center}
                \includegraphics[width=0.9\linewidth]{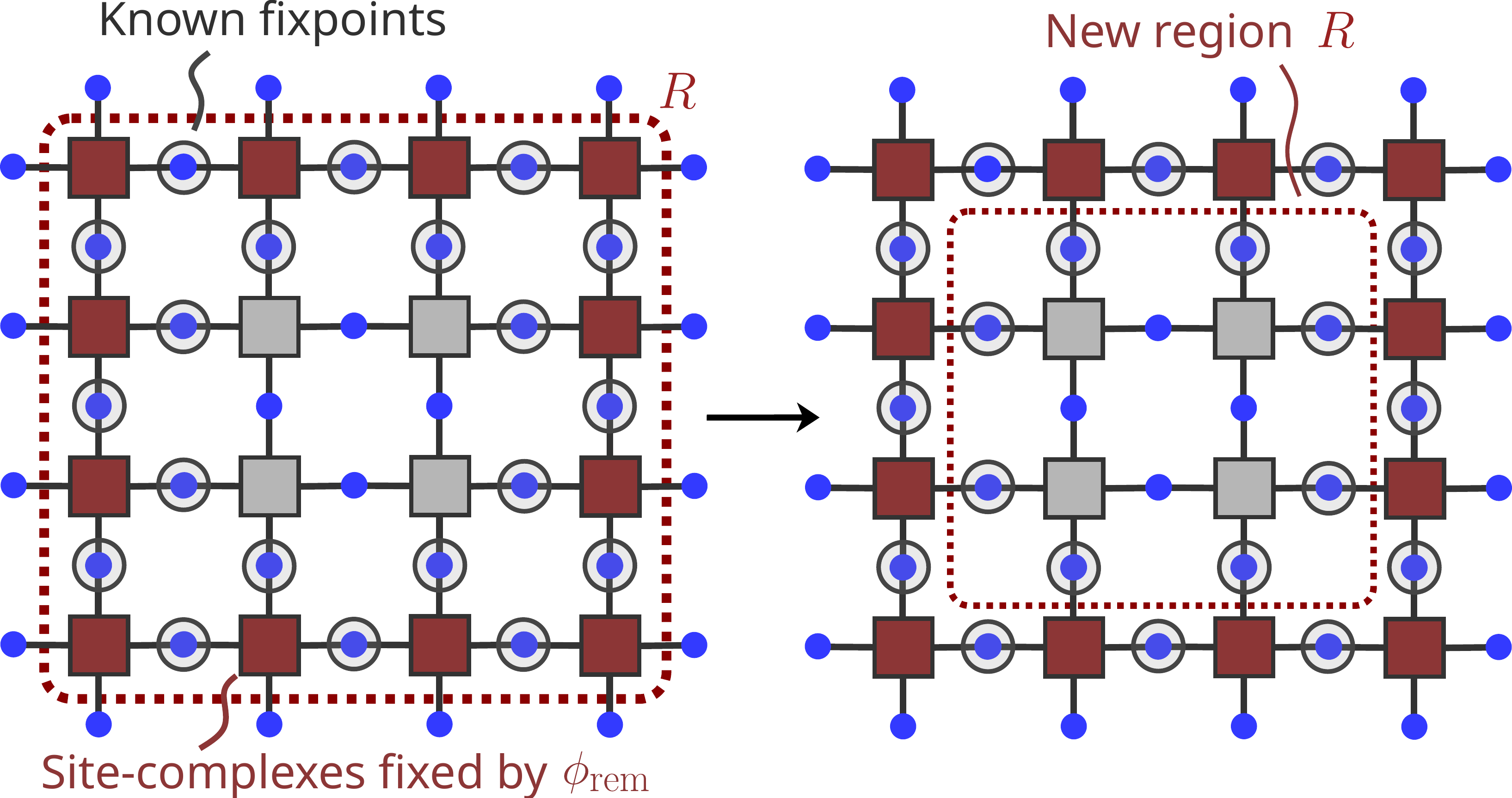}
            \end{center}

    \end{enumerate}
    Hence, by induction the automorphism $\phi$ factorizes as desired.  In
    particular, we have shown that $\phi$ can only act trivially on the
    link-vertices.
\end{proof}

We complete the proof of \cref{thm:main_res} by applying \cref{thm:local_autom}
to a vertex-weighted single-link tessellated graph.

\begin{corollary}
    \label{prop:aut_trivial}

    Let $\Delta$ be a weight function that turns $G$ into a vertex-weighted
    graph, such that the link-vertices are the ports of $G$ in the sense of
    \cref{sec:review}. In addition let $\phi \in \aut{G}$, i.e., $\Delta$ is
    invariant under $\phi$, such that $\phi\big|_R = \text{id}_R$ for some
    rectangular region $R$. Then $\phi$ acts trivially on all
    maximum-weight independent sets $I$, i.e, $\phi(I) = I$.

\end{corollary}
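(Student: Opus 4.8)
The plan is to derive this directly from \cref{thm:local_autom} together with the defining property of ports. First I would invoke \cref{thm:local_autom}: under the hypotheses of the corollary it applies and produces a factorization $\phi = \prod_i \phi_{U_i}$ into automorphisms $\phi_{U_i}$, one per site-structure $U_i$ of the rectangular region, each restricting to the identity on the complement $U_i^C$. Since in a single-link tessellated graph every link-vertex lies outside every site-structure (the link-vertices are the interfaces, not the interiors), each $\phi_{U_i}$ fixes all link-vertices, and hence so does their composition $\phi$. So the outcome of this first step is simply: $\phi$ acts as the identity on the set of ports of $G$.

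Next I would observe that a $\Delta$-preserving automorphism permutes the maximum-weight independent sets among themselves. Fix a maximum-weight independent set $I$. Because $\phi$ is a graph automorphism, $\phi(I)$ is again an independent set; because $\Delta$ is $\phi$-invariant and $\phi$ is a bijection of $V$, the total weight of $\phi(I)$ equals that of $I$, so $\phi(I)$ is again of maximum weight. Moreover, since $\phi$ fixes every port, $I$ and $\phi(I)$ have the same intersection with the set of link-vertices.

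Finally I would invoke the hypothesis that the link-vertices are ports in the sense of \cref{sec:review}: by definition, the map sending a ground-state pattern (equivalently, a maximum-weight independent set) to its restriction on the ports is one-to-one. Since $I$ and $\phi(I)$ are both maximum-weight independent sets with the same restriction to the ports, it follows that $\phi(I) = I$, which is the claim.

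I do not anticipate a real obstacle in this step: all the difficulty is concentrated upstream in \cref{thm:local_autom} itself, whose proof rests on the metric estimates of \cref{lm:greater_distance,lm:fixpoints_1}, the cut-vertex argument of \cref{thm:three_ports}, and the factorization \cref{lm:factor}. Within the corollary, the only points that merit an explicit line are that $\phi(I)$ is genuinely \emph{maximum}-weight (not merely maximal), which is the invariance/bijectivity remark, and that ``supported on a single site-structure'' entails ``trivial on all ports'', which is immediate from the single-link architecture.
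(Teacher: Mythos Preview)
Your proposal is correct and follows essentially the same route as the paper: invoke \cref{thm:local_autom} to conclude that $\phi$ fixes every link-vertex (port), observe that $\phi(I)$ is again a maximum-weight independent set because $\phi$ is a $\Delta$-preserving automorphism, and then use the defining one-to-one property of ports to conclude $\phi(I)=I$. The paper's proof is slightly terser (it cites \cref{thm:local_autom} directly for the fact that all link-vertices are fixed, rather than re-deriving it from the factorization), but the logic is identical.
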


\begin{proof}
    The set $\phi(I)$ is an independent set, as $I$ itself is an independent set 
    (cf.~\cref{app:graphs}). As the weights $\vec{\Delta}$ are invariant under $\phi$,
    $\phi(I)$ is a maximum-weight independent set. However, by \cref{thm:local_autom},
    $\phi(I)$ contains the same ports as $I$; hence $\phi(I) = I$.
\end{proof}

\section{Automorphism groups of various tessellated structures}

\subsection{Verresen model}
\label{app:verresen}

\begin{figure*}[t]
    \centering
    \includegraphics[width=0.8\linewidth]{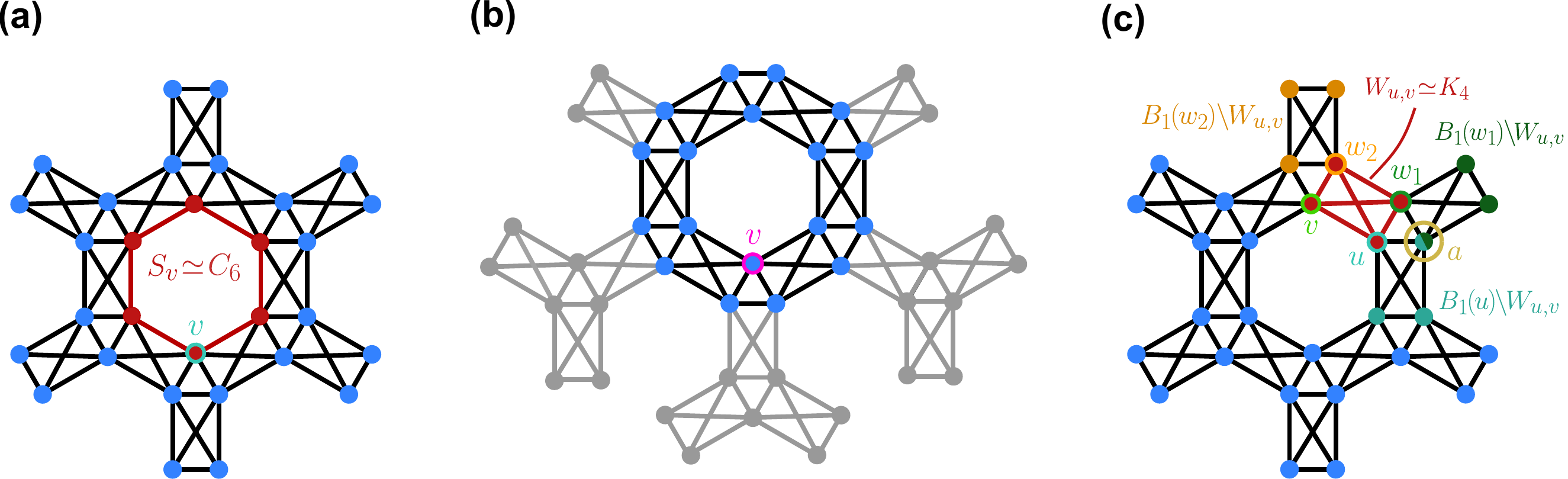}
    \caption{%
	  \emph{Automorphisms of the Verresen model.} 
    (a)~Shown is one plaquette of the blockade graph $G_\VER$ that encodes the
    arrangement of Rydberg atoms proposed by Verresen
    \etal~\cite{Verresen2021} to realize a $\ZZ$ topological spin liquid. The
    induced subgraph formed by the red colored vertices is isomorphic to $C_6$.
    For each vertex $v \in V_\VER$, there exists exactly one set $S_v$ such
    that $v\in S_v$ and its induced subgraph is isomorphic to $C_{6}$.
    (b)~Shown is the neighborhood $B_6(v)$ for some vertex $v \in V_\VER$. From
    this neighborhood, we remove recursively every vertex $w \in B_6(v)$, such that
    $\partial B_1(w)$ is fully connected. These removed vertices are colored
    gray. The remaining vertices show that $S_v$ is indeed unique.
    (c)~For any adjacent vertices $u, v \in S_u$, there exists exactly one set 
    $W_{u,v}$ (red colored vertices and edges) that contains $u$ and 
    $v$ and its induced subgraph is isomorphic to $K_4$. The name of the vertices 
    is indicated by the edge color of the red vertices. The remaining vertices 
    $w_1$, $w_2$ in $W_{u,v}$ are uniquely characterized by the properties $(B_1(u)
    \cap B_1(w_1))\setminus W_{u,v} = \{a\}$ and $(B_1(u) \cap B_1(w_2))\setminus 
    W_{u,v} = \{\}$. The sets $B_1(\cdot)\setminus W_{u,v}$ are shown in orange, 
    dark green, dark cyan for vertices $w_2$, $w_1$, and $u$. The overlap of the 
    two sets is indicated by the split colored vertex, which is named $a$. This 
    implies that the images of $w_1$ and $w_2$ under an automorphism are 
    uniquely determined by the images of $u$ and $v$.
    }
    \label{fig:Verr_Illustr}
\end{figure*}

In this section we discuss the model proposed by Verresen
\etal~\cite{Verresen2021} to realize a $\ZZ$ topological spin liquid by Rydberg
atoms placed on a Ruby lattice. In the PXP approximation, this arrangement can
be interpreted as a blockade structure $\C_\VER$ with blockade graph $G_\VER$,
shown in \cref{fig:Verr_Illustr}. We explicitly characterize the automorphism
group of this structure and show that it is not fully-symmetric.

For each vertex $v \in V_\VER$, there is exactly one induced subgraph $S_v$
such that $v \in S_v$ and $S_v \simeq C_6$, where $C_6$ denotes the cycle graph
with six vertices. The existence is clear by inspection of the graph; one
example is shown in \cref{fig:Verr_Illustr}~(a) as red vertices.

We now argue that this cycle graph is unique for each vertex $v$. It is
clear that we only have to consider vertices in $B_6(v)$; in
\cref{fig:Verr_Illustr}~(b) such a neighborhood is shown. Furthermore, we can
disregard all vertices $w \in B_6(v)$, where $B_1(w)$ is fully connected.
Suppose that such a vertex $w$ is part of an induced subgraph $C_6$. Then there
are two adjacent vertices $x_1,x_2 \in B_1(w)$ such that $x_1 \nsim x_2$.
However, $B_1(w)$ cannot then be fully connected. These deleted vertices are
colored gray in \cref{fig:Verr_Illustr}~(b). The uniqueness of this cycle 
graph in the remaining vertices
can now be easily seen, by systematically checking all remaining possibilities.

Now let $\phi \in \aut{G_\VER}$ and $u,v \in V_\VER$. Then, if $\phi(u) = v$,
we know that $\phi(S_u) = S_v$. This leaves two possibilities; let $u_1,u_2 \in
S_u$ with $u_1 \nsim u_2$ and $v_1,v_2 \in S_v$ with $v_1 \nsim v_2$. The
two possibilities are
\begin{subequations}
    \begin{align}
        \phi(u_1) = v_1&,\;\phi(u_2) = v_2\\
        \phi(u_1) = v_2&,\;\phi(u_2) = v_1.
    \end{align}
\end{subequations}
In each case, \cref{lm:cycle_graph_unique} implies that the image of each vertex 
in $S_u$ is uniquely determined.

From this information, the image of every vertex in $ V_\VER$ can be
determined, i.e., this information uniquely characterizes the automorphism.
For two adjacent vertices $u,v \in S_u$, there is exactly one subset
$W_{u,v} \subseteq V_\VER$ such that the induced subgraph corresponding to
this subset is isomorphic to the fully connected graph with four vertices,
denoted $K_4$ [cf. \cref{fig:Verr_Illustr}~(c)]. The image of $W_{u,v}$ 
under an automorphism $\phi \in \aut{G_\VER}$ must again be isomorphic to $K_4$.

If the images $\phi(u),\phi(v)$ are known then we know that
$W_{\phi(u),\phi(v)} = \phi(W_{u,v})$. Thus, the remaining vertices in $W_{u,v}$
are either invariant under $\phi$ or get swapped. The structure of the graph $G_\VER$
[cf.~\cref{fig:Verr_Illustr}~(c)] shows that there is a vertex $w_1 \in W_{u,v}$ such
that
\begin{align}
   (B_1(u)\cap B_1(w_1))\setminus W_{u,v} 
   = \{a\} \text{ with } a \in V_\VER.
\end{align}
For the remaining vertex $w_2 \in W_{u,v}$, we find that
\begin{align}
   (B_1(u)\cap B_1(w_2))\setminus W_{u,v} = \{\}.
\end{align}
Thus, under the action of the automorphism $\phi$, we obtain
\begin{subequations}
    \begin{align}
        (B_1(\phi(u))\cap B_1(\phi(w_1)))\setminus W_{\phi(u),\phi(v)} 
        &= \{\phi(a)\}\\
        (B_1(\phi(u))\cap B_1(\phi(w_2)))\setminus W_{\phi(u),\phi(v)} 
        &= \{\}.
    \end{align}
\end{subequations}
Therefore, $\phi(w_1)$ and $\phi(w_2)$ can be uniquely identified in
$W_{\phi(u),\phi(v)}$.

This argument can be repeated for every pair of adjacent vertices in the set
$S_x$ for some $x \in V_\VER$, i.e., the automorphism is determined by its
action on a single set $S_u$ for some $u \in V_\VER$. Hence the automorphism
group consists of lattice translations, rotations and reflections.

We now explicitly determine the size of this automorphism group. One unit
cell of this tessellated graph contains $6$ vertices; therefore, the whole graph
contains $6N_xN_y$ vertices, where $N_x,N_y$ are the number of unit cells in the $x$,
$y$ directions. We assume periodic boundary conditions. Choosing one reference
vertex, there are $6N_xN_y$ possible images. For each of these, two
automorphisms are possible. Therefore, we find that $|\aut{G_\VER}| = 12N_xN_Y$.

\subsection{Zeng model}
\label{app:pichler}

Zeng \etal \cite{Zeng2025} introduced a blockade structure $\C_\PQDM$
(they dubbed it a Rydberg gadget), which realizes a quantum dimer model on a
square lattice as its low-energy subspace $\H_{\C_\PQDM}$.  In
\cref{fig:Pichler_Graph_red}~(a), the blockade graph $G_\PQDM$ of $\C_\PQDM$ is
shown. The structure has two different detunings $\Delta_\text{green}$ and
$\Delta_\text{blue}$ that correspond to the blue and green colored atoms,
respectively. The blue colored atoms are the ports of this structure, whereas
the green colored ones are the ancillas (or gadget atoms).  The only
restriction on the detunings is that $\Delta_\text{green} >
\Delta_\text{blue}$. We will see later that automorphisms cannot map atoms of
different colors into each other, so any automorphism leaves the detunings
invariant; therefore, we can safely ignore the detuning in the following
argument.  In this section, we rigorously characterize the automorphism group
of this blockade graph and show that $\C_\PQDM$ is not fully-symmetric.

\begin{figure*}[tb]
   \centering
   \includegraphics[width=0.8\linewidth]{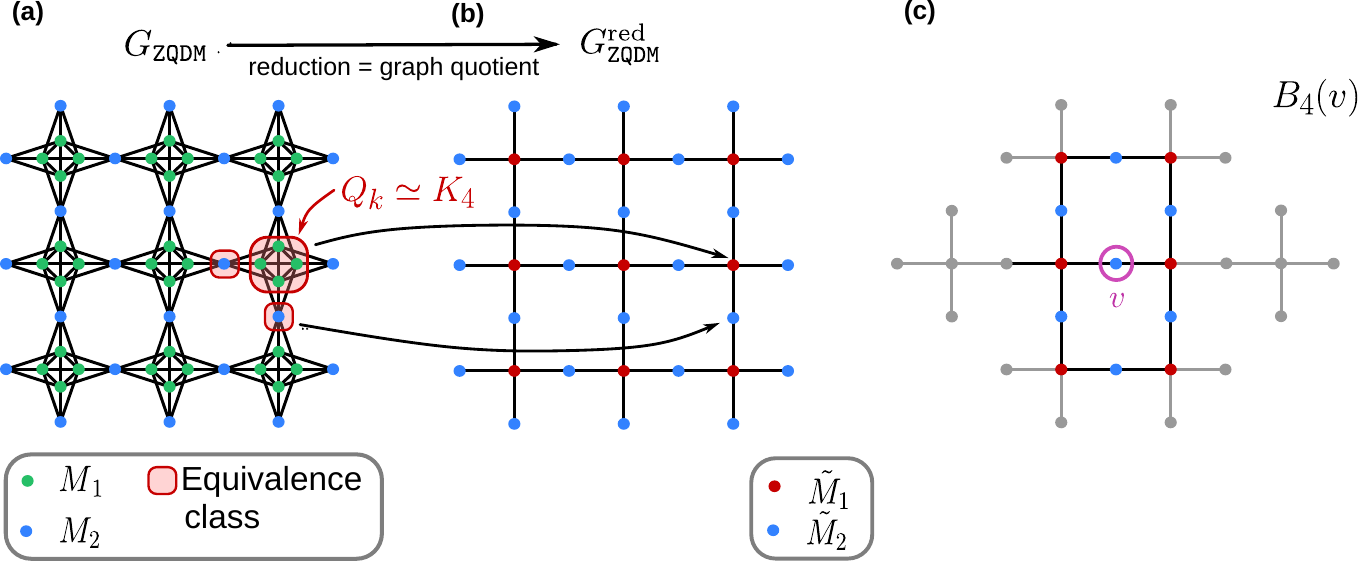}
   \caption{%
   \emph{Reduction of the Zeng model.} 
   \textbf{(a)} Shown is the blockade graph $G_\texttt{ZQDM}$, proposed by Zeng
   \etal to realize a quantum dimer model with a blockade structure. Vertices
   in the set $M_1$ are colored green, while those in the set $M_2$ are colored blue.
   The set $M_1$ decomposes into connected components $Q_k$; the induced
   subgraph of every such component is isomorphic to $K_4$. By an equivalence
   relation we identify the vertices that lie in the same component $Q_k$. Some
   of the equivalence classes are shown as red boxes.
   \textbf{(b)} Shown is the quotient graph of $G_\texttt{ZQDM}$ by the aforementioned
   equivalence relation; the resulting graph is denoted $G_\texttt{ZQDM}^\text{red}$.
   We prove in this section that the automorphism groups of both graphs have the
   same size.
   \textbf{(c)} Shown is the neighborhood $B_4(v)$ of some vertex $v\in M_2$.
   From this neighborhood, all pendant vertices are removed. From the remaining
   graph, again, all pendant vertices are removed. This process is repeated
   until no pendant vertices are left. The removed vertices are colored
   gray. This shows that there are exactly two sets $S_1,S_2$ with $v \in
   S_1,S_2$, such that their induced subgraph is isomorphic to $C_{8}$.
   }
   \label{fig:Pichler_Graph_red}
\end{figure*}

\subsubsection{Reduction of the blockade graph}

As a first step in determining the automorphism group, we construct a reduced
graph with the same automorphism group, i.e, we eliminate vertices that do not
contribute any degrees of freedom to the automorphism group.

First, observe that the vertices of the Zeng blockade graph $G_\PQDM =
(V_\PQDM,E_\PQDM)$ decompose into the two sets
\begin{subequations}
    \begin{align}
        M_1 &:= \{v \in V_\PQDM\, |\, \del B_1(v)\ \text{is connected}\}\\
        M_2 &:= \{v \in V_\PQDM\, |\, \del B_1(v)\ \text{is not connected}\}.
    \end{align}
\end{subequations}
The vertices in the set $M_2$ coincide with the green colored vertices, and the
vertices in the set $M_1$ coincide with the red colored vertices shown in
\cref{fig:Pichler_Graph_red}. By the construction of $G_\PQDM$, the vertex set $M_2$
decomposes into  a disjoint union of connected components $M_2 = \sqcup_{k}
Q_k$, the induced subgraph of each $Q_k$ is isomorphic to $K_4$, the complete
graph with $4$ vertices.  We now introduce an equivalence relation on
$G_\PQDM$:
\begin{align}
    v \equiv w :\Leftrightarrow v 
    = w \lor \exists k: v,w \in Q_k.
\end{align}
Symmetry and reflexivity of this relation are obvious. For transitivity
consider $v \equiv w$ and $w \equiv u$.  If $v = w$ or $w = u$, it is obvious
that $v \equiv u$. In the remaining case, note that $w$ is only part of exactly
one $Q_k$. It follows that $u, v \in Q_k$, as desired.

This equivalence relation partitions the vertex set $V_\PQDM$ into equivalence
classes $V_\PQDM^\text{red} = \{[v]| v \in V_\PQDM\}$, where $[v]$ denotes the
equivalence class of $v$. We define the reduced graph as $G_\PQDM^\text{red} =
(V_\PQDM^\text{red},E_\PQDM^\text{red})$ with the edge set $E_\PQDM^\text{red}
= \{\{[v],[u]\}\, |\, \{u,v\} \in E_\PQDM\}$. This construction is also known
as a \emph{quotient graph}. The resulting graph is shown in
\cref{fig:Pichler_Graph_red}.

\subsubsection{Establishing a bijection}

Now we show that there is a bijection between the automorphism group of the
reduced graph and the automorphism group of the original graph.

\begin{lemma}

   For $\phi \in \Aut{G_\PQDM}$, define
   \begin{align}
       \F_\phi : V_\PQDM^\text{red} \rightarrow V_\PQDM^\text{red},\ [u] \mapsto [\phi(u)].
   \end{align}
   Then the map $\F: \aut{G_\PQDM} \rightarrow \aut{G_\PQDM^\text{red}},\ \phi
   \mapsto \F_\phi$, is a bijection.

\end{lemma}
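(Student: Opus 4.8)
The plan is to verify, in order, that $\F$ is well-defined and a group homomorphism, that it is injective, and that it is surjective; the last two steps carry the real content, and both rest on one structural feature of the Zeng gadget.

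\emph{Well-definedness and homomorphism property.} Any $\phi\in\aut{G_\PQDM}$ preserves the local predicate ``$\del B_1(v)$ is connected'' (an automorphism carries the induced subgraph on $\del B_1(v)$ isomorphically onto the one on $\del B_1(\phi(v))$; cf.~\cref{app:graphs}), so $\phi(M_1)=M_1$ and $\phi(M_2)=M_2$; since $\phi$ also maps connected components of the induced subgraph on $M_2$ to connected components, it permutes the blocks $Q_k$. Hence $u\equiv w$ implies $\phi(u)\equiv\phi(w)$, so $[u]\mapsto[\phi(u)]$ is a well-defined self-map of $V_\PQDM^\text{red}$, and it respects $E_\PQDM^\text{red}=\{\{[u],[v]\}:\{u,v\}\in E_\PQDM\}$ directly from the definition together with the fact that $\phi$ respects $\equiv$. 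From $\F_{\phi\circ\psi}=\F_\phi\circ\F_\psi$ and $\F_{\mathrm{id}}=\mathrm{id}$ (both immediate) one gets that $\F_\phi$ has two-sided inverse $\F_{\phi^{-1}}$, hence is bijective, hence $\F_\phi\in\aut{G_\PQDM^\text{red}}$; and $\F$ is a homomorphism.

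\emph{Injectivity.} Since $\F$ is a homomorphism it suffices to show $\F_\phi=\mathrm{id}\Rightarrow\phi=\mathrm{id}$. If $\F_\phi=\mathrm{id}$ then $\phi$ fixes every singleton class, i.e.\ $\phi(v)=v$ for all $v\in M_1$, and fixes every block setwise, $\phi(Q_k)=Q_k$. What remains is to exclude a nontrivial permutation of the four atoms inside some $Q_k\simeq K_4$. This is where the explicit blockade graph enters: one reads off \cref{fig:Pichler_Graph_red} that the four vertices of each $Q_k$ have pairwise distinct neighbourhoods inside $M_1$. Because $\phi$ is the identity on $M_1$ and fixes $Q_k$ setwise, and each $v\in Q_k$ is the unique vertex of $Q_k$ with prescribed $M_1$-neighbourhood $N(v)\cap M_1$, it follows $\phi(v)=v$, hence $\phi=\mathrm{id}$.

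\emph{Surjectivity.} Let $\bar\phi\in\aut{G_\PQDM^\text{red}}$. The vertex set of $G_\PQDM^\text{red}$ splits into the images of $M_1$ (singleton classes) and the collapsed blocks $[Q_k]$; this dichotomy is intrinsic to $G_\PQDM^\text{red}$ — again detectable by a local criterion read off the figure, e.g.\ the induced subgraph on the neighbourhood of a collapsed vertex versus that of an $M_1$-vertex — so $\bar\phi$ respects it. Define $\phi$ on $M_1$ by letting $\phi(v)$ be the representative of $\bar\phi(\{v\})$. For a block $Q_k$, write $\bar\phi([Q_k])=[Q_{k'}]$; using once more that the four atoms of $Q_{k'}$ are separated by their $M_1$-neighbourhoods, together with the fact that $G_\PQDM^\text{red}$ still encodes, for each block, which of its four atoms each external neighbour attaches to, one reads off for every $v\in Q_k$ the atom $v'\in Q_{k'}$ whose external neighbourhood equals $\phi(N(v)\cap M_1)$, and sets $\phi(v):=v'$. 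One then checks that $\phi$ is a bijection of $V_\PQDM$ preserving $E_\PQDM$ — edges inside blocks automatically (all $Q_k\simeq K_4$), edges between blocks and $M_1$ and inside $M_1$ because $\bar\phi\in\aut{G_\PQDM^\text{red}}$ — and that $\F_\phi=\bar\phi$ by construction. The main obstacle throughout is precisely this structural input: that the atoms of each $K_4$-block are pairwise separated by their $M_1$-neighbourhoods and that this separating data survives the reduction. It is not a formal matter but must be extracted from the explicit blockade graph of \cref{fig:Pichler_Graph_red}, and it is genuinely necessary — if two atoms of a block shared an external neighbourhood, transposing them would be a nontrivial automorphism of $G_\PQDM$ inducing the identity on $G_\PQDM^\text{red}$, and $\F$ would fail to be injective.
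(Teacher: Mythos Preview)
Your well-definedness, homomorphism, and injectivity arguments are fine, and the overall plan (homomorphism with trivial kernel, then lift) is a reasonable alternative to the paper's explicit two-sided inverse. The weak point is surjectivity, specifically the assertion that ``$G_\PQDM^\text{red}$ still encodes, for each block, which of its four atoms each external neighbour attaches to.'' That is precisely what a quotient graph discards: the edge $\{[Q_k],[w]\}$ in $G_\PQDM^\text{red}$ only records that \emph{some} atom of $Q_k$ is adjacent to $w$, so an automorphism $\bar\phi$ of the reduced graph carries no direct information about intra-block attachments. For your lift you must establish \emph{existence} of $v'\in Q_{k'}$ with $N(v')\cap M_1=\bar\phi(N(v)\cap M_1)$, and this does not follow from $\bar\phi\in\aut{G_\PQDM^\text{red}}$; it requires a separate uniformity property of the Zeng gadget in $G_\PQDM$ (that all $K_4$-blocks attach to their surrounding singleton-class vertices in the same pattern), which has to be read off the blockade graph itself and is invisible in the quotient.

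The paper avoids this by recording a sharper local characterisation inside $G_\PQDM$: every atom $u$ in a $K_4$-block has exactly three neighbours $v_1,v_2,v_3$ in the singleton class, and these satisfy $\bigcap_i\del B_1(v_i)=\{u\}$ in the full graph. The inverse map is then \emph{defined} pointwise by taking $\G_\Phi(u)$ to be the unique element of $\bigcap_i\del B_1(\G_\Phi(v_i))$, after which the checks that $\G_\Phi\in\aut{G_\PQDM}$ and that $\F,\G$ are mutual inverses are routine. Your approach can be completed along the same lines once the incorrect ``encoded in the quotient'' claim is replaced by the correct structural input from $G_\PQDM$.
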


\begin{proof}
    First we have to show that $\F_\phi$ is well-defined. To this end, let $[u]
    = [v]$. Then either $u = v$ or $u, v \in Q_k$. Only the latter case is
    nontrivial. As $\phi$ preserves connectedness and $\phi(M_2) = M_2$, the
    images of $U$ and $v$ lie in the same connected component, i.e.,
    $\phi(u),\phi(v) \in Q_{k'}$ for some $k'$.  This implies that $[\phi(u)] =
    [\phi(v)]$.
 
    Second, we have to show that $\F_\phi$ defines a graph automorphism.  For
    bijectivity, note that the map $\F_{\phi^{-1}}$ inverts the map
    $\F_{\phi}$.  For edge transitivity, let $\{[u],[v]\} \in
    E_\PQDM^\text{red}$, so \Wlog we can assume that $\{u,v\} \in E_\PQDM$.
    This implies that $\{\phi(u),\phi(v)\} \in E_\PQDM$; hence,
    \begin{align}
	    \{\F_{\phi}([u]),\F_{\phi}([v])\} 
        = \{[\phi(u)],[\phi(v)]\} \in E_\PQDM^\text{red}.
    \end{align}
    Now we have to show the bijectivity of the map $\F$. To this end, we
    construct an inverse map $\G: \aut{G_\PQDM^\text{red}} \rightarrow
    \aut{G_\PQDM}$ such that $\G$ inverts $\F$.
 
    The key ingredient in this construction is that an automorphism $\phi \in
    \aut{G_\PQDM}$ is uniquely determined when its action on the set $M_2$ is
    known. Let $u \in M_1$; note from the structure of the graph that there
    exist exactly three vertices $v_1,v_2,v_3 \in M_2$ such that
    \begin{align}
        \label{eq:cap}
        \bigcap_{i \in \{1,2,3\}} \del B_1(v_i) 
        = \{u\}.
    \end{align}
    The vertex $u$ does not have any more adjacent vertices in $M_2$. The
    application of the automorphism $\phi$ on \ref{eq:cap} yields
    \begin{align}
        \label{eq:phi_u}
        \bigcap_{i \in \{1,2,3\}} \del B_1(\phi(v_i)) = \{\phi(u)\},
    \end{align}
    i.e., $\phi(u)$ is uniquely determined by the images $\phi(v_i)$.
 
    Now let $\Phi \in \Aut{G_\PQDM^\text{red}}$, we construct an
    automorphism $\G_\Phi \in \aut{G_\PQDM}$.  The vertex set
    $V_\PQDM^\text{red}$ decomposes into the components
    \begin{subequations}
        \begin{align}
            \tilde{M}_1 &:= \{[v] \in V_\PQDM^\text{red}\, |\, \textup{deg}([v]) = 2\}\\
            \tilde{M}_2 &:= \{[v] \in V_\PQDM^\text{red}\, |\, \textup{deg}([v]) = 4\}.
        \end{align}
    \end{subequations}
    Here $\textup{deg}(\cdot)$ denotes the \emph{degree} of a vertex, i.e.\ the
    number of vertices it is connected to. As the degree is invariant under
    automorphisms, so are the sets $\tilde{M}_1$ and $\tilde{M}_2$. From the
    construction of $G_\PQDM^\text{red}$ we obtain a different characterization
    of these sets,
    \begin{subequations}
        \begin{align}
            \tilde{M}_1 &= \{[v] \in V_\PQDM^\text{red}\, |\, |[v]| = 4\}\\
            \tilde{M}_2 &= \{[v] \in V_\PQDM^\text{red}\, |\, |[v]| = 1\}.
        \end{align}
    \end{subequations}
    Hence, for $[v] \in \tilde{M}_2$, $\G_\Phi([v])$ is an equivalence class
    with one element.  For $v \in M_2$, we also know that $[v] = 1$; therefore,
    $[v] \in \tilde{M}_2$ and hence $\Phi([v])$ only contains one element.  So we
    can define $\G_\Phi(v)$ such that $\Phi([v]) = \{\G_\Phi(v)\}$. For $u \in
    M_1$, we can now define $\G_\Phi(u)$ as the unique element in the
    intersection on the left side of \cref{eq:phi_u}.
 
    We have to show that the map $\G_\Phi$ defined this way is actually an
    automorphism.  For edge transitivity, let $\{v,v'\} \in E_\PQDM$. The first
    case is, \Wlog, $v' \in M_1$ and $v \in M_2$. In this case, $\G_\Phi(v')$ is
    defined such that $\{\G_\Phi(v),\G_\Phi(v')\} \in E_\PQDM$.  The other case
    is $v,v' \in M_1$; here we know that $v,v' \in Q_k$ for some $k$, so they
    share two neighbors $u_1,u_2 \in M_2$. From the previous case, it follows
    that $\G_\Phi(u_1)$ and $\G_\Phi(u_2)$ are neighbors of both $\G_\Phi(v)$
    and $\G_\Phi(v')$. This is only possible if $\G_\Phi(v)$ and $\G_\Phi(v')$
    lie in the same connected component $Q_{k'}$ for some $k'$.  This implies
    that $\{\G_\Phi(v),\G_\Phi(v')\} \in E_\PQDM$.
 
    To show bijectivity, we show that, for $\Phi \in \Aut{G_\PQDM^\text{red}}$,
    $\G_{\Phi^{-1}} = \G_{\Phi}^{-1}$. To this end, for $v \in M_2$, consider
    \begin{subequations}
        \begin{align}
            \{v\} 
            = [v] 
            &= \Phi(\Phi^{-1}([v]))\\
            &= \Phi(\{\G_{\Phi^{-1}}(v)\})\\
            &= \Phi([\G_{\Phi^{-1}}(v)])\\
            &= \{\G_{\Phi}(\G_{\Phi^{-1}}(v))\},
        \end{align}
    \end{subequations}
    where we have used the fact that the equivalence classes of $v$ and $\G_{\Phi^{-1}}(v)$
    only contain one element.  As an automorphism in $\aut{G_\PQDM}$ is
    uniquely determined by its action on the set $M_2$ and the action of
    $\G_{\Phi}\circ\G_{\Phi^{-1}}$ on $M_2$ is the identity, we infer that
    $\G_{\Phi}\circ\G_{\Phi^{-1}}$ acts as the identity on $V_\PQDM$. The proof
    for $\G_{\Phi^{-1}}\circ\G_{\Phi}$ is analogous.
    
    Now we prove that $\G$ is the inverse of $\F$. Let $\phi \in
    \text{Aut}(G_\PQDM)$. For $v \in M_2$, we obtain
    \begin{align}
        \{\G(\F(\phi))(v)\} 
        = \F(\phi)([v]) 
        = [\phi(v)] 
        = \{\phi(v)\}.
    \end{align}
    As this uniquely determines the automorphism, $\phi = \G(\F(\phi))$.
 
    On the other hand, starting with an automorphism $\Phi \in
    \text{Aut}(G_\PQDM^\text{red})$, for $v \in M_2$, we obtain
    \begin{align}
        \F(\G(\Phi))([v]) 
        = [\G(\Phi)(v)] 
        = \Phi([v]).
    \end{align}
    This uniquely determines the action of $\F(\G(\Phi))$, because for any
    vertex $[u] \in \tilde{M}_1$, we find two neighbors $[v_1],[v_2] \in
    \tilde{M}_2$ such that
    \begin{align}
        \partial B_1([v_1]) \cap \partial B_1([v_2]) 
        = \{[u]\}.
    \end{align}
    Therefore, for any automorphism $\Phi' \in \text{Aut}(G_\PQDM^\text{red})$,
    we find that
    \begin{align}
        \partial B_1(\Phi'([v_1])) \cap \partial B_1(\Phi'([v_2])) = \{\Phi'([v])\}.
    \end{align}
    Therefore, we conclude that $\F(\G(\Phi)) = \Phi$. 
\end{proof}

\subsubsection{Characterization of the automorphism group}

We now characterize the automorphism group
$\text{Aut}(G_\PQDM^\text{ret})$.  Suppose that we have a square lattice with
periodic boundary conditions and side lengths $N_x,N_y > 2$. As we only
deal with one graph in this section, we write $u \sim v$ whenever $\{u,v\} \in
E_\PQDM^\text{ret}$.

Pick any vertex $v \in V_\PQDM^\text{ret}$ with degree $2$.  Then, for this
vertex, there exist exactly two subsets $S_1,S_2 \subset V_\PQDM^\text{ret}$
such that $v \in S_1,S_2$ and the induced subgraph of $S_1$ and $S_2$ is
isomorphic to the cycle graph $C_8$. To prove this, we note that a set $S$ with
these properties must be contained in the neighborhood $B_4(v)$. This
neighborhood is shown in \cref{fig:Pichler_Graph_red} (here we used the
assumption that $N_x,N_y > 2$). Also, $S$ cannot contain any pendant vertices (i.e.\
vertices with only one neighbor) in $B_4(v)$, so we remove them. From the
resulting set, we again remove all pendant vertices and repeat until we reach a
set with no pendant vertices.  These removed vertices are colored gray in
\cref{fig:Pichler_Graph_red}. In the resulting graph it is clear, that there
exist exactly two subsets with the desired property.

In the following argument, we make use of the next lemma.

\begin{lemma}
   \label{lm:cycle_graph_unique}

   Let $G_1,G_2 \simeq C_k$ for some $k \geq 3$, and $x_1,x_2 \in G_1$ such
   that $x_1 \sim x_2$.  Then any graph isomorphism $\phi: G_1 \rightarrow G_2$
   is uniquely determined by $\phi(x_1)$ and $\phi(x_2)$.

\end{lemma}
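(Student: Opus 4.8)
The plan is to exploit the rigidity of cycle graphs: in $C_k$ every vertex has degree exactly two, so once the image of a single oriented edge is fixed, walking around the cycle forces everything else. Concretely, I would first fix a cyclic labeling $v_0, v_1, \ldots, v_{k-1}$ of the vertices of $G_1 \simeq C_k$ with $v_i \sim v_{i+1}$ (indices taken mod $k$), and assume without loss of generality that $x_1 = v_0$ and $x_2 = v_1$. Since $k \geq 3$, the two neighbors $v_{i-1}$ and $v_{i+1}$ of any vertex $v_i$ are distinct; hence for each $i$ the vertex $v_{i+1}$ is characterized as the \emph{unique} neighbor of $v_i$ different from $v_{i-1}$. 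The analogous statement holds in $G_2$, since $G_2 \simeq C_k$ as well.

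Next, given any isomorphism $\phi: G_1 \to G_2$, I would show by induction on $i$ that $\phi(v_i)$ is completely determined by the ordered pair $(\phi(v_0), \phi(v_1))$. The base cases $i = 0, 1$ are exactly the hypothesis (and note $\phi(v_0) \sim \phi(v_1)$ because $\phi$ preserves adjacency). For the inductive step, assume $\phi(v_{i-1})$ and $\phi(v_i)$ are already determined. As $\phi$ is a graph isomorphism, $\phi(v_i)$ has degree two in $G_2$, its two (distinct) neighbors are precisely $\phi(v_{i-1})$ and $\phi(v_{i+1})$, so $\phi(v_{i+1})$ is forced to be the unique neighbor of $\phi(v_i)$ in $G_2$ other than $\phi(v_{i-1})$. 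This value is determined, which closes the induction and shows $\phi$ is pinned down on all of $V(G_1)$ by its values on $x_1$ and $x_2$.

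To conclude, if $\phi, \psi: G_1 \to G_2$ are two isomorphisms with $\phi(x_1) = \psi(x_1)$ and $\phi(x_2) = \psi(x_2)$, applying the inductive argument to each yields $\phi(v_j) = \psi(v_j)$ for every $j$, i.e.\ $\phi = \psi$. I do not expect a genuine obstacle here; the only point that needs care is the use of $k \geq 3$ to ensure that ``the other neighbor'' of a vertex is unambiguous — this is exactly what makes the forward recursion well-defined and is why the hypothesis excludes the degenerate small cases. (One could equivalently phrase the whole argument group-theoretically via $\operatorname{Aut}(C_k) \simeq D_k$ acting simply transitively on oriented edges, but the direct induction is shorter and self-contained.)
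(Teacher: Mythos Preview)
Your proof is correct and follows essentially the same approach as the paper: fix a cyclic labeling, then use induction on the index, at each step identifying $\phi(v_{i+1})$ as the unique neighbor of $\phi(v_i)$ in $G_2$ different from $\phi(v_{i-1})$. You are slightly more explicit than the paper about why $k \geq 3$ is needed (to guarantee the two neighbors of each vertex are distinct), which is a nice touch.
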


\begin{proof}
    We write $G_1 = \{x_1,\ldots,x_k\}$ and $G_2 = \{y_1,\ldots,y_k\}$ such
    that, for all $l \leq k$, $x_{l} \sim x_{l+1}$ and $y_{l} \sim y_{l+1}$,
    with the boundary condition $k+1 = 0$.

    Suppose that we already determined the images $\phi(x_1),\ldots,\phi(x_l)$ for
    $l \geq 2$. We know that $x_{l-1} \sim x_l \sim x_{l+1}$; therefore,
    $\phi(x_{l-1}) \sim \phi(x_l) \sim \phi(x_{l+1})$.  As $G_2$ is a cycle
    graph, each vertex has exactly two neighbors. Therefore, $\phi(x_{l+1})$ is
    the unique neighbor of $\phi(x_l)$ which is not $\phi(x_{l-1})$. By
    induction, this shows that all images $\phi(x_l)$ for $l\geq 2$ can be
    determined.
\end{proof}

Let $\Phi \in \text{Aut}(G_\PQDM^\text{ret})$ and consider the image $\Phi(v)$.
For this vertex, exactly two vertex sets $S_1', S_2'$ exist such that
$\Phi(v) \in S_1',S_2'$ and the induced subgraph of both sets is isomorphic to
$C_8$. The sets $\Phi(S_1)$ and $\Phi(S_2)$ are also isomorphic to $C_8$, as
$\Phi$ is an automorphism. Furthermore, $\Phi(v) \in \Phi(S_1),\Phi(S_2)$.
Thus, there are two possibilities:
\begin{subequations}
    \begin{align}
        S_1' = \Phi(S_1) &\text{ and } S_2' = \Phi(S_2)\\
        S_1' = \Phi(S_2) &\text{ and } S_2' = \Phi(S_1).
    \end{align}
\end{subequations}
Let $v_1,v_2$ be the two adjacent vertices of $v$; they lie in $S_1$ and $S_2$.
Let $w_1,w_2$ be the two adjacent vertices of $\Phi(v)$; then there are again
two possibilities:
\begin{subequations}
    \begin{align}
	    w_1 = \Phi(v_1)& \text{ and } w_2 = \Phi(v_2)\\
	    w_1 = \Phi(v_2)& \text{ and } w_2 = \Phi(v_1).
    \end{align}
\end{subequations}
So, overall there are four possibilities. For each of these,
\cref{lm:cycle_graph_unique} implies that the images of all vertices in $S_1$
and $S_2$ are uniquely determined.

Now we show that this completely determines the automorphism $\Phi$.  Take some
vertex $u \in S_1$ with $u \neq v$. Then there exists another set of vertices
$S_3$ such that its induced subgraph is isomorphic to $C_8$ and $S_3 \neq
S_1$.  Likewise, there exists a set of vertices $S_3'$ such that $\Phi(u)
\in S_3'$, $S_3' \neq S_1'$ and the induced subgraph of $S_3'$ is isomorphic to
$C_8$.  So we know that $\Phi(S_3) \neq S_1'$, as $\Phi$ is bijective.
Therefore the only possibility is $\Phi(S_3) = S_3'$. As $u$ has only two
adjacent vertices, they also lie in $S_1$ and $S_3$.  Because they lie in $S_1$,
their image under $\Phi$ is already determined. Because they lie in $S_3$, it
follows from \cref{lm:cycle_graph_unique} that the image of every vertex in
$S_3$ is uniquely determined. This argument can be repeated for every degree
two vertex, whose image is already determined.

Now we can count the possible automorphisms. For $v\in V_\PQDM^\text{red}$,
there are $2N_xN_y$ possibilities for $\Phi(v)$. After that, there are $4$
additional possibilities that lead to $8N_xN_y$ possible automorphisms, so
$|\Aut{G_\PQDM}| = |\Aut{G_\PQDM^\text{red}}| = 8N_xN_y$. The size of this
automorphism group is actually as small as possible.  For a tessellated square
lattice of lengths $N_x,N_y$, there are $N_xN_y$ distinct translation
automorphisms.  Furthermore global reflections and rotations form a dihedral
group $D_8$, and combinations lead to at least $8N_xN_y$ distinct automorphisms.
Via the bound in \cref{app:burnside}, this result implies that, in this
structure, the number of orbits $|L_\C/\A_\C|$ grows exponentially with the
system size.

The requirement that $N_x,N_y > 2$ is actually important. If this is not the
case, additional automorphisms appear. The action of these can be pictured in
the following way. Take two adjacent unit cells in the direction with length
$2$. Then these unit cells actually share two vertices instead of one. The map
that exchanges these two vertices is a graph automorphism.

Even with these additional automorphisms, the automorphism group is in general
not large enough such that there is only one orbit. We checked this explicitly
for a system size of $N_x = 3$, $N_y = 2$ (this was one of the systems
diagonalized in Ref.~\cite{Zeng2025}), in which case the ground state space 
$L_{\C_\PQDM}$ decomposes into two distinct orbits.

\subsection{Stastny model}
\label{app:stastny}

\begin{figure*}[tb]
   \centering
   \includegraphics[width=0.8\linewidth]{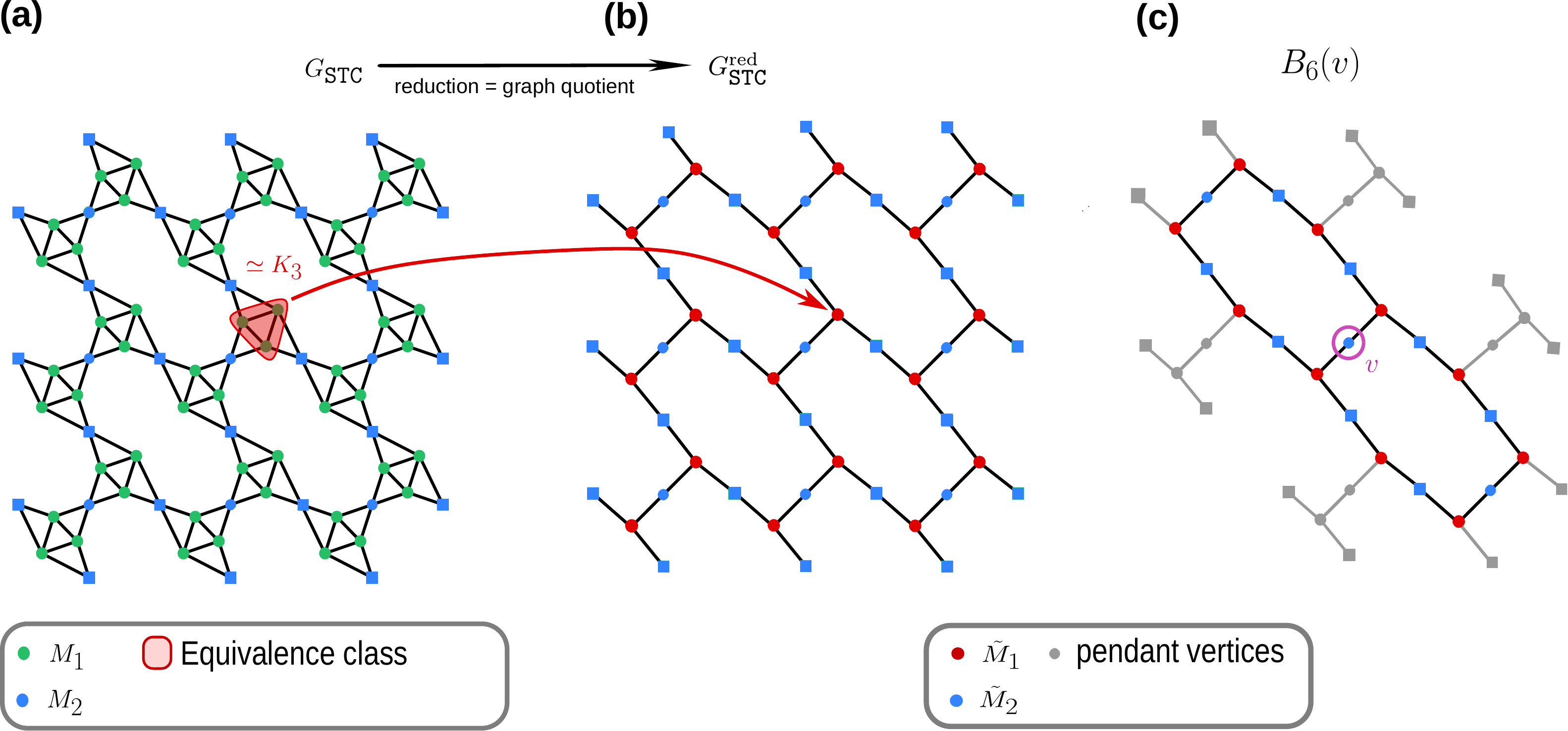}
   \caption{%
   \emph{Reduction of the Stastny model.} 
   \textbf{(a)} Shown is the blockade graph proposed by Stastny \etal
   $G_\texttt{STC}$ to realize the Toric Code with a blockade structure.
   Vertices in the set $M_1$ are colored green, while those in the set $M_2$ are
   colored blue. The set $M_1$ decomposes into connected components $Q_k$; the
   induced subgraph of every such component is isomorphic to $K_3$. By an
   equivalence relation we identify the vertices that lie in the same component
   $Q_k$. Some of the equivalence classes are shown as red boxes.
   \textbf{(b)} Shown is the quotient graph of $G_\texttt{STC}$ by the aforementioned
   equivalence relation; the resulting graph is denoted $G_\texttt{STC}^\text{red}$.
   We prove in this section, that the automorphism groups of both graphs have the
   same size.
   \textbf{(c)} Shown is the neighborhood $B_6(v)$ of some vertex $v\in \tilde{M}_1$.
   From this neighborhood, all pendant vertices are removed. From the remaining
   graph, again, all pendant vertices are removed. This process is repeated
   until no pendant vertices are left. The removed vertices are colored
   gray. This shows that there are exactly two sets $S_1,S_2$ with $v \in
   S_1,S_2$ such that their induced subgraph is isomorphic to $C_{12}$.
   }
   \label{fig:Stastny_Graph_red}
\end{figure*}

Stastny \etal~\cite{Stastny2023a} proposed a blockade structure
$\C_\texttt{STC}$ to realize the toric code. This structure is shown in
\cref{fig:Stastny_Graph_red}. In this section we characterize the automorphism
group of its blockade graph $G_\texttt{STC}$ and show that it is not
fully-symmetric. We use the same methods as developed in \cref{app:pichler}.
As all arguments are quite similar, we will not give them in full detail.

We decompose $V_\texttt{STC}$ into the subsets,
\begin{subequations}
    \begin{align}
        M_1 &:= \{v \in V_\texttt{STC}\, |\, \del B_1(v)\ \text{is connected}\}\\
        M_2 &:= \{v \in V_\texttt{STC}\, |\, \del B_1(v)\ \text{is not connected}\}.
    \end{align}
\end{subequations}
The vertices in $M_1$ are colored green in \cref{fig:Stastny_Graph_red}, and the
vertices in $M_2$ are colored blue. We observe, that the set $M_1$ decomposes
into connected components $M_1 = \sqcup_{k} Q_k$, where $Q_k \simeq K_3$. By
introducing the equivalence relation
\begin{align}
   \label{eq:eq_rel}
   v \equiv w :\Leftrightarrow v = w \lor \exists k: v,w \in Q_k,
\end{align}
we obtain the reduced graph $G_\texttt{STC}^\text{red} =
(V_\texttt{STC}^\text{red},E_\texttt{STC}^\text{red})$ as the quotient graph of
$G_\texttt{STC}$ under \cref{eq:eq_rel}.

The reduced graph is shown in \cref{fig:Stastny_Graph_red}. Via analogous
arguments as in \cref{app:pichler} the automorphism group of
$G_\texttt{STC}^\text{red}$ has the same size as the automorphisms group of
$G_\texttt{STC}$.

To explicitly characterize this automorphism group, we assume that we
assume, that $N_x, N_y \geq 3$.  Note that, for every vertex $v \in
\tilde{M}_2$, there are exactly two sets of vertices $S_1,S_2$ with $v \in
S_1,S_2$ such that their induced subgraph is isomorphic to $C_{12}$.  To see
this, we only have to consider the neighborhood $B_6(v)$. From this
neighborhood, all pendant vertices (vertices with only one neighbor) can be
removed.  The resulting graph is shown in \cref{fig:Stastny_Graph_red}; it
contains the two cycles graphs that contain $v$.

Let $\Phi \in \aut{G_\texttt{STC}^\text{red}}$; then, for $\Phi(v)$, there
are also exactly two sets of vertices $S_1',S_2'$ with $v \in S_1',S_2'$ such that
their induced subgraph is isomorphic to $C_{12}$. Also, $\Phi(S_1)$ and
$\Phi(S_2)$ have the same property as $\Phi$ is an automorphism. Therefore,
there are the possibilities
\begin{subequations}
    \begin{align}
        S_1' = \Phi(S_1)&,\;S_2' = \Phi(S_2)\\
        S_1' = \Phi(S_2)&,\;S_2' = \Phi(S_1).
    \end{align}
\end{subequations}
Let $v_1,v_2$ be the two adjacent vertices of $v$; they lie in $S_1$ and $S_2$.
Let $w_1,w_2$ be the two adjacent vertices of $\Phi(v)$; then there are again
two possibilities:
\begin{subequations}
    \begin{align}
        w_1 = \Phi(v_1)&,\;w_2 = \Phi(v_2)\\
        w_1 = \Phi(v_2)&,\;w_2 = \Phi(v_1).
    \end{align}
\end{subequations}
So, overall, there are four possibilities. For  each of these,
\cref{lm:cycle_graph_unique} implies that the images of all vertices in $S_1$
and $S_2$ are uniquely determined.

Now we show that this completely determines the automorphism $\Phi$.  Take some
vertex $u \in S_1$ with $u \neq v$. Then there exists another set of vertices
$S_3$ such that its induced subgraph is isomorphic to $C_{12}$ and $S_3 \neq
S_1$.  Likewise, there exists such a set of vertices $S_3'$ such that $\Phi(u)
\in S_3'$, $S_3' \neq S_1'$ and the induced subgraph of $S_3'$ is isomorphic to
$C_{12}$.  So we know that $\Phi(S_3) \neq S_1'$, as $\Phi$ is bijective.
Therefore, the only possibility is $\Phi(S_3) = S_3'$. As $u$ has only two
adjacent vertices, they also lie in $S_1$ and $S_3$.  Because they lie in $S_1$,
their image under $\Phi$ is already determined. Because they lie in $S_3$, it
follows from \cref{lm:cycle_graph_unique} that the image of every vertex in
$S_3$ is uniquely determined. This argument can be repeated for every degree
two vertex, whose image is already determined.

We can now count the automorphisms of this graph. Choosing an arbitrary
reference vertex $v \in \tilde{M}_2$, there are $3N_xN_y$
possibilities for $\Phi(v)$, each leading to four possible automorphisms (every
possibility described above is actually a valid automorphism but this is not
crucial). Therefore $|\aut{V_\texttt{STC}}| = |\aut{V_\texttt{STC}^\text{red}}|
= 12N_xN_y$.

\section{Ground state of $H_\sub{Loop}$ for $\Omega\neq 0$}
\label{app:spectrum}

In this section, we prove rigorously that by switching on $\Omega$, the gap in
the spectrum of $\tilde{H}_\text{Loop}(0,\omega)$ does not close. To this end,
we apply a result by Michalakis \etal~\cite{Michalakis2013} that provides
sufficient conditions for the stability of a spectral gap under perturbations.

\subsection{Conditions for gap stability}
\label{app:cond}

We briefly summarize the conditions for gap stability given by Michalakis.
The system considered is defined on the lattice $\Lambda_L := [0,L]^2
\subseteq \Z^2$, which is endowed with the usual graph metric (= the $\ell^1$
metric). We refer to $L$ as the system size. To each site $I \in \Lambda_L$, a
Hilbert space $\mathcal{H}_I$ is associated; the system Hilbert space is given
by $\mathcal{H}_{\Lambda_L} := \bigotimes_{I \in \Lambda_L} \mathcal{H}_I$. The
unperturbed system Hamiltonian $H_0$ is assumed to satisfy the following
properties.
\begin{enumerate}

    \item $H_0$ can be written as a sum of local operators
    \begin{align}
        H_0 := \sum_{I \in \Lambda} Q_I,
    \end{align}
    where each operator $Q_I$ has a constant range of support.

    \item The Hamiltonian $H_0$ satisfies periodic boundary conditions.

    \item The projection $P_0$ onto the ground state subspace of $H_0$
        satisfies $P_0 Q_I = q_{0,I} Q_I$, where $q_{0,I}$ denotes the minimal
		eigenvalue of $Q_I$ (Frustration freeness).

    \item For all $L \geq 2$, $H_0$ has a spectral gap $\gamma > 0$, which is
        independent of the system size.

\end{enumerate}
We denote the support of an operator by $\operatorname{supp}(Q_I)$.

The perturbation $V$ is assumed to have the form 
\begin{align}
    \label{eq:pert}
    V = \sum_{I \in \Lambda_L} \sum_{r = 0}^L V_I(r)
\end{align}
such that $\operatorname{supp}(V_I(r)) \subseteq B_r(I)$ for all $r$. Moreover, for
allowed perturbations, there exists a rapidly decaying function $f(r)$ such
that $\|V_I(r)\| \leq Jf(r)$ for some constant $J > 0$. Such a perturbation is
called a $(J,f)$-perturbation.  These conditions are trivially satisfied if
\cref{eq:pert} only contains terms up to a finite range. That is, there is a
constant $R > 0$ independent of the system size such that $V_I(r) = 0$ for all
$r \geq R$. The perturbation we consider in the following sections has this 
property.

There are two more conditions for gap stability: the \emph{local-gap}
condition and \emph{local TQO}.  For any set $A
\subseteq \Lambda_L$, we define
\begin{align}
   \label{eq:H_local}
   H_{A} = \sum_{\operatorname{supp}(Q_I) \subseteq A} Q_{I}.
\end{align}
Furthermore, for $\epsilon > 0$, let $P_B(\epsilon)$ be the projection onto the
eigenstates of $H_A$ with energy less than or equal to $\epsilon$.

A Hamiltonian $H$ satisfies the \emph{local-gap condition} iff there exists an
at most polynomial decaying function $\gamma(r) > 0$ such that, for all $I_0
\in \Lambda_L$, $P_{B_r(I_0)}(\gamma(r)) = P_{B_r(I_0)}(0)$. Here $B_r(I_0)$
denotes the ball with radius $r$.

Lastly we define the condition \emph{local topological quantum order (local TQO)}.
Let $I_0\in \Lambda$ and $A =
B_r(I_0)$, $A(l) = B_{r+l}(I_0)$ for some $r \leq L^*<L$ and $l \leq L-r$,
where $L$ denotes the system size. The cut-off parameter $L^*$ has to scale
with the system size $L$. Let $\ket{\psi_1},\ket{\psi_2}$ be two ground states
of $H_{A(l)}$, and let $\rho_i(A) :=
\Tr{\ket{\psi_i}\bra{\psi_i}}{A(l)\setminus A}$ for $i = 1,2$ be their
corresponding reduced density matrices. Then a system satisfies local TQO iff
\begin{align}
    \label{eq:LocalTQO}
    \|\rho_1(A) - \rho_2(A)\|_1 \leq 2F(l),
\end{align}
where $F$ is a decaying function and $\|\cdot\|_1$ is the Schatten-1 norm.

These conditions are sufficient to guarantee the stability of the spectral gap
of $H_0$.

\begin{theorem}[Michalakis~\cite{Michalakis2013}]

    Let $H_0$ be a Hamiltonian satisfying conditions 1.-4., local TQO, and
    the local-gap condition. Let $V$ be a $(J,f)$-perturbation. Then there exist constants
    $J_0 > 0$ and $L_0 \geq 2$ such that, for $J \leq J_0$ and $L \geq L_0$,
    the spectral gap of $H_0 + V$ is bounded from below by $\gamma/2$.

\end{theorem}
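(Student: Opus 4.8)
The statement to be established is the gap-stability theorem of Michalakis and collaborators, so the natural plan is to reproduce the strategy of~\cite{Bravyi2010,Bravyi2011,Michalakis2013}: interpolate between $H_0$ and $H_0+V$ along the family $H_s := H_0 + sV$, $s\in[0,1]$, and show that the spectral gap above the (approximate) ground space of $H_s$ never drops below $\gamma/2$, uniformly in the system size $L$, provided the perturbation strength $J$ is below a threshold $J_0$ and $L\geq L_0$. Evaluating at $s=1$ then gives the claim.

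First I would construct the quasi-adiabatic (spectral-flow) unitary $U_s$ generated by the quasi-adiabatic generator $\mathcal D_s$, which is built from the Heisenberg-evolved perturbation $\int \mathrm d t\, W_\gamma(t)\, e^{iH_s t}\, V\, e^{-iH_s t}$ with a filter $W_\gamma$ whose Fourier transform decays faster than any polynomial above the energy scale $\gamma$. Using the Lieb--Robinson bound for the (finite-range, bounded) interpolating Hamiltonians, $\mathcal D_s$ and hence $U_s$ are \emph{quasi-local}: each decomposes into a sum of terms supported on balls of radius $r$ whose norms decay super-polynomially in $r$. The defining feature of the spectral flow is that $U_s$ maps the ground-space projector $P_0$ of $H_0$ to the ground-space projector $P_s$ of $H_s$; it therefore suffices to study the transformed Hamiltonian $\tilde H_s := U_s^\dagger H_s U_s$, whose ground space is the \emph{fixed} subspace $\mathrm{ran}\,P_0$, and to lower-bound its gap above $\mathrm{ran}\,P_0$.

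Second, I would write $\tilde H_s = H_0 + \Delta_s$ and split $\Delta_s$ into a block-diagonal part $P_0\Delta_s P_0 + (\id-P_0)\Delta_s(\id-P_0)$ and an off-diagonal part $(\id-P_0)\Delta_s P_0 + \mathrm{h.c.}$ For the diagonal part, the local TQO hypothesis~\eqref{eq:LocalTQO} is the essential input: since ground states of $H_{A(l)}$ are indistinguishable on the inner region $A$ up to $F(l)$, every quasi-local term of $\Delta_s$ acts on $\mathrm{ran}\,P_0$ as a scalar up to an error decaying in its support radius. Hence the extensively many ($O(L^2)$) such contributions sum to an $O(J)$ shift of the ground energy plus a degeneracy splitting that is finite-size and vanishes (super-polynomially, at the cutoff scale $L^\ast$) with $L$. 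For the off-diagonal part I would use the local-gap condition together with frustration-freeness (condition~3, $P_0 Q_I = q_{0,I}Q_I$): via a finite-size / martingale argument these yield an operator inequality controlling $(\id-P_0)\Delta_s P_0$ relative to $H_0 - E_0$, with a constant $\epsilon(J)\to 0$ as $J\to 0$, uniformly in $L$. Feeding both estimates into the elementary perturbation lemma — if $H_0$ has gap $\gamma$ over $\mathrm{ran}\,P_0$, the ground-energy shift is $O(J)$, and the off-diagonal block obeys the relative bound with small $\epsilon(J)$, then $\tilde H_s$ has gap at least $\gamma/2$ — closes the induction in $s$, and since $H_s$ is unitarily equivalent to $\tilde H_s$ the conclusion transfers back to $H_0+V$.

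The main obstacle is the uniformity in $L$: one must show that the extensively many quasi-local corrections generated by the spectral flow do not accumulate into an $O(1)$ (let alone gap-closing) perturbation of the spectrum. This is precisely what the local TQO and local-gap hypotheses are designed to prevent, but the careful simultaneous bookkeeping of the Lieb--Robinson tails of $\mathcal D_s$, the $F(l)$-decay of local indistinguishability, and the frustration-free local gaps — all controlled independently of $L$ — is the technically demanding heart of the argument and constitutes the content of~\cite{Michalakis2013} (building on~\cite{Bravyi2010,Bravyi2011}); here we take it as an external input and merely verify its hypotheses for $\tilde H_\sub{Loop}(0,\omega)$ and $\ket{\Omega_0}$ in the remainder of this appendix.
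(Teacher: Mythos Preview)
Your proposal is correct and matches the paper's treatment: the paper does not prove this theorem at all but simply cites it from~\cite{Michalakis2013} as an external input, and then devotes the rest of the appendix to verifying its hypotheses (locality, frustration-freeness, local-gap, local TQO) for $\tilde H_\sub{Loop}(0,\omega)$. Your final sentence states exactly this, and the sketch you give of the Bravyi--Hastings--Michalakis strategy (spectral flow, quasi-locality via Lieb--Robinson, local TQO to control the diagonal block, local-gap/frustration-freeness for the off-diagonal relative bound) is an accurate summary of the cited argument, which is more than the paper itself provides.
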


\subsection{Locality and frustration-freeness}
\label{app:loc_frust}

\begin{figure*}
    \centering
   \includegraphics[width=0.9\linewidth]{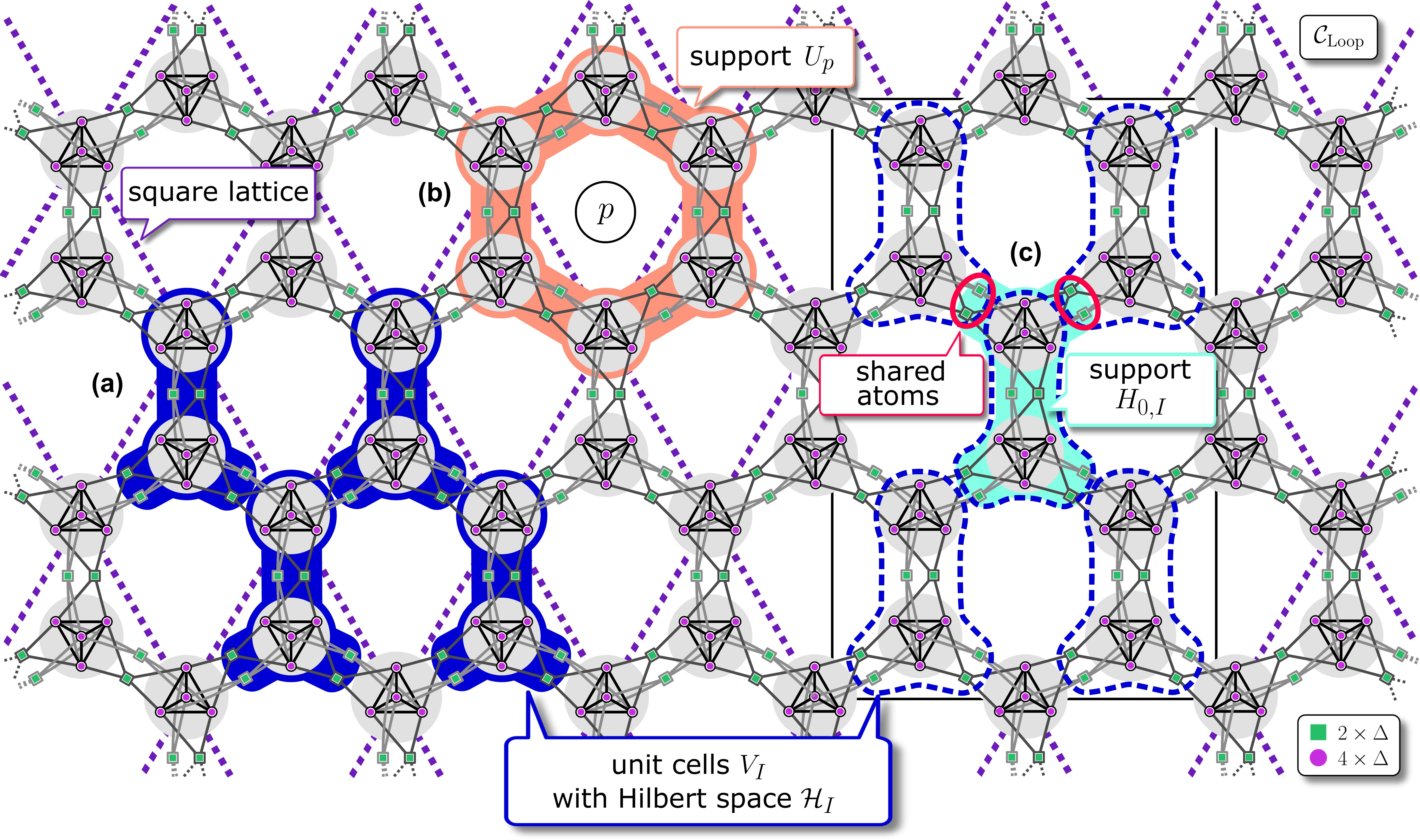}
   \caption{%
   \emph{Coarse graining of $\C_\text{Loop}$.} 
   This figure summarizes the constructions in \cref{app:loc_frust}, to show
   that the Hamiltonian $\tilde{H}_0$ can be viewed as frustration-free. The
   fill color of the atoms denotes their detuning in $\C_\sub{Loop}$ (green:
   $2\Delta$, pink: $4\Delta$).
   \textbf{(a)} The vertex set $V_\text{Loop}$ is partitioned into unit cells
   $V_I$; these are highlighted in dark blue. The structure $\C_\sub{Loop}$ can 
   be equivalently
   described by "contracting" these unit cells into larger Hilbert spaces
   $\H_I$, that are attached to sites in a square lattice. This square lattice
   is shown as dashed violet lines. 
   \textbf{(b)} The support of a plaquette operator $U_p$ is highlighted in
   orange. This support contains part of four unit cells.
   \textbf{(c)} The Hamiltonian $H_0^\sub{R}$ is partitioned into local terms
   $H_{0,I}$. The support of such an operator is highlighted in cyan. The
   dashed dark blue lines hint the surrounding unit cells. The red ellipses
   highlight the atoms where $H_{0,I}$ acts on its neighboring unit cells.
   }
   \label{fig:coarse_graining}
\end{figure*}
In the following sections, we show that the conditions stated in
\cref{app:cond} are satisfied for our setup. Our fist task is to define suitable
local Hilbert spaces and a suitable local decomposition of $\tilde{H}_0 :=
\tilde{H}_\sub{Loop}(0,\omega)$ and then verify frustration-freeness and the
local-gap condition.

We partition our Hamiltonian $\tilde{H}_0$ as
\begin{align}
    \begin{aligned}
        \label{eq:H_unp}
        \tilde{H}_0 &= \underbrace{-\sum_{i \in V_\sub{Loop}} \Delta_i n_i 
        + U \sum_{\{i,j\} \in E_\sub{Loop}} n_i n_j}_{:= H_{0}^\sub{R}} \\
        &\quad + \underbrace{\frac{\omega}{2}\sum_{\text{Faces }p} (\mathds{1} - U_p)}_{:= H_{0}^\text{p}}.
    \end{aligned}
\end{align}
We note, that \cref{eq:H_unp} already contains a local decomposition of
$\tilde{H}_0$.  However, this form does not satisfy frustration-freeness, as
the blockade competes with the detuning. We solve this problem, by
partitioning the individual atoms into larger blocks, such that the
frustration-freeness is recovered in this ``coarse-grained'' picture.

\subsubsection{Hilbert space}
\label{app:Hilbert}

As we are working with a finite blockade strength, the system Hilbert space is
given by the tensor product $\H = \bigotimes_{i \in V_\sub{Loop}} \H_i$, where
$\H_i \simeq \mathbb{C}^2$, is the Hilbert space of one atom. 

As shown in \cref{fig:coarse_graining}~(a), we partition $\C_\sub{Loop}$ into
unit cells, consisting of two tetrahedrons and the "wings" attached to one of
the tetrahedrons.  Furthermore, \cref{fig:coarse_graining}~(a) shows that we can
view these unit cells as attached to the vertices of a square lattice. We
denote this square lattice as $\Lambda$.  To avoid confusion with the vertices
of $\C_\sub{Loop}$, we refer to the vertices of $\Lambda$ as \emph{sites}. We
denote by $V_I$ the collection of atoms contained in the unit cell attached to the
site $I \in \Lambda$. We endow this square lattice with the usual
graph metric (cf.~\cref{app:graphs}) and write $I \sim J$ for neighboring
sites. The square lattice is natural to this construction, as for two vertices
$i \in V_I$ and $j \in V_J$, $i \sim j$ is only possible if $I = J$ or $I \sim
J$. This means that the coarse graining is compatible with the local structure of
the graph. This construction can be immediately generalized to the extended
tessellations $\C_\sub{Loop}^\sub{ext}$ described in \cref{sec:embedding}.

The Hilbert space describing one unit cell $V_I$ is given by the tensor product
\begin{align}
    \H_I := \bigotimes_{i \in V_I} \H_i \simeq \mathbb{C}^{28}.
\end{align}
The last equality follows as one unit cell contains $14$ atoms. For
$\C_\sub{Loop}^\sub{ext}$ (cf. \cref{sec:embedding}), this number is modified 
to $6K + 8$, where $K$ denotes the length of the links. By associativity of 
the tensor product, the Hilbert space $\H$ has the form
\begin{align}
    \H = \bigotimes_{I \in \Lambda} \H_I.
\end{align}

\subsubsection{Frustration-freeness of the blockade Hamiltonian}
\label{app:frust_block}

The next step is to construct a frustration-free decomposition of
$\tilde{H}_0$. We first focus on the part $H_0^\sub{R}$ that describes the
blockade. The structure $\C_\sub{Loop}$ is constructed as an amalgamation
of \texttt{FSU}-structures. The amalgamation has the nice property that the
resulting structure is frustration-free, and thus the idea underlying the following
constructions is to "reverse" the amalgamation and fit the emerging pieces into
a square lattice.

For a site $I$, we define the two sets $\text{Int}(V_I)$ and $\partial V_I$ 
as shown in the folloing figure. Note that $\partial V_I$ is not a subset of $V_I$.
\begin{center}
    \includegraphics[width=0.5\linewidth]{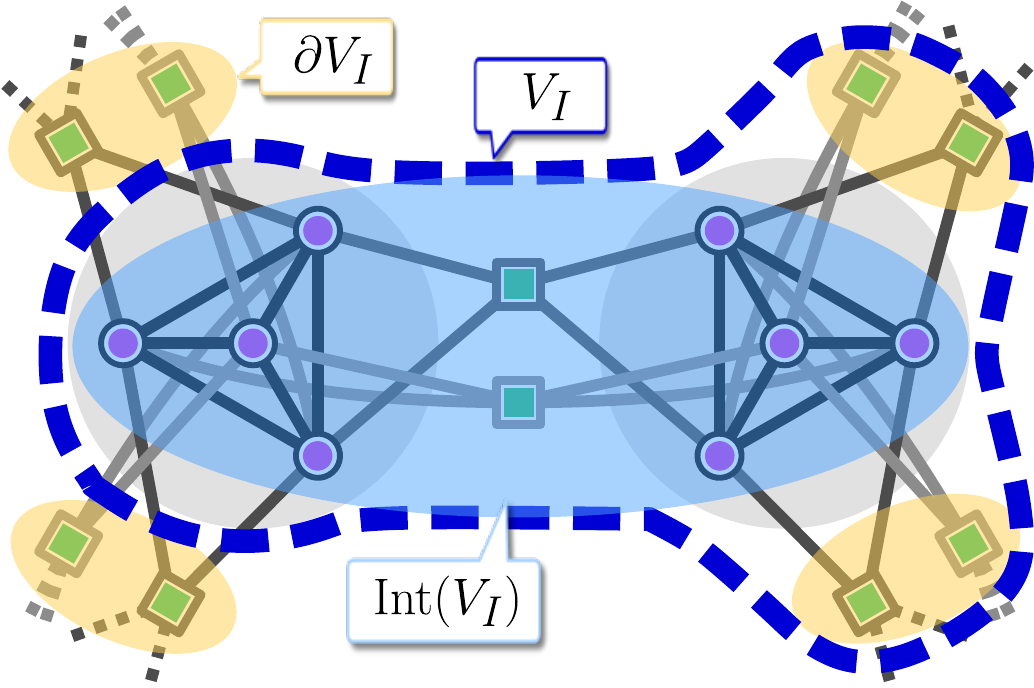}
\end{center}
We define local Hamiltonians associated to the site $I \in \Lambda$, whose
support is contained in $B_1(I)$, by
\begin{align}
    \begin{aligned}
        \label{eq:H0local}
    H_{0,I} &:= -\sum_{i \in \operatorname{Int}(V_I)} \Delta_i n_i  
        -\sum_{i \in \del V_I} \frac{\Delta_i}{2} n_i\\
        &\quad + U\sum_{\substack{i,j\in V_I\\i \sim j}}n_in_j.
    \end{aligned}
\end{align}
The support of these Hamiltonians is visualized in
\cref{fig:coarse_graining}~(b).

Any atom $i \in V_I$ is either contained in the interior of exactly one site or
the boundary of exactly two sites. Moreover, there are no blockades between
atoms in $\partial V_I$; thus, these local Hamiltonians form a decomposition of
$H_0$, as
\begin{align}
    \label{eq:HZZ_local_decomp}
    H_0^\sub{R} = \sum_{I \in \Lambda} H_{0,I}.
\end{align}
We now show the frustration-freeness of \cref{eq:HZZ_local_decomp}, i.e., a
ground state $\ket{E_0^\sub{R}}$ of $H_0^\sub{R}$ satisfies $H_{0,I}
\ket{E_0^\sub{R}} = E_{0,I} \ket{E_0^\sub{R}}$, where $E_{0,I}$ denotes the
minimal eigenvalue of $H_{0,I}$. 

All operators $H_{0,I}$ and $H_0^\sub{R}$ are diagonal in the configuration
basis.  Hence, the ground state energy of $H_0^\sub{R}$ is lower bounded by the
sum of the ground state energies of $H_{0,I}$. We show that this lower bound
is actually attained. This then implies that every ground state of
$H_0^\sub{R}$ is a ground state of $H_{0,I}$ for all $I \in \Lambda$.

\begin{figure}
   \centering
   \includegraphics[width=0.8\linewidth]{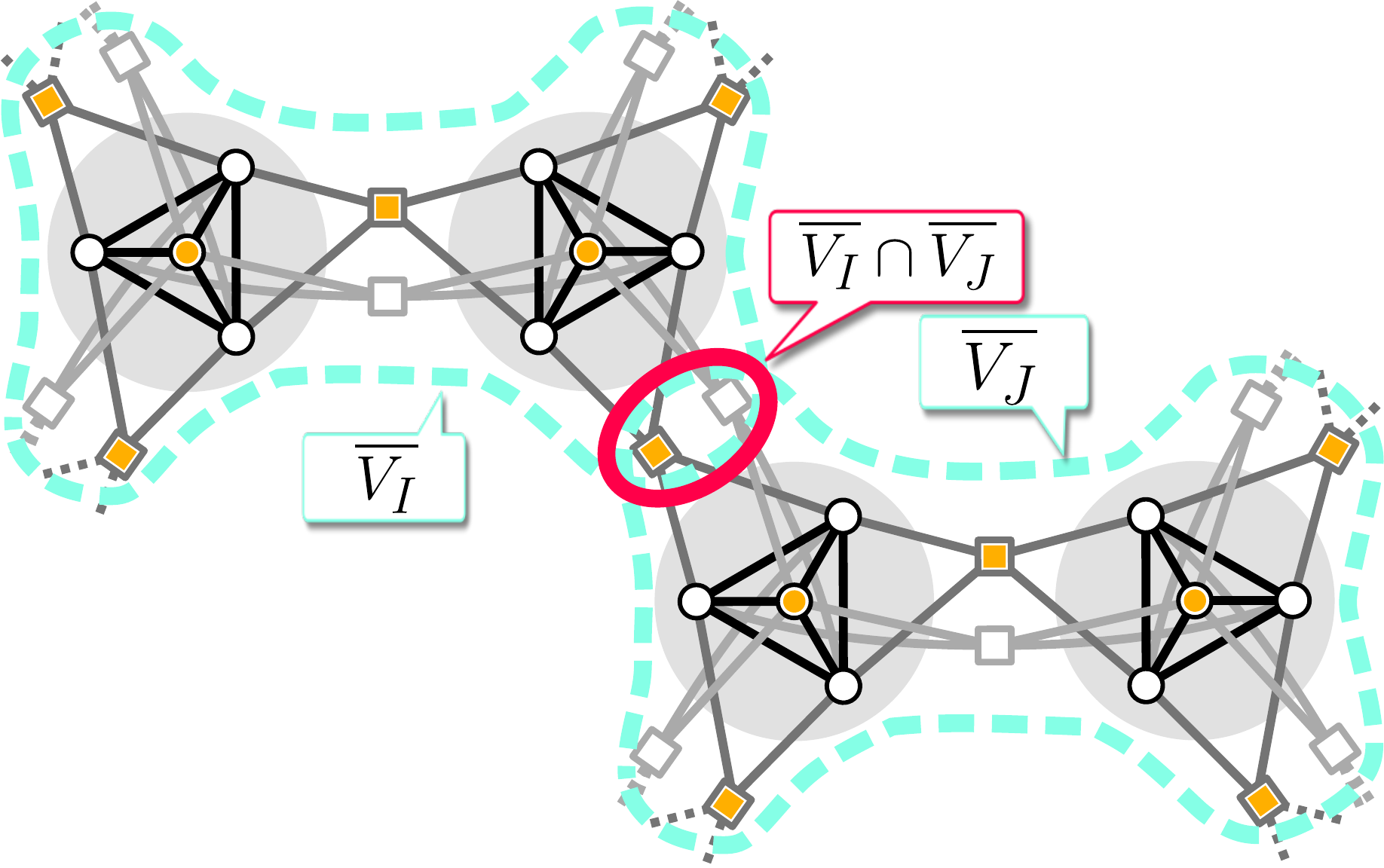}
   \caption{%
   \emph{Ground state configuration on $V_\sub{Loop}$.} Shown is a section of
   $\C_\sub{Loop}$ consisting of two sets $\overline{V_I}$ and
   $\overline{V}_J$ (marked with cyan dashed lines). The fill colors of the
   atoms denotes their state (white: unexcited, orange: excited). The shown
   configuration realizes a ground state of $H_{0,I}$ and $H_{0,J}$. Focusing
   on only $\overline{V}_I$, the configuration assigns the pair of edge atoms on the top
   left and the pair of edge atoms on the bottom right the same configuration
   (dark gray: excited, light gray: unexcited). The same is true for the edges
   on the top right and the bottom left. Thus, it is possible to assign the
   identical configuration to $\overline{V_J}$ with $J \sim I$, as both
   assignments assign the same state to the atoms in $\overline{V_I} \cap
   \overline{V_J}$ (red ellipse).
   }
   \label{fig:ground_state_conf}
\end{figure}

The operator $H_{0,I}$ is supported on the set $\overline{V_I} :=
\operatorname{Int}(V_I) \cup \del V_I$. Thus, the requirement that a configuration is a
ground state of $H_{0,I}$ only constrains the state of the atoms in
$\overline{V_I}$. 

Let $\C_{2\texttt{FSU}}$ be the structure arising as the amalgamation of an
\texttt{FSU}-structure and its mirror image, such that equivalent atoms on two
pairs of edges are identified (cf.~\cref{sec:const}). Thus we know the
configurations on $\overline{V_I}$ that are ground states of $H_{0,I}$. One
such configuration is shown in \cref{fig:ground_state_conf}. It is possible to
extend this to all of $V_\sub{Loop}$, by assigning this configuration to
$\overline{V_I}$ for \emph{all} $I \in \Lambda$. As $\overline{V_I}$ and
$\overline{V_J}$ have nonempty intersection if $I \sim J$, it is not
immediately clear that this is possible. Luckily, whenever $I \sim J$,
assigning this configuration to $\overline{V_I}$ or $\overline{V_J}$ results
in the same configuration of the atoms in $\overline{V_I} \cap \overline{V_J}$,
as shown in \cref{fig:ground_state_conf}.

This configuration on $V_\sub{Loop}$ realizes the lower bound for the ground
state energy described above and is thus a ground state of $H_0^\sub{R}$. 

\subsubsection{Frustration-freeness of the full Hamiltonian}
\label{app:frust_full}

Lastly, we focus on the term $H_{0}^\sub{p}$, which introduces new local
operators $\mathds{1} - U_p$. To be consistent with the conventions
in Ref.~\cite{Michalakis2013}, we attach these operators to Hilbert spaces
in our square lattice. To this end, we extend our square lattice $\Lambda$, by
introducing additional sites on the plaquettes of $\Lambda$ and the edges of
$\Lambda$. We refer to the former set of sites as $\Lambda_\text{e}$ and the
latter as $\Lambda_\text{p}$. We refer to this extended lattice as
$\Lambda^\text{ext}$.

To the sites $I \in \Lambda_\text{e}$ and $I \in \Lambda_\text{p}$, we attach
the trivial Hilbert spaces $\H_I \simeq \{0\}$. This is just a formality, to be
consistent with Ref.~\cite{Michalakis2013}; the Hilbert space is unchanged,
\begin{align}
    \H \simeq \bigotimes_{I \in \Lambda^\text{ext}} \H_I.
\end{align}
Note that the support of the operators $H_{0,I}$, defined in \cref{eq:H0local},
is now contained in $B_2(I)$ in the extended lattice $\Lambda^\text{ext}$. In
the extended lattice, it is natural to attach the operator $\mathds{1} - U_p$
to the sites in $\Lambda_\text{p}$. Then the support of these operators is
contained in $B_2(p)$, as shown in \cref{fig:coarse_graining}~(b). We
subsequently define the local operators $Q_I$ by
\begin{align}
    Q_{I} := 
    \begin{cases} 
        \frac{\omega}{2}(\mathds{1} - U_p), &  I = p \in \Lambda_\text{p}\\
        H_{0,I}, &  I \in \Lambda\\
        0, & I \in \Lambda_\text{e}.
    \end{cases}
\end{align}
Then the Hamiltonian~\eqref{eq:H_unp} can be written as
\begin{align}
    \label{eq:redef_H}
    \tilde{H}_0 = H_0^\sub{R} +  H_0^\sub{p} = \sum_{I \in \Lambda} Q_{I}.
\end{align}
We proceed to show the frustration-freeness of \cref{eq:redef_H}. By
construction (cf.~\cref{sec:localz2}), the loop automorphisms $\phi_p$, that
induce the operators $U_p$ are compositions of edge permutations and vertex
permutations (permutations of vertices in a tetrahedron). Thus, it is easily
verified that $i \in \del V_I$ implies that $\phi_p(i) \in \del V_I$, and
likewise for $i \in \operatorname{Int}(V_I)$. This implies that the operators
$H_{0,I}$ are invariant under $U_p$, i.e., $[H_{0,I}, U_p] = 0$. Moreover, we
know that $[H_{0,I},H_{0,J}] = 0$, as the operators $H_{0,I}$ are diagonal in
the same basis and $[U_p,U_q] = 0$, as discussed in \cref{sec:localz2}. Hence,
all operators $Q_I$ are simultaneously diagonalizable. It follows that the
ground state energy of $\tilde{H}_0$ is lower bounded by the sum of the ground
state energies $E_0[Q_I]$ of the operators $Q_I$.

As $U_p^2 = \mathds{1}$, the operator $\tfrac{1}{2}(\mathds{1}- U_p)$ is a 
projector. Thus, the smallest eigenvalue of this operator is $0$. An eigenstate $\ket{\psi}$
with this eigenvalue must satisfy $U_p \ket{\psi} = \ket{\psi}$. It is easily
verified that the state 
\begin{align}
   \ket{\omega} 
   := \frac{1}{\sqrt{|L_\sub{Loop}|}}\sum_{\vec n\in L_\sub{Loop}}\ket{\vec n}
\end{align}
satisfies $U_p \ket{\omega} = \ket{\omega}$ and, thus, $Q_I \ket{\omega} = 0$ for
$I \in \Lambda_\text{p}$. Here $|L_\sub{Loop}|$ denotes the number of elements
in $L_\sub{Loop}$. Note that the state $\ket{\omega}$ is denoted $\ket{\Omega_0}$ 
in the main text. All states $\ket{\vec{n}} \in \H_\sub{Loop}$ are, by
definition ground states of $H_0^\sub{R}$. It follows that $\ket{\omega}$ is
also a ground state of $H_0^\sub{R}$. Then the previous discussion implies
that $\ket{\omega}$ is a ground state of  all operators $Q_I = H_{0,I}$ for $I
\in \Lambda$. Hence we have shown the frustration-freeness of
\cref{eq:redef_H}.

In addition to frustration-freeness, the Hamiltonian~\eqref{eq:redef_H} has a
spectral gap independent of the system size. An excited state must either fail
to be a ground state of an operator $H_{0,I}$, leading to a spectral gap of
$\Delta$, or fail to be a ground state of $\omega/2(\mathds{1} - U_p)$, leading
to a spectral gap of $\omega$. Hence, $\tilde{H}_0$ has a spectral gap of
$\min(\Delta, \omega)$, independent of the system size.

\subsection{Local-gap condition}
\label{app:local_gap}

The argument for the local-gap condition is analogous to that for the spectral
gap in \cref{app:frust_full}. Let $A = B_r(I)$ for some site $I \in 
\Lambda^\text{ext}$ and $ r > 0$. Analogously as in \cref{eq:H_unp}, we split the
localized Hamiltonian $\tilde{H}_{0,A}$, defined by \cref{eq:H_local}, into
$\tilde{H}_{0,A} = H_{0,A}^\text{R} + H_{0,A}^\text{p}$. The
operators $H_{0,A}^\text{R}$ ($H_{0,A}^\text{p}$) contain all
terms associated to sites in $\Lambda$ ($\Lambda_\text{p}$). 

Let $L_{\text{Loop},A}$ denote the ground state configurations of
$H_{0,A}^\sub{R}$, as defined by \cref{eq:H_local}. Further define the
multiplicative group $\mathcal{U}$ generated by the operators $U_p$ with
support inside $A$, i.e, $\mathcal{U} := \langle U_p | p \in A\rangle$. Pick an
arbitrary $\vec{n} \in L_{\text{Loop},A}$ and define the set
$\mathcal{U}\ket{\vec{n}} := \{U\ket{\vec{n}}| U \in \mathcal{U}\}$. Then the
state
\begin{align}
    \label{eq:local_gs}
   \ket{\omega,\vec{n}}_A
   := \frac{1}{\sqrt{|\mathcal{U}\ket{\vec{n}}|}} \sum_{\ket{\vec{m}} 
   \in \mathcal{U}\ket{\vec{n}}} \ket{\vec{m}}
\end{align}
satisfies $U_p \ket{\omega,\vec{n}}_A = \ket{\omega,\vec{n}}_A$, because, by
construction, $U_p$ acts as a bijection on the set $\mathcal{U}\ket{\vec{n}}$.
Thus, $Q_I\ket{\omega,\vec{n}}_A = 0$ for $I \in \Lambda_\text{p}$. Furthermore
$\ket{\omega,\vec{n}}_A$ is a ground state of $H_{0,A}^\text{R}$. By
the discussion in \cref{app:frust_block}, $\ket{\omega,\vec{n}}_A$ is a ground
state of every operator $Q_I$ for $I \in \Lambda$.

Analogously as in \cref{app:frust_full}, it follows that $H_{0,A}$ has a spectral
gap of at least $\gamma = \min(\Delta,\omega)$, independent of the parameter $r$.
Hence, the local-gap condition is satisfied for a constant function
$\gamma(r)$.

\subsection{Local TQO}

\begin{figure}
    \centering
    \includegraphics[width=0.9\linewidth]{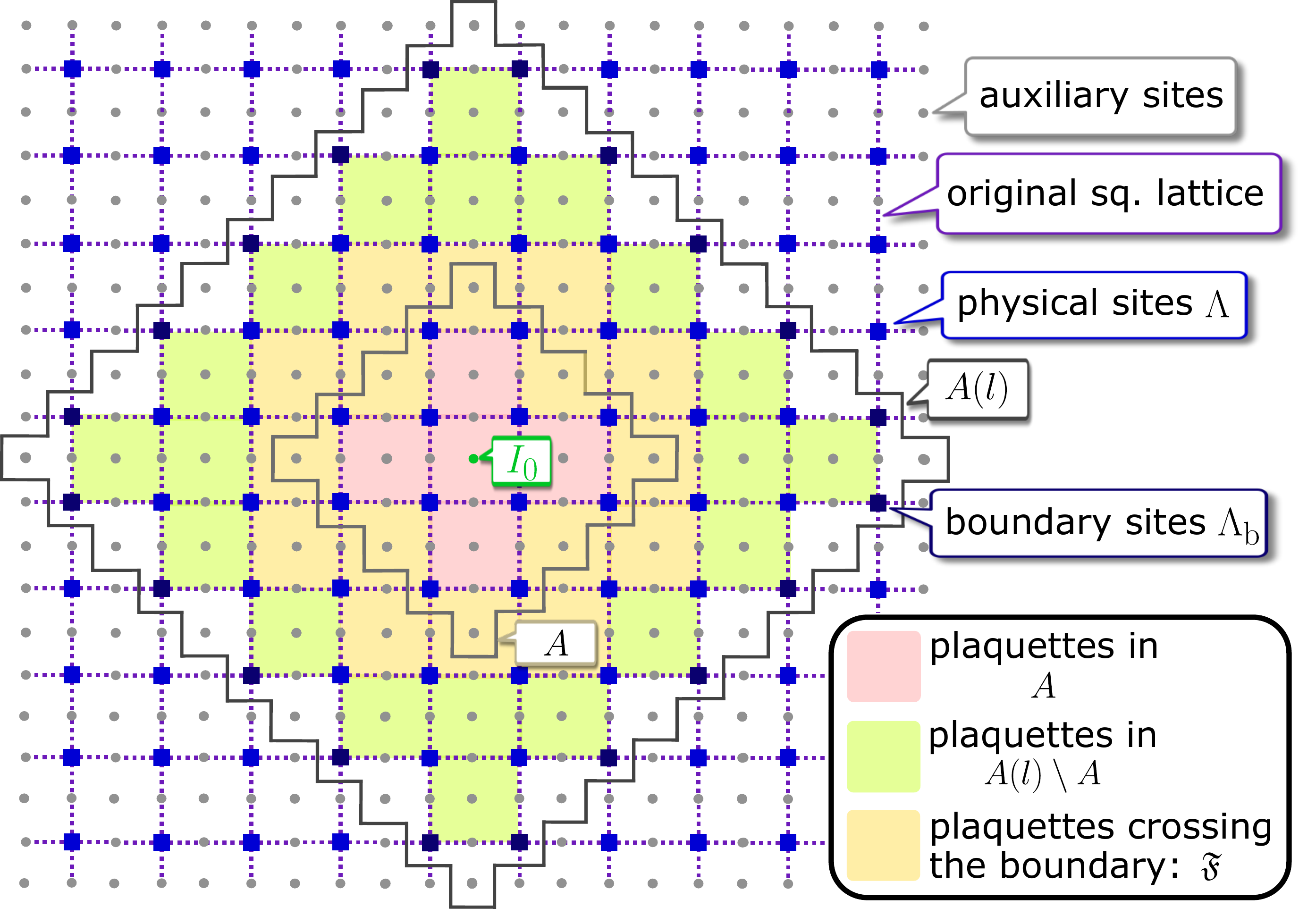}
    \caption{%
    \emph{Setup for local TQO.} Shown is the setup for the proof of the local
    TQO condition.  The blue squares and gray circles represent sites of the
    square lattice $\Lambda^\sub{ext}$, constructed to describe the structure
    $\C_\sub{Loop}$. The blue squares represent the physical sites, where the
    Hilbert spaces $\H_I$ are attached. The gray circles represent the
    auxiliary sites only used to attach the plaquette operators. The square
    lattice $\Lambda$, which only contains the physical sites, is depicted as
    dashed violet lines. The physical sites on the boundary of $A(l)$, denoted
    $\Lambda_\text{b}$, are highlighted in a darker blue color.  The green
    vertex $I_0$ is the midpoint of the region $A = B_r(I_0)$, in the depicted
    configuration $r = 4$. The boundary of the region $A$ is shown as a light
    gray line. The boundary of $A(l)$ is shown as a dark gray line, in the
    depicted configuration $l = 6$.  The plaquette operators $U_p$ with support
    lying in $A(l)$ are partitioned into three classes; in the figure this is
    shown by the color of the corresponding plaquette of the lattice $\Lambda$
    (Plaquettes fully in A: red, Plaquettes fully in $A(l)\setminus A$: green,
    Plaquettes crossing the boundary: orange). The distance $l$ is chosen
    sufficiently large such that all plaquettes that intersect $A$ are fully
    contained in $A(l)$.
    }
    \label{fig:loc_tqo}
 \end{figure}

In this section we verify the local TQO condition; in fact, we show that,
for sufficiently large $l$, the reduced density matrices as defined in
\cref{app:cond} are equal for all ground states. To this end, we need to
explicitly characterize the ground states of the local Hamiltonians
$H_{0,A(l)}$. The calculation is in some parts very similar to the
calculation that leads to the entanglement entropy for the toric code model; for
reference, see Ref.~\cite{Hamma_2005}. However, there are some differences, due to our
model having "internal degrees of freedom."

The local Hamiltonians $H_{0,A(l)}$ have a peculiar structure. For
two sites $I \sim J$ in $\Lambda$, with $I \in A$ and $J \notin A$, the
Hamiltonian $H_{0,I}$ is not part of $H_{0,A(l)}$. In this case, in
$V_I$ there are neither detunings nor blockades.  However, it can
happen that $H_{0,A(l)}$ contains an operator $U_p$ that acts
nontrivially on $I$. We denote these sites by $\Lambda_\text{b}$; they are
shown in \cref{fig:loc_tqo}.  We define $\Lambda_\text{i} := \Lambda \setminus
\Lambda_\text{b}$.

By construction, the ground state configurations of the sites in
$\Lambda_\text{i}$ can naturally be identified with loop configurations on the
(unextended) square lattice. In this picture, the symmetry operators $U_p$
behave in the same way as the operators $B_p$ in the toric code model. 

Let $\H_{\sub{Loop},A}$ denote the ground state subspace of $H_{0,A}^\sub{R}$ [cf.~
\cref{eq:H_unp}]. We note that the group $\mathcal{U}$, introduced in
\cref{app:local_gap}, acts freely. That is, for any nontrivial product
$U_{p_1}\cdots U_{p_n}$, it holds that $U_{p_1}\cdots U_{p_n}\ket{\vec{n}} \neq
\ket{\vec{n}}$ for all $\ket{\vec{n}} \in \H_{\sub{Loop},A}$. This follows as
any collection of plaquettes in $A(l)$ must have a boundary (as there are no
"half" plaquette operators). For two neighboring vertices $I,J$, this product 
swaps the atoms on the edge between $I$ and $J$, and
thus the action of the product acts nontrivially.

It immediately follows that, for every state $\ket{\vec{n}} \in
\H_{\sub{Loop},A}$, the states $\prod_{p\subseteq A(l)} U_p^{s_p} \ket{\vec{n}}$
are mutually orthonormal for all combinations of $s_p \in \{0,1\}$. Here the
notation $p \subseteq A(l)$ indicates that all physical sites of the support of
$U_p$ are contained in $A(l)$. Thus, the local ground state \ref{eq:local_gs}
has the form
\begin{subequations}
    \begin{align}
        \ket{\omega, \vec{n}}_{A(l)} &:= \mathcal{N}\sum_{\ket{\vec{m}} \in \mathcal{U}\ket{\vec{n}}}\ket{\vec{m}} \\
	    &= \mathcal{N}\sum_{s_p \in \{0,1\}}\prod_{p\subseteq A(l)} U_p^{s_p} \ket{\vec{n}}\\
        &= \mathcal{N}\prod_{p\subseteq A(l)} (\mathds{1} + U_p)\ket{\vec{n}}. \label{eq:gs_rewrite}
    \end{align}
\end{subequations}
The constant $\mathcal{N}$ denotes the normalization of the state. It is easy
to see that $\mathcal{N} = (\sqrt{2})^{-c_{A(l)}}$, where $c_{A(l)}$ denotes
the number of plaquettes that are part of $A(l)$. In the following, we
suppress the index $A(l)$ from the state~\eqref{eq:gs_rewrite}.

For any ground state of $H_{0,A(l)}^\sub{R}$, a ground state of $H_{0,A(l)}$
can be generated by \cref{eq:gs_rewrite}. However, if $\ket{\vec{m}} \in
\mathcal{U}\ket{\vec{n}}$ then both states generate the same ground state,
i.e., $\ket{\omega, \vec{n}} = \ket{\omega, \vec{m}}$. 

For convenience, we define $B := A(l)\setminus A$. As $\ket{\vec{n}}$ is a
product state, we can write it as a tensor product $\ket{\vec{n}} =
\ket{\vec{n}}_A \ket{\vec{n}}_{B}$, where $\ket{\vec{n}}_A$ (
$\ket{\vec{n}}_B$) denotes the configuration of atoms in $A$($B$). It is possible
to find a product of operators $U_p$ such that 
\begin{align}
    U_{p_1}\cdots U_{p_n}\ket{\vec{n}} = \ket{\vec{0}}_A \ket{\vec{n}'}_{B}.
\end{align}
This is due to the trivial homology group of $A$ with open boundary conditions.
We define the operator $X$ as a product of $\sigma_i^x$ on every atom in $B$,
where $\ket{\vec{n}'}_{B}$ differs from the state $\ket{\vec{0}}_{B}$. The
support of this operator is contained in $B$.  This implies that the states
$\ket{\vec{n}}$ and $X\ket{\vec{0}}$ generate the same ground state; hence,
\begin{align}
    \label{eq:gsX}
    \ket{\omega, \vec{n}} = \mathcal{N}\prod_{p\subseteq A(l)} (\mathds{1} + U_p)X\ket{\vec{0}}.
\end{align}
We partition the plaquettes contained in $A(l)$ into three sets: Plaquettes
contained in $A$, plaquettes contained in $B$, and plaquettes that cross the
boundary of $A$ and $B$. We denote the latter set by $\mathfrak{F}$. These
three sets of plaquettes are shown in \cref{fig:loc_tqo}. In the following we
assume that $l$ and thus $L_0$  are sufficiently large, such that all
plaquettes that intersect $A$ are contained in $A(l)$. This is the case in
\cref{fig:loc_tqo}. In this spirit, we define the states
\begin{subequations}
    \begin{align}
        \ket{\omega,\vec{0}}_A &:= \mathcal{N}_A\prod_{p\subseteq A} (\mathds{1} + U_p)\ket{\vec{0}}_A\\
        \ket{\omega,\vec{n}}_{B} &:= \mathcal{N}_{B} \prod_{p\subseteq B} (\mathds{1} + U_p)X\ket{\vec{0}}_{B}.
        \label{eq:gsB}
    \end{align}
\end{subequations}
The normalizations are given by $\mathcal{N}_A =
\displaystyle(\sqrt{2})^{-c_{A}}$ and $\mathcal{N}_{B} =
\displaystyle(\sqrt{2})^{-c_{B}}$, where $c_{A}$ and $c_{B}$ denote the numbers of
plaquettes contained in $A$ and $B$, respectively. Using these definitions,
\cref{eq:gsX} becomes
\begin{subequations}
    \begin{align}
        \ket{\omega, \vec{n}} 
        &= \widetilde{\mathcal{N}}\prod_{p \in \mathfrak{F}} (\mathds{1} + U_p) \ket{\omega,\vec{0}}_A \ket{\omega,\vec{n}}_{B}\\
	    &= \widetilde{\mathcal{N}}\sum_{s_p \in \{0,1\}}\prod_{p \in \mathfrak{F}} U_p^{s_p} \ket{\omega,\vec{0}}_A \ket{\omega,\vec{n}}_{B}.\label{eq:all_gs}
    \end{align}
\end{subequations}
Here the normalization is given by $\widetilde{\mathcal{N}} = (\sqrt{2})^{c_{A} +
c_{B} - c_{A(l)}} = (\sqrt{2})^{-c_\sub{AB}}$, where $c_\sub{AB} =
|\mathfrak{F}|$ denotes the number of plaquettes in $\mathfrak{F}$.

We define the vector $\vec{s} := (s_p)_{p \in \mathfrak{F}} \in
\Z_2^{c_\sub{AB}}$ and use the shorthand notation $U_{\vec{s}} := \prod\limits_{p
\in \mathfrak{F}} U_p^{s_p}$.  By construction, all products $U_{\vec{s}}$ are
induced by an automorphism $\phi \in \A_\sub{Loop}$. By \cref{sec:localz2},
these automorphisms are compositions of edge permutations and vertex
permutations [cf.~\cref{eq:loopaut}]. As the support of these permutations is
either contained in $A$ or $B$ and all these permutations commute, we can
partition them as $\phi = \phi_A \circ \phi_B$, where $\phi_A$ ($\phi_B$) only
acts on atoms in $A$ ($B$). The permutation $\phi_A$ defines an operator
$U_{\vec{s}}^A := U_{\phi_A}$ (cf.~\cref{sec:sym}) with support in $A$; the
permutation $\phi_B$ defines an operator $U_{\vec{s}}^B := U_{\phi_B}$ with
support in $B$. We want to emphasize that the permutations $\phi_A$ and
$\phi_B$ themselves are not automorphisms of $\C_\sub{Loop}$, and thus
$U_{\vec{s}}^A$ and $U_{\vec{s}}^B$ are not symmetries of the Hamiltonian.
However, because all permutations defining $\phi_A$ and $\phi_B$ square to the
identity, the same is true for the operators $U_{\vec{s}}^A$ and
$U_{\vec{s}}^B$, i.e., $(U_{\vec{s}}^A)^2 = \mathds{1}$ and $(U_{\vec{s}}^B)^2 = \mathds{1}$.
Furthermore, $U_{\vec{s}}^A$ and $U_{\vec{s}}^B$ are unitary, as they act as
permutations on the Hilbert space $\H$. We obtain the decomposition
\begin{align}
    \label{eq:U_decomp}
    U_{\vec{s}}^A \otimes U_{\vec{s}}^{B} = U_{\vec{s}} = \prod_{p \in \mathfrak{F}} U_p^{s_p}.
\end{align}
Applying this to \cref{eq:all_gs} yields
\begin{align}
    \ket{\omega, \vec{n}} = \widetilde{\mathcal{N}} \sum_{\vec{s} \in \Z_2^{c_\sub{AB}}} U_{\vec{s}}^A\ket{\omega,\vec{0}}_A U_{\vec{s}}^{B}\ket{\omega,\vec{n}}_{B}.
    \label{eq:all_gs2}
\end{align}
To reach the desired result, we have to compute the partial trace of
\cref{eq:all_gs2}. To this end, we show that the states
$U_{\vec{s}}^{B}\ket{\omega,\vec{n}}_{B}$ are orthonormal.  As a preliminary
step we consider the product $U_{\vec{s}'}^{B} U_{\vec{s}}^{B}$ for
$\vec{s},\vec{s}' \in \Z_2^{c_\sub{AB}}$. The fact that $U_p^2 = \mathds{1}$
for all plaquettes $p$, implies that $U_{\vec{s}} U_{\vec{s}'} =
U_{\vec{s}\oplus\vec{s}'}$, where the operation $\vec{s}\oplus\vec{s}'$ denotes
addition (modulo $2$) in $\Z_2^{c_\sub{AB}}$. Decomposing these operators as in
\cref{eq:U_decomp} yields
\begin{subequations}
    \begin{align}
        U_{\vec{s}}^A U_{\vec{s}'}^A \otimes U_{\vec{s}}^{B} U_{\vec{s}'}^{B} 
        &= U_{\vec{s}} U_{\vec{s}'}\\
        &= U_{\vec{s}\oplus\vec{s}'}\\
        &= U_{\vec{s}\oplus\vec{s}'}^A \otimes U_{\vec{s}\oplus\vec{s}'}^B.
    \end{align} 
\end{subequations}
From this equality, it follows that $U_{\vec{s}}^A U_{\vec{s}'}^A = \lambda
U_{\vec{s}\oplus\vec{s}'}^A$ and $U_{\vec{s}}^{B} U_{\vec{s}'}^{B} =
\lambda^{-1} U_{\vec{s}\oplus\vec{s}'}^B$, for $\lambda \in \mathbb{C}$. The 
fact that the operators act as permutations on the basis of excitation patterns
further implies that $\lambda = 1$, which shows that 
\begin{subequations}
    \begin{align}
        U_{\vec{s}}^A U_{\vec{s}'}^A &=  U_{\vec{s}\oplus\vec{s}'}^A\\
        U_{\vec{s}}^{B} U_{\vec{s}'}^{B} &=  U_{\vec{s}\oplus\vec{s}'}^B.
    \end{align}
\end{subequations}
Now we consider the scalar product
\begin{subequations}
    \begin{align}
       S_{\vec{s},\vec{s}'} &:= \bra{\omega,\vec{n}}_{B} U_{\vec{s}'}^{B} U_{\vec{s}}^{B}\ket{\omega,\vec{n}}_{B}\\
       &= \bra{\omega,\vec{n}}_{B}  U_{\vec{s}'\oplus\vec{s}}^{B}\ket{\omega,\vec{n}}_{B}.
    \end{align}
\end{subequations}
We show that $S_{\vec{s},\vec{s}'} = 0$ for $\vec{s} \neq \vec{s}'$. For
notational convenience, define the multiplicative group of plaquette operators
in $B$, $\mathcal{U}^B := \langle U_p\, |\, p \subseteq B \rangle$. Analogous
to $\mathcal{U}$, this group acts freely on the states $\ket{\vec{n}}$; thus,
\cref{eq:gsB} can also be written as 
\begin{align}
    \ket{\omega,\vec{n}}_{B} = \mathcal{N}_B \sum_{\vec{m} \in \mathcal{U}^B \ket{\vec{n}}} \ket{\vec{m}}.
\end{align}
Hence, if $U_{\vec{s}\oplus\vec{s}}^B \ket{\vec{n}} \notin \mathcal{U}^B
\ket{\vec{n'}}$ for all $\vec{s} \neq \vec{s}'$, it follows that
$S_{\vec{s},\vec{s'}} = 0$. The remainder of this section is dedicated to
proving this fact.

Consider a site $I \in B$ with an adjacent site $J \in A$; then there are two
plaquette operators $U_{p_1},U_{p_2}$ whose support contains $I$ and $J$. In
particular, this implies that $p_1,p_2 \in \mathfrak{F}$. The vertices in $V_I$
that are adjacent to vertices in $V_J$ are either a pair of edge vertices or a
tetrahedron. We denote the set of these vertices by $\del_J V_I$.

Suppose that $\del_J V_I$ is the pair of edge vertices on edge $e$. Then
$U_{p_1}$ and $U_{p_2}$ both contain the edge permutation $\varphi_e$. If the
product $U_{\vec{s}}$ contains one of these plaquette operators then
$U_{\vec{s}}^B$ also contains this edge permutation once.  On the other hand,
$U_{p_1}$ and $U_{p_2}$ are the only plaquette operators containing the edge
permutation $\varphi_e$. Thus, no product of plaquette operators $U_p$ with $p
\subseteq B$ contains this edge permutation, which implies that $U_{\vec{s}}^B
\ket{\vec{n}} \notin \mathcal{U}^B \ket{\vec{n}}$.

Suppose that $\del_J V_I$ consists of the vertices in one tetrahedron $s$. Then
the operators $U_{p_1}$, $U_{p_2}$ contain two different vertex permutations
$\varphi_s^{\alpha_1}$ and $\varphi_s^{\alpha_2}$.  No other plaquette
operators contain these two vertex permutations. If a product $U_{\vec{s}}$
were to contain only one of the plaquette operators $U_{p_1}$ or $U_{p_2}$,
then $U_{\vec{s}}^B$ would contain only one of the aforementioned vertex
permutations; \Wlog, let this be $\varphi_s^{\alpha_1}$. A product of plaquette
operators $U_p$ with $p \subseteq B$ can either contain $\varphi_s^{\alpha_3}$
($\alpha_3 \neq \alpha_1,\alpha_2)$ or the identity on the tetrahedron $s$.
Thus $U_{\vec{s}}^B \ket{\vec{n}} \notin \mathcal{U}^B \ket{\vec{n}}$.

This argument shows that $U_{\vec{s}}^B \ket{\vec{n}} \in \mathcal{U}^B
\ket{\vec{n}}$ implies that, for any site $I \in B$ with an adjacent site $J \in
A$, both or none of the plaquette operators, whose support contains $I$ and
$J$, must be part of the product $U_{\vec{s}}$. In other words, $\vec{s} =
(0,\ldots,0)$ or $\vec{s} = (1,\ldots, 1)$. 

We examine the case $\vec{s} = (1,\ldots, 1)$ separately. The product
$U_{\vec{s}}$ acts as edge permutations in a closed loop that encompasses the
region $A$. Thus this loop is not the boundary of a collection of plaquettes in $B$
(this loop is not homologically trivial) and thus $U_{(1,\ldots, 1)}^B
\ket{\vec{n}} \notin \mathcal{U}^B \ket{\vec{n}}$. 

In summary, we have shown that 
\begin{align}
    \label{eq:orth}
    \bra{\omega,\vec{n}}_{B}  U_{\vec{s}'\oplus\vec{s}}^{B}\ket{\omega,\vec{n}}_{B} = 0,
\end{align}
iff $\vec{s}'\oplus\vec{s} \neq \vec{0}$, which is equivalent to $\vec{s}' \neq
\vec{s}$. In the other case the scalar product is $1$, due to
$U_{\vec{s}'\oplus\vec{s}}^{B}$ being unitary.

Thus, we can complete the family $U_{\vec{s}}^{B}\ket{\omega,\vec{0}}_{B}$ to an
orthonormal basis of $\H_{B} = \bigotimes_{I \in B} \H_I$. We use this basis to
compute the partial trace of the density matrix corresponding to
\cref{eq:all_gs2} with respect to $B$. This results in 
\begin{align}
    \label{eq:red_density}
    \rho(\omega, \vec{n})_{A} = \widetilde{\mathcal{N}}^2\sum_{\vec{s},\vec{s}' \in \Z_2^{c_\sub{AB}}}
        U_{\vec{s}}^A \ket{\omega,\vec{0}}_A \bra{\omega,\vec{0}}_A U_{\vec{s}'}^A
\end{align}
Note that after taking the partial trace, \cref{eq:red_density} does not
depend on the state $\ket{\vec{n}}$ any more. As all ground states can be
generated from some state $\ket{\vec{n}}$ by \cref{eq:gs_rewrite}, all these
ground states have the same reduced density matrix~\eqref{eq:red_density}, if 
$l$ is sufficiently large. Thus, \cref{eq:LocalTQO} is satisfied, where
$F$ can be taken as a step function.

\section{Derivation of the effective Hamiltonian}
\label{app:Inv_Schriffer_Wolff}

In this section, we derive the properties of the low-energy effective
Hamiltonian given in \cref{eq:eff}. We use the method described by Datta
\etal~\cite{Datta1996}, which yields a well-defined effective Hamiltonian even
for extensive system sizes. However, we note that the results of this
section also hold for other approaches to perturbatively derive an effective
Hamiltonian (see, e.g.,~\cite[Section\ 3]{Bravyi_2011}).

Let $V_\text{Loop}$ and $E_\text{Loop}$ be the vertex set and edge set of the
blockade structure $\C_\text{Loop}$. Then the system Hamiltonian
$H_\text{Loop}$ is given by [cf.~\cref{eq:H}]
\begin{equation}
    \begin{aligned}
        \label{eq:HZ2}
        H_\text{Loop}(\Omega) &:= \underbrace{-\sum_{i \in V_\text{Loop}} \Delta_i n_i 
                    + U \sum_{\{i,j\} \in E_\text{Loop}} n_i n_j}_{=: H_0}\\
                    &\quad+ \underbrace{\Omega \sum_{i \in V_\text{Loop}} \sigma_i^x}_{=: V}.
    \end{aligned}
\end{equation}
This Hamiltonian describes a blockade interaction with a finite (but arbitrarily large)
blockade strength $U$. This is necessary to keep the tensor product
structure of the Hilbert space. 

For the perturbation theory, $H_0$ assumes the role of the classical
unperturbed Hamiltonian, whereas $V$ is the perturbation. 

\subsection{Unitary block-diagonalization by Datta \etal \cite{Datta1996}}
\label{app:ubd}

The method of Datta \etal \cite{Datta1996} provides a method to perturbatively
\emph{block-diagonalize} a Hamiltonian of the form $H = H_0 + \lambda V$. Here
$H_0$ is the unperturbed Hamiltonian  with exactly known spectrum, $V$ is the
perturbation and $\lambda$ is the perturbation strength. 

The blocks are defined with respect to the unperturbed Hamiltonian $H_0$. Let
$P_0$ be the projection onto the ground state manifold of $H_0$, and let $Q_0 :=
\mathds{1} - P_0$. Then the goal is to find a unitary transformation $e^{S}$,
with anti-Hermitian generator $S$, such that $H' := e^{S}He^{-S}$ is
block-diagonal in the sense that $P_0H'Q_0 = Q_0H'P_0 = 0$. The restriction of
this transformed Hamiltonian onto the ground state manifold, i.e., $P_0H'P_0$,
is the \emph{low-energy effective Hamiltonian} for $H$. The method of Datta
yields a series expansion of $S$ and $H'$, in powers of $\lambda$, such that
$H'$ is block-diagonal up to some order $n$. 

In this section we briefly describe the setup and conditions for the method of
Datta \etal \cite{Datta1996} to be applicable. The system is defined on  a finite subset of a square
lattice $\Lambda \subseteq \Z^2$.  To each vertex  $I \in \Lambda$, a copy of
some finite-dimensional Hilbert space $\mathcal{H}_I \simeq \mathcal{H}$ is
attached. The Hilbert space describing the whole system is the tensor
product $\mathcal{H}_\Lambda = \bigotimes_{I\in\Lambda} \mathcal{H}_I$. 

The unperturbed Hamiltonian is required to have the form
\begin{align}
   \label{eq:local_sum}
   H_0 = \sum_{X \subset \Lambda} H_{0,X},
\end{align}
where $H_{0,X}$ acts trivially on the Hilbert spaces attached outside of $X$.
In addition, it is required that there is a tensor product basis of
$\H_\Lambda$ such that every term $H_{0,X}$ is diagonal in this basis and the
ground state of $H_0$ is a ground state for every term $H_{0,X}$. We refer to
the latter condition as \emph{frustration-freeness}. That our Hamiltonian~\eqref{eq:HZ2}
satisfies these conditions (after the Hilbert space is properly partitioned) 
is already covered by the proof of the frustration-freeness in \cref{app:loc_frust}.
The desired tensor product basis is the basis of excitation patterns.

The perturbation is required to have the form 
\begin{align}
    V = \sum_{X \subseteq \Lambda} V_{X}, 
\end{align}
where the local terms $V_{X}$ act trivially on Hilbert spaces attached outside
of $X$. Additionally $\|V_{X}\|$ has to decay sufficiently fast
with the size of $X$. 

For our perturbation [cf. \cref{eq:HZ2}], we set 
\begin{align}
   V_{I} := \Omega \sum_{i \in I}\sigma_i^x
\end{align}
and $V_{X} = 0$ otherwise. As the perturbation only acts on individual sites,
it trivially satisfies the decay requirements. Hence, we can apply the
block-diagonalization methods of Datta \etal \cite{Datta1996} to our system.

In the remainder of this section, we briefly sketch the construction of the
effective Hamiltonian.  We start with some preliminary definitions.

For any set $X \subseteq \Lambda$, we define $B_X := \cup_{I \in X} B_1(I)$,
here $B_1(I)$ denotes the ball with radius $1$ in the graph metric on $\Lambda$
(cf.~\cref{app:graphs}).  Furthermore, define the projector $P_{B_X}^0$ as the
projection onto the subspace of states that are ground states of $H_{0,Y}$
for all $Y \subseteq X$. Moreover, define
\begin{align}
   P_{B_X}^1 
   := P_{B_X\setminus X}^0 - P_{B_X}^0,
\end{align}
the projector onto the states that are ground states in $B_X\setminus X$ but
fail to be ground states in $X$. For any local operator $A_X$ whose support is
contained in $X$, define its off-diagonal part as
\begin{align}
   A_{B_X}^{01} 
   := P_{B_X}^1 A_X P_{B_X}^0 + P_{B_X}^0 A_X P_{B_X}^1.
\end{align}
For the generator, Datta \etal \cite{Datta1996} utilized the ansatz 
\begin{align}
    S := \sum_{j = 1}^n \lambda^j S_j ,
\end{align}
where $n$ is arbitrary but finite. They further required the terms $S_j$ to be a
sum of local operators $S_j = \sum_{X \subset \Lambda} S_{j,X}$. The
transformed Hamiltonian $H'$ then has the form 
\begin{align}
    \label{eq:transformed}
    H' = H_0 + \underbrace{\sum_{j = 1}^n \lambda^j ([S_j,H_0] + V_j)}_{\overset{!}{=}\text{block-diagonal}} 
    + \underbrace{\sum_{j \geq n+1} \lambda^{j} V_j.}_{\text{higher-order remainder}}
\end{align}
The operators $V_j$ are a sum of local operators $V_j = \sum_{Y \subset
\Lambda} V_{j,Y}$, and the local terms are given by $V_{0,B_X} := V_{B_X}$ for $j = 0$ and by
\begin{widetext}
    \begin{align}
       \label{eq:def_V}
        V_{j,Y}&:=
        \begin{aligned}[t]
           &-\sum_{\substack{p\geq 2\\1\leq k_1,\ldots,k_p\leq n\\k_1+\ldots+k_p = j}} \frac{1}{p!}
           \sum_{\substack{\{X_1,\ldots,X_p\}_c\\Y = B_{X_1}\cup\cdots\cup B_{X_p}}} 
           [S_{k_p,B_{X_p}},\ldots,[S_{k_2,B_{X_2}},V_{k_1,B_{X_1}}^{01}]]\\
           &+\sum_{\substack{p\geq 1\\1\leq k_1,\ldots,k_p\leq n\\k_1+\ldots+k_p = j-1}} \frac{1}{p!} 
           \sum_{\substack{\{X,X_1,\ldots,X_p\}_c\\Y = B_{X}\cup B_{X_1}\cup\cdots\cup B_{X_p}}} 
           [S_{k_p,B_{X_p}},\ldots,[S_{k_1,B_{X_1}},V_{B_X}]].
       \end{aligned}
    \end{align}
\end{widetext}
for $j \geq 1$.
The notation $\{X_1,\ldots,X_p\}_c$ indicates that the summation is performed over
subsets of $\Lambda$ satisfying the conditions
\begin{align}
    \begin{aligned}
        X_2 \cap B_{X_1} &\neq \emptyset,\\
        X_3 \cap B_{X_1 \cup X_2} &\neq \emptyset,\\
        &\vdots\\
        X_p \cap B_{x_1\cup\cdots\cup X_p} &\neq \emptyset.
    \end{aligned}
\end{align}
This condition result from the commutators in \cref{eq:def_V} vanishing
otherwise. 

Requiring that the off-diagonal part in the first $n$ summands of
\cref{eq:transformed} vanishes yields a formula for $S$:
\begin{align}
   \label{eq:def_S}
   S_{j,B_X} 
   := \sum_{\substack{E,k, E', k'\\E \neq E'}} \frac{\bra{E, k} 
   V_{j,B_X}^{01}\ket{E', k'}}{E - E'}\ket{E,k}\bra{E',k'}.
\end{align}
The states $\ket{E,k}$ form an eigenbasis of the Hamiltonian
\begin{align}
   \overline{H}_{0,X} = \sum_{Y: Y\cap X \neq \emptyset} H_{0,Y}.
\end{align}
By construction, keeping only the first $n$ orders in \cref{eq:transformed}
yields a block-diagonal operator. The projection of this operator onto the
ground state manifold of $H_0$ is our low-energy effective Hamiltonian. It has
the form 
\begin{align}
    \label{eq:Heff}
   H_\text{eff} 
   := P_0 \sum_{j\geq 0}^n \lambda^j \sum_{X\subseteq \Lambda} V_{j,B_X} P_0.
\end{align} 
We note, that it is a priori unclear if neglecting the higher-order
off-diagonal terms and the diagonal terms in the high-energy manifold is a
good approximation. In particular, it is unclear if the spectral properties of
$H_\text{eff}$ carry over to $H$.

\subsection{Invariance of the effective Hamiltonian}
\label{app:Inv_Heff}

A local symmetry of $H_0$ ($V$) is a unitary operator $U$ that satisfies
$U^\dagger H_{0,X} U = H_{0,X}$ ($U^\dagger V_X U = V_X$) for all $X \subset
\Lambda$. In this section we show that any (local) symmetry of $H_0$ and $V$
is also a local symmetry of the constructed effective Hamiltonian. 

The local symmetry immediately implies that $[P^0_{B_X},U] = 0$ and
$[P^1_{B_X},U] = 0$; thus, if $U^\dagger A_X U = A_X$, then $U^\dagger A^{01}_{B_X} U
= A^{01}_{B_X}$ also. Furthermore, it implies that $\overline{H}_{0,X}$ is invariant under
$U$. Thus, $U\ket{E,k}$ is an eigenstate of $\overline{H}_{0,X}$ with the same energy $E$.

Because of the recursive definition of the operators $V_j$, it is natural to
prove their invariance under $U$ by induction. The base case is given by $V_0 =
V$, which is invariant under $U$ by definition.  Thus, assume that the
invariance of $S_{k,B_X}$ and $V_{k,B_X}$  under $U$ is already proven for $k <
j$. Note that the expression~\eqref{eq:def_V} only involves operators $V_{k}$
and $S_k$ with $k < j$. To calculate $U^\dagger V_{j,B_X} U$, we repeatedly use
the identity $U^\dagger[A,B]U = [U^\dagger A U, U^\dagger B U]$ that holds for
arbitrary operators $A$, $B$. As all operators inside the commutators are
invariant under $U$, it follows that $V_{j,B_X}$ is invariant under $U$.

Using the fact that $U$ acts block-diagonal with respect to the index $E$
implies the invariance of $S_{j,B_X}$:
\begin{widetext}
    \begin{subequations}
        \begin{align}
            U^\dagger S_{j,B_X} U 
            &= \sum_{\substack{E,k, E', k'\\E \neq E'}} 
            \frac{\bra{E, k} V_{j,B_X}^{01}\ket{E', k'}}{E - E'}U^\dagger\ket{E,k}\bra{E',k'}U\\
            &= \sum_{\substack{E,k, E', k'\\E \neq E'}} 
            \frac{\bra{E, k}U^\dagger V_{j,B_X}^{01}U\ket{E', k'}}{E - E'}\ket{E,k}\bra{E',k'}\\
            &= \sum_{\substack{E,k, E', k'\\E \neq E'}} 
            \frac{\bra{E, k} V_{j,B_X}^{01}\ket{E', k'}}{E - E'}\ket{E,k}\bra{E',k'} = S_{j,B_X}.
        \end{align}
    \end{subequations}
\end{widetext}
This completes the induction. It follows immediately that \cref{eq:Heff} is
invariant under $U$, as desired.

\subsection{Diagonal elements}

For a fully-symmetric blockade structure, the symmetries heavily constrain the
form of the effective Hamiltonian.  We first consider the diagonal elements.
Let $\vec{n},\vec{m} \in L_\sub{Loop}$. As $\C_\sub{Loop}$ is fully-symmetric,
there exists an automorphism $\phi \in \A_\sub{Loop}$ such that $\vec{n} =
\phi\cdot\vec{m}$.  Using the invariance of the effective Hamiltonian under the
symmetry operators $U_\phi$ yields
\begin{subequations}
    \begin{align}
        \bra{\vec{n}}H_\text{eff}\ket{\vec{n}} 
        &= \bra{\phi\cdot\vec{m}}H_\text{eff}\ket{\phi\cdot\vec{m}}\\
        &= \bra{\vec{m}}U_\phi^\dagger H_\text{eff}U_\phi\ket{\vec{m}}\\
        &= \bra{\vec{m}} H_\text{eff}\ket{\vec{m}},
    \end{align}
\end{subequations}
i.e.\ all diagonal elements are equal. Hence, by addition of a constant to the
effective Hamiltonian (a shift in the energy scale), the diagonal elements can
be set to $0$. This is consistent with our result in \cref{app:proof_1}, where
we showed that all states from $\H_\sub{Loop}$ enter with equal weight into
the ground state.

\subsection{Off-diagonal elements}

In this section, we derive the off-diagonal part of the effective
Hamiltonian to leading order, i.e., we consider the smallest $n$ such that
$H_\text{eff}$ has nonzero off-diagonal terms.

As our perturbation $V$ is a sum of operators $\sigma^x$, it can flip the state
of \emph{exactly one} atom. More precisely, this means
that the matrix element satisfies $\bra{\vec{n}'} V \ket{\vec{n}} \neq 0$, only if
$\vec{n}$ and $\vec{n}'$ differ in at most one component. 

We claim that $\bra{\vec{n}'} V_j \ket{\vec{n}} \neq 0$ only if $\vec{n}$ and
$\vec{n}'$ differ in at most $j$ components. As it is straightforward, we only
sketch the proof by induction. The previous paragraph contains the base case;
assume that this is true for $V_k$ and $S_k$ for $k < j$. In \cref{eq:def_V}, expand
all the commutators and insert identities of the form $\sum_{\vec{n}}
\ket{\vec{n}}\bra{\vec{n}}$ between all products. Then the matrix element
$\bra{\vec{n}'} V_j \ket{\vec{n}}$ can only be nonzero, if the sum contains at 
least one product of matrix elements that is nonzero. Each
of the operators $S_{k_l}$ can change at most $k_l$ components of the vector
$\vec{n}$. As $k_1+\cdots+k_p = j$ in the first sum and $k_1+\cdots+k_p+1 = j$
in the second sum, it follows that $V_j$ can maximally change $j$ components in
the vector $\vec{n}$. As the matrix elements of $S_j$ are only rescaled matrix 
elements of $V_j$, the same is true for $S_j$. 

The number of components that have to be changed to transform the configuration
$\vec{n}$ into the configuration $\vec{n'}$ is known as the \emph{Hamming distance}.
The previous section shows that $\bra{\vec{n}'} H_\text{eff} \ket{\vec{n}}
\neq 0$ requires that the order of perturbation theory $n$ is at least the
Hamming distance between $\vec{n}$ and $\vec{n}'$.

By construction two configurations $\vec{n},\vec{n}' \in L_\sub{Loop}$, differ
in edge sites along a closed loop. The smallest closed loop consists of a single
plaquette in $\C_\sub{Loop}$ (assuming that the lattice is sufficiently large).
Thus, if  $\vec{n}$ and $\vec{n}'$ have the smallest possible Hamming distance,
they satisfy $\ket{\vec{n}'} = U_p \ket{\vec{n}}$. We conclude that, to leading
order, the only nonzero matrix elements of $H_\text{eff}$ are $\bra{\vec{n}}
H_\text{eff} U_p \ket{\vec{n}}$ for $\vec{n} \in L_\sub{Loop}$.

The structure $\C_\sub{Loop}$ being fully-symmetric implies that, for
$\vec{n},\vec{m} \in L_\sub{Loop}$, there exists an automorphism $\phi \in
\A_\sub{Loop}$ such that $\vec{n} = \phi \cdot \vec{m}$. As $\A_\sub{Loop}$ is
abelian, we obtain
\begin{subequations}
    \begin{align}
        \bra{\vec{n}}H_\text{eff}U_p\ket{\vec{n}} 
        &= \bra{\phi\cdot\vec{m}}H_\text{eff}U_p \ket{\phi\cdot\vec{m}}\\
        &= \bra{\vec{m}}U_\phi^\dagger H_\text{eff} U_p U_\phi\ket{\vec{m}}\\
        &= \bra{\vec{m}} H_\text{eff} U_p \ket{\vec{m}}.
    \end{align}
\end{subequations}
Thus, the leading order off-diagonal matrix elements of the effective
Hamiltonian are equal to some constant $C_p := \bra{\vec{m}} H_\text{eff}
U_p \ket{\vec{m}}$ that can only depend on the plaquette $p$. However, for
periodic boundary conditions, the Hamiltonian $H_\sub{Loop}$ is invariant under
translations. This implies that $C :\equiv C_p$ for all plaquettes $p$.
Therefore, the leading order part of the effective Hamiltonian has the form
\begin{subequations}
    \begin{align}
        H_\text{eff,lo} 
	    &= \sum_{\vec{n},\vec{m} \in L_\text{Loop}}
        \bra{\vec{n}}H_\text{eff}\ket{\vec{m}} \ket{\vec{n}} \bra{\vec{m}} \\
	    &= \sum_{\vec{n} \in L_\text{Loop}}\sum_{\text{Faces }p} 
        \bra{\vec{n}}H_\text{eff}U_p\ket{\vec{n}} \ket{\vec{n}} \bra{\vec{n}} U_p\\
	    &= C \sum_{\vec{n} \in L_\text{Loop}}\sum_{\text{Faces }p} \ket{\vec{n}} \bra{\vec{n}} U_p\\
        &= C \sum_{\text{Faces }p} U_p.
    \end{align}
\end{subequations}
Lastly, we comment on the properties of $C$. Let $K$ denote the Hamming
distance between the states $U_p\ket{\vec{n}}$ and $U_p\ket{\vec{n}'}$, i.e., the
leading order. By the definition~\eqref{eq:Heff} we obtain $C =
\mathcal{O}(\Omega^K)$. The dependence of $C$ on $\Delta E$, $C =
\mathcal{O}(1/\Delta E^{K-1})$, is less trivial. It can be proven by induction,
as by the definition~\eqref{eq:def_S}, the matrix elements of $S$ are the matrix
elements of $V$, rescaled by constants of order $\mathcal{O}(1/\Delta E)$.
Counting orders of $S$ yields the exponent $K-1$. 

We were unable to derive the sign of $C$, as this would involve explicitly
calculating the effective Hamiltonian up to order $24$. However, we have two
reasons to believe that $C < 0$.  First, this is consistent with the fact that
the finite volume ground states are in the zero flux sector. Second, using
another method to derive the effective Hamiltonian (that is unfortunately not
valid in the thermodynamic limit), we were able to show that $C < 0$~\cite{Maier2023}.

From the leading order onwards, nonzero matrix elements are possible in every
even order.  Thus the next term in $H_\text{eff}$ has a coefficient of order
$\mathcal{O}(\Omega^{K+2}/\Delta E^{K+1})$.

\section{Unit ball embeddings}
\label{app:embedding}

Here we provide exact coordinates for the unit ball embeddings of the
\texttt{FSU}-structure and its extension by links depicted in
\cref{fig:embedding} (b) and (d) of the main text. The units are given in
blockade radii so that $\Rb=1$.

Positions of the vertices in the unit ball embedding of the
\texttt{FSU}-structure shown in \cref{fig:embedding}~(b):
\begin{center}
\begin{tabular}{c|lr@{.}l r@{.}l r@{.}lr}
    Vertex label & \multicolumn{8}{c}{Vertex position}\\
    \hline
    Tetrahedron 1 & \ \texttt{(}&\texttt{-0} & \texttt{45455 }& \texttt{-0} & \texttt{26243} & \texttt{0} & \texttt{0} & \texttt{)} \\
    Tetrahedron 2 & \ \texttt{(}&\texttt{0} & \texttt{45455 }& \texttt{-0} & \texttt{26243} & \texttt{0} & \texttt{0} & \texttt{)} \\
    Tetrahedron 3 & \ \texttt{(}&\texttt{0} & \texttt{0} & \texttt{0} & \texttt{52486} & \texttt{0} & \texttt{0} & \texttt{)} \\
    Tetrahedron 4 & \ \texttt{(}&\texttt{0} & \texttt{0} & \texttt{0} & \texttt{0} & \texttt{-0} & \texttt{74227} & \texttt{)} \\
    $A$ & \ \texttt{(}&\texttt{0} & \texttt{0} & \texttt{-0} & \texttt{7873} & \texttt{0} & \texttt{37113} & \texttt{)} \\
    $B$ & \ \texttt{(}&\texttt{0} & \texttt{68182 }& \texttt{0} & \texttt{39365} & \texttt{0} & \texttt{37113} & \texttt{)} \\
    $C$ & \ \texttt{(}&\texttt{-0} & \texttt{68182 }& \texttt{0} & \texttt{39365} & \texttt{0} & \texttt{37113} & \texttt{)} \\
    $\overline{B}$ & \ \texttt{(}&\texttt{-0} & \texttt{68182 }& \texttt{-0} & \texttt{39365} & \texttt{-0} & \texttt{74227} & \texttt{)} \\
    $\overline{C}$ & \ \texttt{(}&\texttt{0} & \texttt{68182 }& \texttt{-0} & \texttt{39365} & \texttt{-0} & \texttt{74227} & \texttt{)} \\
    $\overline{A}$ & \ \texttt{(}&\texttt{0} & \texttt{0} & \texttt{0} & \texttt{7873} & \texttt{-0} & \texttt{74227} & \texttt{)} \\
\end{tabular}
\end{center}

Positions of the vertices in the unit ball embedding of the extended
\texttt{FSU}-structure shown in \cref{fig:embedding}~(d):
\begin{center}
\begin{tabular}{c|lr@{.}l r@{.}l r@{.}lr}
    Vertex label & \multicolumn{8}{c}{Vertex position}\\
    \hline
    Tetrahedron 1 & \ \texttt{(}&\texttt{-0} & \texttt{45455 }& \texttt{-0} & \texttt{26243} & \texttt{0} & \texttt{0} & \texttt{)} \\
    Tetrahedron 2 & \ \texttt{(}&\texttt{0} & \texttt{45455 }& \texttt{-0} & \texttt{26243} & \texttt{0} & \texttt{0} & \texttt{)} \\
    Tetrahedron 3 & \ \texttt{(}&\texttt{0} & \texttt{0} & \texttt{0} & \texttt{52486} & \texttt{0} & \texttt{0} & \texttt{)} \\
    Tetrahedron 4 & \ \texttt{(}&\texttt{0} & \texttt{0} & \texttt{0} & \texttt{0} & \texttt{-0} & \texttt{74227} & \texttt{)} \\
    Wing $\overline{A}$ & \ \texttt{(}&\texttt{0} & \texttt{0} & \texttt{-0} & \texttt{7873} & \texttt{0} & \texttt{37113} & \texttt{)} \\
    Wing $\overline{B}$ & \ \texttt{(}&\texttt{0} & \texttt{68182 }& \texttt{0} & \texttt{39365} & \texttt{0} & \texttt{37113} & \texttt{)} \\
    Wing $\overline{C}$ & \ \texttt{(}&\texttt{-0} & \texttt{68182 }& \texttt{0} & \texttt{39365} & \texttt{0} & \texttt{37113} & \texttt{)} \\
    Wing $B$ & \ \texttt{(}&\texttt{-0} & \texttt{68182 }& \texttt{-0} & \texttt{39365} & \texttt{-0} & \texttt{74227} & \texttt{)} \\
    Wing $C$ & \ \texttt{(}&\texttt{0} & \texttt{68182 }& \texttt{-0} & \texttt{39365} & \texttt{-0} & \texttt{74227} & \texttt{)} \\
    Wing $A$ & \ \texttt{(}&\texttt{0} & \texttt{0} & \texttt{0} & \texttt{7873} & \texttt{-0} & \texttt{74227} & \texttt{)} \\
    Bridge $\overline{A}$ & \ \texttt{(}&\texttt{0} & \texttt{72727 }& \texttt{-1} & \texttt{1547} & \texttt{0} & \texttt{63093} & \texttt{)} \\
    Bridge $A$ & \ \texttt{(}&\texttt{0} & \texttt{72727 }& \texttt{1} & \texttt{1547} & \texttt{-1} & \texttt{00206} & \texttt{)} \\
    $A$ & \ \texttt{(}&\texttt{1} & \texttt{45455 }& \texttt{-1} & \texttt{1547} & \texttt{0} & \texttt{63093} & \texttt{)} \\
    $\overline{A}$ & \ \texttt{(}&\texttt{1} & \texttt{45455 }& \texttt{1} & \texttt{1547} & \texttt{-1} & \texttt{00206} & \texttt{)} \\
    Bridge $\overline{B}$ & \ \texttt{(}&\texttt{0} & \texttt{63636 }& \texttt{1} & \texttt{20719} & \texttt{0} & \texttt{63093} & \texttt{)} \\
    Bridge $B$ & \ \texttt{(}&\texttt{-1} & \texttt{36364 }& \texttt{0} & \texttt{05249} & \texttt{-1} & \texttt{00206} & \texttt{)} \\
    $\overline{B}$ & \ \texttt{(}&\texttt{0} & \texttt{27273 }& \texttt{1} & \texttt{83702} & \texttt{0} & \texttt{63093} & \texttt{)} \\
    $B$ & \ \texttt{(}&\texttt{-1} & \texttt{72727 }& \texttt{0} & \texttt{68232} & \texttt{-1} & \texttt{00206} & \texttt{)} \\
    Bridge $\overline{C}$ & \ \texttt{(}&\texttt{-1} & \texttt{36364 }& \texttt{-0} & \texttt{05249} & \texttt{0} & \texttt{63093} & \texttt{)} \\
    Bridge $C$ & \ \texttt{(}&\texttt{0} & \texttt{63636 }& \texttt{-1} & \texttt{20719} & \texttt{-1} & \texttt{00206} & \texttt{)} \\
    $\overline{C}$ & \ \texttt{(}&\texttt{-1} & \texttt{72727 }& \texttt{-0} & \texttt{68232} & \texttt{0} & \texttt{63093} & \texttt{)} \\
    $C$ & \ \texttt{(}&\texttt{0} & \texttt{27273 }& \texttt{-1} & \texttt{83702} & \texttt{-1} & \texttt{00206} & \texttt{)} \\
\end{tabular} 
\end{center}


\end{document}